\definecolor{Gray}{gray}{0.9}
\def\0{\mathbf{0}}
\def\eps{\varepsilon}
\def\lam{\lambda}
\def\rr{\rightarrow}
\def \< {\langle}
\def \> {\rangle}
\def\ol{\overline}
\def\beqa{\begin{eqnarray}}
\def\eeqa{\end{eqnarray}}
\def\beqas{\begin{eqnarray*}}
\def\eeqas{\end{eqnarray*}}
\newtheorem{theorem}{Theorem}[section]
\newtheorem{lemma}[theorem]{Lemma}
\newtheorem{proposition}[theorem]{Proposition}
\newtheorem{corollary}[theorem]{Corollary}
\newtheorem{cor}[theorem]{Corollary}
\newtheorem{remark}[theorem]{Remark}
\newtheorem{definition}[theorem]{Definition}
\newtheorem{assumption}[theorem]{Assumption}
\numberwithin{equation}{section}
\newcommand{\hatd}[1]{{}}
\newcommand{\bd}{\begin{displaymath}}
\newcommand{\ed}{\end{displaymath}}
\newcommand{\be}{\begin{equation}}
\newcommand{\ee}{\end{equation}}
\newcommand{\bq}{\begin{eqnarray}}
\newcommand{\eq}{\end{eqnarray}}
\newcommand{\bn}{\begin{eqnarray*}}
\newcommand{\en}{\end{eqnarray*}}
\newcommand{\re}{\mathds{R}}
\def\wt{\widetilde}
\def\P{\mathbb{P}}
\title{Trading with the Crowd}
\author[1]{Eyal Neuman}
\author[2]{Moritz Vo\ss  }
\affil[1]{Department of Mathematics, Imperial College London}
\affil[2]{Department of Mathematics,
University of California, Los Angeles}
\begin{document}

 \vspace{-0.5cm}
\maketitle

\begin{abstract}
We formulate and solve a multi-player stochastic differential game between financial agents who seek to cost-efficiently liquidate their position in a risky asset in the presence of jointly aggregated transient price impact,  
along with taking into account a common general price predicting signal. 
The unique Nash-equilibrium strategies reveal how each agent's liquidation policy adjusts the predictive trading signal to the aggregated transient price impact induced by all other agents. This unfolds a quantitative relation between trading signals and the order flow in crowded markets. We also formulate and solve the corresponding mean field game in the limit of infinitely many agents. We prove that the equilibrium trading speed and the value function of an agent in the finite $N$-player game converges to the corresponding trading speed and value function in the mean field game at rate $O(N^{-2})$.  In addition, we prove that the mean field optimal strategy provides an approximate Nash-equilibrium for the finite-player game. 
\end{abstract} 


\begin{description}
\item[Mathematics Subject Classification (2010):] 49N80, 49N90, 93E20, 60H30 
\item[JEL Classification:] C73, C02, C61, G11
\item[Keywords:] crowding, optimal portfolio liquidation, price
  impact, mean field games,
  optimal stochastic control, predictive signals
\end{description}

\bigskip

\section{Introduction}
 The phenomenon of \textit{crowding} in financial markets has gained
 an increasing attention from both academics and  financial
 institutions over the past couple of decades. It is a subject of
 numerous research works studying both theoretical and empirical
 aspects of the topic, including
 \cite{Cont2000,volpati2020zooming,Bucci2020,Barroso2017,Caccioli2015,Caccioli2014,Khandani2008}
 among others. Crowding is often considered to be an explanation for
 sub-par performances of investments as well as the development of
 systemic risk in financial markets. The presence of largely
 overlapping portfolios comes at the expense of portfolio managers' profits, also in terms of transaction costs, as affine positions usually lead to similar trades.

The existing literature on crowding concentrates both on analytic models that explain some aspects of crowded markets behaviour and on data driven statistical models.  In the first class, Cont and Bouchaud \cite{Cont2000} proposed a simple mathematical model in which the communication structure between agents gives rise to heavy tailed distribution for stock returns. This established a theoretical connection between crowding and stock markets shortfall. The aforementioned portfolios overlap was shown to be a considerable factor in the August 2007 Quant Meltdown. Cont and Wagalath \cite{Cont-Wag13} proposed a simple multi-period model of price impact from trading in a market with multiple assets. Their model illustrated how feedback effects due to distressed selling and short selling lead to endogenous correlations between asset classes and it provided a quantitative framework to evaluate strategy crowding as a risk factor (see also \cite{Cont-wag16}). 

Within the class of statistical models, Khandani and Lo \cite{Khandani2008} used simulated returns of overlapping equity portfolios and showed that combined effects of portfolio deleveraging followed by a temporary withdrawal of market-making risk capital was one of the main drivers of the 2007 Quant Meltdown.  Caccioli et al. \cite{Caccioli2015,Caccioli2014} developed a mathematical model for a network of different banks holding overlapping portfolios. They investigated the circumstances under which systemic instabilities may occur as a result of various parameters, such as market crowding and price impact.  Volpati et al. \cite{volpati2020zooming} measured significant levels of crowding in U.S. equity markets for momentum signals  as well as for Fama-French factors signals. In \cite{Mich-Neum20} an index reconstruction methodology was developed in order to measure the crowding effect on Russell indexes around reconstitutions events.  

A new approach that connects optimal execution of multiple agents to
crowding phenomena was proposed by \citet{CardaliaguetLehalle:18}.
More specifically, a mean field game with an infinite number of agents, where each agent executes a large order on the same risky asset was considered. In their model, the aggregated permanent price impact created by all players' transactions is modeled as an exogenous process, which satisfies a consistency condition, while temporary price impact influences each agent individually. The solution of the game showed qualitatively that in an infinite player setting, the optimal trading speed is deterministic and that it is optimal to follow the crowd but not too fast, as this could create additional trading costs.  

In this work we further extend and develop the model proposed in
\cite{CardaliaguetLehalle:18} in order to reveal new properties and
insights on crowding that arise in optimal execution framework. More
precisely, we formulate and solve a multi-player stochastic differential game between traders who execute large orders on one risky asset in the presence of individual temporary price impact and jointly aggregated transient price impact. We also assume that traders are observing a common exogenous price predicting signal. The unique Nash-equilibrium strategies show how each agent's liquidation strategy adjusts the predictive trading signal for the accumulated transient price distortion induced by all other agents' price impact. This unfolds a qualitative relation between trading signals and the agents' aggregated order flow in crowded markets. We refer to Section \ref{sec:finitePlayer} for the model setup and to Theorem \ref{thm:main-finite} for the solution of the game. One can observe from the explicit solution to the game in \eqref{eq:opt_ui} that the aggregated price distortion $Y^{u^N}$ and the exogenous trading signal $A$ are coupled. We therefore conclude that the distortion is acting as an endogenous signal, created by the trading strategies of all $N$ players. In the infinite player game we show that a similar coupling holds (see \eqref{eq:opt_uiInf}), however the price distortion $\wt Y^{\tilde \nu }$ is determined by an independent equation \eqref{eq:opt_ubar-MeanField_infAggregated}, then it is plugged into the optimal trading speed of the individual agent in \eqref{eq:opt_uiInf}, essentially acting as an exogenous signal. See Remarks \ref{rem-sig} and \ref{rem-mg-conc} for further details. From the explicit solutions which were mentioned above, we notice that both in the finite-player and infinite-player games the signal and the price distortion can follow a similar direction at least for a while. This is demonstrated in figure \ref{fig:ill3}, where a decreasing signal amplifies the sell strategies and as a result, a negative price distortion is created. Note that at some point in time the inventory penalties force the agents to close their position and the price impact turns in an opposite direction to the signal.  

Here we summarise the main financial interpretation of our analysis: 
\begin{itemize} 
\item[\textbf{(i)}] \emph{The consideration of a decaying price distortion in the finite player game points out that the cumulative order-flow is in fact an endogenous signal, which is observed and used by all traders in the game. }
\item[\textbf{(ii)}] \emph{In various scenarios the order-flow amplifies the effect of the exogenous price predicting signal on the price process and on the traders' execution strategies at equilibrium.}
\end{itemize} 
 These interesting and surprising results cannot be derived by the
 \citet{CardaliaguetLehalle:18} model, as their model assumes that the
 permanent price impact is an exogenous signal which satisfies a consistency condition. Indeed in the mean
 field setting the contribution of each agent to this signal is
 infinitesimal. Also Cardaliaguet and Lehalle did not incorporate a
 price predicting exogenous signal in their model, so their mean
 field optimal strategies are in fact deterministic and the order-flow
 amplification effect does not appear. In fact our results show that
 for a sell strategy for example, a negative price predicting signal
 will motivate the traders to sell more aggressively and their
 associated order-flow will drop the price down even further and
 create an endogenous signal in the same direction of the predictive
 signal. The only reason that the system remains in equilibrium is due
 to the penalties of holding inventory during and at the end of the
 trading period in the agents' performance functional. See Section \ref{sec:illustrations} for a qualitative analysis and illustrations of the results. 

In this work we also formulate and solve the corresponding mean field
game, which describes the limit of infinitely many agents (see Section
\ref{sec:infinitePlayer} and Theorem \ref{thm:main-infMFG}). We prove
in Theorem \ref{thm-strat-con} and Corollary \ref{corr-val} that the
equilibrium trading speed and value function of an agent in the finite
$N$-player game converges to the corresponding trading speed and value
function of an agent in the mean field game at rate $O(N^{-2})$. Similar convergence result with relaxed assumptions but without the convergence rate  is given in and Theorem \ref{thm-con-long}.
This
concludes that the aggregated order flow, which appears in the model
as the cumulative transient price impact of all agents, becomes an
exogenous signal as the number of agents tends to infinity. As a
result, we justify the a priori assumption of the exogenous price impact process in the simplified model of Cardaliaguet and Lehalle. 
Finally we prove in Theorem \ref{thm-eps-nash} that the mean field optimal strategy provides an approximate Nash-equilibrium for the finite-player game.  

Some additional papers on optimal execution in multiplayer and
infinite player games have appeared recently, without specific
reference to crowding. We will describe these results in short and
explain how this paper improves and extends them from the mathematical
point of view.  We will start with papers which solve only the mean
field game (i.e., the infinite player setting).  Casgrain and
Jaimungal \cite{, CasgrainJaimungal:18, CasgrainJaimungal:20} studied
a mean field game in which each agent is executing a large order while
creating both temporary and permanent price impact. Their model
generalises the basic model of 
\citet{CardaliaguetLehalle:18}, as it assumes that traders may have
differing beliefs or partial information on the price
process. \citet{HuangJaimungalNourian:19} also extended the mean field
model in \cite{CardaliaguetLehalle:18} by introducing three classes of
traders: a large agent, small high frequency traders and noise
traders. Finally, \citet{FuGraeweHorstPopier:20} extended the model to
liquidation under asymmetric information. Our results improve and
extend these papers, since we also solve the corresponding $N$-player
game, which is known to be less tractable. We further assume that the
traders create in addition a transient price impact that depends on
the entire trading paths of all players. Lastly, we prove the
convergence of the optimal strategy of the $N$-player game to the mean
field game equilibrium.

A few recent papers deal with finite player execution games for the
case that traders create transient price impact.  \citet{Strehle:17}
and \citet{SchiedStrehelZhang:17} worked on the problem in continuous
time, while \citet{SchiedZhang:19} and \citet{LuoSchied:20} studied
the discrete time setting. These references describe a special case of
our model as they do not include a predictive signal, which is
responsible for the randomness of the equilibrium strategies, and they do not prove convergence results to a mean field limit, as done in this paper.  

Finite player price impact games and mean field games with permanent
price impact were studied by     \citet{EvangelistaThamsten:20},
\citet{DrapeauLuoSchiedXiong:19} and \citet{FeronTankovTinsi:20},
where a special attention is given to the last two papers, in which
the convergence of the finite player equilibrium to the mean field equilibrium was derived. We remark that the convergence results in \cite{DrapeauLuoSchiedXiong:19} and \cite{FeronTankovTinsi:20} do not derive  the convergence rate, and of course their model did not include transient price impact which plays a crucial role in our model. Moreover, in these papers the convergence proof uses particular features of the model, which do not apply to the transient price impact case. In Section \ref{sec:convergence} we develop a method which not only provides the rate of convergence but could be adapted to a more general class of models, which translate at equilibrium to systems of FBSDEs. 

Finally some additional convergence results on a finite player game to a mean field game were derived recently for liquidation games with self-exciting order flow by \citet{fu2020}, which did not include a price predicting signal. These convergence results are quite different than the convergence results in this paper, as they derive the convergence of a game with stochastic i.i.d. noise in the transient price impact coefficients. In our setting the source of randomness is a common noise, which is the exogenous signal.  Moreover the convergence results in \citet{fu2020} do not derive the convergence rate. 

\paragraph{Structure of the paper:} In Section \ref{sec:finitePlayer} we define the finite player game and derive the Nash equilibrium. In Section \ref{sec:infinitePlayer} we present the corresponding mean field game and derive its equilibrium. In Section \ref{sec:approxNash} we present our convergence results and approximated Nash equilibrium. Section \ref{sec:illustrations} is dedicated to illustrations of the equilibrium strategies. Finally, Section \ref{sec:convergence}--\ref{subsec:proofs-matrixexponentials} are dedicated to the proofs of the main results.

\section{A Finite Player Game} \label{sec:finitePlayer}
\subsection{Model Setup} 
Our first goal is to adopt and extend the single-agent, signal-adaptive optimal execution problem with transient price impact from~\cite{N-V19} to a finite $N$-player stochastic differential price impact game. 

As usual we begin with fixing a finite deterministic time horizon $T>0$ and a filtered probability space $(\Omega, \mathcal F,(\mathcal F_t)_{0 \leq t\leq T}, \P)$ satisfying the usual conditions of right continuity and completeness. We further denote by $\mathcal H^2$ the class of all (special) semimartingales~$P=(P_t)_{0 \leq t\leq T}$ allowing for a canonical decomposition $P = L + A$ into a (local) martingale $L=(L_t)_{0 \leq t\leq T}$ and a predictable finite-variation process $A=(A_t)_{0 \leq t\leq T}$ such that
\be \label{ass:P} 
E \left[ \langle  L  \rangle_T \right] + E\left[\left( \int_0^T |dA_s| \right)^2 \right] < \infty.  
\ee
Next, we introduce a class of $N \in \mathbb{N}$ agents. Each agent $i \in \{1,\ldots,N\}$ has an initial position of $x^{i,N} \in \re $ shares in a risky asset and their number of shares held at time $t\in [0,T]$ is given by 
\begin{equation} \label{def:Xi}
X^{u^{i,N}}_t \triangleq x^{i,N}-\int_0^t u^{i,N}_s ds
\end{equation}
with a selling rate $(u^{i,N
}_s)_{0 \leq s \leq T}$ chosen from a set of admissible strategies
\be \label{def:admissset} 
\mathcal A \triangleq \left\{ v \, : \, v \textrm{ progressively measurable s.t. } E\left[ \int_0^T v_s^2 ds \right] <\infty \right\}.
\ee
We denote 
\begin{equation*}
u^N \triangleq (u^{1,N}, \ldots, u^{N,N}) \in \mathcal{A}^N,
\end{equation*}
and assume similar to the single-agent case in~\cite{N-V19} that the agents' collective trading activity $u^N$ induces a common transient price impact on the risky asset's execution price. Specifically, all agents' orders are filled at prices
\be \label{def:S}
S^{u^N}_{t} \triangleq P_{t}  - \kappa Y^{u^N}_t \qquad (0 \leq t \leq T), 
\ee
where $P$ denotes some unaffected price process in $\mathcal H^2$ and
\be \label{def:Y} 
Y^{u^N}_t \triangleq e^{-\rho t} y + \gamma \int_0^t e^{-\rho (t-s)} \left( \frac{1}{N}\sum_{i=1}^{N} u^{i,N}_s \right) ds \qquad (0 \leq t \leq T),
\ee
captures an aggregated linear and exponentially decaying price distortion from the unaffected level $P$ with some constants $\kappa, \gamma >0$, resilience rate $\rho >0$, and some initial value $y > 0$. In addition, we also assume that each agent's trading incurs an individual slippage cost $\lambda >0$ which is levied on their respective quadratic turnover rate and accumulates to       
\begin{equation*}
\lambda \int_0^T (u^{i,N}_{t})^2 dt, 
\end{equation*}
up to a terminal time $T$. For each agent $i \in \{1,\ldots,N\}$ we denote by
\begin{equation*}
u^{-i,N} \triangleq (u^{1,N},\ldots, u^{i-1,N}, u^{i+1,N}, \ldots u^{N,N}) \in \mathcal{A}^{N-1},
\end{equation*}
the other agents' trading activities. The $i$-th agent's objective is to optimally unwind her initial position $x^{i,N} \in \re$ by time $T$. The execution is done while taking into account both the interaction with all other agents' strategies $u^{-i,N}$ through the jointly generated transient price impact $Y^{u^N}$ in~\eqref{def:Y}, as well as the risky asset's price signal $A$. These considerations are accounted by maximizing the performance functional
\begin{equation} \label{def:FPGobjective}
\begin{aligned}
J^{i,N}(u^{i,N};u^{-i,N}) \triangleq & \, E \Bigg[ \int_0^T S_t^{u^N} u^{i,N}_t dt - \lambda \int_0^T (u^{i,N}_{t})^2 dt -\phi \int_0^T (X_t^{u^{i,N}})^2 dt \bigg. \\
& \hspace{22pt} \bigg. + X_T^{u^{i,N}} (P_T - \varrho X_T^{u^{i,N}}) \Bigg],
\end{aligned}
\end{equation}
over admissible rates $u^{i,N} \in \mathcal A$. Observe that $J^{i,N}(u^{i,N};u^{-i,N}) < \infty$ for any set of admissible strategies $u^N \in \mathcal A^N$ because the objective is a linear quadratic functional of the state processes and the controls. As in the single-agent case in~\cite{N-V19} the parameters $\phi > 0$ and $\varrho > 0$ implement, respectively, an additional penalty on the $i$-th agent's running inventory and her terminal position at final time~$T$. For further motivations of the objective in~\eqref{def:FPGobjective} we refer to, e.g.,~\cite{N-V19} in the single-agent case $N=1$ and the references therein.  

Our first goal in this paper is to solve simultaneously for each agent $i \in \{1,\ldots,N\}$ their individual optimal stochastic control problems 
\begin{equation} \label{def:FPGoptimization}
J^{i,N}(u^{i,N};u^{-i,N}) \rightarrow \max_{u^{i,N} \in \mathcal A}.
\end{equation}
This solution will establish a Nash equilibrium for this stochastic differential price impact game in the following usual sense.

\begin{definition} \label{def:Nash}
A set of strategies $\hat{u}^N = (\hat{u}^{1,N},\ldots,\hat{u}^{N,N}) \in \mathcal{A}^N$ is called an open-loop Nash equilibrium if for all $i \in \{1,\ldots, N\}$ and for all admissible strategies $v \in \mathcal{A}$ it holds that
\begin{equation*}
    J^{i,N}(\hat{u}^{i,N};\hat{u}^{-i,N}) \geq J^{i,N}(v;\hat{u}^{-i,N}). 
\end{equation*}
\end{definition}

\begin{remark} \label{rem:model}
When the signal is set to zero, that is, $A\equiv 0$, our $N$-player model reduces to the models that were studied by \citet{Strehle:17} and \citet{SchiedStrehelZhang:17}. The optimal strategy in these models is deterministic while in our setting an optimal strategy will be signal-adaptive. 
\end{remark}
\begin{remark} 
 We remark that the transient impact scaling with $1/N$ in
 \eqref{def:Y} is not essential at this point, but is crucial for the
 scaling limit of the system, as number of agents tends to
 infinity. See the discussion in Section \ref{sec:infinitePlayer}. 
\end{remark}
\begin{remark} \label{rem:terminalPosition}
Observe that agent $i$'s terminal position $X_T^{u^{i,N}}$ in~\eqref{def:FPGobjective} is valued with respect to $P_T$ and not $S^{u^N}_T$. This is to ensure that the functional $u^{i,N} \mapsto J^{i,N}(u^{i,N};u^{-i,N})$ is strictly concave in $u^{i,N} \in \mathcal{A}$; see Lemma~\ref{lem:concave_finite} below. Interestingly, strict concavity is in general not guaranteed if $P_T$ is replaced by $S^{u^N}_T$ due to the arising mixed product $X_T^{u^{i,N}} Y^{u^N}_T$. Note, however, that we implicitly assume a large penalty parameter $\varrho >0$ on agent $i$'s outstanding inventory $X_T^{u^{i,N}}$ in order to virtually enforce a liquidation constraint at terminal time $T>0$ and to stabilize the competitive game between the agents. In this regard, the valuation of the final position in the risky asset is of minor relevance because agent $i$'s terminal inventory will be very close to zero anyways. 
\end{remark}

\subsection{An FBSDE Characterization of the Finite Player Nash Equilibrium} \label{subsec:FBSDEchar}
 
In the spirit of Pontryagin's stochastic maximum principle and along the lines of the corresponding single-agent problem in~\cite{N-V19}, a probabilistic and convex analytic calculus of variations approach can be readily employed here to derive a system of coupled linear FBSDEs. This system characterizes a unique open-loop Nash equilibrium for our multi-agent price impact game. 

\begin{lemma} \label{thm:NASHFBSDE}
A set of controls $(u^{i,N})_{i \in \{1,\ldots, N\}} \subset \mathcal A$ yields the unique Nash equilibrium in the sense of Definition~\ref{def:Nash} if and only if the processes $(X^{u^{i,N}},Y^{u^N},u^{i,N},Z^{u^{i,N}})$, $i = 1,\ldots,N$, satisfy the following coupled linear forward backward SDE system
\begin{equation} \label{eq:NASHFBSDE}
\left\{
\begin{aligned}
    dX^{u^{i,N}}_t = & \, - u^{i,N}_t dt, \quad X^{u^{i,N}}_0 = x^{i,N},\\
    dY^{u^N}_t = & \, -\rho Y_t^{u^N} dt + \frac{\gamma}{N} \sum_{i=1}^{N} u^{i,N}_t dt, \quad Y^{u^N}_0 = y ,\\
    du^{i,N}_t = & \, \frac{dP_t}{2\lambda}  + \frac{\kappa\rho Y^{u^N}_t }{2\lambda} dt -\frac{\gamma\kappa}{2\lambda N} \sum_{j\not =i} u_{t}^{j,N} dt - \frac{\phi X^{u^{i,N}}_t}{\lambda} dt + \frac{\rho Z^{u^{i,N}}_t }{2\lambda} dt + dM^{i,N}_t,  \\
    & \hspace{215pt} u^{i,N}_T = \frac{\varrho}{\lambda} X^{u^{i,N}}_T -
      \frac{\kappa}{2\lambda} Y_T^{u^N} \\ 
    dZ^{u^{i,N}}_t = & \,\rho Z^{u^{i,N}}_t dt + \frac{\gamma\kappa}{N}u^{i,N}_t dt + dN^{i,N}_t, 
    \quad  Z^{u^{i,N}}_T = 0 
\end{aligned}
\right.
\end{equation}
for suitable square integrable martingales $M^{i,N}=(M^{i,N}_t)_{0\leq t \leq T}$ and $N^{i,N}=(N^{i,N}_t)_{0\leq t\leq T}$, $i = 1,\ldots,N$. In particular, the system in~\eqref{eq:NASHFBSDE} has a unique solution. 
\end{lemma}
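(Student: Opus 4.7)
The plan is to combine strict concavity of $u^{i,N} \mapsto J^{i,N}(u^{i,N};u^{-i,N})$ (Lemma~\ref{lem:concave_finite}) with a convex--analytic variational argument: $\hat{u}^{i,N}$ is a best response against $\hat{u}^{-i,N}$ if and only if the Gateaux derivative of $J^{i,N}(\cdot;\hat{u}^{-i,N})$ at $\hat{u}^{i,N}$ vanishes on $\mathcal A$. This necessary and sufficient first-order condition will then be recast as the announced FBSDE by introducing an adjoint process that tracks the expected future resilient impact of agent $i$'s own trading on itself.

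First, I would perturb $u^{i,N}$ by $\varepsilon v$ with $v \in \mathcal A$ and differentiate at $\varepsilon = 0$. The induced state perturbations are $\delta X^i_t = -\int_0^t v_s ds$ and $\delta Y_t = (\gamma/N)\int_0^t e^{-\rho(t-s)} v_s ds$. Substituting into~\eqref{def:FPGobjective} and reorganising via Fubini and the tower property yields
\[
\frac{d}{d\varepsilon}\bigg|_{\varepsilon=0} J^{i,N}(u^{i,N}+\varepsilon v; u^{-i,N}) = E\int_0^T v_t\, \Lambda^{i,N}_t\, dt
\]
for a progressively measurable process $\Lambda^{i,N}$ determined by conditional expectations of the unaffected price, the current states, and the other agents' controls. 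By strict concavity, $u^{i,N}$ is a best response against $u^{-i,N}$ if and only if $\Lambda^{i,N} \equiv 0$ in the $dt \otimes d\P$ sense.

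Next, I would introduce the adjoint for $Y^{u^N}$,
\[
Z^{u^{i,N}}_t \triangleq \frac{\gamma\kappa}{N}\, E\!\left[\int_t^T e^{-\rho(s-t)} u^{i,N}_s\, ds \,\bigg|\, \mathcal F_t\right],
\]
whose BSDE with terminal value zero and some square-integrable martingale $N^{i,N}$ follows directly from the martingale representation theorem. Rearranging $\Lambda^{i,N} \equiv 0$ expresses $2\lambda u^{i,N}_t$ as $Z^{u^{i,N}}_t$ plus a conditional expectation of integrated future running costs and the terminal boundary value; applying It\^o's formula to both sides of this identity, combined with the forward SDEs for $X^{u^{i,N}}, Y^{u^N}$ and the BSDE for $Z^{u^{i,N}}$, matches drift and martingale parts against the $u^{i,N}$-equation in~\eqref{eq:NASHFBSDE} and produces the terminal condition $u^{i,N}_T = (\varrho/\lambda) X^{u^{i,N}}_T - (\kappa/(2\lambda)) Y^{u^N}_T$. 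Reversing these manipulations and appealing to strict concavity once more supplies the converse implication.

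Finally, for unique solvability of~\eqref{eq:NASHFBSDE} I plan a symmetry/aggregation reduction: the averages $\bar u^N = N^{-1}\sum_i u^{i,N}$ and $\bar Z^N = N^{-1}\sum_i Z^{u^{i,N}}$ together with $Y^{u^N}$ satisfy a self-contained linear FBSDE of the single-agent type solved in~\cite{N-V19} by a deterministic matrix-exponential/affine ansatz, while the centred residuals $u^{i,N} - \bar u^N$ satisfy decoupled linear FBSDEs with $Y^{u^N}$ entering as an exogenous driver, amenable to the same technique. I expect the main obstacle to be verifying non-degeneracy of the deterministic coefficient matrix underpinning the affine ansatz on the whole interval $[0,T]$, so that the construction is globally valid rather than only on a short horizon; this should follow from a monotonicity/sign argument exploiting the strict positivity of $\lambda, \gamma, \kappa, \phi, \varrho > 0$.
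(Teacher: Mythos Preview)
Your variational/G\^ateaux-derivative derivation of the first-order condition and your definition of the adjoint $Z^{u^{i,N}}$ coincide with the paper's argument; the paper merely abbreviates the computation by invoking the single-agent analogue \cite[Lemma~5.2]{N-V19} (with $\gamma$ replaced by $\gamma/N$) and then rewriting the coupling through the decomposition $Y^{u^N}=Y^{u^{i,N}}+Y^{u^{-i,N}}$, which shows that only the aggregate $Y^{u^N}$ enters the $i$-th BSDE.

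One point deserves correction. For unique solvability you anticipate that the non-degeneracy of the deterministic coefficient matrix underlying the affine/matrix-exponential ansatz ``should follow from a monotonicity/sign argument exploiting the strict positivity of $\lambda,\gamma,\kappa,\phi,\varrho$.'' The paper does \emph{not} obtain this from first principles: the relevant non-degeneracy is imposed as Assumptions~\ref{assump:1} and~\ref{assump:2} and only used later for the explicit feedback formulas. Within the present lemma, uniqueness is instead handled by Lemma~\ref{lem:uniqueNE_finite}, which deduces uniqueness of the Nash equilibrium (and hence of any FBSDE solution, by your ``if and only if'') directly from strict concavity via the contradiction argument of~\cite[Proposition~4.8]{SchiedStrehelZhang:17}, without any matrix analysis. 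Your aggregation/decoupling plan is exactly what the paper carries out afterwards in Corollary~\ref{cor:NASHFBSDE}, Proposition~\ref{prop:sol-mean-FBSDE} and Theorem~\ref{thm:main-finite}, but treat global non-degeneracy as an assumption to be verified numerically for given parameters rather than something you will prove.
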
 

The proof of Lemma \ref{thm:NASHFBSDE} is given in Section~\ref{subsec:proofs-finite}. In order to decouple the system in Theorem~\ref{thm:NASHFBSDE} it is very natural to first average over all $N$ FBSDEs in~\eqref{eq:NASHFBSDE} and to introduce an auxiliary aggregated FBSDE system; see also~\citet{DrapeauLuoSchiedXiong:19}. More precisely, we let $(\ol X^{\bar u^N}, \ol Y^{\bar u^N}, \bar u^N, \ol Z^{\bar u^N})$ with $\bar u^N \in \mathcal{A}$ denote the unique solution to the linear FBSDE system
\begin{equation} \label{eq:mean-FBSDE}
\left\{
\begin{aligned}
    d\ol X_t^{\bar u^N} = & \, - \ol u_t^N dt, \quad \ol X_0 = \ol x^N,\\
    d\ol Y_t^{\bar u^N}  = & \, -\rho \ol Y_t^{\bar u^N}  dt + \gamma  \bar u_{t}^N dt, \quad \ol Y_0^{\bar u^N} = y, \\
    d\bar u_t^N = & \,  \frac{dP_t}{2 \lambda}  + \frac{\kappa\rho \ol Y_t^{\bar u^N}}{2\lambda}  dt  -\frac{\gamma\kappa (N-1) \bar u_{t}^N}{2 \lambda N} dt  - \frac{\phi \ol X_{t}^{\bar u^N}}{\lambda} dt + \frac{\rho \ol Z_{t}^{\bar u^N}}{2\lambda} dt + d\ol M_t^N, \\
    & \hspace{215pt} \bar u_T^N = \frac{\varrho}{\lambda} \ol X_T^{\bar u^N} -
    \frac{\kappa}{2\lambda} \ol Y_T^{\bar u^N} \\ 
    d\ol Z_t^{\bar u^N} = & \,\rho \ol Z_t^{\bar u^N} dt + \frac{\gamma\kappa}{N}\bar u_t^N dt
    + d\ol N_t^N, \quad  \ol Z_T^{\bar u^N} = 0 
\end{aligned}
\right.
\end{equation}
for two suitable square integrable martingales $\ol M^N = (\ol M^N_t)_{0 \leq t \leq T} $ and $\ol N^N = (\ol N^N_t)_{0 \leq t \leq T}$ where $\bar x^N \triangleq \frac{1}{N} \sum_{i=1}^{N} x^{i,N}$. Then we obtain the following

\begin{cor} \label{cor:NASHFBSDE}
Let $(\ol X^{\bar u^N}, \ol Y^{\bar u^N}, \bar u^N, \ol Z^{\bar u^N})$, $\bar u^N \in \mathcal{A}$, be the unique solution to the linear FBSDE system in~\eqref{eq:mean-FBSDE}. Moreover, for each $i \in \{1, \ldots, N\}$ let $(X^{u^{i,N}}, u^{i,N}, Z^{u^{i,N}})$  with $u^{i,N} \in \mathcal{A}$ be the unique solution to 
\begin{equation} \label{eq:NASHFBSDE*}
\left\{
\begin{aligned}
    dX^{u^{i,N}}_t = & \, - u^{i,N}_t dt, \quad X^{u^{i,N}}_0 = x^{i,N}\\
    du^{i,N}_t = & \, \frac{1}{2\lambda} (dP_t - \kappa d\ol Y_t^{\bar u^N}) + \frac{\gamma\kappa}{2\lambda N} u_t^{i,N} dt - \frac{\phi X^{u^{i,N}}_t}{\lambda} dt + \frac{\rho Z^{u^{i,N}}_t }{2\lambda} dt + dM^{i,N}_t,  \\
    & \hspace{215pt} u^{i,N}_T = \frac{\varrho}{\lambda}
      X^{u^{i,N}}_T-\frac{\kappa}{2\lambda}  \ol Y_T^{\bar u^N} \\
    dZ^{u^{i,N}}_t = & \,\rho Z^{u^{i,N}}_t dt +
    \frac{\gamma\kappa}{N}u^{i,N}_t dt + dN^{i,N}_t, \quad Z^{u^{i,N}}_T =
    0 
\end{aligned}
\right.
\end{equation}
for suitable square integrable martingales $M^{i,N}=(M^{i,N}_t)_{0\leq t \leq T}$ and $N^{i,N}=(N^{i,N}_t)_{0\leq t\leq T}$. Then it holds that 
\begin{equation} \label{eq-rrr} 
    \begin{aligned}
    \bar u_t^N = & \, \frac{1}{N} \sum_{i=1}^N u^{i,N}_t, & \ol X_t^{\bar u^N} = & \, \frac{1}{N}\sum_{i=1}^N X^{u^{i,N}}_t = \bar x - \int_0^t \bar u_s^N ds, & \ol Y_t^{\bar u^N} = & \, Y^{u^N}_t, \\
    \ol Z_t^{\bar u^N} = & \, \frac{1}{N} \sum_{i=1}^N Z^{i,N}_t, & \ol M_t^N = & \, \frac{1}{N} \sum_{i=1}^N M^{i,N}_t, \quad  \ol N_t^N = \, \frac{1}{N} \sum_{i=1}^N N^{i,N}_t,
    \end{aligned}
\end{equation}
and $(X^{u^{i,N}}, \ol Y^{\bar u^N}, u^{i,N}, Z^{u^{i,N}})$, $i=1,\ldots,N$, satisfy the system in~\eqref{eq:NASHFBSDE}. In particular, the set of controls $(u^{i,N})_{i \in \{1, \ldots, N\}} \subset \mathcal A$ yields the unique Nash equilibrium in the sense of Definition~\ref{def:Nash}.
\end{cor}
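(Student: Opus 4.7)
The plan is to show that the decoupled system \eqref{eq:NASHFBSDE*}, once driven by the aggregated process $\ol Y^{\bar u^N}$ produced by \eqref{eq:mean-FBSDE}, reconstructs a solution to the coupled Nash FBSDE \eqref{eq:NASHFBSDE}; the Nash property and its uniqueness then follow from Lemma \ref{thm:NASHFBSDE}. Existence and uniqueness for each individual system \eqref{eq:NASHFBSDE*} (with $\ol Y^{\bar u^N}$ treated as an exogenous $L^2$-input) is a standard consequence of linear FBSDE theory, so I take the solutions $(X^{u^{i,N}}, u^{i,N}, Z^{u^{i,N}}, M^{i,N}, N^{i,N})$ as given.

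The identifications in \eqref{eq-rrr} are obtained by an averaging argument. Setting $\tilde u^N_t \triangleq \frac{1}{N}\sum_i u^{i,N}_t$, $\tilde X^N_t \triangleq \frac{1}{N}\sum_i X^{u^{i,N}}_t$, $\tilde Z^N_t \triangleq \frac{1}{N}\sum_i Z^{u^{i,N}}_t$, and analogously $\tilde M^N, \tilde N^N$, I would average the $i$-th equation of \eqref{eq:NASHFBSDE*} over $i$. Using the $\ol Y$-dynamics from \eqref{eq:mean-FBSDE} in the form $-\kappa\, d\ol Y_t^{\bar u^N} = \kappa\rho \ol Y_t^{\bar u^N} dt - \kappa\gamma \bar u^N_t dt$, the averaged drift of the $u$-equation becomes
\[
\frac{dP_t}{2\lambda} + \frac{\kappa\rho \ol Y_t^{\bar u^N}}{2\lambda} dt - \frac{\gamma\kappa \bar u^N_t}{2\lambda} dt + \frac{\gamma\kappa \tilde u^N_t}{2\lambda N} dt - \frac{\phi \tilde X^N_t}{\lambda} dt + \frac{\rho \tilde Z^N_t}{2\lambda} dt + d\tilde M^N_t.
\]
Under the ansatz $\tilde u^N = \bar u^N$, the two middle terms combine to $-\frac{\gamma\kappa(N-1)}{2\lambda N}\bar u^N_t dt$, which is exactly the coefficient in the $\bar u^N$-equation of \eqref{eq:mean-FBSDE}; the terminal conditions, the $X$- and $Z$-equations, and the initial conditions match after averaging by inspection. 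Hence $(\tilde X^N, \tilde u^N, \tilde Z^N, \tilde M^N, \tilde N^N)$ solves \eqref{eq:mean-FBSDE}, and the postulated uniqueness for that system yields all equalities in \eqref{eq-rrr}. The equality $\ol Y^{\bar u^N} = Y^{u^N}$ then follows directly from the common integral representation $e^{-\rho t} y + \gamma \int_0^t e^{-\rho(t-s)} \bar u^N_s\, ds$.

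With these identifications, I would substitute $\ol Y^{\bar u^N} = Y^{u^N}$ and $\bar u^N = \frac{1}{N}\sum_j u^{j,N}$ back into \eqref{eq:NASHFBSDE*} and regroup the $u^{i,N}$-drift via
\[
\frac{\gamma\kappa}{2\lambda N} u^{i,N}_t - \frac{\gamma\kappa}{2\lambda N}\sum_j u^{j,N}_t = -\frac{\gamma\kappa}{2\lambda N}\sum_{j\neq i} u^{j,N}_t,
\]
which reproduces exactly the $i$-th $u$-equation in \eqref{eq:NASHFBSDE}; the remaining equations and terminal/initial conditions are immediate. Thus $(X^{u^{i,N}}, Y^{u^N}, u^{i,N}, Z^{u^{i,N}})$, $i=1,\ldots,N$, solves \eqref{eq:NASHFBSDE} and Lemma \ref{thm:NASHFBSDE} delivers both the Nash property and uniqueness. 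The only delicate point is the algebraic cancellation producing the $(N-1)/N$ coefficient in \eqref{eq:mean-FBSDE}: the self-interaction term $\frac{\gamma\kappa}{2\lambda N} u^{i,N}$ in \eqref{eq:NASHFBSDE*} is precisely what converts the ``sum over all $j$'' contribution coming from $-\kappa\, d\ol Y$ into the ``sum over $j\neq i$'' contribution of \eqref{eq:NASHFBSDE}, and the same mechanism produces the $(N-1)$ at the aggregated level.
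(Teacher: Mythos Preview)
Your proposal is correct and follows essentially the same approach as the paper. The paper's two-line proof runs the equivalence in the opposite direction---starting from the unique solution of \eqref{eq:NASHFBSDE}, averaging to obtain \eqref{eq:mean-FBSDE}, and substituting $\sum_{j\neq i} u^{j,N} = N\bar u^N - u^{i,N}$ to obtain \eqref{eq:NASHFBSDE*}---whereas you reconstruct \eqref{eq:NASHFBSDE} from the decoupled systems; both hinge on the same algebraic identity and on uniqueness. One small point of presentation: your ``ansatz $\tilde u^N=\bar u^N$'' step would read more cleanly as the observation that the averaged system of \eqref{eq:NASHFBSDE*} is again an instance of \eqref{eq:NASHFBSDE*} with initial datum $\bar x^N$, and that $(\ol X^{\bar u^N},\bar u^N,\ol Z^{\bar u^N})$ solves it via the split $-\tfrac{(N-1)}{N}=-1+\tfrac{1}{N}$, so uniqueness of \eqref{eq:NASHFBSDE*} (rather than of \eqref{eq:mean-FBSDE}) gives the identification directly.
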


\begin{proof}
Summing up all four equations in~\eqref{eq:NASHFBSDE} over $i$ and multiplying them with $1/N$ gives the FBSDE system in~\eqref{eq:mean-FBSDE}. Next, replacing in~\eqref{eq:NASHFBSDE} in the second BSDE for $u^{i,N}$ the term $\sum_{j \neq i} u^{j,N}$ by $N \ol u^N - u^{i,N}$ and noting that $\ol Y^{\bar u^N} = Y^{u^N}$ yields the claim.  
\end{proof}

Observe that for each agent $i \in \{1,\ldots,N\}$ the FBSDEs
in~\eqref{eq:NASHFBSDE*} are fully decoupled and that the jointly
created but autonomous transient price distortion $\ol Y^{\bar u^N}$ computed
from the FBSDE in~\eqref{eq:mean-FBSDE} feeds into the system
in~\eqref{eq:NASHFBSDE*} through adding on to the dynamics of the
unaffected price process via $P - \kappa \ol Y^{\bar u^N}$. In fact, the
process~$\ol Y^{\bar u^N}$ can be viewed as an endogenous signal, which is observed by the traders, in addition to the exogenous signal $A$.

\begin{remark} \label{rem:SingelAgentFBSDE}
In the case $N=1$ the FBSDE system in~\eqref{eq:NASHFBSDE} or, equivalently, in~\eqref{eq:mean-FBSDE} or, equivalently, in~\eqref{eq:NASHFBSDE*}, corresponds to the one derived in~\cite[Lemma 5.2]{N-V19} for the single-agent problem.
\end{remark}

\subsection{Solving the Finite Player Game} \label{sec:finitePlayerSol}

To compute explicitly the unique Nash equilibrium for our $N$-player stochastic differential game we need to solve successively two linear FBSDE systems. First, the aggregated system in~\eqref{eq:mean-FBSDE} and then, separately for each agent $i \in \{1,\ldots,N\}$ the system in~\eqref{eq:NASHFBSDE*}. Both can be achieved in terms of matrix exponentials. Specifically, let 
\begin{equation} \label{def:matrixExpEbar}
\ol Q(t) \triangleq \exp( \ol F^N \cdot t) = (\ol Q_{ij}(t))_{1 \leq i,j, \leq 4} \in \mathbb{R}^{4 \times 4},
\end{equation}
denote the matrix exponential of the matrix
\be \label{def:Fbar} 
\ol F^N \triangleq \begin{pmatrix}
    0 & 0 & -1 & 0 \\ 0 & -\rho & \gamma & 0 \\ - \frac{\phi }{\lambda}  &  \frac{\kappa\rho }{2 \lambda} & -\frac{\kappa \gamma (N-1)}{2 \lambda N}  & \frac{\rho}{2 \lambda}\\ 0 & 0 & \frac{\kappa\gamma}{N} & \rho
    \end{pmatrix} \in \mathbb{R}^{4 \times 4}. 
\ee
Moreover, define $\ol G(t) = ( \ol G_i(t))_{1 \leq i \leq 4} \in \mathbb{R}^{4}$ as   
\begin{equation} \label{def:Gbar}
\ol G(t) \triangleq  \left(\frac{\varrho}{\lambda}, - \frac{\kappa}{2\lambda}, -1,0 \right)  \ol Q(t) \qquad (t\geq 0), 
\end{equation}
and $\ol H(t) = ( \ol H_i(t))_{1 \leq i \leq 4} \in \mathbb{R}^{4}$ as   
\begin{equation} \label{def:Hbar}
\ol H(t) \triangleq \left(0,0,0,1 \right)  \ol Q(t) \qquad (t\geq 0),
\end{equation}
and let 
\begin{equation}
\begin{aligned} \label{def:vsbar}
\bar v_0(t) \triangleq \left( 1- \frac{\ol G_4(t)}{\ol G_3(t)} \frac{\ol H_{3}(t)}{\ol H_{4}(t)} \right)^{-1},  \qquad
&& \bar v_1(t) \triangleq & \;
          \frac{\ol G_4(t)}{\ol G_3(t)}\frac{\ol H_{1}(t)}{\ol H_{4}(t)}
          -\frac{\ol G_1(t)}{\ol G_3(t)}, \\
\bar v_2(t) \triangleq 
          \frac{\ol G_4(t)}{\ol G_3(t)}\frac{\ol H_{2}(t)}{\ol H_{4}(t)}
         -\frac{\ol G_2(t)}{\ol G_3(t)},  \qquad
&& \bar v_3(t) \triangleq & \; \frac{\ol G_4(t)}{\ol G_3(t)},
\end{aligned}
\end{equation} 
for all $t \in [0,\infty)$. For simplicity we make the following assumption. 

\begin{assumption} \label{assump:1}
We assume that the set of parameters $\xi \triangleq (\lambda,\gamma, \kappa, \rho, \varrho, \phi, T) \in \re^7_+$ are chosen such that
\be \label{g-s-cond-1} 
\inf_{t \in [0,T]} \left| \ol G_3(t) \ol H_{4}(t) - \ol G_4(t) \ol H_{3}(t) \right|>0 
 \ee
and $ \inf_{t \in [0,T]} | \ol G_3(t)| > 0$, $\inf_{t \in [0,T]}| \ol H_4(t) | >0$; as well as that the eigenvalues $\bar\nu_1, \bar\nu_2, \bar\nu_3, \bar\nu_4$ of the matrix $\ol F^N$ in~\eqref{def:Fbar} are real-valued and distinct.  
\end{assumption}

Then denoting by $E_{t}$ the conditional expectation with respect to $\mathcal F_t$ for all $t\in [0,T]$, we obtain the following feedback solution for the FBSDE system in~\eqref{eq:mean-FBSDE}.

\begin{proposition} \label{prop:sol-mean-FBSDE} 
Under Assumption~\ref{assump:1} the unique solution $\bar u^N \in \mathcal A$ satisfying~\eqref{eq:mean-FBSDE} is given in a linear feedback form via
\be \label{eq:opt_ubar} 
\begin{aligned}
\bar {u}_{t}^N = & \; \bar v_0(T-t) \Bigg( \bar v_1(T-t) \ol X_{t}^{\bar u^N} + \bar v_2(T-t) \ol Y_{t}^{\bar u^N} \Bigg. \\
& + \frac{1}{2\lam} \Bigg. \bigg( \bar v_3(T-t)E_t\left[
   \int_t^T  \frac{\ol H_{3}(T-s)}{\ol H_{4}(T-t)}dA_s \right] - E_{t}\left[
   \int_{t}^T\frac{\ol G_3(T-s)}{\ol G_{3}(T-t)}dA_{s} \right] \bigg) \Bigg),
\end{aligned}
\ee
for all $t \in (0,T)$.
\end{proposition}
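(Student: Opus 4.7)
The plan is to treat \eqref{eq:mean-FBSDE} as a linear FBSDE in four unknowns driven by the semimartingale $P=L+A$, solve it explicitly via variation of constants using the matrix exponential $\ol Q(\cdot)=\exp(\ol F\,\cdot)$, and then use the two terminal conditions together with Assumption~\ref{assump:1} to pin down a 2$\times$2 linear system whose explicit solution matches~\eqref{eq:opt_ubar}. Existence and uniqueness of the FBSDE solution are already granted by Lemma~\ref{thm:NASHFBSDE} and Corollary~\ref{cor:NASHFBSDE}, so only the derivation of the explicit formula is needed.

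First, I stack the state into the column vector $\Psi_t\triangleq(\ol X_t^{\bar u^N},\ol Y_t^{\bar u^N},\bar u_t^N,\ol Z_t^{\bar u^N})^\top$ and let $e_3\triangleq(0,0,1,0)^\top$. Splitting $dP_t=dL_t+dA_t$ and absorbing $(2\lambda)^{-1}dL_t$ into the third component of the driving martingale, the system~\eqref{eq:mean-FBSDE} reads
$$d\Psi_t = \ol F\,\Psi_t\,dt + \frac{1}{2\lambda}\,e_3\,dA_t + d\widetilde{\mathbf{M}}_t,$$
for a square-integrable $\re^4$-valued martingale $\widetilde{\mathbf{M}}$. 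Variation of constants, applied between $t$ and $T$, yields
$$\Psi_T = \ol Q(T-t)\,\Psi_t + \frac{1}{2\lambda}\int_t^T \ol Q(T-s)\,e_3\,dA_s + \int_t^T \ol Q(T-s)\,d\widetilde{\mathbf{M}}_s.$$

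Next, I encode the two terminal conditions $\bar u_T^N=(\varrho/\lambda)\ol X_T^{\bar u^N}-(\kappa/(2\lambda))\ol Y_T^{\bar u^N}$ and $\ol Z_T^{\bar u^N}=0$ as the orthogonality relations $r_1\Psi_T=0$ and $r_2\Psi_T=0$ with row vectors $r_1\triangleq(\varrho/\lambda,-\kappa/(2\lambda),-1,0)$ and $r_2\triangleq(0,0,0,1)$. Left-multiplying the variation-of-constants identity by $r_1$ and $r_2$, taking conditional expectation $E_t[\,\cdot\,]$ (the stochastic integral against $d\widetilde{\mathbf{M}}$ has vanishing $E_t$-expectation thanks to $P\in\mathcal H^2$, $\bar u^N\in\mathcal A$, and the boundedness of $\ol Q$ on $[0,T]$), and recognising $r_1\ol Q(\cdot)=\ol G(\cdot)$, $r_2\ol Q(\cdot)=\ol H(\cdot)$, $r_j\ol Q(T-s)e_3=\ol G_3(T-s)$ resp.\ $\ol H_3(T-s)$, I obtain the two scalar identities
$$\ol G(T-t)\Psi_t + \frac{1}{2\lambda}E_t\!\left[\int_t^T \ol G_3(T-s)\,dA_s\right] = 0, \qquad \ol H(T-t)\Psi_t + \frac{1}{2\lambda}E_t\!\left[\int_t^T \ol H_3(T-s)\,dA_s\right] = 0.$$

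Finally, expanding $\ol G(T-t)\Psi_t$ and $\ol H(T-t)\Psi_t$ componentwise shows that these two identities form a $2\times 2$ linear system for $(\bar u_t^N,\ol Z_t^{\bar u^N})$ with coefficient matrix $\bigl(\begin{smallmatrix}\ol G_3 & \ol G_4\\ \ol H_3 & \ol H_4\end{smallmatrix}\bigr)(T-t)$ and with right-hand side depending linearly on $(\ol X_t^{\bar u^N},\ol Y_t^{\bar u^N})$ and the two conditional-expectation signal integrals. Assumption~\ref{assump:1} keeps the determinant bounded away from zero uniformly in $t\in[0,T]$, so Cramer's rule produces $\bar u_t^N$ as an explicit linear combination of these quantities; dividing both numerator and denominator of each coefficient by $\ol G_3(T-t)\ol H_4(T-t)$ then matches the resulting weights exactly with $\bar v_0,\bar v_1,\bar v_2,\bar v_3$ from~\eqref{def:vsbar}, yielding~\eqref{eq:opt_ubar}. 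I expect the only nontrivial bookkeeping to lie in justifying the exchange of conditional expectation with the stochastic integral and verifying admissibility of the resulting feedback $\bar u^N\in\mathcal A$; the former is routine from $\mathcal H^2$-square-integrability and the latter follows from Gr\"onwall's inequality applied to the closed-loop forward SDE for $(\ol X^{\bar u^N},\ol Y^{\bar u^N})$.
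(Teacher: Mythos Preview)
Your proposal is correct and follows essentially the same route as the paper: rewrite \eqref{eq:mean-FBSDE} as a linear system driven by $\ol F$, propagate via the matrix exponential $\ol Q$, hit with the row vectors $r_1,r_2$ encoding the terminal conditions, take $E_t[\,\cdot\,]$, and solve the resulting $2\times 2$ system in $(\bar u_t^N,\ol Z_t^{\bar u^N})$. The only cosmetic difference is that you invoke Cramer's rule directly whereas the paper eliminates $\ol Z_t^{\bar u^N}$ by substitution (deriving an intermediate expression for each unknown and then plugging one into the other); both yield the same coefficients, and your closing remarks on the martingale-integral vanishing and the Gr\"onwall admissibility check match the paper's treatment.
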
  

The proof of Proposition \ref{prop:sol-mean-FBSDE} is given in Section~\ref{subsec:proofs-finite}.

\begin{remark}
In the case $N=1$ Proposition~\ref{prop:sol-mean-FBSDE} retrieves the single-agent optimal strategy from~\cite[Theorem 3.2]{N-V19}.
\end{remark}

\begin{remark} 
  We refer to Section \ref{sec-mat-fin} for the computation of the
  matrix exponential $\ol Q(t)$ in~\eqref{def:matrixExpEbar} via
  diagonalization, as well as the functions $\ol G(t), \ol H(t)$
  in~\eqref{def:Gbar}, \eqref{def:Hbar}. Symbolic computation of the
  eigenvalues $\bar\nu_1, \bar\nu_2, \bar\nu_3, \bar\nu_4$ of
  $\ol F^N$ in~\eqref{def:Fbar} is very cumbersome as it involves
  roots of a quartic equation. We hence omit it. Instead, we require
  the additional property of the eigenvalues in Assumption~\ref{assump:1}
  to guarantee that $\ol G_i(t)$ and $\ol H_i(t)$ which are explicitly
  given in~\eqref{def:Gbar1}--\eqref{def:Hbar4} are well-defined and
  bounded for all $i \in \{1,\ldots,4\}$. Note, however, that
  $\det(\ol F^N) = \rho^2 \phi/\lambda > 0$ so that all eigenvalues are
  different from zero. Moreover, observe that for a given set of
  parameters $\xi$ the conditions in Assumption~\ref{assump:1} can
  also be easily verified using the expressions derived in
  Section~\ref{sec-mat-fin}.
\end{remark}

Next having at hand the solution $(\ol X^{\bar u^N}, \ol Y^{\bar u^N}, \ol u^N)$ from Proposition~\ref{prop:sol-mean-FBSDE} for the aggregated FBSDE system~\eqref{eq:mean-FBSDE} we can insert $\ol Y^{\bar u^N}$ into~\eqref{eq:NASHFBSDE*} and solve this linear FBSDE system in $(X^{u^{i,N}},u^{i,N},Z^{u^{i,N}})$ for each $i=1,\ldots,N$ separately. More precisely, let 
\begin{equation} \label{def:matrixExpE}
Q(t) \triangleq \exp(  F^N \cdot t) = ( Q_{ij}(t))_{1 \leq i,j, \leq 3}  \in \mathbb{R}^{3 \times 3},
\end{equation}
denote the matrix exponential of the matrix
\be \label{def:F} 
F^N \triangleq \begin{pmatrix}
    0 & -1 & 0 \\ -\frac{\phi}{\lambda} & \frac{\kappa \gamma}{2 \lambda N} & \frac{\rho}{2\lambda} \\  
    0 & \frac{\kappa\gamma}{N} & \rho
    \end{pmatrix} \in \mathbb{R}^{3 \times 3}. 
\ee
In addition, define $G(t) = ( G_i(t))_{1 \leq i \leq 3} \in \mathbb{R}^{3}$ as    
\begin{equation} \label{def:G}
G(t) \triangleq  \left(\frac{\varrho}{\lambda}, -1, 0 \right) Q(t) \qquad (t\geq
0),
\end{equation}
and $H(t) = (H_i(t))_{1\leq i \leq 3} \in \mathbb{R}^{3}$ as
\begin{equation} \label{def:H}
H(t) \triangleq  \left(0, 0, 1 \right) Q(t)
\qquad (t\geq 0).
\end{equation}
Lastly, let 
\begin{equation}
\begin{aligned} \label{def:vs}
v_0(t) \triangleq \left(1 - \frac{G_3(t)}{G_2(t)} \frac{H_{2}(t)}{H_{3}(t)} \right)^{-1}, \;
v_1(t) \triangleq 
          \frac{G_3(t)}{G_2(t)}\frac{H_{1}(t)}{ H_{3}(t)}
          -\frac{G_1(t)}{G_2(t)}, 
\; v_2(t) \triangleq \frac{G_3(t)}{G_2(t)},
\end{aligned}
\end{equation} 
for all $t \in [0,\infty)$ and make the following assumption.

\begin{assumption} \label{assump:2}
We assume that the set of parameters $\xi = (\lambda,\gamma, \kappa, \rho, \varrho, \phi, T) \in \re^7_+$ are chosen such that 
\be \label{g-s-cond-2} 
  \inf_{t\in [0,T]}|G_2(t) H_{3}(t) - G_3(t) H_{2}(t)|>0, 
 \ee 
and that $ \inf_{t\in [0,T]} |G_2(t)| > 0$, $ \inf_{t\in [0,T]}|H_3(t)| > 0$.
\end{assumption}

Our main result of Section~\ref{sec:finitePlayer} can now be summarized as follows:

\begin{theorem} \label{thm:main-finite} 
Under Assumptions~\ref{assump:1} and~\ref{assump:2} let $\ol Y^{\bar u^N}$ denote the unique solution of~\eqref{eq:mean-FBSDE}. Then for each $i \in \{1,\ldots,N\}$ the unique solution $\hat{u}^{i,N} \in \mathcal A$ satisfying~\eqref{eq:NASHFBSDE*} is given in a linear feedback form via
\begin{equation} \label{eq:opt_ui}
\begin{aligned}
\hspace{-1pt} \hat{u}^{i,N}_{t} = & \; v_0(T-t) \Bigg( v_1(T-t) X^{\hat{u}^{i,N}}_{t} + \frac{v_2(T-t)}{2\lam} \Bigg. E_t\left[
  \int_t^T  \frac{H_{2}(T-s)}{H_{3}(T-t)} \left(dA_s - \kappa d\ol Y_s^{\bar u^N} \right) \right]  \\
  & \hspace{60pt}  - \frac{1}{2\lam} E_{t}\left[
  \int_{t}^T\frac{G_2(T-s)}{G_{2}(T-t)} \left(dA_s - \kappa d\ol Y_s^{\bar u^N}
  \right) -\frac{\kappa}{G_2(T-t)} \ol Y_T^{\bar u^N}\right] \Bigg),
\end{aligned}
\end{equation}
for all $t \in (0,T)$. In particular, the set of controls $(\hat{u}^{i,N})_{i \in \{1,\ldots,N\}} \subset \mathcal A$ in~\eqref{eq:opt_ui} provides the unique Nash equilibrium strategies in the sense of Definition~\ref{def:Nash}.
\end{theorem}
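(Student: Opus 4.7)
The strategy is to mirror the matrix-exponential decoupling used for the aggregated FBSDE in Proposition~\ref{prop:sol-mean-FBSDE}, now applied to the three-dimensional per-agent FBSDE~\eqref{eq:NASHFBSDE*}, with the process $\ol Y^{\bar u^N}$ from Proposition~\ref{prop:sol-mean-FBSDE} treated as a given exogenous input. Existence, uniqueness and the Nash-equilibrium property are already secured by Lemma~\ref{thm:NASHFBSDE} and Corollary~\ref{cor:NASHFBSDE}, so the task reduces to identifying the solution of~\eqref{eq:NASHFBSDE*} with the explicit feedback~\eqref{eq:opt_ui} and checking admissibility.

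Collecting the per-agent state into $\vec V_t \triangleq (X^{u^{i,N}}_t, u^{i,N}_t, Z^{u^{i,N}}_t)^{\top}$ and using the canonical decomposition $P = L + A$ from~\eqref{ass:P}, I absorb $dL_t$, $dM^{i,N}_t$ and $dN^{i,N}_t$ into a single vector-valued square-integrable martingale $d\vec{\mathcal M}_t$, so that~\eqref{eq:NASHFBSDE*} reads $d\vec V_t = F\,\vec V_t\,dt + d\vec{\mathcal B}_t + d\vec{\mathcal M}_t$ with $F$ as in~\eqref{def:F} and $d\vec{\mathcal B}_t = \bigl(0,\, (2\lambda)^{-1}(dA_t - \kappa\,d\ol Y_t^{\bar u^N}),\, 0\bigr)^{\top}$. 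The variation-of-constants identity on $[t,T]$ yields $\vec V_T = Q(T-t)\,\vec V_t + \int_t^T Q(T-s)\,d\vec{\mathcal B}_s + \int_t^T Q(T-s)\,d\vec{\mathcal M}_s$, and taking $\mathcal F_t$-conditional expectation kills the martingale integral.

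Next, I project onto the two covectors encoding the terminal data. Rewriting the BSDE terminal condition as $(\varrho/\lambda,-1,0)\cdot\vec V_T = (\kappa/2\lambda)\ol Y_T^{\bar u^N}$ and left-multiplying by $Q(T-t)$ produces $G(T-t)$ from~\eqref{def:G}; similarly $Z^{u^{i,N}}_T = 0$ pairs with $(0,0,1)$ and generates $H(T-t)$ from~\eqref{def:H}. This yields the scalar system
\[
\begin{aligned}
G_1 X^{u^{i,N}}_t + G_2 u^{i,N}_t + G_3 Z^{u^{i,N}}_t &= \tfrac{\kappa}{2\lambda} E_t[\ol Y_T^{\bar u^N}] - \tfrac{1}{2\lambda} E_t\!\!\int_t^T G_2(T-s)\,(dA_s - \kappa\,d\ol Y_s^{\bar u^N}), \\
H_1 X^{u^{i,N}}_t + H_2 u^{i,N}_t + H_3 Z^{u^{i,N}}_t &= -\tfrac{1}{2\lambda} E_t\!\!\int_t^T H_2(T-s)\,(dA_s - \kappa\,d\ol Y_s^{\bar u^N}),
\end{aligned}
\]
where each $G_j, H_j$ is evaluated at $T-t$. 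Under Assumption~\ref{assump:2} the factors $H_3(T-t)$ and the determinant $G_2 H_3 - G_3 H_2$ are bounded away from zero, so I solve the second equation for $Z^{u^{i,N}}_t$, substitute into the first and divide by $G_2(T-t)$. The determinant becomes $v_0(T-t)^{-1}$, and the remaining coefficients regroup exactly into $v_1(T-t)$ on $X^{u^{i,N}}_t$ and into $v_2(T-t)$ on the $H$-integral, reproducing~\eqref{eq:opt_ui} verbatim.

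The one genuine care point lies in running the construction in reverse: given the feedback~\eqref{eq:opt_ui}, one must verify that it defines an admissible control in $\mathcal A$ and actually solves~\eqref{eq:NASHFBSDE*} together with some square-integrable martingales $M^{i,N}, N^{i,N}$. Admissibility follows from $P\in\mathcal H^2$ via~\eqref{ass:P}, from the $L^2$-regularity of $\ol Y^{\bar u^N}$ and $\bar u^N$ inherited from Proposition~\ref{prop:sol-mean-FBSDE}, and from boundedness of $v_0,v_1,v_2$ on $[0,T]$ under Assumption~\ref{assump:2}. The consistency check --- that the conditional-expectation terms in~\eqref{eq:opt_ui} admit martingale representations which close the BSDE for $u^{i,N}$ --- is standard for linear FBSDEs with Lipschitz coefficients but is precisely where the $\mathcal H^2$-assumption on $P$ enters essentially; uniqueness of the resulting solution to~\eqref{eq:NASHFBSDE*}, and thereby the Nash property, is then inherited from Corollary~\ref{cor:NASHFBSDE}.
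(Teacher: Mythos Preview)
Your proposal is correct and follows essentially the same approach as the paper: writing the per-agent FBSDE~\eqref{eq:NASHFBSDE*} as a three-dimensional linear system with matrix~$F$, applying the variation-of-constants formula with the matrix exponential~$Q$, projecting onto the two terminal covectors to obtain the scalar identities in $G$ and $H$, eliminating $Z^{u^{i,N}}_t$, and finally checking admissibility under Assumption~\ref{assump:2}. The paper's own proof is slightly terser (it defers the admissibility argument to a Gronwall-type estimate from~\cite{N-V19}), but the structure is identical.
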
  
The proof of Theorem \ref{thm:main-finite} is given in Section~\ref{subsec:proofs-finite}.

\begin{remark} 
We refer to Section \ref{sec-mat-fin} for the computation of the matrix exponential $Q(t)$ in~\eqref{def:matrixExpE} via diagonalization, as well as the functions $G(t), H(t)$ in~\eqref{def:G}, \eqref{def:H}. In contrast to the matrix $\ol F^N$ in~\eqref{def:Fbar}, the eigenvalues $\nu_1, \nu_2, \nu_3 \in \mathbb{R}$ of the matrix $F^N$ in~\eqref{def:F} can be easily computed explicitly. Also note that for a given set of parameters $\xi$ the conditions in Assumption~\ref{assump:2} can be readily checked with the expressions computed in Section~\ref{sec-mat-fin}.
\end{remark} 

\begin{remark}  \label{rem-sig} 
The optimal Nash equilibrium strategies $(\hat{u}^{i,N})_{i \in
  \{1,\ldots,N\}}$ in Theorem~\ref{thm:main-finite} reveal that in
equilibrium, the aggregated transient price impact $\ol Y^{\bar u^N}$ directly
feeds into the signal process~$A$ through the term $A - \kappa \ol
Y^{\bar u^N}$. In other words, each agent is adjusting the trading speed
according to the common exogenous price signal $A$, as well as to the
common price impact~$\ol Y^{\bar u^N}$, which acts as an endogenous signal. The
signal $\ol Y^{\bar u^N}$ can be interpreted as the aggregated order-flow of all agents with an exponential weighting (see \eqref{def:Y}). 
\end{remark} 

\begin{remark} 
Observe that in contrast to the single agent solution in~\cite[Theorem 3.2]{N-V19}, each agent $i$'s individual transient price distortion $Y^{u^{i,N}}$ (cf.~\eqref{def:Yui}) is not a state variable anymore which is taken into account in the feedback dynamics in~\eqref{eq:opt_ui} but it gets absorbed in the autonomous process $\ol Y^{\bar u^N}$.
\end{remark}

\section{An Infinite Player Mean Field Game} \label{sec:infinitePlayer}

Our second goal in this paper is to introduce and study the limiting mean field game of the finite player price impact game from Section~\ref{sec:finitePlayer} when the number of agents $N$ tends to infinity, and the price impact of a single agent becomes negligible.

To this end, suppose there are infinitely many agents indexed by $i \in \mathbb{N}$. As in~\eqref{def:Xi} the inventory of each agent $i \in \mathbb{N}$ is described by
\begin{equation} \label{def:Xi_infMFG} 
X^{v^i}_t \triangleq
  x^{i}-\int_0^t v^{i}_s ds \qquad (0 \leq t \leq T)
\end{equation}
with initial position $x^i \in \mathbb{R}$ and selling rate $v^i \in \mathcal{A}$.

\begin{assumption} \label{ass:initPosLimit}
We assume that the limit of the averages of all initial positions $(x^i)_{i \in \mathbb{N}} \subset \mathbb{R}$ exists and denote this limit as  
\begin{equation} \label{eq:initPosLimit}
    \tilde x = \lim_{N \rightarrow \infty} \frac{1}{N} \sum_{i=1}^N x^i \in \mathbb{R}.
\end{equation}
\end{assumption}

Moreover, we introduce a stochastic process $\nu \in \mathcal{A}$ which represents the limiting aggregated averaged trading (selling) speed of all agents. For such a given net trading flow $\nu$, the risky asset's execution price at which each agent $i$ is executing her trades is now prescribed as 
\begin{equation} \label{def:S_infMFG} 
S^{\nu}_{t} \triangleq P_{t} - \kappa Y^{\nu}_t \qquad (0 \leq t \leq T), 
\end{equation}
with unaffected price process $P \in \mathcal H^2$ and transient price distortion
\begin{equation} \label{def:Y_infMFG} 
Y^{\nu}_t \triangleq e^{-\rho t} y
  + \gamma \int_0^t e^{-\rho (t-s)} \nu_s ds \qquad (0 \leq t \leq T),
\end{equation}
for some $y > 0$. Note that in contrast to~\eqref{def:S}
and~\eqref{def:Y}, here agent $i$'s impact on the visible price process $S^\nu_t$ is neglected. 

The trader's objective functional from~\eqref{def:FPGobjective} modifies to
\begin{equation} \label{def:objective_infMFG}
J^{i,\infty}(v^{i};\nu) \triangleq E \Bigg[ \int_0^T S^{\nu}_t v^{i}_t dt - \lambda \int_0^T (v^{i}_{t})^2 dt - \phi \int_0^T
(X_t^{v^i})^2 dt + X_T^{v^i} (P_T - \varrho X_T^{v^i}) \Bigg].
\end{equation}
Observe that $J^{i,\infty}(v^{i};\nu) < \infty$ for any admissible $v^{i}, \nu  \in \mathcal A$ and any agent $i \in \mathbb{N}$.

The paradigm of the infinite player mean field game is now to first solve for each agent $i \in \mathbb{N}$ and for a given net trading flow $\nu \in \mathcal{A}$ the optimal stochastic control problems 
\begin{equation} \label{def:optimization_infMFG}
J^{i,\infty}(v^{i};\nu) \rightarrow \max_{v^{i} \in \mathcal A} \qquad (i \in \mathbb{N}), 
\end{equation}
and then to determine $\nu$ endogenously in the following sense; see, e.g., also~\cite{CasgrainJaimungal:18,CasgrainJaimungal:20}.

\begin{definition} \label{def:Nash_infMFG}
A collection of strategies $(\hat{v}^i)_{i \in \mathbb{N}} \subset \mathcal{A}$ is called a Nash equilibrium to the infinite player mean field game if for each agent $i\in\mathbb{N}$ the strategy~$\hat{v}^i$ solves the optimization problem in~\eqref{def:optimization_infMFG} with $\hat{\nu} \in \mathcal{A}$ satisfying
\begin{equation} \label{eq:Nash_infMFG}
    \hat{\nu}_t = \lim_{N \rightarrow \infty} \frac{1}{N} \sum_{i=1}^N \hat{v}^i_t \qquad d\mathbb{P}\otimes dt\text{-a.e. on } \Omega \times [0,T].
\end{equation}
\end{definition}

Put differently, since the optimal solution $\hat{v}^i$ from~\eqref{def:optimization_infMFG} depends on the net trading flow $\nu$, the aim for solving the infinite player mean field game is to determine $\hat{\nu}$ such that the fixed point equation in~\eqref{eq:Nash_infMFG} is satisfied.
\begin{remark} 
Note that Definition \ref{def:Nash_infMFG} extends the classical definition of mean field games by \citet{LasryLions} and \citet{CarmonaDelarue:18:1} (see Chapter 3.2.1), as it includes common noise and therefore the mean field trading speed $\hat \nu$ is random. Our setting is closer to the definition of the mean field game in \citet{car-16}, which permits common noise as well. However, unlike \cite{car-16}, in our case agents start with different initial conditions, therefore the fixed point condition in \cite{car-16} is replaced by \eqref{eq:Nash_infMFG}, as the agents' laws conditioned on the common noise are not identical. One can observe that under the assumption that all agents are identical, the two definitions coincide. We remark that Definition \ref{def:Nash_infMFG} is compatible with Assumption 3.1 of \cite{CasgrainJaimungal:20}, only here we prove that our solution satisfies \eqref{eq:Nash_infMFG} instead of assuming that, as was done in \cite{CasgrainJaimungal:20}.
 \end{remark} 
\begin{remark} \label{rem-l2-con} 
Our convergence results will give us precise information on the convergence rate of the limit in \eqref{eq:Nash_infMFG} under the $L^2$ norm. Specifically, in \eqref{consis-l2} we prove that
$$
 E\left[\sup_{t\in [0,T]} \left(\hat \nu_{t}-\frac{1}{N}\sum_{i=1}^N \hat v_t^i\right)^{2}\right]= O\left( C(N)^2\right), 
$$
where, using the notation of \eqref{eq:initPosLimit}, $C(N)$ is given by, 
\be \label{c-n}
C(N)=  \left| \tilde x -   \frac{1}{N} \sum_{i=1}^N x^i \right|.  
\ee
\end{remark} 

\begin{remark} \label{rem:MFG}
Our mean field model extends the models that were studied by Casgrain
and Jaimungal \cite{CasgrainJaimungal:18, CasgrainJaimungal:20} in
the sense that it incorporates a transient price impact into the mean field interaction. We also extend the model of \citet{fu2020}, which does incorporate a transient price impact, but does not include the exogenous price predictive signal.  
\end{remark}



\subsection{An FBSDE Characterization of the Infinite Player Nash
  Equilibrium} \label{sec:infinitePlayerFBSDEs}

Along the lines of Section~\ref{sec:finitePlayer}, a purely probabilistic approach can be implemented to solve the infinite player mean field game formulated in Definition~\ref{def:Nash_infMFG}. That is, due to the concave structure of our problem we can directly derive the characterizing system of equations again via calculus of variations arguments \`a la Pontryagin's stochastic maximum principle. The unique infinite player Nash equilibrium for the mean field game problem in Definition~\ref{def:Nash_infMFG} is then obtained by solving a coupled infinite system of FBSDEs.

\begin{lemma} \label{thm:MeanFieldFBSDE_inf}
  A collection of controls $(\hat{v}^i)_{i \in \mathbb{N}} \subset \mathcal{A}$ uniquely solves the infinite player mean field game in the sense of Definition~\ref{def:Nash_infMFG} if
  and only if for each $i\in\mathbb{N}$ the processes $(X^{\hat{v}^i},\hat{v}^i, Y^{\hat{\nu}},\hat{\nu})$
  satisfy the following coupled linear forward backward SDE
  systems
  \begin{equation} \label{eq:MeanFieldFBSDE_inf} \left\{
      \begin{aligned}
        dX^{\hat v^i} _t = & \, - \hat v^i_t dt, \quad  X^{\hat v^i}_0 = x^i \\
        dY^{\hat\nu}_t  = & \, -\rho Y^{\hat\nu}_t  dt + \gamma  \hat\nu_{t} dt, \quad Y^{\hat\nu}_0 = y \\
        d \hat v^i_t = & \, \frac{dP_t}{2\lambda} + \frac{\kappa\rho }{2
          \lambda} Y^{\hat\nu}_t dt - \frac{\kappa \gamma}{2\lambda}
        \hat\nu_t dt - \frac{\phi }{\lambda} X^{\hat v^i}_{t}dt + d
        L^i_t, \; \hat v^i_T = \frac{\varrho}{\lambda} X^{\hat
            v^i}_T - \frac{\kappa}{2\lambda} Y^{\hat\nu}_T 
      \end{aligned}
    \right.
  \end{equation}
  for a collection of suitable square-integrable martingales
  $L^i=(L^i_t)_{0 \leq t \leq T}$ where
  \begin{equation} \label{MeanFieldFBSDE_inf_consist}
    \hat\nu_t = \lim_{N \rightarrow \infty} \frac{1}{N} \sum_{i=1}^N \hat v^i_t, \qquad d\mathbb{P}\otimes dt\text{-a.e. on } \Omega \times [0,T].
\end{equation}
\end{lemma}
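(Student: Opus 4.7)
The plan is to adapt the Pontryagin-style stochastic maximum principle used to establish Lemma \ref{thm:NASHFBSDE} to the mean field setting, exploiting the simplification that, once a candidate aggregate trading flow $\hat{\nu}$ is given, each agent $i$ faces a standalone linear-quadratic optimization problem in which $\hat{\nu}$ (and hence $Y^{\hat{\nu}}$) is exogenous. The ``if and only if'' in the lemma therefore splits naturally into (a) an individual first-order characterization for fixed $\hat{\nu}$, and (b) the consistency condition \eqref{MeanFieldFBSDE_inf_consist}, which is part of the definition of a mean field Nash equilibrium.

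For step (a) I would first establish strict concavity of $v^i \mapsto J^{i,\infty}(v^i;\nu)$ on $\mathcal{A}$ in direct analogy with Lemma \ref{lem:concave_finite}: the quadratic contributions in $v^i$ reduce to $-\lambda \int_0^T (v^i_t)^2 dt - \phi \int_0^T (X^{v^i}_t)^2 dt - \varrho (X_T^{v^i})^2$, and crucially, in contrast to the finite-$N$ case, no mixed product between $v^i$ and $Y^{v^i}$ arises because $Y^\nu$ does not depend on $v^i$. I would then compute the Gateaux derivative of $J^{i,\infty}(\hat v^i+\epsilon w;\hat\nu)$ in $\epsilon$ at $0$ along admissible directions $w \in \mathcal{A}$. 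Using $X^{\hat v^i+\epsilon w}_t = X^{\hat v^i}_t - \epsilon \int_0^t w_s\, ds$, Fubini, and the tower property, the vanishing of this derivative for all $w$ reduces to
\begin{equation*}
2\lambda \hat v^i_s = S^{\hat\nu}_s + E_s\!\left[\, 2\phi \int_s^T X^{\hat v^i}_t\, dt - P_T + 2\varrho X_T^{\hat v^i} \right] \qquad d\mathbb{P}\otimes ds\text{-a.e.}
\end{equation*}
Writing the right-hand side as an It\^o semimartingale by differentiating the square-integrable martingale $t \mapsto E_t\!\left[ P_T - 2\varrho X^{\hat v^i}_T - 2\phi \int_t^T X^{\hat v^i}_s\, ds\right]$ and substituting $S^{\hat\nu} = P - \kappa Y^{\hat\nu}$ yields exactly the BSDE dynamics and terminal condition for $\hat v^i$ in \eqref{eq:MeanFieldFBSDE_inf} with a suitable square-integrable martingale $L^i$. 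Strict concavity converts this first-order condition into one that is both necessary and sufficient for individual optimality.

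Step (b) is then bookkeeping: in the Nash $\Rightarrow$ FBSDE direction, \eqref{MeanFieldFBSDE_inf_consist} holds by Definition \ref{def:Nash_infMFG} and each $\hat v^i$ solves \eqref{def:optimization_infMFG} with $\nu = \hat\nu$, hence satisfies the BSDE by step (a); in the converse direction, strict concavity together with the FOC implies each $\hat v^i$ is the unique maximizer of $J^{i,\infty}(\cdot;\hat\nu)$, so $(\hat v^i)_i$ forms a Nash equilibrium. Uniqueness reduces to uniqueness of $\hat\nu$, which is the main obstacle because it requires interpreting and pinning down the $N\to\infty$ limit in \eqref{MeanFieldFBSDE_inf_consist} without a priori knowing the limit exists. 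I would handle this by averaging the individual BSDEs and passing to the limit under Assumption \ref{ass:initPosLimit} to obtain a closed linear FBSDE for an aggregated triple $(\tilde X,\tilde\nu,Y^{\tilde\nu})$ that is the $N\to\infty$ analogue of \eqref{eq:mean-FBSDE} (with the $1/N$ terms vanishing), whose unique solvability follows from the same matrix-exponential construction used to prove Proposition \ref{prop:sol-mean-FBSDE}. Feeding the resulting $\hat\nu$ back into each individual BSDE then yields a unique $\hat v^i$ for every $i$, completing the proof.
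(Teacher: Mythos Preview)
Your treatment of the individual first-order characterization (your step (a)) is correct and matches the paper's approach; the paper simply delegates the Gateaux computation to~\cite{BMO:19}, while you sketch it explicitly. The equivalence in step (b) is also handled correctly.

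The genuine gap is in your last paragraph. You propose to ``average the individual BSDEs and pass to the limit'' to obtain the aggregated FBSDE~\eqref{eq:MeanField_infAggregated}, solve it for $\tilde\nu$, and then ``feed the resulting $\hat\nu$ back'' into the individual BSDEs. But this stops short of the key step: once you have constructed $\hat v^i$ from the individual FBSDE with input $\tilde\nu$, you must \emph{verify} that $\frac{1}{N}\sum_{i=1}^N \hat v^i_t \to \tilde\nu_t$ so that the consistency condition~\eqref{MeanFieldFBSDE_inf_consist} actually holds for this constructed family. ``Passing to the limit'' in the averaged BSDE is exactly the statement to be proved, not a tool you can invoke; the convergence of the martingale parts and the interchange of limit and dynamics are not automatic. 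The paper closes this gap differently: it first solves~\eqref{eq:MeanField_infAggregated} and~\eqref{eq:MeanFieldGameFBSDE_inf} explicitly, then exploits the feedback representation of Corollary~\ref{cor:main-infMFG},
\[
\hat v^i_t - \tilde\nu_t \;=\; -\,\frac{R'(T-t)}{R(T-t)}\,\bigl(\tilde X^{\tilde\nu}_t - X^{\hat v^i}_t\bigr),
\]
which, upon averaging, reduces the consistency check to a Gronwall estimate driven solely by $\bigl|\frac{1}{N}\sum x^i - \tilde x\bigr|$ under Assumption~\ref{ass:initPosLimit}. Without an analogue of this step (or some other quantitative control on the averages), your argument for existence is incomplete. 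For uniqueness the paper does not argue via the aggregated FBSDE at all; it invokes a separate strict-concavity lemma (the infinite-player analogue of Lemma~\ref{lem:uniqueNE_finite}), which you should also cite rather than trying to deduce uniqueness of $\hat\nu$ from a limit you have not yet justified.
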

The proof of Lemma \ref{thm:MeanFieldFBSDE_inf} is given in
Section~\ref{subsec:proofs-infinite}.

Similar to the finite player game in Section~\ref{sec:finitePlayer}, in order to decouple and solve the infinite FBSDE system in~\eqref{eq:MeanFieldFBSDE_inf}, it is very natural to introduce again an aggregated auxiliary FBSDE system similar to~\eqref{eq:mean-FBSDE}. Specifically, let $(\tilde X^{\tilde \nu},\tilde Y^{\tilde \nu},\tilde \nu)$ with $\tilde \nu \in \mathcal{A}$ denote the unique solution to the linear FBSDE system
\begin{equation} \label{eq:MeanField_infAggregated} 
\left\{
\begin{aligned} 
d \tilde X_t^{\tilde \nu} = & \, -  \tilde\nu_t dt, \quad  \tilde X_0^{\tilde \nu} =  \tilde x, \\
d \tilde Y_t^{\tilde \nu}  = & \, -\rho \tilde Y_t^{\tilde \nu}  dt + \gamma \tilde\nu_{t} dt, \quad \tilde Y_0 = y, \\
d \tilde\nu_ t = & \, \frac{dP_t}{2\lambda} + \frac{\kappa\rho }{2
  \lambda}  \tilde Y_t^{\tilde \nu} dt - \frac{\kappa \gamma}{2\lambda}
\tilde\nu_{t} dt - \frac{\phi }{\lambda}  \tilde X_{t}^{\tilde \nu} dt + d\tilde
L_t, \; \tilde \nu_T = \frac{\varrho }{\lambda}  \tilde X_T^{\tilde \nu} -
\frac{\kappa}{2\lambda} \tilde Y_T^{\tilde \nu}  
\end{aligned}
\right.
\end{equation}
with $\tilde x \in \mathbb{R}$ from~\eqref{eq:initPosLimit} and a suitable square integrable martingale $\tilde L=(\tilde L_t)_{0 \leq t \leq T}$. Then we obtain following corollary. 

\begin{cor} \label{cor:MeanFieldFBSDE_inf}
  Let $(\tilde X^{\tilde \nu},\tilde Y^{\tilde \nu},\tilde\nu)$, $\tilde\nu \in \mathcal{A}$, be the unique solution to the linear FBSDE system in~\eqref{eq:MeanField_infAggregated}. Moreover, for each $i \in \mathbb{N}$ let $(X^{\hat{v}^i}, \hat{v}^i)$ with $\hat{v}^i \in \mathcal{A}$ be the unique solution to the linear FBSDE 
\begin{equation} \label{eq:MeanFieldGameFBSDE_inf} 
\left\{
\begin{aligned}
dX^{\hat v^i} _t = & \, - \hat v^i_t dt, \quad  X_0^{\hat v^i} = x^i \\
d \hat v^i_t = & \, \frac{1}{2\lambda} \left( dP_t - \kappa d \tilde Y_t^{\tilde \nu} \right)  - \frac{\phi}{\lambda} X^{\hat v^i}_{t}dt + d
L^i_t, \quad \hat v^i_T = \frac{\varrho }{\lambda} X^{\hat v^i}_T
- \frac{\kappa}{2\lambda} \tilde Y_T^{\tilde \nu} 
\end{aligned}
\right.
 \end{equation}
for a collection of suitable square-integrable martingales $L^i=(L^i_t)_{0 \leq t \leq T}$. Then it holds that for all $i \in \mathbb{N}$ the quadruples $(X^{\hat{v}^i},\hat{v}^i, \tilde Y^{\tilde \nu},\tilde\nu)$ satisfy the systems in~\eqref{eq:MeanFieldFBSDE_inf} with the consistency condition~\eqref{MeanFieldFBSDE_inf_consist} (with $\tilde \nu $ in the role of $\hat \nu$). In particular, $(\hat v^i)_{i \in \mathbb{N}}$ are the unique infinite player Nash equilibrium strategies in the sense of Definition~\ref{def:Nash_infMFG} and
\begin{equation}
    \tilde \nu_t = \lim_{N \rightarrow \infty} \frac{1}{N} \sum_{i=1}^N \hat v^i_t \qquad d\mathbb{P}\otimes dt\text{-a.e. on } \Omega \times [0,T].
\end{equation}
\end{cor}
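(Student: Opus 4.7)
The plan is to verify two claims: (a) that for each $i \in \mathbb{N}$ the quadruple $(X^{\hat v^i}, \hat v^i, \tilde Y^{\tilde \nu}, \tilde \nu)$ satisfies the coupled system \eqref{eq:MeanFieldFBSDE_inf} with $\tilde \nu$ playing the role of $\hat \nu$, and (b) that the consistency condition \eqref{MeanFieldFBSDE_inf_consist} holds with $\tilde \nu$ in the role of $\hat \nu$. The ``in particular'' statement of the corollary then follows immediately from Lemma~\ref{thm:MeanFieldFBSDE_inf}. Claim (a) is a direct algebraic verification: using $d\tilde Y^{\tilde \nu}_t = -\rho \tilde Y^{\tilde \nu}_t\,dt + \gamma \tilde \nu_t\,dt$ one obtains
\[
\frac{1}{2\lambda}\bigl(dP_t - \kappa\, d\tilde Y^{\tilde \nu}_t\bigr) = \frac{dP_t}{2\lambda} + \frac{\kappa\rho}{2\lambda}\,\tilde Y^{\tilde \nu}_t\,dt - \frac{\kappa \gamma}{2\lambda}\,\tilde \nu_t\,dt,
\]
so that the BSDE for $\hat v^i$ in \eqref{eq:MeanFieldGameFBSDE_inf} coincides with the one appearing in \eqref{eq:MeanFieldFBSDE_inf}; the forward equation for $X^{\hat v^i}$, the forward equation for $Y^{\hat \nu} = \tilde Y^{\tilde \nu}$, and the terminal conditions then match by inspection.

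The core of the proof is therefore claim (b). The idea is to average the decoupled FBSDE \eqref{eq:MeanFieldGameFBSDE_inf} over $i=1,\ldots,N$ and to set
\[
\bar X^{\bar v,N}_t \triangleq \frac{1}{N}\sum_{i=1}^N X^{\hat v^i}_t, \qquad \bar v^N_t \triangleq \frac{1}{N}\sum_{i=1}^N \hat v^i_t, \qquad \bar L^N_t \triangleq \frac{1}{N}\sum_{i=1}^N L^i_t.
\]
The averaged FBSDE then has exactly the same structure as the aggregated system \eqref{eq:MeanField_infAggregated} solved by $(\tilde X^{\tilde \nu}, \tilde \nu)$, the only difference being the initial condition $\bar X^{\bar v,N}_0 = \bar x^N \triangleq \frac{1}{N}\sum_{i=1}^N x^i$ in place of $\tilde X^{\tilde \nu}_0 = \tilde x$; crucially, both systems are driven by the same unaffected price $P$ and the same process $\tilde Y^{\tilde \nu}$. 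Subtracting the two yields a linear FBSDE for $(\Delta X^N, \Delta v^N) \triangleq (\bar X^{\bar v,N} - \tilde X^{\tilde \nu},\ \bar v^N - \tilde \nu)$ in which all exogenous stochastic inputs cancel:
\[
d\Delta X^N_t = -\Delta v^N_t\,dt,\quad \Delta X^N_0 = \bar x^N - \tilde x,\qquad d\Delta v^N_t = -\frac{\phi}{\lambda}\Delta X^N_t\,dt + d(\bar L^N - \tilde L)_t,\quad \Delta v^N_T = \frac{\varrho}{\lambda}\Delta X^N_T.
\]

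This residual FBSDE is scalar and linear with a single deterministic input, the constant $\bar x^N - \tilde x$. Decoupling via the ansatz $\Delta v^N_t = \alpha(t)\Delta X^N_t$ where $\alpha$ solves the scalar terminal-value Riccati ODE $\alpha'(t) = \alpha(t)^2 - \phi/\lambda$ with $\alpha(T) = \varrho/\lambda$, one obtains a deterministic solution in which necessarily $\bar L^N - \tilde L \equiv 0$, and by uniqueness of solutions to the linear FBSDE this is the solution. Consequently there exists a constant $C = C(\lambda,\phi,\varrho,T) > 0$, independent of $N$, with
\[
\sup_{t \in [0,T]} |\Delta v^N_t| \leq C\,|\bar x^N - \tilde x| \qquad \mathbb{P}\text{-a.s.},
\]
and Assumption~\ref{ass:initPosLimit} then delivers the consistency \eqref{MeanFieldFBSDE_inf_consist}. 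The main (though mild) obstacle is ensuring that the Riccati ODE admits a solution on the full interval $[0,T]$; for parameter regimes where the Riccati could blow up one instead appeals to the general $L^2$ well-posedness and stability theory for linear FBSDEs, yielding the analogous estimate $E[\sup_t|\Delta v^N_t|^2] \leq C |\bar x^N - \tilde x|^2$, which still suffices to conclude the limit in~\eqref{MeanFieldFBSDE_inf_consist}. The uniqueness of the Nash equilibrium then follows from Lemma~\ref{thm:MeanFieldFBSDE_inf} together with the uniqueness of solutions to the decoupled linear FBSDEs in \eqref{eq:MeanField_infAggregated} and \eqref{eq:MeanFieldGameFBSDE_inf}.
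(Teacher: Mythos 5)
Your proof is correct, and it reaches the key consistency condition~\eqref{MeanFieldFBSDE_inf_consist} by a route that differs in an interesting way from the paper's. The paper proves the corollary by pointing back to the consistency argument inside the proof of Lemma~\ref{thm:MeanFieldFBSDE_inf}: there, the explicit feedback representation~\eqref{eq:opt_ustar_alt} from Corollary~\ref{cor:main-infMFG} is averaged over $i$ to get $\frac{1}{N}\sum_i \hat v^i_t - \tilde\nu_t = \frac{R'(T-t)}{R(T-t)}\bigl(\frac{1}{N}\sum_i X^{\hat v^i}_t - \tilde X^{\tilde\nu}_t\bigr)$, and the conclusion follows from a Gronwall estimate plus Fatou's lemma. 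You instead stay entirely at the FBSDE level: you average~\eqref{eq:MeanFieldGameFBSDE_inf}, observe that $(\tilde X^{\tilde\nu},\tilde\nu)$ solves the same two-dimensional linear FBSDE driven by $P-\kappa\tilde Y^{\tilde\nu}$ but started from $\tilde x$, and note that the residual system is homogeneous with the sole input $\bar x^N-\tilde x$. Your Riccati ansatz $\Delta v^N_t=\alpha(t)\Delta X^N_t$ with $\alpha'=\alpha^2-\phi/\lambda$, $\alpha(T)=\varrho/\lambda$, has the explicit solution $\alpha(t)=R'(T-t)/R(T-t)$ with $R$ as in~\eqref{def:R}; since $R>0$ on $[0,\infty)$ for $\phi,\varrho,\lambda>0$, there is no blow-up and your fallback to general $L^2$ stability theory is unnecessary. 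So the two arguments hinge on the same structural fact — the deviation of the averaged strategy from $\tilde\nu$ is a deterministic feedback of the deviation of the averaged inventory — but you derive it directly from the residual FBSDE rather than importing~\eqref{eq:opt_ustar_alt}, which makes your proof logically self-contained (the paper's argument quietly relies on the representation established only later in Corollary~\ref{cor:main-infMFG}), and you obtain a pathwise deterministic bound $\sup_t|\Delta v^N_t|\le C|\bar x^N-\tilde x|$ where the paper settles for $L^2$ convergence via Gronwall and Fatou. The only point worth making explicit is that identifying $(\bar X^{\bar v,N},\bar v^N)$ with the deterministic shift of $(\tilde X^{\tilde\nu},\tilde\nu)$ uses uniqueness of the two-dimensional linear FBSDE with exogenous input $\tilde Y^{\tilde\nu}$, which is available from the strict concavity argument (Lemma~\ref{lem:concave_infinite} and~\cite{BMO:19}) already invoked for~\eqref{eq:MeanFieldGameFBSDE_inf}.
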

The proof of Corollary  \ref{cor:MeanFieldFBSDE_inf} is given in Section~\ref{subsec:proofs-infinite}.

\begin{remark} \label{rem-mg-conc}
In view of Corollary~\ref{cor:MeanFieldFBSDE_inf} we refer to the processes $\tilde X^{\tilde \nu}$, $\tilde\nu$, and $\tilde Y^{\tilde \nu}$ satisfying~\eqref{eq:MeanField_infAggregated} as the mean field inventory and trading rate, as well as the mean field transient price impact. Observe that for each agent $i \in \mathbb{N}$ the FBSDEs in~\eqref{eq:MeanFieldGameFBSDE_inf} are now fully decoupled and that the mean field transient price impact $\tilde Y^{\tilde \nu}$ feeds into the system in~\eqref{eq:MeanFieldGameFBSDE_inf} by adding on to the dynamics of the unaffected price process via $P - \kappa \tilde Y^{\tilde \nu}$. Unlike the $N$-player game (see Remark \ref{rem-sig}), here the transient price impact $\tilde Y^{\tilde \nu}$ can be regarded as an exogenous signal, as the contribution of each agent to $\tilde Y^{\tilde \nu}$ is neglected. 
\end{remark} 
\begin{remark} 
Note that the mean field's three-dimensional FBSDE system in~\eqref{eq:MeanField_infAggregated} is considerably simpler than the finite player's aggregated four-dimensional FBSDE system in~\eqref{eq:mean-FBSDE}. This is due to the fact that in the infinite agent setup each agent $i$'s individual price impact on the execution price via the transient price distortion in~\eqref{def:Y_infMFG} is neglected, in contrast to the finite-agent game's deviation process in~\eqref{def:Y}. In other words, passing to the infinite player mean field limit simplifies the finite player game from Section~\ref{sec:finitePlayer} because the controlled state variable $\ol Y^{\bar u^N}$ therein turns into an uncontrolled factor process $\tilde{Y}^{\tilde \nu}$ in the limit.
\end{remark}

\begin{remark} \label{rem:infMFFBSDE}
 The FBSDE system in~\eqref{eq:MeanField_infAggregated} describing the mean field inventory $\tilde X^{\tilde \nu}$ and mean field trading rate $\tilde\nu$ can be considered as a generalization of the deterministic ordinary differential equation derived in~\citet{CardaliaguetLehalle:18}, equation (17). The latter describes the mean field inventory and trading rate in a framework without signal and transient price impact (i.e., $A \equiv 0$ and $\rho = y = 0$). In contrast, in our more general setup, the mean field trading rate must be stochastic and adapted due to the presence of the signal.    
 \end{remark}


\subsection{Solving the Infinite Player Mean Field Game} \label{sec:infinitePlayerSol}

Akin to the finite player game in Section~\ref{sec:finitePlayer} in order to compute the solution to the infinite player mean field game we again have to solve successively two linear systems. First, we solve the mean field FBSDE system in~\eqref{eq:MeanField_infAggregated} and then for each agent $i\in\mathbb{N}$ the FBSDE in~\eqref{eq:MeanFieldGameFBSDE_inf}. Regarding the former let 
\begin{equation} \label{def:matrixExpBtilde}
\tilde R(t) \triangleq \exp( \tilde B \cdot t) = (\tilde{R}_{ij}(t))_{1 \leq i,j, \leq 3} \in \mathbb{R}^{3 \times 3},
\end{equation}
denote the matrix exponential of the matrix
\be \label{def:Btilde} 
\tilde B \triangleq \begin{pmatrix}
    0 & 0 & -1 \\ 0 & -\rho & \gamma \\ -\frac{\phi}{\lambda} & \frac{\kappa\rho}{2 \lambda} & -\frac{\kappa \gamma}{2 \lambda}
    \end{pmatrix} \in \mathbb{R}^{3 \times 3}. 
\ee
Moreover, define $\tilde K(t) = ( \tilde K_i(t))_{1 \leq i \leq 3} \in \mathbb{R}^{3}$ as   
\begin{equation} \label{def:Ktilde}
\tilde K(t) \triangleq  \left( \frac{\varrho}{\lambda},
    -\frac{\kappa}{2\lambda} , -1 \right)
\tilde R(t) \qquad (t\geq 0)
\end{equation}
and let 
\begin{equation}
\begin{aligned} \label{def:wstilde}
\tilde w_1(t) \triangleq - \frac{\tilde K_1(t)}{\tilde K_3(t)}, \qquad \tilde w_2(t) \triangleq & \; -\frac{\tilde K_2(t)}{\tilde K_3(t)},
\end{aligned}
\end{equation} 
for all $t \in [0,\infty)$. 

We obtain the feedback form solution to the mean field trading speed. 
\begin{proposition} \label{prop:sol-MeanField_infAggregated} 
Assume that $\inf_{t \in [0,T]}|\tilde K_3(t)| > 0$. Then, the unique solution $\tilde\nu \in \mathcal{A}$ satisfying the FBSDE system in~\eqref{eq:MeanField_infAggregated} is given in linear feedback form via
\be \label{eq:opt_ubar-MeanField_infAggregated} 
\begin{aligned}
\tilde\nu_{t} = \tilde w_1(T-t) \tilde X_{t}^{\tilde \nu} + \tilde w_2(T-t) \tilde Y_{t}^{\tilde \nu} - \frac{1}{2\lam}
   E_t\left[ \int_t^T  \frac{\tilde K_{3}(T-s)}{\tilde K_{3}(T-t)} dA_s \right],
\end{aligned}
\ee
for all $t \in (0,T)$.
\end{proposition}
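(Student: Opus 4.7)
The plan is to stack the three processes into $Z_t = (\tilde X_t^{\tilde\nu}, \tilde Y_t^{\tilde\nu}, \tilde\nu_t)^\top$ and recognize that the FBSDE system in~\eqref{eq:MeanField_infAggregated} is precisely the linear SDE
\begin{equation*}
dZ_t = \tilde B\, Z_t\, dt + \tfrac{1}{2\lambda}(0,0,1)^\top dP_t + (0,0,1)^\top d\tilde L_t,
\end{equation*}
subject to the affine terminal constraint $\alpha^\top Z_T = 0$ with $\alpha \triangleq (\varrho/\lambda,-\kappa/(2\lambda),-1)^\top$. The variation-of-parameters formula for linear SDEs with matrix exponential $\tilde R(s-t) = \exp(\tilde B(s-t))$ gives
\begin{equation*}
Z_s = \tilde R(s-t) Z_t + \tfrac{1}{2\lambda}\int_t^s \tilde R(s-u)(0,0,1)^\top dP_u + \int_t^s \tilde R(s-u)(0,0,1)^\top d\tilde L_u,
\end{equation*}
for $0 \leq t \leq s \leq T$. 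Using the canonical decomposition $P = L + A$ into a local martingale $L$ and finite variation part $A$ from the assumption $P \in \mathcal H^2$ and taking $E_t[\cdot]$ at $s = T$, the martingale integrals vanish and I obtain
\begin{equation*}
E_t[Z_T] = \tilde R(T-t) Z_t + \tfrac{1}{2\lambda} E_t\!\left[\int_t^T \tilde R(T-u)(0,0,1)^\top dA_u\right].
\end{equation*}

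Next, I project this identity onto $\alpha$. By the definition~\eqref{def:Ktilde} one has $\alpha^\top \tilde R(T-t) = \tilde K(T-t)$, while $\alpha^\top \tilde R(T-u)(0,0,1)^\top$ picks off the third coordinate of $\tilde K(T-u)$, namely $\tilde K_3(T-u)$. The terminal constraint $\alpha^\top Z_T = 0$ therefore yields the pointwise identity
\begin{equation*}
\tilde K_1(T-t)\tilde X_t^{\tilde\nu} + \tilde K_2(T-t)\tilde Y_t^{\tilde\nu} + \tilde K_3(T-t)\tilde\nu_t + \tfrac{1}{2\lambda} E_t\!\left[\int_t^T \tilde K_3(T-u)\,dA_u\right] = 0.
\end{equation*}
Under the standing assumption $\inf_{t \in [0,T]}|\tilde K_3(t)|>0$, dividing through by $\tilde K_3(T-t)$ and invoking the definitions~\eqref{def:wstilde} produces exactly the feedback formula~\eqref{eq:opt_ubar-MeanField_infAggregated}.

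For sufficiency and existence, I plug the candidate $\tilde\nu$ into the forward dynamics of $\tilde X^{\tilde\nu}$ and $\tilde Y^{\tilde\nu}$ and observe that the resulting triple is composed of continuous bounded feedback in $(\tilde X_t^{\tilde\nu}, \tilde Y_t^{\tilde\nu})$ plus a conditional expectation of $\int_t^T \tilde K_3(T-u)\,dA_u$; admissibility $\tilde\nu \in \mathcal{A}$ follows from $P \in \mathcal{H}^2$, the bound~\eqref{ass:P}, continuity of $\tilde K, \tilde w_1, \tilde w_2$, and a standard Grönwall estimate applied to the linear forward equations. The corresponding square-integrable martingale $\tilde L$ is recovered by applying It\^o's formula to the feedback expression and isolating the martingale part; square-integrability is again inherited from $P \in \mathcal H^2$. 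Uniqueness follows because any solution of~\eqref{eq:MeanField_infAggregated} must satisfy the derived feedback relation, whence the forward SDEs then determine $(\tilde X^{\tilde\nu}, \tilde Y^{\tilde\nu}, \tilde\nu)$ pathwise. The main obstacle is checking square-integrability of the martingale part $\tilde L$ extracted from the feedback formula, which requires the condition~\eqref{ass:P} together with a uniform-in-time control of $|\tilde K_3(T-t)|^{-1}$ on $[0,T]$.
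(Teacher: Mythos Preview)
Your proof is correct and follows essentially the same route as the paper: stack the processes into a vector, write the FBSDE as a linear SDE driven by the matrix $\tilde B$, apply the variation-of-parameters formula with the matrix exponential $\tilde R$, take conditional expectations so that the martingale integrals vanish and only the $dA$-integral survives, and then project onto the terminal-condition vector $(\varrho/\lambda,-\kappa/(2\lambda),-1)$ to isolate $\tilde\nu_t$. The admissibility argument via a Gr\"onwall estimate and the boundedness of $\tilde K_3^{-1}$ is also exactly what the paper invokes.
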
  

The proof of Proposition \ref{prop:sol-MeanField_infAggregated} is given in Section~\ref{subsec:proofs-infinite}. 

\begin{remark} 
We refer to Section \ref{sec-mat-infin} for the computation of the matrix exponential $\tilde R(t)$ in~\eqref{def:matrixExpBtilde} via diagonalization, as well as the function $\tilde K(t)$ in~\eqref{def:Ktilde}. Note that similar to the matrix $F^N$ in~\eqref{def:F}, the eigenvalues $\tilde\nu_1, \tilde\nu_2, \tilde\nu_3 \in \mathbb{R}$ of the matrix $\tilde B$ in~\eqref{def:Btilde} can be easily computed explicitly. 
\end{remark}

\begin{remark}
Let us mention that the mean field strategy $\tilde \nu$ from Proposition~\ref{prop:sol-MeanField_infAggregated} does not coincide with the single-agent trading strategy with transient price impact from~\citet{N-V19}. Indeed, the FBSDE system in~\eqref{eq:MeanField_infAggregated} describing the latter is only three-dimensional, whereas the single agent's FBSDE system in~\cite[Lemma 5.2]{N-V19} is four-dimensional.
\end{remark}

Next, given the solution $(\tilde X^{\tilde \nu}, \tilde Y^{\tilde \nu}, \tilde\nu)$ from Proposition~\ref{prop:sol-MeanField_infAggregated} for system~\eqref{eq:MeanField_infAggregated} we can insert the process $\tilde Y^{\tilde \nu}$ into~\eqref{eq:MeanFieldGameFBSDE_inf} and solve for each agent $i \in \mathbb{N}$ the resulting linear FBSDE in $(X^{v^i},v^i)$. More precisely, introducing
\begin{equation} \label{def:R}
R(t) \triangleq \sqrt{\phi/\lambda} \cosh ( \sqrt{\phi/\lambda} \, t) + \varrho/\lambda \sinh ( \sqrt{\phi/\lambda} \, t),
\end{equation}
for all $t \in [0,\infty)$ we obtain our main result of Section~\ref{sec:infinitePlayer}: 

\begin{theorem} \label{thm:main-infMFG} 
Let $\tilde Y^{\tilde \nu}$ denote the unique solution satisfying~\eqref{eq:MeanField_infAggregated}. Then the unique solution $(\hat{v}^i)_{i \in \mathbb{N}} \subset \mathcal A$ to the infinite player mean field game in the sense of Definition~\ref{def:Nash_infMFG} is given in linear feedback form via
\be \label{eq:opt_uiInf} 
\begin{aligned}
\hat{v}^{i}_{t} = \frac{R'(T-t)}{R(T-t)} X^{\hat{v}^i}_{t} - \frac{1}{2\lam} E_{t}\left[
  \int_{t}^T\frac{R(T-s)}{R(T-t)} \left(dA_s - \kappa d\tilde Y_s^{\tilde \nu}
  \right) + \sqrt{\frac{\phi}{\lambda}}
    \frac{\kappa}{R(T-t)} \tilde Y_T^{\tilde \nu}  \right],
\end{aligned}
\ee
for all $t \in (0,T)$.
\end{theorem}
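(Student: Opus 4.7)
By Corollary~\ref{cor:MeanFieldFBSDE_inf} and Proposition~\ref{prop:sol-MeanField_infAggregated}, the mean field trading rate $\tilde\nu$ and mean field price distortion $\tilde Y^{\tilde\nu}$ are already uniquely determined, and the infinite player Nash equilibrium is characterised for each $i\in\mathbb{N}$ by the fully decoupled two-dimensional linear FBSDE~\eqref{eq:MeanFieldGameFBSDE_inf}. My plan is therefore to solve this FBSDE by a standard Riccati/integrating-factor ansatz and to match the resulting expressions to \eqref{eq:opt_uiInf} and \eqref{def:R}.

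First, I would make the ansatz
\begin{equation*}
\hat v^i_t = \alpha(t)\,X^{\hat v^i}_t + \beta_t,
\end{equation*}
where $\alpha:[0,T]\to\mathbb{R}$ is a deterministic function and $\beta=(\beta_t)_{0\le t\le T}$ is an adapted process. Computing $d\hat v^i_t$ via $dX^{\hat v^i}_t = -\hat v^i_t\, dt$ and matching drift coefficients with the backward dynamics in~\eqref{eq:MeanFieldGameFBSDE_inf} yields, for the coefficient in front of $X^{\hat v^i}$, the Riccati ODE $\alpha'(t) = \alpha(t)^2 - \phi/\lambda$ with the terminal condition $\alpha(T) = \varrho/\lambda$ (read off from the terminal condition $\hat v^i_T = (\varrho/\lambda)X^{\hat v^i}_T - (\kappa/2\lambda)\tilde Y^{\tilde\nu}_T$). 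Setting $\alpha(t) = R'(T-t)/R(T-t)$ converts this Riccati into the linear second-order ODE $R''(\tau) = (\phi/\lambda)R(\tau)$; the general solution is a linear combination of $\cosh(\sqrt{\phi/\lambda}\,\tau)$ and $\sinh(\sqrt{\phi/\lambda}\,\tau)$, and the terminal condition $\alpha(T)=\varrho/\lambda$ at $\tau=0$ selects precisely the function $R$ in~\eqref{def:R}, since $R(0)=\sqrt{\phi/\lambda}$ and $R'(0)=(\varrho/\lambda)\sqrt{\phi/\lambda}$ give $R'(0)/R(0)=\varrho/\lambda$.

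For the remaining part $\beta$, matching the source terms yields the linear BSDE
\begin{equation*}
d\beta_t = \alpha(t)\,\beta_t\, dt + \tfrac{1}{2\lambda}\bigl(dA_t - \kappa\, d\tilde Y^{\tilde\nu}_t\bigr) + dM^i_t,\qquad \beta_T = -\tfrac{\kappa}{2\lambda}\tilde Y^{\tilde\nu}_T,
\end{equation*}
for some square-integrable martingale $M^i$, where $A$ is the finite-variation part of $P\in\mathcal H^2$. Multiplying by the integrating factor $R(T-t)$ and using $\tfrac{d}{dt}R(T-t) = -R'(T-t) = -\alpha(t)R(T-t)$ renders $R(T-t)\beta_t$ a process whose drift is a pure source. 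Integrating from $t$ to $T$, using $R(0)=\sqrt{\phi/\lambda}$, and taking conditional expectations $E_t[\cdot]$ (the martingale increments drop out by the optional sampling/martingale property, cf.~the analogous arguments in Section~\ref{subsec:proofs-finite}) yields
\begin{equation*}
\beta_t = -\tfrac{1}{2\lambda}\,E_t\!\left[\int_t^T \tfrac{R(T-s)}{R(T-t)}\bigl(dA_s - \kappa\, d\tilde Y^{\tilde\nu}_s\bigr) + \sqrt{\tfrac{\phi}{\lambda}}\tfrac{\kappa}{R(T-t)}\tilde Y^{\tilde\nu}_T\right].
\end{equation*}
Combined with $\alpha(t) = R'(T-t)/R(T-t)$, this is exactly the formula~\eqref{eq:opt_uiInf}.

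Finally, I would verify admissibility and uniqueness. Since $R(\tau)>0$ for all $\tau\ge 0$ (strict positivity of $\cosh, \sinh$ on $[0,\infty)$ and of all parameters), the feedback coefficients $R'(T-t)/R(T-t)$, $R(T-s)/R(T-t)$, $\sqrt{\phi/\lambda}/R(T-t)$ are bounded on $[0,T]$. Together with $P\in\mathcal H^2$ (via~\eqref{ass:P}) and the $\mathcal A$-admissibility of $\tilde\nu$ (ensuring $\tilde Y^{\tilde\nu}$ is square-integrable), a standard Gronwall/BDG estimate gives $\hat v^i\in\mathcal A$. Uniqueness follows from Lemma~\ref{thm:MeanFieldFBSDE_inf} / Corollary~\ref{cor:MeanFieldFBSDE_inf}, which already asserts uniqueness of solutions to~\eqref{eq:MeanFieldGameFBSDE_inf}; since the derived process satisfies that system by construction, it is the unique Nash equilibrium in the sense of Definition~\ref{def:Nash_infMFG}. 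The only mildly delicate step is the Riccati-to-linear-ODE reduction and checking that the boundary data single out $R$ as in~\eqref{def:R}; the rest is essentially a two-dimensional specialisation of the techniques already used in Section~\ref{subsec:proofs-finite}.
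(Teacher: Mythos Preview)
Your proof is correct. The paper's own proof simply defers to~\cite{BMO:19}, observing that the FBSDE~\eqref{eq:MeanFieldGameFBSDE_inf} coincides with the single-agent temporary-impact problem there with signal $A-\kappa\tilde Y^{\tilde\nu}$ (up to the modified terminal condition). Your argument makes this self-contained: instead of the matrix-exponential/variation-of-constants method used in the paper for the higher-dimensional systems (Propositions~\ref{prop:sol-mean-FBSDE}, \ref{prop:sol-MeanField_infAggregated}, Theorem~\ref{thm:main-finite}), you exploit that~\eqref{eq:MeanFieldGameFBSDE_inf} is only two-dimensional and solve it directly via the scalar Riccati ansatz $\hat v^i_t=\alpha(t)X^{\hat v^i}_t+\beta_t$, linearising $\alpha$ through $R$ and handling $\beta$ by the integrating factor $R(T-t)$. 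This is exactly the computation that the reference~\cite{BMO:19} carries out, so the two approaches are essentially the same; yours is just written out explicitly and is arguably more transparent for this low-dimensional case, while the paper's matrix-exponential framework has the advantage of treating all the FBSDE systems in the paper uniformly.
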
  

The proof of Theorem \ref{thm:main-infMFG} is given in Section~\ref{subsec:proofs-infinite}.

\begin{remark} \label{rem:Belak}
Remarkably, given $\tilde Y^{\tilde \nu}$, the optimal solution~$(\hat{v}^i)_{i \in \mathbb{N}}$ in~\eqref{eq:opt_uiInf} of the mean field game in fact coincides with the optimal solution of the single-agent optimal signal-adaptive liquidation problem with temporary price impact and an exogenous signal given by $A - \kappa \tilde Y^{\tilde \nu}$ provided in~\citet{BMO:19}. The only minor difference is the appearance of the term involving $\tilde{Y}_T^{\tilde \nu}$. But this merely stems from the fact that the value of the terminal inventory $X^{v^i}_T$ in~\eqref{def:objective_infMFG} is measured in terms of $P_T$ and not $S^{\nu}_T$. The reason for this particular choice is Remark~\ref{rem:terminalPosition}. 
\end{remark}

\begin{remark} 
Observe that the form of the mean field game solution in~\eqref{eq:opt_uiInf} is similar to agent $i$'s optimal response~$\hat{u}^{i,N}$ in the finite player Nash equilibrium in Theorem~\ref{thm:main-finite}. The major difference is that the equilibrium's aggregated transient price impact $\bar Y^{\bar u^N}$ in~\eqref{eq:opt_ui} from the $N$ agents is now replaced by the limiting mean field's transient price distortion $\tilde Y^{\tilde \nu}$ in~\eqref{def:Y_infMFG}. 
\end{remark}

Finally, the optimal infinite player mean field game strategies in Theorem~\ref{thm:main-infMFG} can also be written in terms of the mean field trading rate $\tilde\nu$ and mean field inventory $\tilde X^{\tilde \nu}$ instead of the mean field transient price impact $\tilde Y^{\tilde \nu}$ in the following way:

\begin{cor} \label{cor:main-infMFG} 
The optimal solution $(\hat{v}^i)_{i\in\mathbb{N}} \subset \mathcal A$ from Theorem~\ref{thm:main-infMFG} can alternatively be represented as
\be \label{eq:opt_ustar_alt} 
\begin{aligned} 
\hat{v}^i_{t} = \tilde{\nu}_t - \frac{R'(T-t)}{R(T-t)} \left( \tilde{X}_t^{\tilde \nu} - X^{\hat{v}^{i}}_t \right)
\end{aligned}
\ee
for all $t \in (0,T)$ where $\tilde{\nu}$ and $\tilde{X}^{\tilde \nu}$ denote the unique solution from~\eqref{eq:MeanField_infAggregated}.
\end{cor}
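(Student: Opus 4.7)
The plan is to establish~\eqref{eq:opt_ustar_alt} by forming the difference between the individual FBSDE~\eqref{eq:MeanFieldGameFBSDE_inf} and the aggregated mean-field FBSDE~\eqref{eq:MeanField_infAggregated}, recognizing the resulting homogeneous linear FBSDE, and then invoking the uniqueness statement of Corollary~\ref{cor:MeanFieldFBSDE_inf}. The starting observation is that the aggregated BSDE for $\tilde\nu$ in~\eqref{eq:MeanField_infAggregated} has exactly the same structure as the individual BSDE in~\eqref{eq:MeanFieldGameFBSDE_inf}: substituting $d\tilde Y_t^{\tilde\nu} = -\rho\tilde Y_t^{\tilde\nu}\,dt + \gamma\tilde\nu_t\,dt$ collapses the two drift terms $\frac{\kappa\rho}{2\lambda}\tilde Y_t^{\tilde\nu} - \frac{\kappa\gamma}{2\lambda}\tilde\nu_t$ into $-\frac{\kappa}{2\lambda}\,d\tilde Y_t^{\tilde\nu}$, so that
\begin{equation*}
d\tilde\nu_t = \frac{1}{2\lambda}\left(dP_t - \kappa\,d\tilde Y_t^{\tilde\nu}\right) - \frac{\phi}{\lambda}\tilde X_t^{\tilde\nu}\,dt + d\tilde L_t, \quad \tilde\nu_T = \frac{\varrho}{\lambda}\tilde X_T^{\tilde\nu} - \frac{\kappa}{2\lambda}\tilde Y_T^{\tilde\nu}.
\end{equation*}

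Next, I set $\delta^i_t := X^{\hat v^i}_t - \tilde X_t^{\tilde\nu}$ and $\Delta^i_t := \hat v^i_t - \tilde\nu_t$. Subtracting the two FBSDEs cancels the forcing by $dP_t$ and by $\kappa\,d\tilde Y_t^{\tilde\nu}$ and yields the homogeneous linear FBSDE
\begin{equation*}
d\delta^i_t = -\Delta^i_t\,dt, \;\; \delta^i_0 = x^i - \tilde x; \qquad d\Delta^i_t = -\frac{\phi}{\lambda}\delta^i_t\,dt + d\ell^i_t, \;\; \Delta^i_T = \frac{\varrho}{\lambda}\delta^i_T,
\end{equation*}
where $\ell^i := L^i - \tilde L$ is a square-integrable martingale. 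I would then test the deterministic feedback ansatz $\Delta^i_t = f(t)\delta^i_t$ with $\ell^i \equiv 0$, for $f(t) := R'(T-t)/R(T-t)$.

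To verify the ansatz I use the second-order linear ODE $R''(t) = (\phi/\lambda) R(t)$, which is immediate from~\eqref{def:R}. A direct computation shows this translates into the Riccati identity $f'(t) = f(t)^2 - \phi/\lambda$, while the initial values $R(0) = \sqrt{\phi/\lambda}$ and $R'(0) = (\varrho/\lambda)\sqrt{\phi/\lambda}$ give $f(T) = \varrho/\lambda$. Under the ansatz, the forward equation $d\delta^i_t = -f(t)\delta^i_t\,dt$ becomes a pathwise deterministic ODE, and the chain rule applied to $f(t)\delta^i_t$ yields $d\Delta^i_t = (f'(t) - f(t)^2)\delta^i_t\,dt = -(\phi/\lambda)\delta^i_t\,dt$, together with $\Delta^i_T = f(T)\delta^i_T = (\varrho/\lambda)\delta^i_T$. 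Hence the candidate $(\tilde X_t^{\tilde\nu} + \delta^i_t,\,\tilde\nu_t + f(t)\delta^i_t)$ solves the FBSDE~\eqref{eq:MeanFieldGameFBSDE_inf} with initial position $x^i$, so by the uniqueness part of Corollary~\ref{cor:MeanFieldFBSDE_inf} it must coincide with $(X^{\hat v^i},\hat v^i)$, which is precisely~\eqref{eq:opt_ustar_alt}. The only nontrivial step is the Riccati-with-terminal-value bookkeeping for $f$; everything else is drift cancellation and the appeal to uniqueness of a linear FBSDE.
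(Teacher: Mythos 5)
Your proof is correct, but it takes a genuinely different route from the paper's. The paper starts from the explicit representation~\eqref{eq:opt_uiInf}, substitutes the dynamics of $\tilde\nu$ and $\tilde{Y}^{\tilde \nu}$ into the conditional-expectation integrals, integrates by parts twice, and uses $R'(T-s)=\tilde{R}(T-s)\phi/\lambda$, $R(0)=\sqrt{\phi/\lambda}$, $\tilde{R}(0)=\varrho\sqrt{\lambda}/(\lambda\sqrt{\phi})$ together with the terminal condition $\tilde{\nu}_T = \varrho \tilde X_T^{\tilde \nu}/\lambda - \kappa \tilde Y_T^{\tilde \nu}/(2\lambda)$ to collapse everything into~\eqref{eq:opt_ustar_alt}. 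You instead work entirely at the FBSDE level: you observe that after absorbing $\frac{\kappa\rho}{2\lambda}\tilde Y^{\tilde\nu} - \frac{\kappa\gamma}{2\lambda}\tilde\nu$ into $-\frac{\kappa}{2\lambda}d\tilde Y^{\tilde\nu}$, the aggregated BSDE in~\eqref{eq:MeanField_infAggregated} has the same form as~\eqref{eq:MeanFieldGameFBSDE_inf}, so the differences $\delta^i = X^{\hat v^i}-\tilde X^{\tilde\nu}$, $\Delta^i=\hat v^i-\tilde\nu$ satisfy a homogeneous linear FBSDE which you solve by the feedback ansatz $\Delta^i_t=f(t)\delta^i_t$ with $f=R'(T-\cdot)/R(T-\cdot)$, verified via the Riccati identity $f'=f^2-\phi/\lambda$ and $f(T)=\varrho/\lambda$, and you conclude by the uniqueness of the solution to~\eqref{eq:MeanFieldGameFBSDE_inf} asserted in Corollary~\ref{cor:MeanFieldFBSDE_inf}. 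All of your computations check out ($R''=(\phi/\lambda)R$ does give the stated Riccati equation, the drift cancellation is exact, and the candidate is admissible since $f$ is bounded and $\delta^i$ is deterministic). What your route buys is a more structural explanation: it shows directly that each agent's deviation from the mean field is a deterministic process governed by an autonomous Riccati gain, which is exactly the mechanism behind the comparison with Cardaliaguet--Lehalle in Remark~\ref{rem-crowd}, and it avoids the integration-by-parts bookkeeping. What it costs is an explicit appeal to uniqueness of the decoupled FBSDE (the paper's computation is a direct algebraic rearrangement that needs no such appeal), but that uniqueness is already established in the paper via strict concavity, so there is no gap.
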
 
\begin{proof}
Using the dynamics of $\tilde{\nu}$ and $\tilde{Y}^{\tilde \nu}$ from~\eqref{eq:MeanField_infAggregated} we can rewrite $\hat{v}^{i}_{t}$ in~\eqref{eq:opt_uiInf} as  
\begin{align*}
\hat{v}^{i}_{t} = & \, \frac{R'(T-t)}{R(T-t)} X^{\hat{v}^i}_{t} \\
& - \frac{1}{R(T-t)} E_{t}\left[
  \int_{t}^T R(T-s) d\tilde{\nu}_s + \frac{\phi}{\lambda} \int_t^T R(T-s) \tilde{X}_s^{\tilde \nu} ds + \frac{\kappa}{2\lambda} \sqrt{\frac{\phi}{\lambda}}
     \tilde Y_T^{\tilde \nu}  \right].
\end{align*}
Next, performing integration by parts for both integrals yields
\begin{align*}
\hat{v}^{i}_{t} = & \, \frac{R'(T-t)}{R(T-t)} X^{\hat{v}^i}_{t} \\
& \, - \frac{1}{R(T-t)} E_{t}\left[
  R(0) \tilde{\nu}_T  - R(T-t) \tilde{\nu}_t + \int_t^T \tilde{\nu}_s  R'(T-s) ds  - \frac{\phi}{\lambda} \tilde{R}(0) \tilde{X}_T^{\tilde \nu}  \right. \\
  & \hspace{85pt} \left. + \frac{\phi}{\lambda} \tilde{R}(T-t) \tilde{X}_t^{\tilde \nu} - \frac{\phi}{\lambda} \int_t^T \tilde{R}(T-s) \tilde{\nu}_s ds + \frac{\kappa}{2\lambda} \sqrt{\frac{\phi}{\lambda}}
    \tilde Y_T^{\tilde \nu}  \right]
\end{align*}
where $\tilde{R}(t) \triangleq \sinh ( \sqrt{\phi/\lambda} \, t) + \varrho \sqrt{\lambda} /(\lambda\sqrt{\phi}) \cosh ( \sqrt{\phi/\lambda} \, t)$. Note that $R'(T-s) = \tilde{R}(T-s) \phi /\lambda$, $R(0) = \sqrt{\phi/\lambda}$ and $\tilde{R}(0)=\varrho\sqrt{\lambda}/(\lambda\sqrt{\phi})$. Hence, using the fact that $\tilde{\nu}_T = \varrho \tilde X_T^{\tilde \nu}/\lambda - \kappa \tilde Y_T^{\tilde \nu}/(2\lambda)$ yields the claim in~\eqref{eq:opt_ustar_alt}.
\end{proof} 

\begin{remark} \label{rem-crowd} 
The mean field game solution $(\hat{v}^i)_{i\in\mathbb{N}}$ in~\eqref{eq:opt_ustar_alt}, expressed in terms of the mean field inventory $\tilde X^{\tilde \nu}$ and trading rate $\tilde{\nu}$, has exactly the same feedback form as the mean field game solution computed in~\citet[equation (19)]{CardaliaguetLehalle:18} (i.e., $R'(T-t)/R(T-t)$ equals $h_2(t)/(2\kappa)$ from~\cite[equation (18)]{CardaliaguetLehalle:18}) without a signal ($A\equiv 0$) and where the net trading flow's price impact is permanent (i.e., $\rho=y=0$). The difference is of course that $\tilde{X}^{\tilde \nu}$ and $\tilde{\nu}$ satisfy different equations, namely~\eqref{eq:MeanField_infAggregated} instead of~\cite[equation (17)]{CardaliaguetLehalle:18}. As discussed in~\cite{CardaliaguetLehalle:18} the representation in~\eqref{eq:opt_ustar_alt} shows that the optimal mean field game strategies $(\hat{v}^i)_{i\in\mathbb{N}}$ follow the mean field trading rate $\tilde{\nu}$ and gradually push their inventories $X^{\hat{v}^i}$ towards the mean field inventory $\tilde{X}^{\tilde \nu}$ because $R'(T-t)/R(T-t) > 0$ (recall that $\hat{v}^i$ is expressed as a selling rate). We illustrate this phenomenon in Section \ref{sec:illustrations}. 
\end{remark} 
\begin{remark} \label{rem-sol}
Interestingly, the finite-player game solution in~\eqref{eq:opt_ui} does not allow for a similar representation as the one in~\eqref{eq:opt_ustar_alt} in terms of the $N$ agents' average inventory~$\ol X^{\bar u^N}$ and average  selling rate $\bar u^N$. The reason is that the finite-player's aggregated four-dimensional FBSDE system in~\eqref{eq:mean-FBSDE}, which describes the average selling rate~$\bar u^N$, depends on an additional adjoint process~$\ol Z^{\bar u^N}$ because of the additional state variable $\ol Y^{\bar u^N}$. In contrast, this adjoint process disappears in the mean field's three-dimensional FBSDE system in~\eqref{eq:MeanField_infAggregated} for the mean field selling rate $\tilde{\nu}$ since $\tilde Y^{\tilde \nu}$ is not a state variable anymore. In fact, this makes the infinite-player mean field optimal policies significantly simpler to compute numerically than the $N$-agent Nash equilibrium strategies in~\eqref{eq:opt_ui}. More precisely, by combining~\eqref{eq:opt_ustar_alt}, \eqref{eq:opt_ubar-MeanField_infAggregated}, and the forward equation for $\tilde{Y}^{\tilde \nu}$ in~\eqref{eq:MeanField_infAggregated}, each agent $i$'s optimal control $\hat{v}^i$ can easily be computed by solving numerically a (random) three-dimensional linear forward ODE system in $(\tilde{X}^{\tilde \nu}, \tilde{Y}^{\tilde \nu}, X^{\hat{v}^{i}})$. This has been done to obtain the illustrations in Section~\ref{sec:illustrations} below. In contrast, the optimal policy~$\hat{u}^{i,N}$ in the finite-player game hinges on the representation in~\eqref{eq:opt_ui} which requires the computation of conditional expectations of integrals with respect to the future evolution of the $N$ agents' aggregated transient price distortion~$\ol Y^{\bar u^N}$.
\end{remark}

\section{Convergence Results and Approximations } \label{sec:approxNash}

In this section we present the main theoretical results of this
paper. We first prove that the optimal trading speed $\hat u^{i,N}$ in
the $N$-player game Nash equilibrium converges to the optimal trading
speed $\hat v^{i}$ of the mean field game in the $L^2$ norm and we
determine the convergence rate. Throughout this section we assume the
existence of $\hat u^{i,N}$ and $\hat v^{i}$, for which some
sufficient conditions were given in Sections \ref{sec:finitePlayer}
and \ref{sec:infinitePlayer}.

In order to state our result we assume that the agents in the finite player game and the agents in the mean field game start from similar initial inventories. More precisely, we introduce the following assumption on the initial inventories in \eqref{def:Xi} and \eqref{def:Xi_infMFG}:  
\be \label{init-assum} 
\sup_{i\geq 1} |x^{i}|  <\infty,
\ee
where for any $N\geq 2$ we set $x^{i,N}= x^i$, $i=1,\ldots,N$, in
\eqref{def:Xi}. Moreover, recall that we assume~\eqref{eq:initPosLimit}. 

We denote by  
\be \label{c-1} 
C_{\eqref{c-1}}  = 16\Big(\frac{\max\{\varrho, \kappa, \rho \kappa, \kappa \gamma, \phi, \rho  \}}{\lambda  }  \Big)^2.
\ee
We further make the following assumption on the constants: 
\be \label{a-const} 
20C_{\eqref{c-1}}(T^2\vee 1)T^2<1.
\ee
  
Our first convergence result is given in the following theorem. 
\begin{theorem} \label{thm-strat-con} 
For any $N\geq 2$, let $\hat{u}^{i,N}$ be the Nash equilibrium strategy of
player~$i$ in the $N$-player game in the sense of Definition
\ref{def:Nash}. Let $\hat v^i$ and $\hat \nu$ be the equilibrium
strategy of player $i$ and the mean field trading speed in the mean field
game in the sense of Definition~\ref{def:Nash_infMFG}. Under assumptions \eqref{init-assum} and \eqref{a-const} we have
$$
 \sup_{0\leq i\leq N}\sup_{0\leq s\leq T}E\big[(\hat u_s^{i,N}-\hat v^{i}_s)^2\big]  + \sup_{0\leq s\leq T} E\left[\left(\hat \nu_{s}-\frac{1}{N}\sum_{i=1}^N\hat u_s^{i,N}\right)^{2}\right]=O(N^{-2}). 
$$
\end{theorem}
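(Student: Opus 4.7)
The proof plan is a two-stage probabilistic comparison. Stage one compares the aggregated finite-player FBSDE \eqref{eq:mean-FBSDE} with the mean-field FBSDE \eqref{eq:MeanField_infAggregated} to control $\bar u^N - \tilde\nu$ in $L^2$; since Corollary \ref{cor:NASHFBSDE} gives $\frac{1}{N}\sum_{i=1}^N \hat u^{i,N}_t = \bar u_t^N$ and Corollary \ref{cor:MeanFieldFBSDE_inf} identifies $\hat\nu$ with $\tilde\nu$, this immediately handles the consistency term. Stage two compares the decoupled individual FBSDEs \eqref{eq:NASHFBSDE*} and \eqref{eq:MeanFieldGameFBSDE_inf}: the finite-player equation involves $\ol Y^{\bar u^N}$ while the mean-field equation involves $\tilde Y^{\tilde \nu}$, and the stage-one estimate $\ol Y^{\bar u^N} - \tilde Y^{\tilde\nu} = O(N^{-1})$ propagates as a source term, yielding the individual convergence.

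Two preliminary a priori bounds feed the argument. First, a uniform-in-$N$ $L^2$ bound on $\bar u^N, \hat u^{i,N}, \tilde\nu, \hat v^i$, obtained from the linear feedback representations \eqref{eq:opt_ubar}, \eqref{eq:opt_ui}, \eqref{eq:opt_ubar-MeanField_infAggregated}, \eqref{eq:opt_uiInf} together with~\eqref{init-assum} and~\eqref{ass:P}, and from the uniform boundedness in $N$ of the feedback coefficients. Second, because the BSDEs for $\ol Z^{\bar u^N}$ in \eqref{eq:mean-FBSDE} and $Z^{u^{i,N}}$ in \eqref{eq:NASHFBSDE*} are linear with zero terminal condition and a driver whose control coefficient is $\gamma\kappa/N$, the explicit representation $\ol Z_t^{\bar u^N} = (\gamma\kappa/N)E_t\big[\int_t^T e^{-\rho(s-t)}\bar u_s^N ds\big]$ combined with the preceding uniform bound and Doob's inequality gives
$$E\Big[\sup_{0\leq t\leq T}(\ol Z_t^{\bar u^N})^2\Big] + \sup_{i} E\Big[\sup_{0\leq t\leq T}(Z_t^{u^{i,N}})^2\Big] = O(N^{-2}).$$

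For stage one, set $\Delta\bar X := \ol X^{\bar u^N} - \tilde X^{\tilde\nu}$, $\Delta\bar Y := \ol Y^{\bar u^N} - \tilde Y^{\tilde\nu}$, $\Delta\bar u := \bar u^N - \tilde\nu$. Subtracting \eqref{eq:MeanField_infAggregated} from \eqref{eq:mean-FBSDE} and using the identity $-\frac{\gamma\kappa(N-1)}{2\lambda N}\bar u^N + \frac{\gamma\kappa}{2\lambda}\tilde\nu = -\frac{\gamma\kappa(N-1)}{2\lambda N}\Delta\bar u + \frac{\gamma\kappa}{2\lambda N}\tilde\nu$ produces a linear FBSDE for $(\Delta\bar X, \Delta\bar Y, \Delta\bar u)$ with Lipschitz coupling (uniform in $N$), terminal condition $\Delta\bar u_T = (\varrho/\lambda)\Delta\bar X_T - (\kappa/(2\lambda))\Delta\bar Y_T$, initial error $\Delta\bar X_0 = \bar x^N - \tilde x$, and inhomogeneous source terms $\frac{\gamma\kappa\tilde\nu_t}{2\lambda N}$ and $\frac{\rho}{2\lambda}\ol Z_t^{\bar u^N}$, both of $L^2$-norm $O(N^{-1})$ by Step 1--2. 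I would then set up a Picard map $\Psi$ on $L^2([0,T]\times\Omega)$: given a candidate $\Delta\bar u$, solve the two forward SDEs for $(\Delta\bar X, \Delta\bar Y)$, then solve the BSDE backwards from the prescribed terminal condition to obtain $\Psi(\Delta\bar u)$. Successive applications of Cauchy--Schwarz, Gr\"onwall and the BDG inequality control the $L^2$-Lipschitz constant of $\Psi$ by an expression bounded above by $C_{\eqref{c-1}}(T^2\vee 1)T^2$; condition \eqref{a-const} makes $\Psi$ strictly contractive, and the unique fixed point is $\Delta\bar u$ itself. This yields
$$\sup_{0\leq t\leq T}\big(E[(\Delta\bar X_t)^2] + E[(\Delta\bar Y_t)^2] + E[(\Delta\bar u_t)^2]\big) = O(N^{-2}).$$

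Stage two is entirely analogous. Setting $\Delta X^i := X^{u^{i,N}} - X^{\hat v^i}$, $\Delta u^i := u^{i,N} - \hat v^i$, one has $\Delta X^i_0 = 0$ since $x^{i,N} = x^i$. Subtracting \eqref{eq:MeanFieldGameFBSDE_inf} from \eqref{eq:NASHFBSDE*} produces inhomogeneous terms $-(\kappa/(2\lambda))(d\ol Y_t^{\bar u^N} - d\tilde Y_t^{\tilde\nu})$, $(\gamma\kappa/(2\lambda N))u_t^{i,N}$ and $(\rho/(2\lambda))Z_t^{u^{i,N}}$, together with terminal source $-(\kappa/(2\lambda))(\ol Y_T^{\bar u^N} - \tilde Y_T^{\tilde\nu})$, whose $L^2$-rates are all $O(N^{-1})$ by stage one and the preliminary bounds. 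A second Picard/contraction argument, identical in structure to stage one and again invoking \eqref{a-const}, yields $\sup_i \sup_t E[(\Delta u^i_t)^2] = O(N^{-2})$, completing the proof. The principal difficulty is stage one: the coupled linear FBSDE combines a forward SDE with a BSDE whose terminal condition depends on the forward state, and bounding the composite $L^2$-Lipschitz constant of $\Psi$ requires carefully balancing BDG for the backward martingale part against Gr\"onwall for the forward part, which is exactly what the sharp constants in \eqref{c-1}--\eqref{a-const} are calibrated to permit. A subsidiary issue in stage two is that the differential $d\ol Y^{\bar u^N} - d\tilde Y^{\tilde\nu}$ must be rewritten via \eqref{eq:mean-FBSDE}--\eqref{eq:MeanField_infAggregated} as a Lipschitz function of $(\Delta\bar Y,\Delta\bar u)$ before being integrated, to avoid losing the $O(N^{-1})$ rate when integrated against square-integrable test controls.
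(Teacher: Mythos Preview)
Your overall plan---subtract the aggregated FBSDEs \eqref{eq:mean-FBSDE} and \eqref{eq:MeanField_infAggregated}, absorb the self-referencing sup via the smallness condition~\eqref{a-const}, then feed the resulting $O(N^{-1})$ control of $\ol Y^{\bar u^N}-\tilde Y^{\tilde\nu}$ into the difference of the individual systems \eqref{eq:NASHFBSDE*} and \eqref{eq:MeanFieldGameFBSDE_inf}---is exactly what the paper does. The paper packages the two comparisons as Lemmas~\ref{lemma-converge-avr} and~\ref{lemma-converge-ui} and then \emph{adds} them before absorbing and applying Gronwall, whereas you close the aggregate estimate first and then the individual one; since Lemma~\ref{lemma-converge-avr} is self-contained in the aggregate variables, both orderings work under~\eqref{a-const}. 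Your Picard/contraction phrasing is equivalent to the paper's explicit self-referencing inequality: in either language the step that makes the argument close is that the coefficient multiplying $\sup_{t\le s\le T}E[(\Delta\bar u_s)^2]$ on the right is strictly below $1$, which is precisely what~\eqref{a-const} guarantees (the paper's constant is $10\,C_{\eqref{c-1}}(T^2\vee1)T^2$ per stage, not $C_{\eqref{c-1}}(T^2\vee1)T^2$ as you write, but~\eqref{a-const} still absorbs it).

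There is, however, a genuine gap in your preliminary step. You propose to obtain the uniform-in-$N$ $L^2$ bound on $\hat u^{i,N}$ and $\bar u^N$ from the feedback representations \eqref{eq:opt_ubar}, \eqref{eq:opt_ui} together with ``uniform boundedness in $N$ of the feedback coefficients''. But those representations are only established under Assumptions~\ref{assump:1} and~\ref{assump:2}, which are \emph{not} among the hypotheses of Theorem~\ref{thm-strat-con}, and there is no a~priori uniform-in-$N$ lower bound on the infima in~\eqref{g-s-cond-1},~\eqref{g-s-cond-2}. The paper instead proves this bound as Proposition~\ref{lem-bnd-ui} directly from the FBSDE structure, using the very same machinery you deploy for the convergence estimate: write $\hat u^{i,N}_t$ from the backward equation in~\eqref{eq:NASHFBSDE} as terminal value minus integrated driver plus martingale increment, square, bound each piece (including the martingales via their explicit conditional-expectation form~\eqref{eq:Mi-martingale}--\eqref{eq:Ni-martingale}), absorb the self-referencing sup using~\eqref{a-const}, and apply Gronwall. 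Without this, your $O(N^{-1})$ claims for $\ol Z^{\bar u^N}$, $Z^{u^{i,N}}$ and for the source term $\frac{\gamma\kappa}{2\lambda N}u^{i,N}_t$ in stage two are unjustified. Once you replace your feedback-formula shortcut by this FBSDE-based a~priori bound, the rest of your argument goes through and coincides with the paper's.
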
 
The proof of Theorem \ref{thm-strat-con} is given in Section \ref{sec:convergence}.
\begin{remark} 
Some convergence results on finite player equilibrium towards a mean field equilibrium, in the context of optimal execution, were derived recently by \citet{DrapeauLuoSchiedXiong:19}, see Theorem 4.4 therein. Theorem \ref{thm-strat-con} extends the results of \cite{DrapeauLuoSchiedXiong:19} as it provides additionally the convergence rate. Moreover, our model includes the transient price impact effects which were not considered in  \cite{DrapeauLuoSchiedXiong:19}. The proof of convergence of Drapeau et al. uses particular features of the model, which do not apply to our transient price impact case. In the proof of Theorem \ref{thm-strat-con} we develop a method which not only provides the rate of convergence but could also be adapted to more general models which translate at equilibrium to systems of FBSDEs. Theorem \ref{thm-strat-con}  also generalises Proposition 10 in \citet{FeronTankovTinsi:20} in a similar manner. 
\end{remark} 
\begin{remark} 
Our assumption in Theorem \ref{thm-strat-con} requires that $T$ is
bounded by a constant which depends on the model's parameters
$(\lambda,\gamma, \kappa, \rho, \varrho, \phi)$. The main reason for
this restriction arises from the fact that the corresponding systems
of FBSDEs \eqref{eq:NASHFBSDE} and \eqref{eq:MeanFieldFBSDE_inf} are
both degenerate in their forward components and do not satisfy
monotonicity assumptions (see, e.g., the terminal condition for
$u^{i,N}$ in  \eqref{eq:NASHFBSDE}). Moreover, a priori boundedness of $u^{i,N}$ uniformly in $N$ is unknown in this case. Therefore, standard FBSDE arguments for boundedness and uniqueness could not be adjusted to obtain convergence results as was done in \cite{Djete21,DrapeauLuoSchiedXiong:19,Lacker21,Tangpi}.   
\end{remark} 
In light of Lemmas \ref{thm:NASHFBSDE}, \ref{thm:MeanFieldFBSDE_inf} and Corollary \ref{cor:MeanFieldFBSDE_inf}, it is enough to prove the convergence of the solutions to corresponding FBSDE systems, which coincide with $\hat u^{i,N}$ and $(\hat v^i, \hat \nu)$. One of the ingredients in the proof of Theorem \ref{thm-strat-con} is the following proposition that derives a uniform bound on the solutions to these FBSDE systems.  
\begin{proposition} \label{lem-bnd-ui}
\begin{itemize} 
\item[\textbf{(i)}]Let $\hat u^{i,N}$ be the solution to \eqref{eq:NASHFBSDE}. Under assumptions \eqref{init-assum} and \eqref{a-const} we have 
$$
\limsup_{N \rightarrow \infty} \sup_{1\leq i \leq N} \sup_{t\in [0,T] }E\left[(\hat u^{i,N}_t)^2\right] <\infty. 
$$
\item[\textbf{(ii)}] Let $\hat v^i$ be the solution to \eqref{eq:MeanFieldFBSDE_inf}. Under assumption \eqref{init-assum} and \eqref{a-const} we have 

$$
\limsup_{N \rightarrow \infty} \sup_{1\leq i \leq N} \sup_{t\in [0,T] } E\left[(\hat v^i_t)^2\right] <\infty. 
$$
\end{itemize} 
 \end{proposition}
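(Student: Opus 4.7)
The strategy for both parts is to derive a self-referential $L^2$-bound for the optimal controls by combining the backward representation of the corresponding BSDEs with pathwise forward bounds for the state and auxiliary processes, and then to close the inequality using the contraction-type condition \eqref{a-const}. I will describe part (i) in detail; part (ii) follows along the same lines.

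For part (i), I first solve the linear BSDE for $Z^{u^{i,N}}$ explicitly via the integrating factor $e^{-\rho t}$ and the terminal condition $Z^{u^{i,N}}_T=0$, obtaining
\begin{equation*}
Z^{u^{i,N}}_t = -\tfrac{\gamma\kappa}{N}\, E_t\!\left[\int_t^T e^{\rho(t-s)} \hat u^{i,N}_s\, ds\right],
\end{equation*}
so that by Jensen's inequality $E[(Z^{u^{i,N}}_t)^2]$ is of order $N^{-2} \int_0^T E[(\hat u^{i,N}_s)^2] ds$. Next, integrating the BSDE for $\hat u^{i,N}$ from $t$ to $T$, inserting the terminal condition, using the canonical decomposition $P=L+A$, and applying $E_t[\cdot]$ to annihilate the square-integrable martingales $M^{i,N}$ and $L$, I obtain a representation of $\hat u^{i,N}_t$ as the conditional expectation of seven summands: the two terminal contributions $(\varrho/\lambda) X^{u^{i,N}}_T$ and $-(\kappa/(2\lambda)) Y^{u^N}_T$, the $A$-integral $-(1/(2\lambda)) \int_t^T dA_s$, and four drift time-integrals involving $Y^{u^N}_s$, $\tfrac{1}{N}\sum_{j\neq i}\hat u^{j,N}_s$, $X^{u^{i,N}}_s$, and $Z^{u^{i,N}}_s$.

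I then bound $|X^{u^{i,N}}_s|^2 \le 2|x^{i,N}|^2 + 2T \int_0^T (\hat u^{i,N}_r)^2 dr$ and $|Y^{u^N}_s|^2 \le 2 y^2 + 2\gamma^2 T \int_0^T (\bar u^N_r)^2 dr$ from the forward equations, use the pointwise estimate $(\bar u^N_r)^2 \le \tfrac{1}{N}\sum_j (\hat u^{j,N}_r)^2$, apply Cauchy--Schwarz to every time integral in the representation, and finally invoke the conditional-Jensen bound $E[(E_t[\cdot])^2]\le E[(\cdot)^2]$ after squaring. Collecting terms and defining $\beta_N(T) := \sup_{1\le i\le N} \sup_{t\in[0,T]} E[(\hat u^{i,N}_t)^2]$, I arrive at an inequality of the form
\begin{equation*}
E[(\hat u^{i,N}_t)^2] \;\le\; C_0\, M \;+\; 20\, C_{\eqref{c-1}}\, (T^2\vee 1)\, T^2\, \beta_N(T) \;+\; O(N^{-2}),
\end{equation*}
where $M$ aggregates the finite data $\sup_i |x^i|^2 + y^2 + E[\|P\|_{\mathcal H^2}^2]$ (using \eqref{init-assum} and \eqref{ass:P}), and the factor $20\, C_{\eqref{c-1}}\,(T^2\vee 1)\,T^2$ packages the Jensen factor from the seven-term sum, the squared coefficient bounds captured by $C_{\eqref{c-1}}=16(\max\{\varrho,\kappa,\rho\kappa,\gamma\kappa,\phi,\rho\}/\lambda)^2$, and the Cauchy--Schwarz factors on nested time integrals. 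Taking the supremum over $i$ and $t\in[0,T]$ and invoking \eqref{a-const} to absorb the $\beta_N(T)$-coefficient into the left-hand side yields $\beta_N(T)\le C_0 M/(1-20 C_{\eqref{c-1}}(T^2\vee 1)T^2)$ uniformly in $N\ge 2$, proving (i).

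For part (ii), the analogous argument applies, simplified by the absence of the $\sum_{j\ne i}$ coupling and the adjoint $Z$ in the mean field game BSDE \eqref{eq:MeanFieldFBSDE_inf}. I first bound $\sup_t E[\hat\nu_t^2]$ by applying the same conditional-expectation-plus-Jensen scheme to the three-dimensional aggregated FBSDE \eqref{eq:MeanField_infAggregated} (whose initial datum $\tilde x$ is finite by Assumption \ref{ass:initPosLimit} together with \eqref{init-assum}), and then substitute the resulting bound on $Y^{\hat\nu}$ into the decoupled individual FBSDE \eqref{eq:MeanFieldGameFBSDE_inf} for $\hat v^i$. Repeating the representation-plus-Jensen procedure now yields a uniform-in-$i$ bound, since $Y^{\hat\nu}$ is common to all agents. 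The main technical obstacle throughout is the careful constant bookkeeping that is needed to match exactly the numerical factor $20\,C_{\eqref{c-1}}(T^2\vee 1)T^2$ of the contraction coefficient in \eqref{a-const}; once that factor is verified, the closure of the self-referential inequality is immediate.
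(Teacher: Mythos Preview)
Your approach is sound but takes a genuinely different route from the paper. The paper does not close a single contraction inequality for $\beta_N(T)=\sup_{i,t}E[(\hat u^{i,N}_t)^2]$; instead it proves two auxiliary lemmas that bound the backward suprema $\sup_{t\le s\le T}E[(\hat u^{i,N}_s)^2]$ and $\sup_{t\le s\le T}E[(\bar u^N_s)^2]$ separately as functions of the lower endpoint $t$, each in terms of a self-referential backward piece plus the \emph{forward} state values $E[(X^{\hat u^{i,N}}_t)^2]$, $E[(Y^{\hat u^N}_t)^2]$ and $E[\int_0^t(\hat u^{i,N}_s)^2\,ds]$. After summing the two lemmas and absorbing the backward self-referential part via \eqref{a-const}, these forward terms are controlled by $\int_0^t$ of the backward suprema, and the estimate is closed by Gronwall's inequality rather than a one-step contraction. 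The paper also retains the martingale increment $M^{i,N}_T-M^{i,N}_t$ and bounds it explicitly from its known representation, whereas you annihilate it by applying $E_t[\cdot]$ first. Your route is more economical (no Gronwall, no explicit martingale estimate, a single quantity $\beta_N(T)$ instead of the pair $(\hat u^{i,N},\bar u^N)$); the price is precisely the constant bookkeeping you flag. When you expand $X_s$ and $Y_s$ all the way from time $0$ rather than from $t$, the $\int_t^T X_s\,ds$ and $\int_t^T Y_s\,ds$ contributions each pick up an additional $T^2$ factor, so several terms scale like $T^4$, and it is not evident from the sketch that the seven-term Jensen estimate lands at or below the paper's numerical prefactor $20$. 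The qualitative conclusion is unaffected, but verifying that \eqref{a-const} is the \emph{same} sufficient condition for your scheme would need more care than the sketch provides.
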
 
The proof of Proposition \ref{lem-bnd-ui} is given in Section \ref{sec-pf-bnd}. 

From Theorem \ref{thm-strat-con} and Proposition \ref{lem-bnd-ui} we deduce the following result on the convergence of the value functions of the two games. 
\begin{corollary} \label{corr-val}
Let $J^{i,N}(\hat u^{i,N}; \hat u^{-i,N})$ be as in Definition \ref{def:Nash} and $J^{i,\infty}(\hat v^{i}; \hat \nu)$ be as in Definition~\ref{def:Nash_infMFG}. Then under assumptions \eqref{init-assum} and \eqref{a-const} we have   
$$
\sup_{1\leq i\leq N}|J^{i,N}(\hat u^{i,N}; \hat u^{-i,N}) - J^{i,\infty}(\hat v^{i}; \hat \nu) | =O(N^{-2}). 
$$
\end{corollary}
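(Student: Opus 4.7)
The plan is to expand $\Delta_i := J^{i,N}(\hat u^{i,N}; \hat u^{-i,N}) - J^{i,\infty}(\hat v^{i}; \hat \nu)$ directly and to exploit the variational optimality at both equilibria in order to eliminate all first-order contributions in $\Delta u^i := \hat u^{i,N} - \hat v^i$, leaving only second-order remainders that are automatically $O(N^{-2})$ by Theorem~\ref{thm-strat-con} and Proposition~\ref{lem-bnd-ui}. To this end, I introduce the common quadratic--linear functional
\[
\Phi(u,\mu) := E\!\int_0^T (P_t - \kappa Y^{\mu}_t) u_t\,dt - \lambda E\!\int_0^T u_t^2\,dt - \phi E\!\int_0^T (X^u_t)^2\,dt + E\!\left[X^u_T(P_T - \varrho X^u_T)\right],
\]
so that $J^{i,N}(\hat u^{i,N}; \hat u^{-i,N}) = \Phi(\hat u^{i,N}, \bar u^N)$ with $\bar u^N := \tfrac{1}{N}\sum_j \hat u^{j,N}$, and $J^{i,\infty}(\hat v^i;\hat\nu) = \Phi(\hat v^i,\hat\nu)$. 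Since $\Phi$ is quadratic in $u$ and linear in $\mu$ through $Y^\mu$, Taylor expansion at $(\hat v^i,\hat\nu)$ is exact and delivers
\[
\Delta_i = D_u\Phi(\hat v^i,\hat\nu)[\Delta u^i] + \tfrac{1}{2} D^2_{uu}\Phi[\Delta u^i,\Delta u^i] + D_\mu\Phi(\hat v^i,\hat\nu)[\bar u^N - \hat\nu] + D^2_{u\mu}\Phi[\Delta u^i,\bar u^N - \hat\nu].
\]

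The first-order optimality condition for $\hat v^i$ in the mean field problem forces $D_u\Phi(\hat v^i,\hat\nu) \equiv 0$, which kills the leading linear-in-$u$ term. The pure quadratic piece $|D^2_{uu}\Phi[\Delta u^i,\Delta u^i]|$ is controlled by a constant times $\|\Delta u^i\|_{L^2(dt\otimes dP)}^2$, which is $O(N^{-2})$ by Theorem~\ref{thm-strat-con}. The mixed term $D^2_{u\mu}\Phi[\Delta u^i,\bar u^N - \hat\nu] = -\kappa E\!\int_0^T Y^{\bar u^N - \hat\nu}_t \Delta u^i_t\,dt$ is bounded by $\kappa\|Y^{\bar u^N - \hat\nu}\|_{L^2}\|\Delta u^i\|_{L^2}$, and $\sup_t E[(Y^{\bar u^N - \hat\nu}_t)^2] = O(N^{-2})$ follows from Theorem~\ref{thm-strat-con} by a direct estimate on $Y^{\bar u^N - \hat\nu}_t = \gamma\int_0^t e^{-\rho(t-s)}(\bar u^N_s - \hat\nu_s)\,ds$; hence this cross term is also $O(N^{-2})$.

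The delicate remaining piece is the linear-in-$\mu$ contribution $D_\mu\Phi(\hat v^i,\hat\nu)[\bar u^N - \hat\nu] = -\kappa E\!\int_0^T Y^{\bar u^N - \hat\nu}_t \hat v^i_t\,dt$, for which a direct Cauchy--Schwarz bound gives only $O(N^{-1})$, since Proposition~\ref{lem-bnd-ui} supplies merely a uniform-in-$N$ bound on $\|\hat v^i\|_{L^2}$ rather than smallness. This is where I expect the main obstacle to lie. To recover the missing factor of $N^{-1}$, my plan is to combine the first-order optimality condition for $\hat u^{i,N}$ in the $N$-player game, which---once agent $i$'s own $1/N$-weighted contribution to the aggregate is accounted for---reads $D_u\Phi(\hat u^{i,N},\bar u^N)[h] = -\tfrac{1}{N} D_\mu\Phi(\hat u^{i,N},\bar u^N)[h]$ for every admissible $h$, with the mean field first-order condition $D_u\Phi(\hat v^i,\hat\nu) \equiv 0$. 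Subtracting the two identities and testing against $h = \hat v^i$ rewrites $E\!\int_0^T Y^{\bar u^N - \hat\nu}_t \hat v^i_t\,dt$ as a linear combination of terms each carrying either a factor $\|\Delta u^i\|_{L^2} = O(N^{-1})$, a factor $\|Y^{\bar u^N - \hat\nu}\|_{L^2} = O(N^{-1})$, or an explicit $1/N$, which, when combined with the uniform bounds from Proposition~\ref{lem-bnd-ui}, all collapse to $O(N^{-2})$.

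Assembling the four estimates and invoking the uniform-in-$i$ $L^2$ bounds from Proposition~\ref{lem-bnd-ui}, one arrives at $|\Delta_i| = O(N^{-2})$ with a constant that is independent of $i$, so taking the supremum over $i \in \{1,\ldots,N\}$ completes the proof. The principal difficulty is keeping careful track of which factors in the identity produced from the two first-order conditions are small in $N$ and which are merely bounded, so that no hidden factor of $N$ destroys the target rate.
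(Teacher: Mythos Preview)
Your variational decomposition is set up correctly, and your identification of the delicate term is spot on: everything except $D_\mu\Phi(\hat v^i,\hat\nu)[\bar u^N-\hat\nu]=B_0(\hat v^i,\bar u^N-\hat\nu)$ is indeed $O(N^{-2})$ straight from Theorem~\ref{thm-strat-con}. The problem is that your repair for this term does not deliver the promised rate. When you subtract the two first-order conditions and use that $\Phi$ is quadratic in $u$ and affine in $\mu$, you obtain (exactly, for every test direction $h$)
\[
B_0(h,\bar u^N-\hat\nu)\;=\;-\tfrac{1}{N}\,B_0(\hat u^{i,N},h)\;-\;2Q(\Delta u^i,h),
\]
and setting $h=\hat v^i$ gives
\[
B_0(\hat v^i,\bar u^N-\hat\nu)\;=\;-\tfrac{1}{N}\,B_0(\hat u^{i,N},\hat v^i)\;-\;2Q(\Delta u^i,\hat v^i).
\]
Each summand on the right has \emph{one} small factor, not two: the first carries the explicit $1/N$ against $B_0(\hat u^{i,N},\hat v^i)$, which by Proposition~\ref{lem-bnd-ui} is merely $O(1)$; the second carries $\|\Delta u^i\|=O(N^{-1})$ against $\|\hat v^i\|=O(1)$. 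Hence both are $O(N^{-1})$, and your claim that ``combined with the uniform bounds from Proposition~\ref{lem-bnd-ui}, all collapse to $O(N^{-2})$'' is where the argument breaks---a single $O(N^{-1})$ factor times an $O(1)$ bound is $O(N^{-1})$, not $O(N^{-2})$.

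The paper itself does not write out a proof of Corollary~\ref{corr-val}; it only states that the result is ``deduced'' from Theorem~\ref{thm-strat-con} and Proposition~\ref{lem-bnd-ui}. So you are not missing an argument spelled out elsewhere in the text. But as your own analysis shows, a naive term-by-term comparison yields only $O(N^{-1})$, and your first-order-condition trick does not close the gap to $O(N^{-2})$. If the $O(N^{-2})$ rate is to hold, it requires an additional cancellation or structural identity that neither the paper nor your proposal supplies; as written, your argument establishes $|\Delta_i|=O(N^{-1})$ but not better.
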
 

The next theorem complements Theorem \ref{thm-strat-con} as it derives
almost sure convergence of the Nash equilibrium strategies from the
$N$-player game to the optimal strategies in the mean-field game
without assuming \eqref{a-const}, but at the cost of losing track of
the convergence rate.

To state our result, we first formulate suitable assumptions. Recall
that the matrix~$\ol F^N$ was defined in \eqref{def:Fbar}.  Moreover,
let \be \label{def:Fhat} \hat F^N = \begin{pmatrix}
  0 & 0 & -1 & 0 \\
  0 & 0 & 0 & 0 \\ -\frac{\phi}{\lambda} & 0 & \frac{\kappa \gamma}{2 \lambda N} & \frac{\rho}{2\lambda} \\
  0 & 0 & \frac{\kappa\gamma}{N} & \rho
    \end{pmatrix} \in \mathbb{R}^{4 \times 4} 
\ee
and let
\begin{equation} \label{b-00} 
    \ol F_{00} =
    \begin{pmatrix}
    \frac{\varrho}{\lambda} & -\frac{\kappa}{2\lambda} \\ 
    0 & 0
    \end{pmatrix}, \quad    \hat F_{00} =
    \begin{pmatrix}
    \frac{\varrho}{\lambda} & 0 \\
    0 & 0
    \end{pmatrix}.
\end{equation}
 We also introduce the matrices in $\mathbb{R}^{4\times 2}$
\begin{equation} \label{k-mat-exp}
    \ol K^N(t) = \exp(-\ol F^N \cdot t) \cdot \begin{pmatrix} I \\
      \ol F_{00} \end{pmatrix}, \quad   \hat K^N(t) = \exp(- \hat F^N \cdot t)
    \cdot \begin{pmatrix} I \\ \hat F_{00} \end{pmatrix},
  \end{equation}
  where $I$ represents the identity matrix in
$\mathbb{R}^{2\times 2}$.  Denote by
 $$
 \ol K^{N,(2)}(t) = \left(\ol K^N_{i,j}(t) \right)_{i,j=1,2}, \quad
 \hat  K^{N,(2)}(t) = \left(\hat K^N_{i,j}(t) \right)_{i,j=1,2} \quad (0\leq t \leq T);
 $$ 
 that is, $\ol K^{N,(2)}(t)$ and $\hat K^{N,(2)}(t)$ are matrices in $\mathbb{R}^{2\times 2}$ constructed from the first two rows of $\ol K^N(t)$ and $\hat K^N(t)$, respectively. 

In the following we assume that the set of parameters $\xi \triangleq (\lambda,\gamma,
\kappa, \rho, \varrho, \phi, T) \in \re^7_+$ are chosen such,
\be \label{assump-k}
\liminf_{N \geq 1}\inf_{t \in [0,T]}\left |\det \left( \ol K^{N,(2)}(t) \right)\right | > 0, \quad \liminf_{N \geq 1}\inf_{t \in [0,T]}\left |\det \left( \hat K^{N,(2)}(t)  \right) \right| > 0. 
\ee

  \begin{remark} \label{rem:assump:long1} Note that
    assumption~\eqref{assump-k} can be further simplified by
    explicitly computing the matrix exponentials
    in~\eqref{k-mat-exp} very
    similar to the computations performed in
    Section~\ref{subsec:proofs-matrixexponentials}. This will in turn
    lead to assumptions akin to those formulated in
    Sections~\ref{sec:finitePlayerSol} and Section~\ref{sec:infinitePlayerSol}, i.e.,
    Assumptions~\ref{assump:1} and~\ref{assump:2}, hence we omit the
    details. 
\end{remark} 

We are now ready to state our second convergence result.
 
\begin{theorem} \label{thm-con-long} For any $N\geq 2$, let
  $\hat{u}^{i,N}$ be the Nash equilibrium strategy of player~$i$ in
  the $N$-player game in the sense of Definition \ref{def:Nash}. Let
  $\hat v^i$ and $\hat \nu$ be the equilibrium strategy of player $i$
  and the mean field trading speed in the mean field game in the sense of
  Definition \ref{def:Nash_infMFG}. Under assumptions
  \eqref{eq:initPosLimit},  
  \eqref{init-assum} and \eqref{assump-k} we have
\begin{itemize} 
\item [\textbf{(i)}] 
$$
 \lim_{N\rr \infty}  \sup_{0\leq s\leq T}  \left| \hat
   \nu_{s}-\frac{1}{N}\sum_{i=1}^N\hat u_s^{i,N} \right| =0 \qquad \textrm{a.s.},
$$
\item [\textbf{(ii)}] 
 $$
\lim_{N\rr \infty}   \sup_{0\leq s\leq T} \big|\hat u_s^{i,N}-\hat
v^{i}_s\big|  =0  \qquad \textrm{a.s.}
$$
\end{itemize} 
\end{theorem} 

The proof of Theorem \ref{thm-con-long} is given in Section
\ref{sec-pf-long}.  

  \begin{remark}   
    As a byproduct of the proof of Theorem \ref{thm-con-long} we
    provide alternative representations of the solutions to the FBSDE
    systems
    in~\eqref{eq:mean-FBSDE},~\eqref{eq:NASHFBSDE*},~\eqref{eq:MeanField_infAggregated},
    \eqref{eq:MeanFieldGameFBSDE_inf}
    presented in Sections~\ref{sec:finitePlayerSol}
    and~\ref{sec:infinitePlayerSol}; and hence of the equilibrium
    strategies in both games; see Section \ref{sec-pf-long}
    below. Therefore, assumption \eqref{assump-k} which gives
    necessary conditions for the existence of these solutions can replace
    Assumptions~\ref{assump:1} and~\ref{assump:2}.
  \end{remark}

In our next result we prove an $\eps$-Nash equilibrium for an agent in
the $N$-player game, who is executing according to the mean field
optimal strategy. More precisely, we show that agent $i$ may improve
her performance by at most $O(N^{-1})$ when she deviates from the mean
field strategy $\hat{v}^i$. To this end, for any $u\in \mathcal A$ we
introduce the following norm
$$
\|u\|_{2,T}= \Big(\int_0^TE[u_t^2]dt  \Big)^{1/2}. 
$$

We first recall the definition of an $\eps$-Nash equilibrium from
\citet[Section 4]{CasgrainJaimungal:18}. 
\begin{definition}
Let $\mathcal A$ denote a class of admissible controls and fix
$\eps>0$. A set of controls $\{w^j \in \mathcal A : j =1,\ldots,N\}$
forms an $\eps$-Nash equilibrium with respect to a collection of
objective functionals $\{J^j(\cdot \,;\cdot): j=1,\ldots,N\}$ if it satisfies 
$$
J^{j}(w^{j}; w^{-j}) \leq \sup_{w \in \mathcal A }J^{j}(w; w^{-j}) \leq J^{j}(w^j; w^{-j}) + \eps \quad \textrm{for all } j=1,...,N.   
$$
\end{definition} 

\begin{theorem} \label{thm-eps-nash}
Let $J^{i,N}$ be the performance functional of the $N$-player game in
\eqref{def:FPGobjective}. Recall that $\{\hat v^i  : \, i \in
\mathbb{N}\}$ are the equilibrium strategies of the mean field game maximizing \eqref{def:objective_infMFG}. Then under assumptions \eqref{init-assum} and \eqref{a-const} there exists $C>0$ independent of $N$ and $u$ such that for all $u \in \mathcal A$ and $i \in \mathbb{N}$ we have 
$$
J^{i,N}(\hat{v}^{i}; \hat{v}^{-i}) \leq J^{i,N}(u; \hat{v}^{-i}) \leq J^{i,N}(\hat{v}^{i}; \hat{v}^{-i})+ C \|u\|^2_{2,T}(1\vee \|u\|^2_{2,T}) \left(\frac{1}{N} \right). 
$$
\end{theorem}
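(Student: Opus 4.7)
The approach is to compare player $i$'s performance in the $N$-player game with the mean field performance and exploit the mean field optimality of $\hat v^i$, reducing the estimate to controlling the discrepancy between the aggregated transient impact $Y^{(\cdot,\hat v^{-i})}$ and the mean field impact $\tilde Y^{\tilde \nu}$. For any admissible control $v \in \mathcal A$, the only dependence of $J^{i,N}(v;\hat v^{-i})$ on the other agents' strategies is through the aggregated price impact, so that
\begin{equation*}
J^{i,N}(v;\hat v^{-i}) - J^{i,\infty}(v;\tilde\nu) = -\kappa\, E\!\left[\int_0^T \big(Y^{(v,\hat v^{-i})}_t - \tilde Y_t^{\tilde \nu}\big) v_t\, dt\right].
\end{equation*}
Writing $\hat v^N = (\hat v^1,\ldots,\hat v^N)$ and decomposing
\begin{equation*}
Y^{(v,\hat v^{-i})}_t - \tilde Y_t^{\tilde \nu} = \frac{\gamma}{N}\int_0^t e^{-\rho(t-s)}(v_s - \hat v^i_s)\,ds + \gamma\int_0^t e^{-\rho(t-s)}\Big(\frac{1}{N}\sum_{j=1}^N \hat v^j_s - \tilde\nu_s\Big)\,ds,
\end{equation*}
the first piece is $L^2$-bounded by $(C/N)(\|v\|_{2,T} + \|\hat v^i\|_{2,T})$ via a direct convolution estimate, and the second piece is of $L^2$-order $O(1/N)$ uniformly in $t$, combining Theorem~\ref{thm-strat-con} (which yields $\frac{1}{N}\sum_j \hat u^{j,N}\to \tilde\nu$ at rate $O(1/N)$, after identifying $\hat\nu = \tilde\nu$ via Corollary~\ref{cor:MeanFieldFBSDE_inf}) with the $O(N^{-1})$ gap $\|\hat v^j - \hat u^{j,N}\|_2$ from the same theorem, and Proposition~\ref{lem-bnd-ui} to uniformly bound $\sup_i\|\hat v^i\|_{2,T}$.

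The main argument proceeds via the three-term decomposition
\begin{equation*}
J^{i,N}(u;\hat v^{-i}) - J^{i,N}(\hat v^i;\hat v^{-i}) = T_1(u) + T_2(u) + T_3,
\end{equation*}
with $T_1(u) = J^{i,N}(u;\hat v^{-i}) - J^{i,\infty}(u;\tilde\nu)$, $T_2(u) = J^{i,\infty}(u;\tilde\nu) - J^{i,\infty}(\hat v^i;\tilde\nu)$, and $T_3 = J^{i,\infty}(\hat v^i;\tilde\nu) - J^{i,N}(\hat v^i;\hat v^{-i})$. Mean field optimality of $\hat v^i$ for $J^{i,\infty}(\cdot;\tilde\nu)$ (Definition~\ref{def:Nash_infMFG}) immediately yields $T_2(u) \leq 0$. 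Applying Cauchy--Schwarz to the displayed formula above, together with the impact estimate, gives $|T_1(u)| \leq (C/N)\|u\|_{2,T}(1 + \|u\|_{2,T})$ and $|T_3| \leq C/N$ (using $\sup_i \|\hat v^i\|_{2,T} \leq C$). Combining,
\begin{equation*}
J^{i,N}(u;\hat v^{-i}) - J^{i,N}(\hat v^i;\hat v^{-i}) \leq \frac{C}{N}\big(1 + \|u\|_{2,T} + \|u\|_{2,T}^2\big),
\end{equation*}
which, after absorbing lower-order terms into the stated form $C\|u\|_{2,T}^2(1\vee\|u\|_{2,T}^2)/N$, yields the right-hand inequality of the theorem. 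The left-hand inequality $J^{i,N}(\hat v^i;\hat v^{-i}) \leq J^{i,N}(u;\hat v^{-i})$ is to be read in the standard $\epsilon$-Nash sense (cf.\ the definition preceding the theorem) as the trivial bound $J^{i,N}(\hat v^i;\hat v^{-i}) \leq \sup_{v\in\mathcal A} J^{i,N}(v;\hat v^{-i})$.

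The main obstacle is the $O(1/N)$ $L^2$-rate for $\|Y^{\hat v^N} - \tilde Y^{\tilde\nu}\|$, which rests critically on both Theorem~\ref{thm-strat-con} (to transfer the $O(N^{-1})$ rate from the finite-player equilibrium strategies $\hat u^{i,N}$ to the mean field strategies $\hat v^i$) and on Proposition~\ref{lem-bnd-ui} (to control $\hat v^i$ uniformly in $i$); once these ingredients are in place, the remainder is a routine application of Cauchy--Schwarz to the explicit expression for $J^{i,N}(v;\hat v^{-i}) - J^{i,\infty}(v;\tilde\nu)$.
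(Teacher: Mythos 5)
Your proposal is correct and follows essentially the same route as the paper: the three-term decomposition with $T_2\le 0$ by mean-field optimality is exactly the paper's chain of inequalities, and your bound on $T_1$ and $T_3$ via the $Y$-discrepancy, Cauchy--Schwarz, the $O(N^{-1})$ consistency rate for $\frac1N\sum_j\hat v^j-\hat\nu$ (obtained through Theorem~\ref{thm-strat-con} and the triangle inequality) and the uniform bound of Proposition~\ref{lem-bnd-ui} is precisely the content of the paper's Lemmas~\ref{lemma-con-mf} and~\ref{lem-j-dif}. Your reading of the left-hand inequality in the $\sup$-sense also matches the paper's proof (and you inherit the same harmless imprecision as the paper in absorbing the constant $O(1/N)$ term into $C\|u\|^2_{2,T}(1\vee\|u\|^2_{2,T})/N$).
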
 
The proof of Theorem \ref{thm-eps-nash} is given in Section \ref{sec-eps-pf}. 
\begin{remark} 
We can get a similar result as in Theorem \ref{thm-eps-nash} by replacing assumption \eqref{a-const} with the assumptions of Proposition \ref{prop:sol-MeanField_infAggregated} and Theorem \ref{thm:main-infMFG}, at the price of not having the precise convergence rate $N^{-1}$. This is done by deriving a uniform bound as in Proposition \ref{lem-bnd-ui} on $\{\hat{v}^{i}\}_{i\in \mathbb{N}}$, using the explicit solution in Theorem \ref{thm:main-infMFG} and a Gronwall-type argument as in the proof of~\cite[Theorem 3.2]{N-V19}, step 2. Then, one needs to repeat the same steps as in Lemma \ref{lem-j-dif} using the bound in Remark \ref{rem-l2-con} instead of Lemma \ref{lemma-con-mf}. The rest of the proof is similar to the proof of Theorem \ref{thm-eps-nash}, so we leave the details to the reader. 
\end{remark} 

From Proposition \ref{lem-bnd-ui} it follows that we can define a class $\mathcal A_{b}$ of admissible strategies that includes $\{(\hat u^{1,N},\ldots ,\hat u^{N,N}) : \ N\in \mathbb{N} \}$ and $\{\hat v^{i}\}_{i \in \mathbb{N}}$ such that 
$$
\sup_{u \in\mathcal A_{b} }  \|u\|^2_{2,T} <\infty. 
$$
Then the following corollary follows immediately from Theorem \ref{thm-eps-nash}. 
\begin{corollary} [$\eps$-Nash equilibrium] \label{thm-eps-nash2} Let
  $J^{i,N}$ be the performance functional of the $N$-player game in
  \eqref{def:FPGobjective}. Recall that
  $\{\hat v^i : \, i \in \mathbb{N}\}$ are the equilibrium strategies
  of the mean field game maximizing
  \eqref{def:objective_infMFG}. Then under assumptions \eqref{init-assum} and \eqref{a-const} there exists $C>0$ independent
  of $N$ such that for all $u \in \mathcal A$ and $i \in \mathbb{N}$
  we have
$$
J^{i,N}(\hat{v}^{i}; \hat{v}^{-i}) \leq \sup_{u \in \mathcal A_{b}} J^{i,N}(u; \hat{v}^{-i}) \leq J^{i,N}(\hat{v}^{i}; \hat{v}^{-i}) + O\left(\frac{1}{N} \right). 
$$
\end{corollary}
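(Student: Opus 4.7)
The statement is an essentially immediate consequence of Theorem~\ref{thm-eps-nash} combined with the uniform $L^2$ bounds from Proposition~\ref{lem-bnd-ui}, so the plan is to produce the admissible subclass $\mathcal{A}_b$ explicitly and then apply the preceding theorem on it.

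First, I would use Proposition~\ref{lem-bnd-ui} to obtain
\[
M_1 \triangleq \sup_{N\geq 2}\sup_{1\leq i\leq N} \|\hat u^{i,N}\|_{2,T}^2 < \infty,
\qquad
M_2 \triangleq \sup_{i\in\mathbb{N}} \|\hat v^i\|_{2,T}^2 < \infty,
\]
by integrating the pointwise-in-$t$ $L^2$ bounds over $[0,T]$ and invoking Fubini's theorem (which is applicable because both processes are progressively measurable with finite $L^2(\Omega\times[0,T])$ norm by admissibility). Set $M \triangleq 1 + \max(M_1,M_2)$ and define
\[
\mathcal{A}_b \triangleq \Bigl\{\, u \in \mathcal{A} \,:\, \|u\|_{2,T}^2 \leq M \,\Bigr\}.
\]
By construction $\mathcal{A}_b$ contains the full collection $\{(\hat u^{1,N},\dots,\hat u^{N,N}) : N\in\mathbb{N}\}$ as well as $\{\hat v^i : i\in\mathbb{N}\}$, and it satisfies $\sup_{u \in \mathcal{A}_b} \|u\|_{2,T}^2 \leq M < \infty$.

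For the upper inequality I would fix $i \in \mathbb{N}$ and an arbitrary $u \in \mathcal{A}_b$, then apply Theorem~\ref{thm-eps-nash}, which yields
\[
J^{i,N}(u;\hat v^{-i}) \leq J^{i,N}(\hat v^i;\hat v^{-i}) + C\,\|u\|_{2,T}^2\bigl(1\vee \|u\|_{2,T}^2\bigr)\frac{1}{N}
\leq J^{i,N}(\hat v^i;\hat v^{-i}) + \frac{CM(1\vee M)}{N}.
\]
Since the right-hand side is independent of $u \in \mathcal{A}_b$, I can take the supremum over $u \in \mathcal{A}_b$ on the left to obtain
\[
\sup_{u \in \mathcal{A}_b} J^{i,N}(u;\hat v^{-i}) \leq J^{i,N}(\hat v^i;\hat v^{-i}) + O(1/N),
\]
with the $O(1/N)$ constant independent of $i$ and $N$ (it depends only on $M$ and the constant $C$ from Theorem~\ref{thm-eps-nash}).

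For the lower inequality, I would simply observe that $\hat v^i \in \mathcal{A}_b$ by the very choice of $M$, so trivially
\[
J^{i,N}(\hat v^i;\hat v^{-i}) \leq \sup_{u \in \mathcal{A}_b} J^{i,N}(u;\hat v^{-i}).
\]
Combining the two inequalities gives the claim. There is no substantive obstacle here: the only potentially delicate point is verifying that Proposition~\ref{lem-bnd-ui}(ii)'s pointwise-in-$t$ bound actually implies a uniform $L^2([0,T])$ bound on $\hat v^i$, but this is a direct consequence of $\sup_{i,t} E[(\hat v^i_t)^2] < \infty$ via Fubini, so the argument reduces to bookkeeping.
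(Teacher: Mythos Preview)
Your proposal is correct and is precisely the argument the paper has in mind: the paper states that $\mathcal{A}_b$ is constructed from Proposition~\ref{lem-bnd-ui} so that it contains all the $\hat u^{i,N}$ and $\hat v^i$ with a uniform $\|\cdot\|_{2,T}$ bound, and then says the corollary ``follows immediately from Theorem~\ref{thm-eps-nash}''. Your write-up simply spells out that immediate step, including the trivial lower bound from $\hat v^i \in \mathcal{A}_b$ and the uniform upper bound obtained by plugging the common $M$ into the right-hand side of Theorem~\ref{thm-eps-nash}; the only minor caveat is that Proposition~\ref{lem-bnd-ui} literally gives a $\limsup_{N\to\infty}$ bound rather than a $\sup_{N\geq 2}$ bound, but since each fixed $N$ already has $\hat u^{i,N}\in\mathcal{A}$, the finitely many small-$N$ cases are automatically absorbed.
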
 

\begin{remark} 
An $\eps$-Nash equilibrium result for execution games with partial information was derived by \citet{CasgrainJaimungal:18}, for the setting of permanent and temporary price impact. Corollary \ref{thm-eps-nash2} extends this result for the transient price impact case. 
\end{remark} 

\section{Illustrations} \label{sec:illustrations}

In this section we illustrate the agents' optimal inventories
for the mean field game, which were derived in Theorem~\ref{thm:main-infMFG} (or, equivalently,
Corollary~\ref{cor:main-infMFG}).  As
in~\cite{Lehalle-Neum18},~\cite{N-V19} we consider the case where the
exogenous signal process $A$ is given by
\begin{equation*} \label{a-i}
A_t = \int_0^t I_s ds
\qquad (t \geq 0),
\end{equation*}
with $I=(I_t)_{t \geq 0}$ following an autonomous Ornstein-Uhlenbeck
process with dynamics that are given by
\begin{equation*}
\begin{aligned} \label{I-OU} 
I_{0}&=\iota, \quad dI_{t} = -\beta I_{t}\, dt +\sigma \, dW_{t} \qquad (t\geq 0).
\end{aligned} 
\end{equation*}
Here, $W=(W_t)_{t \geq 0}$ denotes a standard Brownian motion, which is
defined on the underlying filtered probability space and
$\beta, \sigma>0$ are some constants. We fix the values of the
parameters as
\begin{equation} \label{model-par}
    T = 10,\quad  \kappa = 1 ,\quad \gamma = 1,\quad \rho = 1, \quad
    \lambda = 0.5, \quad \phi = 0.1, \quad \varrho = 10, 
\end{equation}
as well as
\begin{equation} \label{signal-par}
    \iota = 1, \quad \beta = 0.1,\quad \sigma = 0.5.
\end{equation}
We compute and plot the equilibrium inventories $\hat{X}^i \triangleq X^{\hat v^i}$ of five different agents $i=1,...,5$ using the representation in equation~\eqref{eq:opt_ustar_alt} of Corollary~\ref{cor:main-infMFG} together with the representation of $\tilde\nu$ in~\eqref{eq:opt_ubar-MeanField_infAggregated} and the forward equation for $\tilde{Y}^{\tilde \nu}$ from~\eqref{eq:MeanField_infAggregated}. Note that the offset term in~\eqref{eq:opt_ubar-MeanField_infAggregated}  can be easily computed when the signal process $A$ is given by an integrated Ornstein-Uhlenbeck process; cf. also the single-player case in~\cite{N-V19}. To wit, each agent $i$'s optimal inventory $\hat{X}^i$ is computed by solving numerically a (random) three-dimensional linear forward ODE
system in $(\tilde{X}^{\tilde\nu},\tilde{Y}^{\tilde\nu},\hat{X^i})$. We also present the mean field price distortion $\tilde Y \triangleq \tilde Y^{\tilde \nu}$, the mean field inventory $\tilde X \triangleq \tilde X^{\tilde \nu}$ and the amplified signal $A -\kappa\tilde Y$ in the following cases:
\begin{itemize} 
\item [\textbf{(i)}] without additional exogenous signal, $A\equiv 0$ (deterministic case),  in figure \ref{fig:ill1}, 
\item  [\textbf{(ii)}] with increasing (positive) signal in figure \ref{fig:ill2}, 
\item  [\textbf{(iii)}]  with decreasing (negative) signal in figure \ref{fig:ill3}.  
\end{itemize} 
Each of these plots show the agents' and the mean field inventories for
various cases of initial mean field inventories $\tilde X_0$, where in
figure \ref{fig:ill1} ($A\equiv 0$) the case of $\tilde X_0=0$ is
omitted as the mean field inventory is $\tilde X \equiv 0$. 

From these figures we conclude a few interesting observations:
\begin{itemize} 
\item [\textbf{(i)}] They show that the optimal mean field game strategies $\hat{v}^i$ follow the mean field trading rate $\tilde{\nu}$ and gradually push the inventory $\hat{X}^i$ towards the mean field inventory $\tilde{X}$, as pointed out in Remark \ref{rem-crowd}. 
\item [\textbf{(ii)}] We observe that the aggregated order-flow (i.e., the price distortion) amplifies the effect of the exogenous price predicting signal on the trading speeds $\hat v^{i}$, at least when the trading is far from termination. When approaching the end of the time horizon, the traders tend to close their positions due to inventory penalties, and the price distortion often has an opposite direction to the signal.    
\item [\textbf{(iii)}] In the cases of increasing positive signal in figure \ref{fig:ill2} and decreasing negative signal in figure \ref{fig:ill3}, when the mean field initial inventory is set to $\tilde X_0 = 0$ (top panel), we observe that the mean field strategy forms a round-trip, which is triggered by the signal and reinforcing it for the individual agent $i$'s trading.   
\end{itemize}

\begin{figure}[htbp!] 
  \begin{center}
    \includegraphics[scale=.6]{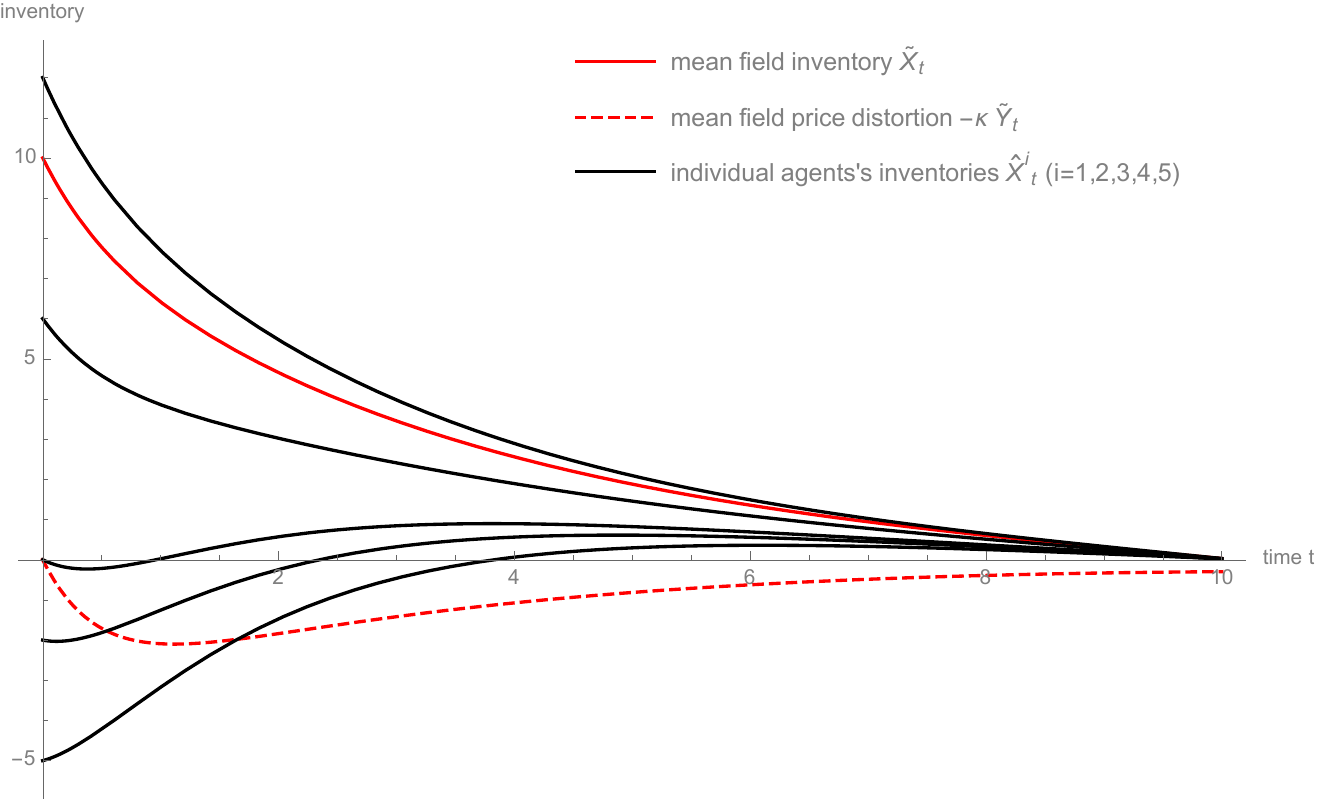}
    \par \vspace{1em} 
    \includegraphics[scale=.6]{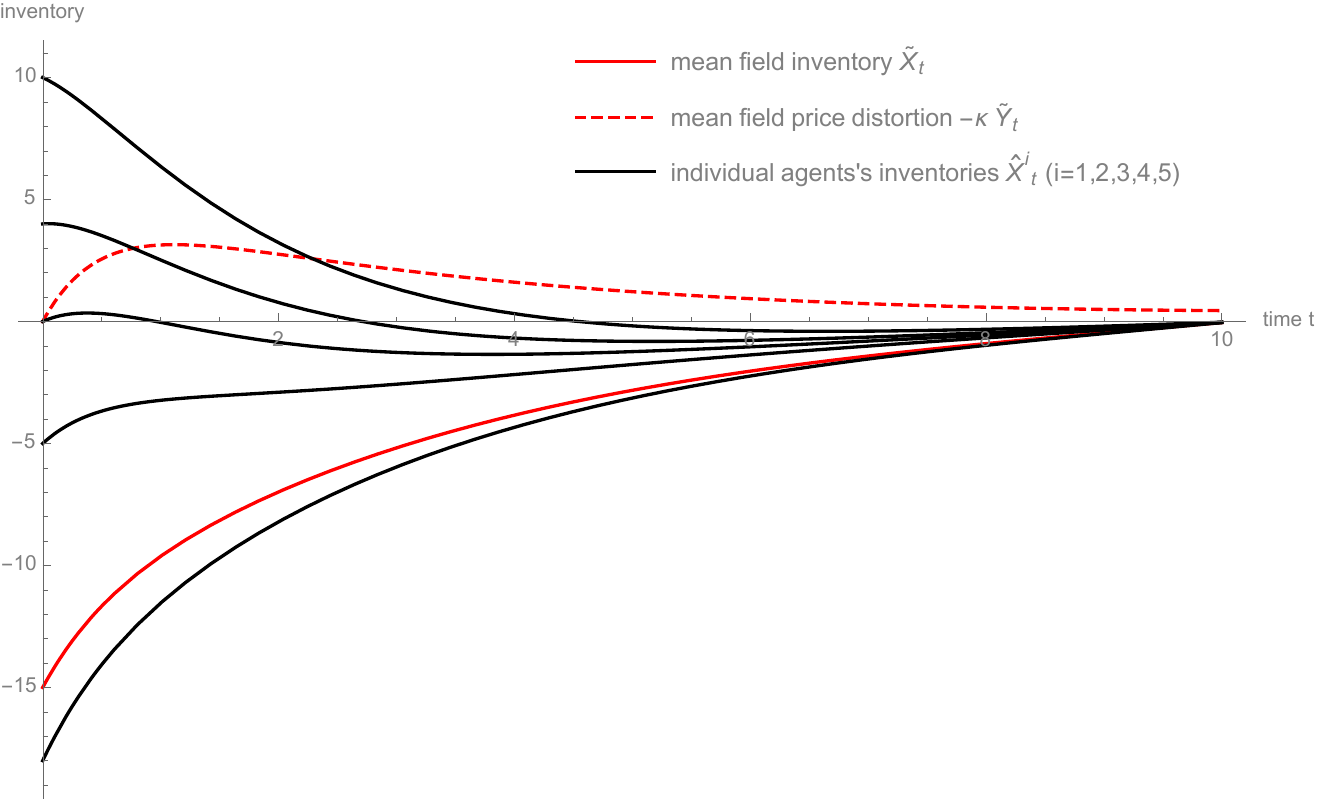}
    \caption{Case without additional exogenous signal $A\equiv 0$;
      Top: $\tilde X_0 =10$; bottom: $\tilde X_0 =-15$; all strategies
      are deterministic; mean field transient price distortion
      $-\kappa \tilde Y$ becomes the trading signal for the individual
      agents; note also the round-trip agent with zero initial
      inventory who is only exploiting the mean field's induced signal
      $-\kappa \tilde Y$.}
    \label{fig:ill1}
  \end{center}
\end{figure}

\begin{figure}[htbp!] 
  \begin{center}
    \includegraphics[scale=.38]{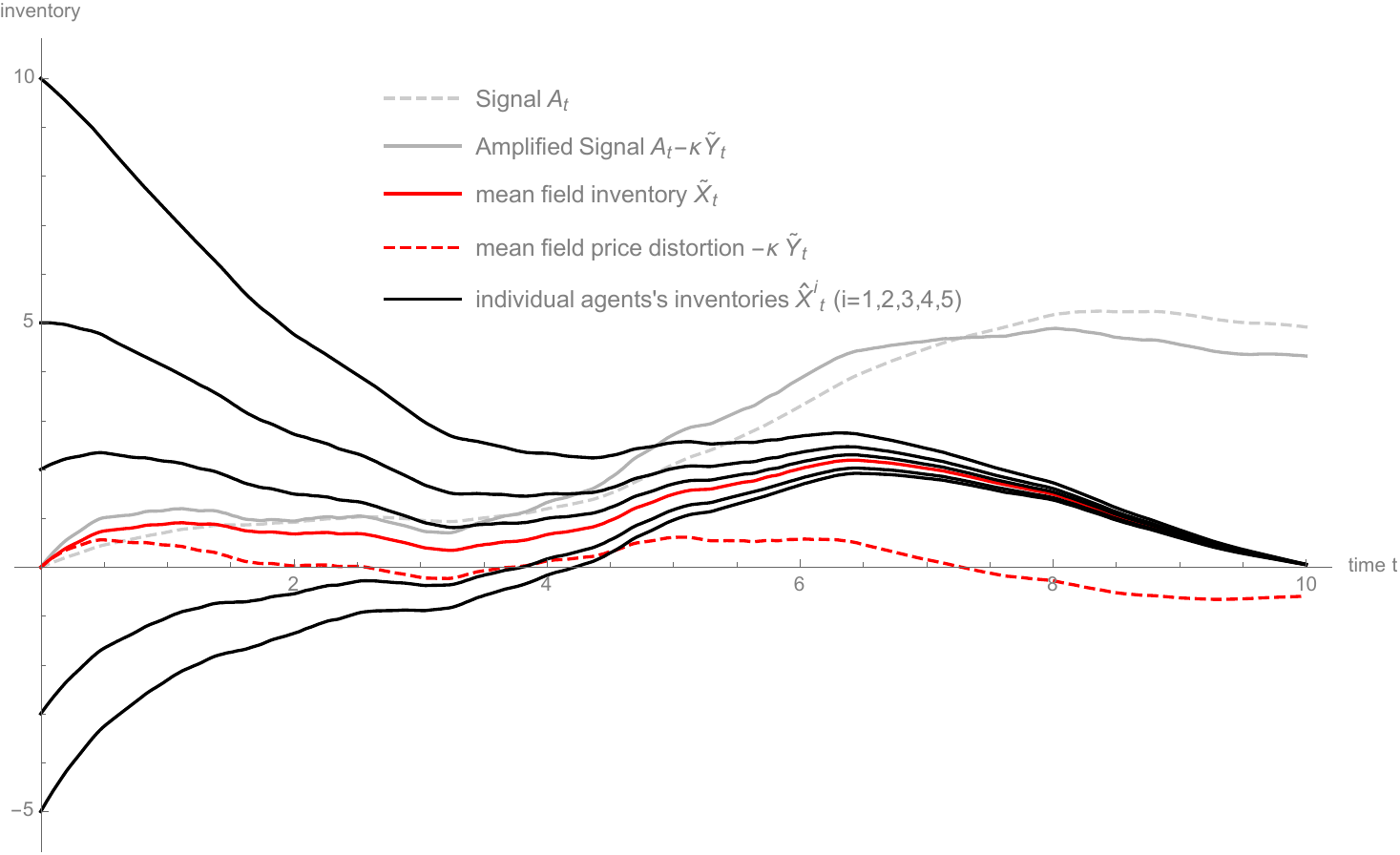}
    \par \vspace{1em} 
    \includegraphics[scale=.38]{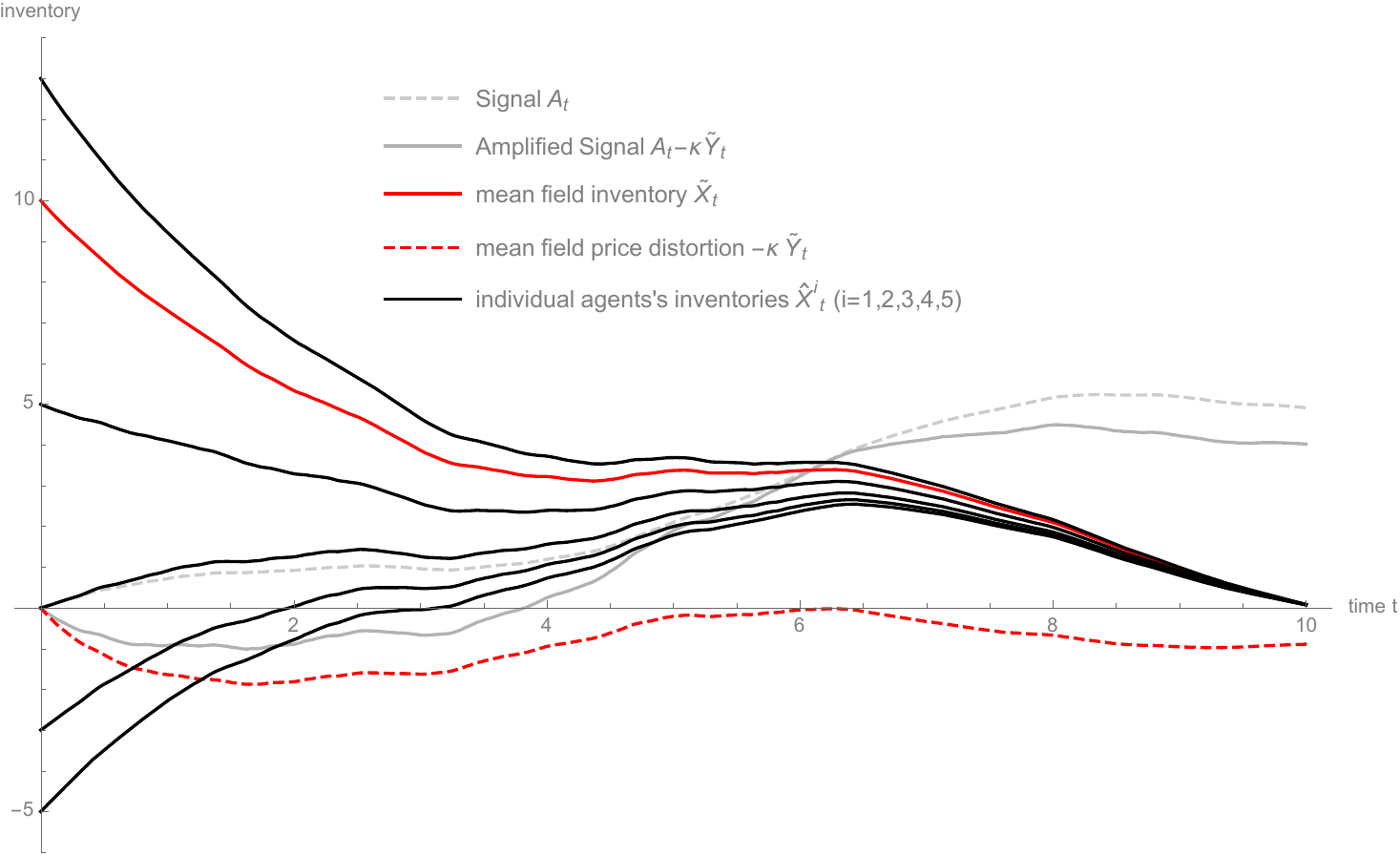}
        \par \vspace{1em} 
    \includegraphics[scale=.38]{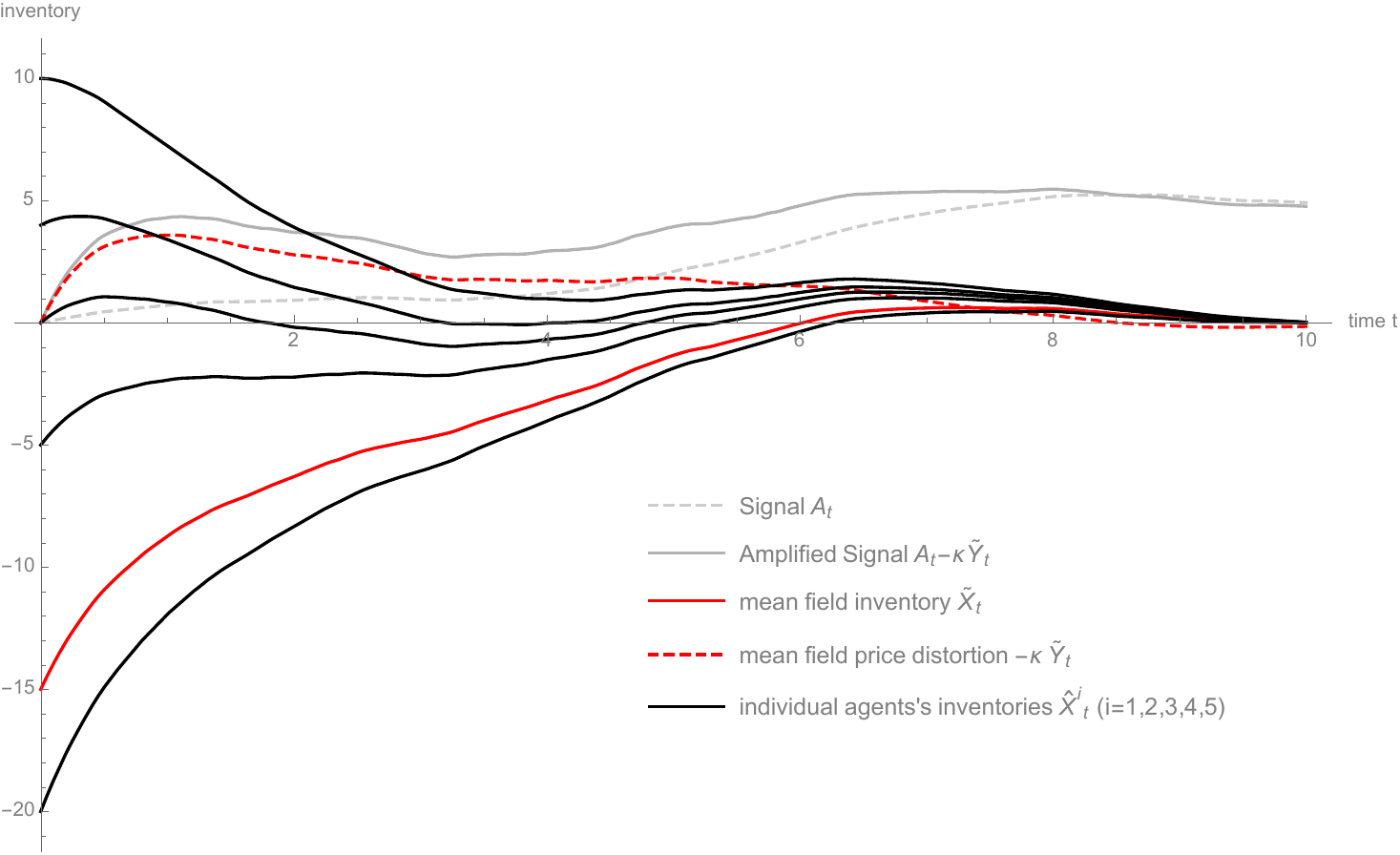}
    \caption{Case with increasing (positive) signal (same scenario in
      all three plots). Top: $\tilde X_0 =0$
      (round-trip); middle: $\tilde X_0 =10$; bottom: $\tilde X_0 =-15$. We present the agents' inventories $\hat X^i$ (black), mean field inventory $\tilde X$ (solid red), mean field price distortion $-\kappa \tilde Y$ (dashed red), signal $A$ (dashed grey) and amplified signal $A-\kappa \tilde Y$ (grey).}
    \label{fig:ill2}
  \end{center}
\end{figure}

\begin{figure}[htbp!] 
  \begin{center}
    \includegraphics[scale=.39]{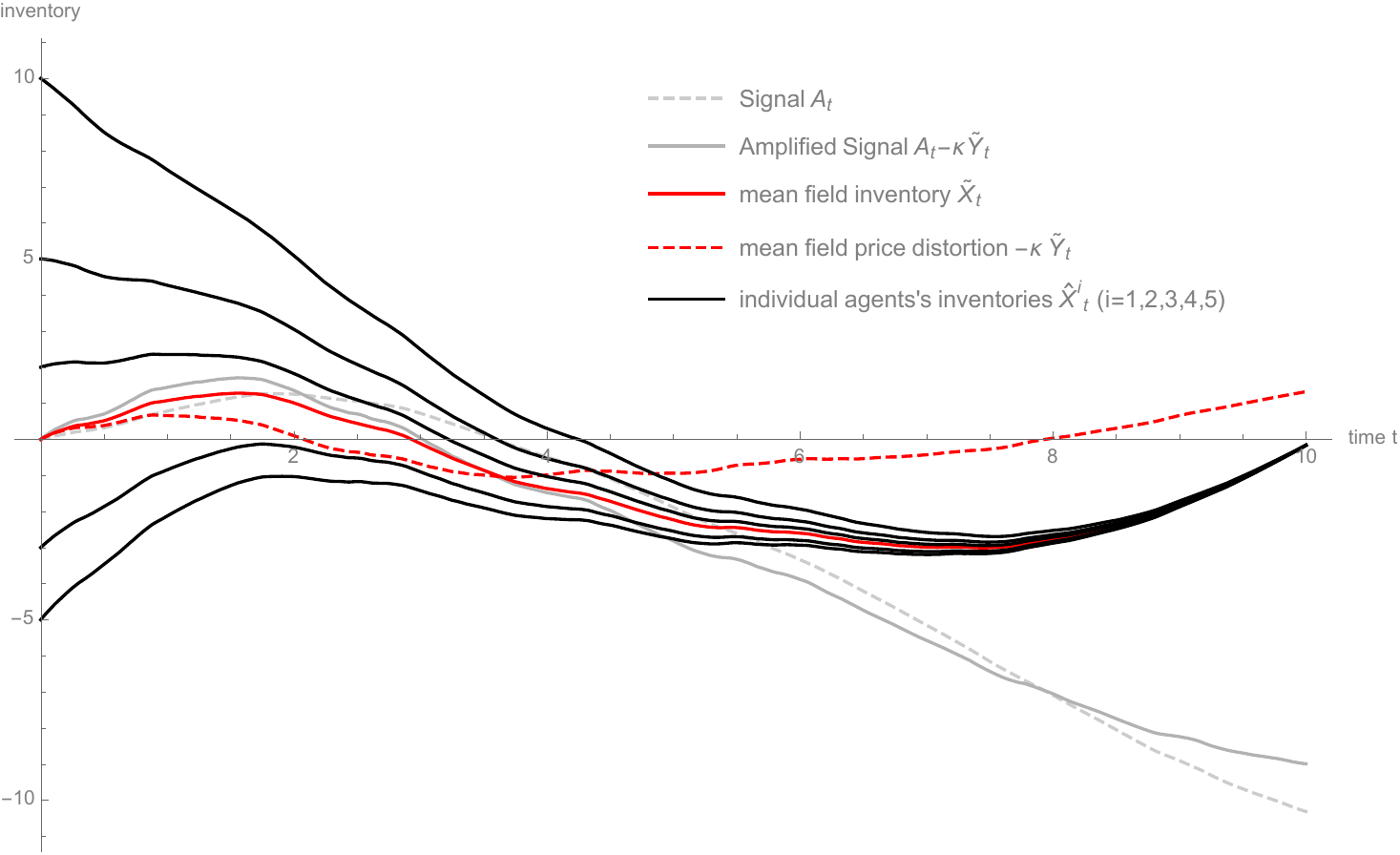}
    \par \vspace{1em} 
    \includegraphics[scale=.39]{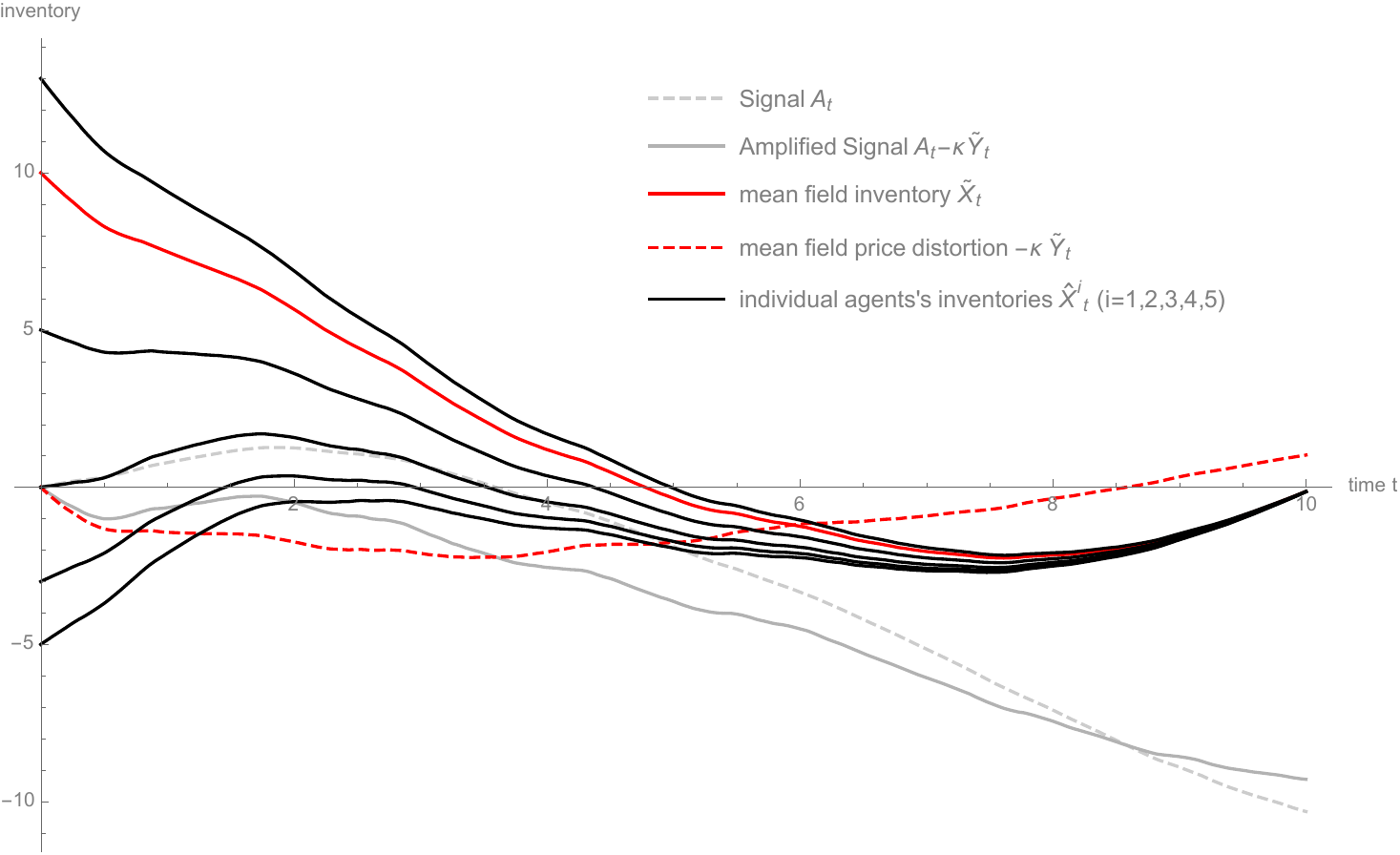}
    \par \vspace{1em}
    \includegraphics[scale=.39]{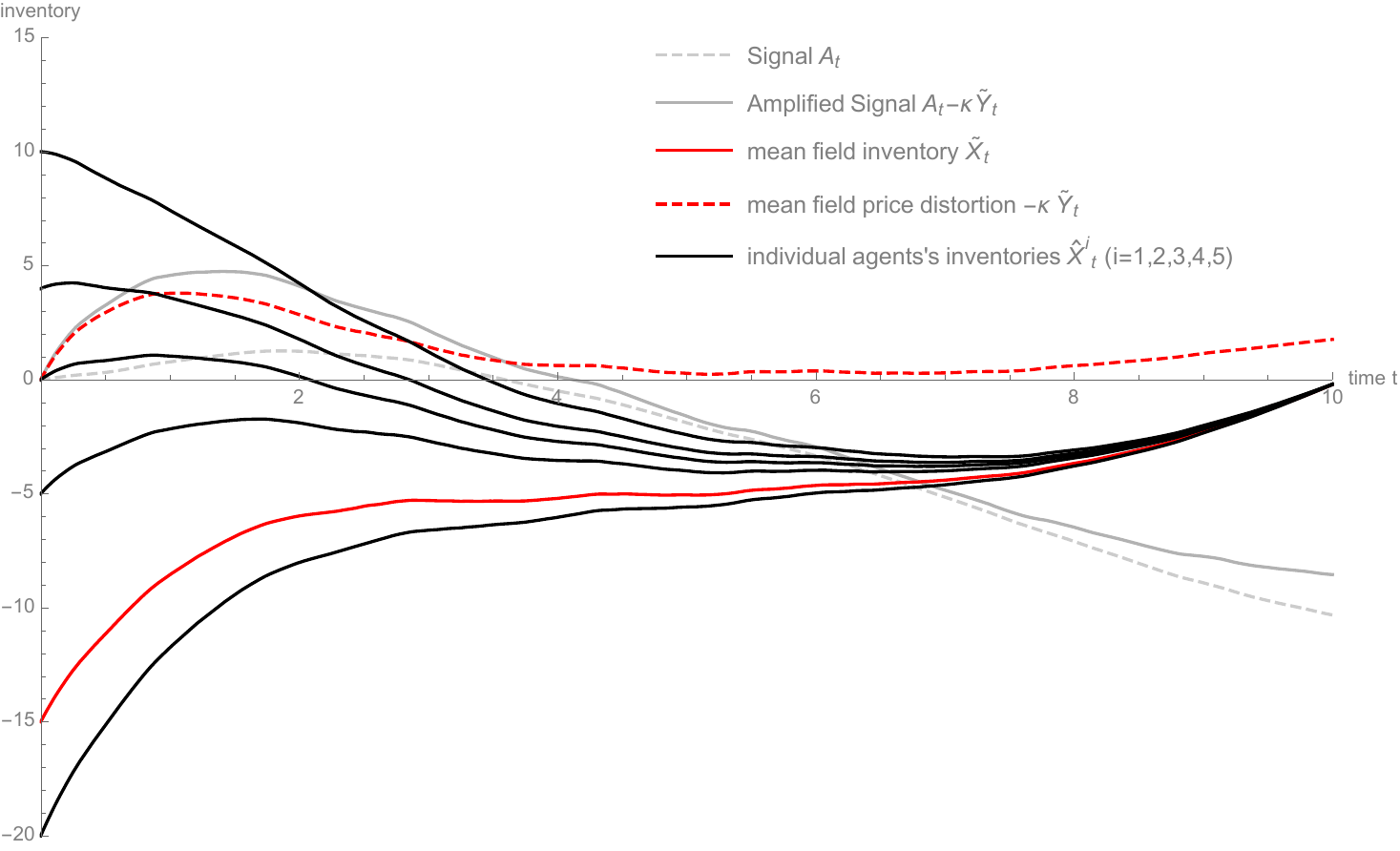}
    \caption{Case with decreasing (negative) signal (same scenario in
      all three plots): Top: $\tilde X_0 =0$, middle: $\tilde X_0 =10$;  bottom: $\tilde X_0 =-15$. We present the agents' inventories $\hat X^i$ (black), mean field inventory $\tilde X$ (solid red), mean field price distortion $-\kappa \tilde Y$ (dashed red), signal $A$ (dashed grey) and amplified signal $A-\kappa \tilde Y$ (grey).}
    \label{fig:ill3}
  \end{center}
\end{figure}


\section{Proof of Theorem \ref{thm-strat-con}} \label{sec:convergence}

\begin{remark}[Strategy of the proof]\label{strategy} 
We will show that the solutions $\{\hat u^{i,N}\}_{i=1}^{N}$ to \eqref{eq:NASHFBSDE} along with their average $\bar u^N$ converge in $L^{2}$ to the solutions $\{\hat v^{i}\}_{i=1}^{N}$ to \eqref{eq:MeanFieldGameFBSDE_inf}  and to the solution $\tilde \nu$ of \eqref{eq:MeanField_infAggregated}, respectively. Then from Lemmas \ref{thm:NASHFBSDE}, \ref{thm:MeanFieldFBSDE_inf} and from Corollary \ref{cor:MeanFieldFBSDE_inf} the convergence of $(\hat u^{i,N}, \hat u^{N})$ to $(\hat v^i, \hat \nu)$ would follow. 
\end{remark} 
In order to prove Theorem \ref{thm-strat-con} we introduce a few auxiliary lemmas, that concern the solutions to these FBSDE systems.  
\begin{lemma} \label{lem-z-con} 
Let $Z_t^{\hat u^{i,N}}$ as in \eqref{eq:NASHFBSDE} and assume that \eqref{a-const} holds. Then there exists a constant $C>0$ not depending on $(t,T,N,i)$ such that 
$$
\sup_{1\leq i\leq N} \sup_{0\leq t \leq T} E[ (Z_t^{\hat u^{i,N}})^2] \leq  CT N^{-2}.
$$
\end{lemma}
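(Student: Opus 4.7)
The plan is to exploit the fact that the fourth equation in \eqref{eq:NASHFBSDE} is a linear BSDE with zero terminal condition and forcing of order $1/N$, which admits an explicit Feynman–Kac type representation. Using the integrating factor $e^{-\rho t}$, rewrite the BSDE $dZ_t^{\hat u^{i,N}} = \rho Z_t^{\hat u^{i,N}}\,dt + \frac{\gamma\kappa}{N}\hat u^{i,N}_t\,dt + dN^{i,N}_t$ with $Z^{\hat u^{i,N}}_T=0$ as
\begin{equation*}
d\bigl(e^{-\rho t} Z_t^{\hat u^{i,N}}\bigr) = \frac{\gamma\kappa}{N} e^{-\rho t} \hat u^{i,N}_t\,dt + e^{-\rho t}\,dN^{i,N}_t.
\end{equation*}
Integrating from $t$ to $T$, using $Z^{\hat u^{i,N}}_T=0$, and taking the conditional expectation $E_t[\cdot]$ (which kills the martingale part since $N^{i,N}$ is square integrable) yields
\begin{equation*}
Z_t^{\hat u^{i,N}} = -\frac{\gamma\kappa}{N}\, E_t\!\left[\int_t^T e^{-\rho(s-t)} \hat u^{i,N}_s\,ds\right].
\end{equation*}

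From this representation the bound is now a routine two–line estimate. By Jensen's inequality applied to the conditional expectation, followed by Cauchy–Schwarz in time (using $e^{-\rho(s-t)}\leq 1$), I obtain pointwise
\begin{equation*}
\bigl(Z_t^{\hat u^{i,N}}\bigr)^2 \leq \frac{(\gamma\kappa)^2}{N^2}\, E_t\!\left[\left(\int_t^T e^{-\rho(s-t)} \hat u^{i,N}_s\,ds\right)^2\right] \leq \frac{(\gamma\kappa)^2}{N^2}\,(T-t)\, E_t\!\left[\int_t^T (\hat u^{i,N}_s)^2\,ds\right].
\end{equation*}
Taking unconditional expectations, applying Fubini, and invoking Proposition \ref{lem-bnd-ui}(i)—which under assumptions \eqref{init-assum} and \eqref{a-const} gives a constant $C_0>0$, independent of $(i,N,s)$, such that $\sup_{1\leq i\leq N}\sup_{s\in[0,T]} E\bigl[(\hat u^{i,N}_s)^2\bigr] \leq C_0$—yields
\begin{equation*}
E\bigl[(Z_t^{\hat u^{i,N}})^2\bigr] \leq \frac{(\gamma\kappa)^2}{N^2}\,T \int_t^T E\bigl[(\hat u^{i,N}_s)^2\bigr]\,ds \leq \frac{(\gamma\kappa)^2 C_0}{N^2}\, T(T-t),
\end{equation*}
and taking the supremum over $t\in[0,T]$ and over $1\leq i\leq N$ produces the claim with $C = (\gamma\kappa)^2 C_0$ (absorbing one factor of $T$ into the final $T$ on the right-hand side through the trivial bound $T-t\leq T$, or keeping the sharper linear-in-$T$ constant if desired via Cauchy–Schwarz with the weight $e^{-2\rho(s-t)}$).

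The only nontrivial ingredient here is the uniform-in-$N$ a priori moment bound on $\hat u^{i,N}$ furnished by Proposition~\ref{lem-bnd-ui}; everything else is the standard BSDE-with-zero-terminal-value estimate. Condition \eqref{a-const} enters solely to guarantee that the constant $C_0$ of Proposition~\ref{lem-bnd-ui} is finite and independent of $N$, and hence that the resulting $C$ is independent of $(t,T,N,i)$ on the admissible parameter range. No step requires anything beyond linearity of the BSDE and Cauchy–Schwarz, so I do not foresee a genuine obstacle.
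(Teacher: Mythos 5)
Your proof is correct and follows essentially the same route as the paper: both solve the linear BSDE for $Z^{\hat u^{i,N}}$ with zero terminal condition explicitly (your conditional-expectation representation $Z_t^{\hat u^{i,N}} = -\frac{\gamma\kappa}{N}E_t[\int_t^T e^{-\rho(s-t)}\hat u^{i,N}_s\,ds]$ is algebraically identical to the paper's decomposition into a forward integral plus the martingale $N^{i,N}$), and then apply Jensen/Cauchy--Schwarz together with the uniform moment bound of Proposition~\ref{lem-bnd-ui}(i). Your remark about using the weight $e^{-2\rho(s-t)}$ to keep the bound linear in $T$ is a slight refinement, but the argument is the same.
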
 
\begin{proof}
We can write $Z_t^{\hat u^{i,N}}$ as follows 
\be \label{z-int} 
Z^{\hat u^{i,N}}_t = \kappa e^{\rho t}\left( \frac{ \gamma}{N} \int_{0}^{t}e^{-\rho s}\hat u^{i,N}_{s}ds + N^{i,N}_t \right), 
\ee
where the martingales $N^{i,N}$ in~\eqref{eq:NASHFBSDE} are given by (see Section \ref{subsec:proofs-finite} for the derivation), 
\begin{equation} \label{eq:Ni-martingale}
 N^{i,N}_t = -  \frac{ \gamma}{N}    E_{t}\left[\int_{0}^{T}e^{-\rho s}\hat u^{i,N}_{s}ds \right]  .
\end{equation}
Using the conditional Jensen's inequality and the tower property we get 
\bn
  E[(N^{i,N}_t)^{2}]& \leq&  \frac{ \gamma^2 }{N^{2}} E\left[ \left(   E_{t}\left[\int_{0}^{T}e^{-\rho s}\hat u^{i,N}_{s}ds \right] \right)^{2} \right] \\
 & \leq&  \frac{ \gamma^2 }{N^{2}} T^2 E\left[ \left(   E_{t}\left[\int_{0}^{T}e^{-\rho s}\hat u^{i,N}_{s}\frac{ds}{T} \right] \right)^{2} \right] \\
      &\leq &\frac{ \gamma^2 }{N^{2}} T E\left[ \int_{0}^{T} (\hat u^{i,N}_{s})^{2}ds  \right].\en
Together with Proposition \ref{lem-bnd-ui}(i) we get uniformly in $i=1,\ldots,N$ for $T<C_{\eqref{c-1}}$,  
$$
  E[(N^{i,N}_t)^{2}] =T\cdot O(N^{-2}). 
$$
Similarly we have uniformly in $i=1,...,N$, 

$$
E\left[ \left( \frac{ 1}{N} \int_{0}^{t}e^{-\rho s} \hat u^{i,N}_{s}ds  \right)^{2}\right]  = T\cdot O(N^{-2}), 
$$
and we conclude the result. 
\end{proof} 
 
Recall that $C_{\eqref{c-1}}$ was defined in \eqref{c-1}. 

\begin{lemma} \label{lemma-mar-bnd}
Let $(\hat u^{i,N},M^{i,N})$ as in \eqref{eq:NASHFBSDE} and $(\hat
v^i,L^i)$ as in~\eqref{eq:MeanFieldGameFBSDE_inf} and assume that \eqref{a-const} holds. Then there exists a constant $C>0$ not depending on $(N,i,t,T)$ such that for all $0\leq t\leq T$, 
\bn
&&E\left[\big(M^{i,N}_t -M_T^{i,N} - (L^{i}_t- L_T^i)\big)^2\right] \\
&&\leq C_{\eqref{c-1}}  T(T\vee 1)\Big( T^2\sup_{t\leq s \leq T}E\left[(\hat u_{s}^{i,N}-\hat v^i_s)^{2}\right]  + E[(X^{\hat u^{i,N}}_t-X^{\hat{v}^i}_t)^2] \Big) +T^2e^{2\rho T}C\frac{1}{N^2}. 
\en
\end{lemma}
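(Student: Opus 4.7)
The plan is to integrate the BSDE for $\hat u^{i,N}$ in \eqref{eq:NASHFBSDE} and the BSDE for $\hat v^{i}$ in \eqref{eq:MeanFieldFBSDE_inf} from $t$ to $T$ and subtract. Since both BSDEs share the same $dP_s$ driver, this yields
\begin{equation*}
M^{i,N}_t - M^{i,N}_T - (L^{i}_t - L^{i}_T) = (\hat u^{i,N}_t - \hat v^{i}_t) - (\hat u^{i,N}_T - \hat v^{i}_T) + \int_t^T D_s\, ds,
\end{equation*}
where, after using the algebraic identity $\tfrac{1}{N}\sum_{j\neq i}\hat u^{j,N}_s = \bar u^N_s - \hat u^{i,N}_s/N$, the drift difference splits as
\begin{equation*}
D_s = \tfrac{\kappa\rho}{2\lambda}(Y^{u^N}_s - Y^{\hat\nu}_s) - \tfrac{\kappa\gamma}{2\lambda}(\bar u^N_s - \hat\nu_s) - \tfrac{\phi}{\lambda}(X^{\hat u^{i,N}}_s - X^{\hat v^{i}}_s) + \tfrac{\gamma\kappa\hat u^{i,N}_s}{2\lambda N} + \tfrac{\rho Z^{\hat u^{i,N}}_s}{2\lambda}.
\end{equation*}
The terminal boundary piece $\hat u^{i,N}_T - \hat v^{i}_T$ is expanded via the terminal conditions $\hat u^{i,N}_T = (\varrho/\lambda)X^{\hat u^{i,N}}_T - (\kappa/(2\lambda))Y_T^{u^N}$ and $\hat v^{i}_T = (\varrho/\lambda)X^{\hat v^{i}}_T - (\kappa/(2\lambda))Y_T^{\hat\nu}$, and the identity $X^{\hat u^{i,N}}_T - X^{\hat v^{i}}_T = (X^{\hat u^{i,N}}_t - X^{\hat v^{i}}_t) - \int_t^T(\hat u^{i,N}_s - \hat v^{i}_s)\,ds$ already delivers the two quantities $E[(X^{\hat u^{i,N}}_t - X^{\hat v^{i}}_t)^2]$ and $T^2\sup_{t\leq s\leq T}E[(\hat u^{i,N}_s - \hat v^{i}_s)^2]$ that appear on the right-hand side of the lemma.

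Squaring the representation and applying Cauchy-Schwarz $\bigl(\int_t^T D_s\,ds\bigr)^2 \leq (T-t)\int_t^T D_s^2\,ds$ followed by $(\sum_{k=1}^{5}a_k)^2 \leq 5\sum_k a_k^2$ reduces the estimate to the five squared drift components. The last two components produce the $T^2 e^{2\rho T}C/N^2$ residual: the $Z^{\hat u^{i,N}}_s$ piece is controlled via Lemma \ref{lem-z-con}, whose bound carries precisely the $e^{2\rho T}$ factor and the $1/N^2$ rate (the $e^{2\rho T}$ enters through the explicit representation $Z^{\hat u^{i,N}}_s = \kappa e^{\rho s}(\cdots/N)$), and the $\hat u^{i,N}_s/N$ piece is controlled by the uniform $L^2$ estimate of Proposition \ref{lem-bnd-ui}(i). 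The $X$-difference piece is split via $X^{\hat u^{i,N}}_s - X^{\hat v^{i}}_s = (X^{\hat u^{i,N}}_t - X^{\hat v^{i}}_t) - \int_t^s(\hat u^{i,N}_r - \hat v^{i}_r)\,dr$ and contributes to both $E[(X^{\hat u^{i,N}}_t - X^{\hat v^{i}}_t)^2]$ and $T^2\sup_s E[(\hat u^{i,N}_s - \hat v^{i}_s)^2]$. For the two remaining mean-field interaction pieces I use $Y^{u^N}_s - Y^{\hat\nu}_s = \gamma\int_0^s e^{-\rho(s-r)}(\bar u^N_r - \hat\nu_r)\,dr$ together with the decomposition
\begin{equation*}
\bar u^N_r - \hat\nu_r = \tfrac{1}{N}\sum_{j=1}^{N}(\hat u^{j,N}_r - \hat v^{j}_r) + \Bigl(\tfrac{1}{N}\sum_{j=1}^{N}\hat v^{j}_r - \hat\nu_r\Bigr),
\end{equation*}
where by Jensen the first average is bounded in $L^2$ by $\sup_{j}E[(\hat u^{j,N}_r - \hat v^{j}_r)^2]^{1/2}$, which feeds the $\sup_{t\leq s\leq T}E[(\hat u^{i,N}_s - \hat v^{i}_s)^2]$ quantity on the right of the lemma after sup-ing over $i$; the second piece is $O(1/N)$ in $L^2$ and is absorbed into the $C/N^2$ residual, because by linearity of \eqref{eq:MeanField_infAggregated} in the initial inventory the average $\tfrac{1}{N}\sum\hat v^{j}$ solves the same FBSDE as $\hat\nu = \tilde\nu$ but with initial inventory $\bar x^N$ in place of $\tilde x$, so its deviation is a linear function of $\bar x^N - \tilde x$ and is controlled under assumption \eqref{init-assum}.

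The main obstacle is the careful handling of the $\bar u^N - \hat\nu$ term: the finite-$N$ correction $\tfrac{1}{N}\sum\hat v^{j} - \hat\nu$ must be shown to be $O(1/N)$ in $L^2$ uniformly in $s\in[0,T]$ so that it cleanly contributes to the $C/N^2$ residual rather than re-entering the main coupling, and all prefactors $\max\{\varrho,\kappa,\rho\kappa,\kappa\gamma,\phi,\rho\}/\lambda$ arising from the five components of $D_s$ and from the terminal contributions must be tracked through the Cauchy-Schwarz steps and collapsed into the single constant $C_{\eqref{c-1}}$ from \eqref{c-1}. Once this bookkeeping is complete, the remaining estimates are routine consequences of Fubini and Cauchy-Schwarz, and the interplay between $\sup_i$ and $\sup_s$ is resolved by taking $\sup$ over $i\in\{1,\ldots,N\}$ on both sides of the resulting inequality, which is the form in which the lemma is fed into the Grönwall-type argument for Theorem \ref{thm-strat-con}.
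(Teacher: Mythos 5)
Your route — integrating the two BSDEs from $t$ to $T$ and subtracting — is not the paper's, and as set up it cannot deliver the stated inequality. The identity you derive reads $M^{i,N}_t -M_T^{i,N} - (L^{i}_t- L_T^i) = (\hat u^{i,N}_t-\hat v^i_t) - (\hat u^{i,N}_T-\hat v^i_T) + \int_t^T D_s\,ds$, and the first term on the right is the raw difference $\hat u^{i,N}_t-\hat v^i_t$ with coefficient $1$. After squaring and applying $(\sum_k a_k)^2\le n\sum_k a_k^2$, your bound therefore contains $E[(\hat u^{i,N}_t-\hat v^i_t)^2]$ with an $O(1)$ constant and no factor of $T$. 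The lemma's right-hand side carries this quantity only with the prefactor $C_{\eqref{c-1}}T^3(T\vee 1)$, which is small under \eqref{a-const}; that smallness is exactly what lets \eqref{dif-eq} in Lemma \ref{lemma-converge-ui} absorb $\sup_{t\le s\le T}E[(\hat u^{i,N}_s-\hat v^i_s)^2]$ into the left-hand side. An inequality of the form $E[(\Delta M)^2]\le c\,\sup_s E[(\Delta u_s)^2]+\cdots$ with $c$ of order one is strictly weaker and breaks that absorption step, so it does not prove the lemma and would invalidate the downstream Gr\"onwall argument. (The cancellation you are discarding is real: taking $E_t[\,\cdot\,]$ of your identity shows $\hat u^{i,N}_t-\hat v^i_t = E_t[(\hat u^{i,N}_T-\hat v^i_T)]-E_t[\int_t^T D_s\,ds]$, so the martingale increment is of the form $\zeta-E_t[\zeta]$ and the time-$t$ control difference never needs to appear.)

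A second, independent problem is that $D_s$ and the terminal conditions bring in $Y^{\hat u^N}-Y^{\hat\nu}$ and $\bar u^N-\hat\nu$, neither of which appears on the lemma's right-hand side. Your repair — splitting $\bar u^N_r-\hat\nu_r$ into $\frac1N\sum_j(\hat u^{j,N}_r-\hat v^j_r)$ plus the consistency error $\frac1N\sum_j\hat v^j_r-\hat\nu_r$ — changes the per-player quantity into $\sup_j\sup_s E[(\hat u^{j,N}_s-\hat v^j_s)^2]$ (altering the statement), and the claimed $O(1/N)$ rate for the consistency error is not available here: assumption \eqref{init-assum} gives no rate for $|\bar x^N-\tilde x|$ (cf.\ Remark \ref{rem-l2-con}, where the rate is $O(C(N)^2)$ with $C(N)=|\tilde x-\bar x^N|$), and invoking Lemma \ref{lemma-con-mf} would be circular since it is proved from Theorem \ref{thm-strat-con}, which uses the present lemma. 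The paper avoids all of this by working directly with the explicit representations \eqref{eq:Mi-martingale} and \eqref{eq:Li-martingale22}: the $P_T$ contributions cancel, the $\tilde N^{i,N}$ part is $O(T/N^2)$ by the proof of Lemma \ref{lem-z-con}, and what remains is a difference of conditional expectations of $\int_t^T X\,ds$ and $X_T$ only, bounded via \eqref{x-bb-b-4}--\eqref{x-bb-b2-4} — which is why only the $X$-difference and the $T^2\sup_s$-weighted $u$-difference survive on the right-hand side.
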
 
\begin{proof} 
The martingale $M^{i,N}$ in~\eqref{eq:NASHFBSDE} is given by, 
\be \label{eq:Mi-martingale}
    M^{i,N}_t =  \; \frac{1}{2\lambda} \tilde{M}^{i,N}_t - \frac{\gamma\kappa}{2\lambda N} \int_0^t e^{\rho s} d\tilde{N}^{i,N}_s, 
\ee
where
\begin{equation*} \label{eq:tildeNiLi-martingale}
    \tilde{N}^{i,N}_t \triangleq E_t \left[ \int_0^T e^{-\rho s} \hat u^{i,N}_s ds \right], 
\end{equation*}
 and 
 \begin{equation*} \label{eq:tildeMi-martingale}
    \tilde{M}^{i,N}_t \triangleq E_t \left[ 2\phi \int_0^T X^{\hat u^{i,N}}_s ds + 2 \varrho X^{\hat u^{i,N}}_T - P_T \right].
\end{equation*} 
The martingale $L^i$ from~\eqref{eq:MeanFieldGameFBSDE_inf} is given by 
\begin{equation} \label{eq:Li-martingale22}
    L^i_t \triangleq  \frac{1}{2\lambda}  E_t \left[ 2\phi \int_0^T X^{\hat{v}^i}_s ds + 2 \varrho X^{\hat{v}^i}_T - P_T \right].
\end{equation}
 See Sections \ref{subsec:proofs-finite} and \ref{subsec:proofs-infinite} for additional details on the derivation of the martingales in the finite player game and mean field settings, respectively. 

From the proof of Lemma \ref{lem-z-con} it follows that 
\be \label{t-N-bnd} 
\sup_{1\leq i\leq N} \sup_{0\leq t \leq T} E[ (\tilde N_t^{i,N})^2 ] = T\cdot O(1).
\ee
Hence from \eqref{eq:Mi-martingale} it follows that we only need to bound $ E\left[(\tilde M^{i,N}_t -\tilde M^{i,N}_T - (L^{i}_t-L^{i}_T)^2)\right] $.
Note that 
\bn
\tilde M^{i,N}_t -\tilde M^{i,N}_T &=& E_t \left[ 2\phi \int_t^T X^{\hat u^{i,N}}_s ds \right]-2\phi \int_t^T X^{\hat u^{i,N}}_s ds   \\
&&\quad + 2 \varrho \big(E_t\left[X^{ \hat u^{i,N}}_T\right]-X^{\hat u^{i,N}}_T\big)   - \big(E_t[P_T] -P_T \big). 
\en
For convenience we denote 
\be \label{L-hat} 
\hat L^{i}_t = 2\lambda L^{i}_t, \quad 0\leq t\leq T. 
\ee
Hence from \eqref{eq:Li-martingale22} we have, 
\bn 
\hat L^{i}_t-\hat L^{i}_T = E_t \left[ 2\phi \int_t^T X^{\hat v^i}_s ds \right]-2\phi \int_t^T X^{\hat v^i}_s ds  + 2 \varrho \big(E_t\left[X^{\hat v^i}_T\right]-X^{\hat v^i}_T\big)   - \big(E_t[P_T] -P_T \big). 
\en
 Using the following bound for any $n$ real numbers $a_i \in \mathbb {R}$, $i=1,...,n$, 
\be \label{2-p} 
\Big(\sum_{i=1}^N a_i \Big)^2 \leq n \sum_{i=1}^N a_i^2, 
\ee
together with the conditional Jensen inequality and the tower property we get  
 \be \label{jk1}
 \begin{aligned} 
& E\left[(\tilde M^{i,N}_t -\tilde M^{i,N}_T - (\hat L^{i}_t- \hat L^{i}_T)^2\right]  \\
& \leq  4(\phi^2\vee \varrho^2)\Bigg( E\left[\left( E_t \left[ \int_t^T X^{\hat u^{i,N}}_s ds \right]  - E_t \left[  \int_t^T X^{\hat{v}^i}_s ds \right]  \right)^2 \right] \\
&\quad+ E\left[\left(  \int_t^T X^{\hat u^{i,N}}_s ds   -   \int_t^T X^{\hat{v}^i}_s ds    \right)^2 \right] 
 +  E\left[\big( E_t\left[X^{\hat u^{i,N}}_T\right] - E_t\left[X^{\hat{v}^i}_T\right] \big)^2 \right] \\
 & \quad + E\left[\big(  X^{\hat u^{i,N}}_T  -  X^{\hat{v}^i}_T \big)^2 \right]\Bigg) \\
& \leq  8(\phi^2\vee \varrho^2)\bigg( T  E\left[ \int_t^T (X^{\hat u^{i,N}}_s -X^{\hat{v}^i}_s)^2 ds \right]    +    E\left[\big(  X^{\hat u^{i,N}}_T  -  X^{\hat{v}^i}_T \big)^2 \right]\bigg).
 \end{aligned} 
 \ee
Here again the factor $T$ appears due to normalization, since we have used Jensen inequality over the interval $[0,T]$.

 From \eqref{def:Xi} and \eqref{def:Xi_infMFG} we have 
\be \label{x-br} 
X^{\hat u^{i,N}}_s =X^{\hat u^{i,N}}_t -\int_t^s\hat u_r^{i,N}dr, \quad \textrm{for all } t\leq s \leq T,
\ee 
\be \label{x-br-h} 
X^{\hat v^i}_s =X^{\hat v^i}_t -\int_t^s \hat v_r^idr, \quad \textrm{for all } t\leq s \leq T.
\ee 
From \eqref{x-br} and \eqref{x-br-h} and Jensen inequality it follows that there exists $C>0$ not depending on $(i,N,t,T)$ such that,
\be  \label{x-bb-b-4}
\begin{aligned}
 E\bigg[ \int_t^T\big(X_s^{\hat u^{i,N}} -X_s^{\hat v^i} \big)^2ds  \bigg] &\leq   2\left( T E[(X^{\hat u^{i,N}}_t-X_t^{\hat v^i} )^2] +  T^3\sup_{t\leq s\leq T} E\left[(\hat u_{s}^{i,N}-\hat v_{s}^{i})^{2}\right] \right). 
 \end{aligned} 
\ee
By a similar argument we have 
\be  \label{x-bb-b2-4}
\begin{aligned}
 E\big[ (X^{\hat u^{i,N}}_T-X^{\hat{v}^i}_T)^2  \big] 
 &\leq  2 \left(E[(X^{\hat u^{i,N}}_t-X^{\hat{v}^i}_t)^2]  + T^2\sup_{t\leq s \leq T}E\left[(\hat u_{s}^{i,N}-\hat v^i_s)^{2}\right] \right).
 \end{aligned} 
\ee

By plugging \eqref{x-bb-b-4} and \eqref{x-bb-b2-4} into \eqref{jk1} and then using \eqref{eq:Mi-martingale}, \eqref{t-N-bnd} and \eqref{L-hat}  we get the result.  
\end{proof}

\begin{lemma} \label{lemma-mar-bnd-avr}
Let $(\bar u^N, \ol M^N)$ as in \eqref{eq:mean-FBSDE} and $(\tilde
\nu,\tilde L)$ as in~\eqref{eq:MeanField_infAggregated} and assume \eqref{a-const}. Then there exists a constant $C>0$ not depending on $(N,t,T)$ such that for all $0\leq t\leq T$,
\bn
&&E\left[\big(\ol M^N_t - \ol M^N_T - (\tilde L_t- \tilde L_T)\big)^2\right] \\
&&\leq C_{\eqref{c-1}}  T(T\vee 1)\Big( T^2\sup_{t\leq s \leq T}E\left[(\bar u^N_{s}-\tilde \nu_s)^{2}\right] + E[(\ol X^{\bar u^N}_t- X^{\tilde \nu}_t)^2]  \Big) +T^2e^{2\rho T}C\frac{1}{N^2}.    
\en
\end{lemma}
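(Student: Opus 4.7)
The plan is to mirror the argument in Lemma~\ref{lemma-mar-bnd} verbatim, with the individual quantities $M^{i,N}$, $L^i$, $X^{u^{i,N}}$, $X^{\hat v^i}$, $u^{i,N}$, $\hat v^i$ replaced throughout by their aggregated / mean field counterparts $\ol M^N$, $\tilde L$, $\ol X^{\bar u^N}$, $\tilde X^{\tilde\nu}$, $\bar u^N$, $\tilde\nu$. The first and only non-routine step is to obtain an explicit martingale decomposition for $\ol M^N$ and $\tilde L$ analogous to \eqref{eq:Mi-martingale} and \eqref{eq:Li-martingale22}. Integrating the BSDE for $\tilde\nu$ in~\eqref{eq:MeanField_infAggregated} from $0$ to $T$, using the terminal condition $\tilde\nu_T = (\varrho/\lambda)\tilde X_T^{\tilde\nu} - (\kappa/2\lambda)\tilde Y_T^{\tilde\nu}$ and the forward ODE $d\tilde Y_s^{\tilde\nu} = -\rho \tilde Y_s^{\tilde\nu}\,ds + \gamma\tilde\nu_s\,ds$ (which makes the $\tilde Y$ contributions telescope exactly as in the derivation of $\tilde M^{i,N}$), one obtains
\begin{equation*}
\tilde L_t = \frac{1}{2\lambda}E_t\!\left[2\phi\!\int_0^T \tilde X_s^{\tilde\nu}\,ds + 2\varrho\, \tilde X_T^{\tilde\nu} - P_T\right] + \mathrm{const}.
\end{equation*}
The same computation applied to the BSDE for $\bar u^N$ in~\eqref{eq:mean-FBSDE}, together with the explicit representation $\ol Z_t^{\bar u^N} = \kappa e^{\rho t}\bigl(\tfrac{\gamma}{N}\!\int_0^t e^{-\rho s}\bar u^N_s\,ds + \ol N^N_t\bigr)$ of the $\ol Z$-equation derived exactly as in~\eqref{z-int}, yields
\begin{equation*}
\ol M^N_t = \frac{1}{2\lambda}\tilde M^N_t - \frac{\gamma\kappa}{2\lambda N}\!\int_0^t e^{\rho s}\,d\tilde N^N_s + \mathrm{const},
\end{equation*}
with $\tilde M^N_t \triangleq E_t[2\phi\!\int_0^T \ol X_s^{\bar u^N}\,ds + 2\varrho\,\ol X_T^{\bar u^N} - P_T]$ and $\tilde N^N_t \triangleq \gamma E_t[\!\int_0^T e^{-\rho s}\bar u^N_s\,ds]$. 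The crucial cancellation is that the drift $-\frac{\gamma\kappa(N-1)}{2\lambda N}\bar u^N$ combines with $\frac{\kappa\rho}{2\lambda}\ol Y^{\bar u^N}$ via the forward equation for $\ol Y^{\bar u^N}$ to leave only a residual of order $1/N$, which together with $\ol Z^{\bar u^N}$ produces the $(\gamma\kappa/(2\lambda N))$-prefactored stochastic-integral correction term.

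Once these decompositions are in hand, write
\begin{equation*}
\ol M^N_t - \ol M^N_T - (\tilde L_t - \tilde L_T) = \frac{1}{2\lambda}\bigl((\tilde M^N_t - \tilde M^N_T) - (\hat L_t - \hat L_T)\bigr) + \frac{\gamma\kappa}{2\lambda N}\!\int_t^T e^{\rho s}\,d\tilde N^N_s,
\end{equation*}
where $\hat L_t \triangleq 2\lambda\tilde L_t$, apply $(a+b)^2 \leq 2a^2 + 2b^2$, and estimate the two pieces independently. The first piece is handled by exactly the chain of inequalities \eqref{jk1}--\eqref{x-bb-b2-4} in the proof of Lemma~\ref{lemma-mar-bnd} (conditional Jensen, tower property, the representation $\ol X^{\bar u^N}_s = \ol X^{\bar u^N}_t - \int_t^s \bar u^N_r\,dr$ and similarly for $\tilde X^{\tilde\nu}$), giving the leading $C_{\eqref{c-1}}T(T\vee 1)\bigl(T^2\sup_{t\leq s\leq T}E[(\bar u^N_s - \tilde\nu_s)^2] + E[(\ol X_t^{\bar u^N} - \tilde X_t^{\tilde\nu})^2]\bigr)$ contribution. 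For the second piece, Itô isometry yields $E[(\int_t^T e^{\rho s}\,d\tilde N^N_s)^2] \leq e^{2\rho T}E[(\tilde N^N_T)^2]$; using the convexity bound $(\bar u^N_s)^2 \leq \frac{1}{N}\sum_{i=1}^N(u^{i,N}_s)^2$ and Proposition~\ref{lem-bnd-ui}(i) gives $E[(\tilde N^N_T)^2] = O(T^2)$ uniformly in $N$, so after multiplication by $(\gamma\kappa/(2\lambda N))^2$ the second piece contributes $T^2 e^{2\rho T} C/N^2$ as required.

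The main obstacle is purely bookkeeping: correctly identifying that the aggregated system's additional forward state $\ol Y^{\bar u^N}$ and its backward adjoint $\ol Z^{\bar u^N}$ collapse into a single $1/N$-scaled stochastic-integral correction, so that the ``main'' part of $\ol M^N$ has the same $E_t$-representation as $\tilde L$ with $\ol X^{\bar u^N}$ in place of $\tilde X^{\tilde\nu}$. After that identification the estimate is a line-by-line transcription of the proof of Lemma~\ref{lemma-mar-bnd}, with the only new ingredient being the averaging step $(\bar u^N)^2 \leq \tfrac{1}{N}\sum_i (u^{i,N})^2$ used to invoke Proposition~\ref{lem-bnd-ui}(i) on $\bar u^N$.
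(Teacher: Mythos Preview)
Your proposal is correct and follows the same line as the paper, which simply states that the proof mirrors Lemma~\ref{lemma-mar-bnd} and omits the details. Two minor remarks: (i) the representation of $\ol M^N$ can be obtained more cheaply by averaging the decomposition~\eqref{eq:Mi-martingale} directly via $\ol M^N=\tfrac{1}{N}\sum_i M^{i,N}$ from~\eqref{eq-rrr}, rather than re-deriving it from the aggregated BSDE~\eqref{eq:mean-FBSDE}; (ii) the bound $E\big[(\int_t^T e^{\rho s}\,d\tilde N^N_s)^2\big]\le e^{2\rho T}E[(\tilde N^N_T)^2]$ is the $L^2$-isometry for square-integrable martingales (or BDG), not It\^o isometry per se, since $\tilde N^N$ is a general martingale rather than a Brownian integral---but the estimate itself is correct.
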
 
The proof of Lemma \ref{lemma-mar-bnd-avr} is similar to the proof of Lemma \ref{lemma-mar-bnd} hence it is omitted. 


Recall the notation: 
\be \label{avr-def}
\bar u^N_t = \frac{1}{N}\sum_{i=1}^N \hat u_t^{i,N}, \quad t\geq 0. 
\ee
\begin{lemma} \label{lemma-converge-ui} 
Under Assumption \eqref{a-const} we have for all $0\leq t \leq T$, 
\bd  
\begin{aligned} 
&\sup_{t\leq s \leq T} E\left[(\hat u_{s}^{i,N}-\hat v_{s}^{i})^{2}\right]  \\ 
&\leq 10 C_{\eqref{c-1}} (T^2\vee 1) \Big(T^2 \sup_{t\leq s \leq T} E\left[(\hat u_{s}^{i,N}-\hat v_{s}^{i})^{2}\right] +T^2 \sup_{t\leq s \leq T} E\left[\left( \bar u^N_s -\tilde \nu_s\right)^{2}\right]  \\
&\quad +   E\big[(X^{\hat u^{i,N}}_t- X_t^{\hat v^i})]^2 +E\big[(Y^{\hat u^N}_t- Y_t^{\tilde \nu})]^2 \big] \Big)  + T^2e^{2\rho T}  O(N^{-2}). 
\end{aligned} 
\ed
\end{lemma}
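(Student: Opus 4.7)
The plan is to derive the bound by subtracting the two decoupled BSDEs, integrating from $s$ to $T$, then squaring and term-by-term estimating. I would start from the form of $\hat u^{i,N}$ given by the BSDE in~\eqref{eq:NASHFBSDE*} and the form of $\hat v^i$ given by~\eqref{eq:MeanFieldGameFBSDE_inf}. Using that $\ol Y^{\bar u^N}=Y^{\hat u^N}$ (Corollary~\ref{cor:NASHFBSDE}) and the ODE dynamics $d(\ol Y^{\bar u^N}_t-\tilde Y^{\tilde\nu}_t)=-\rho(\ol Y^{\bar u^N}_t-\tilde Y^{\tilde\nu}_t)\,dt+\gamma(\bar u^N_t-\tilde\nu_t)\,dt$ from~\eqref{eq:mean-FBSDE}--\eqref{eq:MeanField_infAggregated}, subtracting gives an explicit SDE for $\hat u^{i,N}-\hat v^i$ with drift in $\ol Y^{\bar u^N}-\tilde Y^{\tilde\nu}$, $\bar u^N-\tilde\nu$, $\hat u^{i,N}/N$, $X^{\hat u^{i,N}}-X^{\hat v^i}$ and $Z^{\hat u^{i,N}}$, plus the martingale differential $d(M^{i,N}-L^i)$. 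Integrating from $s$ to $T$ and plugging in the terminal condition $\hat u^{i,N}_T-\hat v^i_T=(\varrho/\lambda)(X^{\hat u^{i,N}}_T-X^{\hat v^i}_T)-(\kappa/(2\lambda))(\ol Y^{\bar u^N}_T-\tilde Y^{\tilde\nu}_T)$ solves $\hat u^{i,N}_s-\hat v^i_s$ as a sum. I would further split the two terminal contributions by the forward representations $X^{\hat u^{i,N}}_T-X^{\hat v^i}_T=(X^{\hat u^{i,N}}_t-X^{\hat v^i}_t)-\int_t^T(\hat u^{i,N}_r-\hat v^i_r)\,dr$ and $\ol Y^{\bar u^N}_T-\tilde Y^{\tilde\nu}_T=e^{-\rho(T-t)}(\ol Y^{\bar u^N}_t-\tilde Y^{\tilde\nu}_t)+\gamma\int_t^T e^{-\rho(T-u)}(\bar u^N_u-\tilde\nu_u)\,du$, producing exactly ten summands.

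Next, I would square, take expectations, and apply $(\sum_{k=1}^{10}a_k)^2\le 10\sum_{k=1}^{10}a_k^2$. Jensen's inequality converts each time-integral over $[s,T]$ into $T^2$ times a supremum of second moments over $[t,T]$. The integrals involving $X^{\hat u^{i,N}}_r-X^{\hat v^i}_r$ and $\ol Y^{\bar u^N}_r-\tilde Y^{\tilde\nu}_r$ are again opened by the forward dynamics so that everything collapses into the four aggregate quantities $E[(X^{\hat u^{i,N}}_t-X^{\hat v^i}_t)^2]$, $E[(Y^{\hat u^N}_t-Y^{\tilde\nu}_t)^2]$, $\sup_{t\le r\le T}E[(\hat u^{i,N}_r-\hat v^i_r)^2]$, $\sup_{t\le r\le T}E[(\bar u^N_r-\tilde\nu_r)^2]$ announced on the right of the claim. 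Each prefactor is bounded by $(\max\{\varrho,\kappa,\rho\kappa,\kappa\gamma,\phi,\rho\}/\lambda)^2=C_{\eqref{c-1}}/16$, so together with the factor $10$ from Cauchy-Schwarz and the $T^2\vee 1$ absorbing both the $T^2$ from Jensen on integrals and the $1$ from the terminal-state contributions, the leading constant becomes $10\,C_{\eqref{c-1}}(T^2\vee 1)$.

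The three remaining summands are handled separately and produce the $T^2 e^{2\rho T}\,O(N^{-2})$ remainder: Lemma~\ref{lem-z-con} gives $E[(Z^{\hat u^{i,N}}_r)^2]=O(T/N^2)$, so the $Z$-integral contributes $O(T^3/N^2)$; Proposition~\ref{lem-bnd-ui}(i) gives uniform boundedness of $E[(\hat u^{i,N}_r)^2]$, so the $\hat u^{i,N}/N$ integral contributes $O(T^2/N^2)$; and Lemma~\ref{lemma-mar-bnd} applied at time $s$ controls the martingale-difference term by $C_{\eqref{c-1}}T(T\vee 1)\bigl(T^2\sup_{s\le r\le T}E[(\hat u^{i,N}_r-\hat v^i_r)^2]+E[(X^{\hat u^{i,N}}_s-X^{\hat v^i}_s)^2]\bigr)+T^2e^{2\rho T}O(N^{-2})$, after which $E[(X^{\hat u^{i,N}}_s-X^{\hat v^i}_s)^2]$ is re-expressed in terms of $E[(X^{\hat u^{i,N}}_t-X^{\hat v^i}_t)^2]$ and $\sup_{t\le r\le T}E[(\hat u^{i,N}_r-\hat v^i_r)^2]$ via the forward dynamics. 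Taking the supremum over $s\in[t,T]$ on the left delivers the stated inequality.

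The main obstacle is the bookkeeping to ensure that, after combining the ten squared summands and absorbing the contributions of the Lemma~\ref{lemma-mar-bnd} bound, all four aggregate quantities on the right are multiplied by the same clean constant $10\,C_{\eqref{c-1}}(T^2\vee 1)$, given that each term involves a different combination of $\varrho,\kappa,\gamma,\rho,\phi,\lambda$ and each re-expansion of $X$ or $Y$ into time-$t$ plus an integral doubles some constants. The recursive appearance of $\sup_{t\le s\le T}E[(\hat u^{i,N}_s-\hat v^i_s)^2]$ on the right is expected: it will be combined with the analogous estimate for $\bar u^N-\tilde\nu$ (proved in parallel from~\eqref{eq:mean-FBSDE}--\eqref{eq:MeanField_infAggregated} via Lemma~\ref{lemma-mar-bnd-avr}) and closed by the subcriticality condition~\eqref{a-const} in the subsequent proof of Theorem~\ref{thm-strat-con}.
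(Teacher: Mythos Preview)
Your proposal is correct and follows essentially the same approach as the paper: subtract the two BSDEs, integrate to the terminal time, square, split via $(\sum a_k)^2\le n\sum a_k^2$, and bound each piece using the forward representations of $X$ and $Y$ together with Lemmas~\ref{lem-z-con}, \ref{lemma-mar-bnd} and Proposition~\ref{lem-bnd-ui}. The only cosmetic difference is that the paper starts from the coupled system~\eqref{eq:NASHFBSDE} (with $\tfrac{1}{N}\sum_{j\neq i}\hat u^{j,N}$, which it then rewrites as $\bar u^N-\tfrac{1}{N}\hat u^{i,N}$) and groups into five terms before sub-splitting, whereas you start from the decoupled system~\eqref{eq:NASHFBSDE*} and arrive at ten summands directly; the resulting estimates and constants coincide.
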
 
\begin{proof}

From \eqref{eq:NASHFBSDE}, \eqref{eq:MeanField_infAggregated} and \eqref{eq:MeanFieldGameFBSDE_inf} we have 
\bd
\begin{aligned} 
\hat u_t^{i,N}- \hat v^{i}_t& = -\int_t^T\left(\frac{\rho \kappa}{2\lambda} \big(Y^{\hat u^N}_s-  Y^{\tilde \nu}\big) - \frac{\kappa \gamma}{2\lambda}\Big(\frac{1}{N} \sum_{j\not =i}\hat u_s^{j,N}- \tilde \nu_{s}\Big) -\frac{\phi}{\lambda}\big(X_s^{\hat u^{i,N}}-X_s^{\hat v^{i}}\big) \right)ds \\
&\quad- \frac{\rho}{2\lambda}\int_t^TZ_s^{\hat u_{i,N}}ds  + M^{i,N}_t-M^{i,N}_T-(L^{i}_t-L_T^i)  + \frac{\varrho}{\lambda} \left( X^{\hat u^{i,N}}_T - X_T^{\hat v^i}\right)\\
&\quad - \frac{\kappa}{2\lambda}\left( Y^{\hat u^N}_T- Y^{\tilde v}_T \right) . 
\end{aligned} 
\ed
Using \eqref{2-p} it follows that
\be \label{dif-eq}
\begin{aligned} 
&E\big[(\hat u_t^{i,N}-\hat v^{i}_t)^2\big] \\
&\leq 5\bigg(E\bigg[\bigg(\int_t^T\bigg(\frac{\rho \kappa}{2\lambda} \big(Y^{\hat u^N}_s-  Y^{\tilde \nu}_s\big) - \frac{\kappa \gamma}{2\lambda}\Big(\frac{1}{N} \sum_{i \not =j}\hat u_s^{i,N}- \tilde \nu_{s}\Big) -\frac{\phi}{\lambda}\big(X_s^{\hat u^{i,N}}-X_s^{\hat v^{i}}\big) \bigg)ds \bigg)^2\bigg] \\
&\quad+ E\bigg[\bigg(\frac{\rho}{2\lambda}\int_t^TZ_s^{\hat u^{i,N}}ds \bigg)^2\bigg]  \\
&\qquad +\frac{\varrho^2}{\lambda^2}  E[(X^{\hat u^{i,N}}_T-X_T^{\hat v^i})^2]+  \frac{\kappa^2}{4\lambda^2} E[(Y^{\hat u^N}_T- Y^{\tilde v}_T)^2] \bigg)
\\
&\leq  \frac{ C_{\eqref{c-1}}}{2}\bigg( E\bigg[\left(\int_t^T  \big(Y^{\hat u^N}_s-  Y^{\tilde \nu}_s\big)ds\right)^2\bigg] +E\bigg[ \left(\int_t^T \Big(\frac{1}{N} \sum_{i \not =j}\hat u_s^{i,N}- \tilde \nu_{s}\Big)ds \right)^2\bigg] \\
&\qquad + E\bigg[ \left(\int_t^T\big(X_s^{\hat u^{i,N}}-X_s^{\hat v^{i}}\big)ds \right)^2  \bigg] 
  + E[(X^{\hat u^{i,N}}_T-X_T^{\hat v^i})^2]  +E[(Y^{\hat u^N}_T- Y^{\tilde v}_T)^2]  \bigg)  \\
  &\qquad +5E\big[\big( M^{i,N}_t-M^i_T-(L^{i}_t-L_T^i)\big)^2\big] +T^2\cdot O(N^{-2}), 
\end{aligned} 
\ee
where $C_{\eqref{c-1}}$ is given by \eqref{c-1}. Note that we used Lemma \ref{lem-z-con} in the last inequality.

From \eqref{def:Y} and \eqref{def:Y_infMFG} with $\tilde \nu$ instead of $\hat \nu$ we have 
\be \label{y-br} 
Y^{\hat u^N}_s =Y^{\hat u^N}_t +\gamma \int_t^s e^{-\rho(s-r)}  \frac{1}{N}\sum_{i=1}^N\hat u_s^{i,N} dr, \quad \textrm{for all } t\leq s \leq T,
\ee 
\be \label{y-br-h} 
Y^{\tilde \nu }_s =Y_t^{\tilde \nu} +\gamma \int_t^s e^{-\rho(s-r)}  \tilde \nu_r dr, \quad \textrm{for all } t\leq s \leq T. 
\ee 

From \eqref{y-br}, \eqref{y-br-h}, \eqref{avr-def} and Jensen inequality we have 
\be \label{rrr1}
\begin{aligned}
E\bigg[\left(\int_t^T  \big(Y^{\hat u^N}_s-  Y^{\tilde \nu}_s\big)ds \right)^2 \bigg]&\leq 2T  E\bigg[\int_t^T \Big((Y^{\hat u^N}_t-Y_t^{\tilde \nu})^2+T\int_t^s (\bar u^N_r-\tilde \nu_r)^2dr\Big) ds  \bigg] \\
&\leq 2T^2  \left( E\big[(Y_t^{\hat u^N}-Y_t^{\tilde \nu})^2\big ]+T^2\sup_{t\leq s \leq T} E\big[(\bar u^N_s-\tilde \nu_s)^2\big] \right).
\end{aligned} 
\ee
Similarly we have 
\be \label{y-2nd} 
\begin{aligned} 
E[(Y^{\hat u^N}_T- Y^{\tilde v}_T)^2]& \leq 2 \left( E\big[(Y_t^{\hat u^N}-Y_t^{\tilde \nu})^2\big ]+ T\int_{t}^T E\big[(\bar u^N_s-\tilde \nu_s)^2\big] ds\right) \\
&\leq 2 \left( E\big[(Y_t^{\hat u^N}-Y_t^{\tilde \nu})^2\big ]+T^2\sup_{t\leq s \leq T} E\big[(\bar u^N_s-\tilde \nu_s)^2\big] \right).
\end{aligned} 
\ee
From Proposition \ref{lem-bnd-ui}, Jensen inequality and \eqref{avr-def} we have for all $0\leq t\leq T$,    
\be \label{rrrr3}
\begin{aligned}
&E\bigg[ \left(\int_t^T \Big(\frac{1}{N} \sum_{j \not =i}\hat u_s^{j,N}-\tilde \nu_{s}\Big)ds \right)^2 \bigg] \\
&\leq 2TE\bigg[ \int_t^T \Big(\frac{1}{N} \sum_{j =1}^N\hat u_s^{j,N}- \tilde \nu_{s}\Big)^2ds \bigg] +2 T\frac{1}{N^2} E\bigg[  \int_t^T \big( \hat u_s^{i,N} \big)^2ds \bigg]  \\
&\leq  2T E\bigg[ \int_t^T \big(\bar u^N_s- \tilde \nu_{s}\big)^2ds \bigg] + T^2O(N^{-2})    \\
&\leq  2 T^2\sup_{t\leq s\leq T} E\big[  \big(\bar u^N_s- \tilde \nu_{s}\big)^2  \big] +T^2O(N^{-2})   . 
\end{aligned} 
\ee

Apply \eqref{x-bb-b-4}, \eqref{x-bb-b2-4}, \eqref{rrr1}--\eqref{rrrr3} and Lemma \ref{lemma-mar-bnd} to \eqref{dif-eq} to get that for all $0\leq t \leq T$, 
\bd  
\begin{aligned} 
&\sup_{t\leq s \leq T} E\left[(\hat u_{s}^{i,N}-\hat v_{s}^{i})^{2}\right]  \\ 
&\leq 10 C_{\eqref{c-1}} (T^2\vee 1) \Big(T^2 \sup_{t\leq s \leq T} E\left[(\hat u_{s}^{i,N}-\hat v_{s}^{i})^{2}\right] +T^2 \sup_{t\leq s \leq T} E\left[\left( \bar u^N_s -\tilde \nu_s\right)^{2}\right]  \\
&\quad +   E\big[(X^{\hat u^{i,N}}_t- X_t^{\hat v^i})]^2 +E\big[(Y^{\hat u^N}_t- Y_t^{\tilde \nu})]^2 \big] \Big)  + T^2e^{2\rho T} O(N^{-2}). 
\end{aligned} 
\ed

\end{proof}

Before introducing the next lemma we recall that $\bar u_t$ was defined in \eqref{avr-def} and that $\ol X^{u^{N}}$ from \eqref{eq-rrr} is given by 
\be \label{x-ol}
 \ol X^{\hat u^{N}}_t =  \frac{1}{N}\sum_{i=1}^N X^{\hat u^{i,N}}_t.
 \ee
 Recall that $C_{\eqref{c-1}}$ was defined in \eqref{c-1}. 
\begin{lemma} \label{lemma-converge-avr} 
Assume that \eqref{a-const} holds. Then, for all $0\leq t \leq T$ we have
\bd  
\begin{aligned} 
\sup_{t\leq s \leq T} E\left[(\bar u^N_{s} - \tilde \nu_s)^{2}\right]  &\leq  10 C_{\eqref{c-1}} (T^2\vee 1)\Big(T^2  \sup_{t\leq s \leq T} E\left[\left( \bar u^N_s -\tilde \nu_s\right)^{2}\right]   \\
&\quad + E\big[(\ol X^{\hat u^{N}}_t - X^{\tilde \nu}_t)^2] + E\big[(Y^{\hat  u^N}_t -Y^{\tilde \nu}_t )^2 \big] \Big) + T^2e^{2\rho T}  O(N^{-2}). 
\end{aligned} 
\ed
\end{lemma}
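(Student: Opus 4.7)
The plan is to mirror the argument used in the proof of Lemma \ref{lemma-converge-ui}, with the individual FBSDE \eqref{eq:NASHFBSDE*} replaced by the aggregated system \eqref{eq:mean-FBSDE} and the mean field FBSDE \eqref{eq:MeanFieldGameFBSDE_inf} replaced by the aggregated mean field system \eqref{eq:MeanField_infAggregated}. First I would subtract the backward equation for $\tilde\nu$ from the backward equation for $\bar u^N$ to obtain
\begin{equation*}
\begin{aligned}
\bar u^N_t - \tilde\nu_t = \; & -\int_t^T \left(\frac{\rho\kappa}{2\lambda}(\ol Y^{\bar u^N}_s - \tilde Y^{\tilde\nu}_s) - \frac{\kappa\gamma}{2\lambda}(\bar u^N_s - \tilde\nu_s) + \frac{\kappa\gamma}{2\lambda N}\bar u^N_s - \frac{\phi}{\lambda}(\ol X^{\bar u^N}_s - \tilde X^{\tilde\nu}_s)\right) ds \\
& - \frac{\rho}{2\lambda}\int_t^T \ol Z^{\bar u^N}_s\, ds + (\ol M^N_t - \ol M^N_T) - (\tilde L_t - \tilde L_T) \\
& + \frac{\varrho}{\lambda}(\ol X^{\bar u^N}_T - \tilde X^{\tilde\nu}_T) - \frac{\kappa}{2\lambda}(\ol Y^{\bar u^N}_T - \tilde Y^{\tilde\nu}_T).
\end{aligned}
\end{equation*}
The crucial new feature compared to Lemma \ref{lemma-converge-ui} is the ``spurious'' term $\frac{\kappa\gamma}{2\lambda N}\bar u^N_s$, which arises because the drift in \eqref{eq:mean-FBSDE} carries the factor $(N-1)/N$ rather than $1$ as in \eqref{eq:MeanField_infAggregated}; by the uniform $L^2$-bound in Proposition \ref{lem-bnd-ui}(i) the integral of this term contributes at order $O(N^{-2})$ after squaring and taking expectations.

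Next, I would square both sides, take expectations, and apply the bound $(\sum_i a_i)^2 \leq n \sum_i a_i^2$ from \eqref{2-p}. The $\ol Y^{\bar u^N}$--$\tilde Y^{\tilde\nu}$ and $\ol X^{\bar u^N}$--$\tilde X^{\tilde\nu}$ difference integrals, together with the terminal differences at time $T$, are estimated by propagating the forward dynamics from time $t$ via \eqref{y-br}, \eqref{y-br-h} and \eqref{x-br}, \eqref{x-br-h}, exactly as in \eqref{rrr1}--\eqref{y-2nd} and \eqref{x-bb-b-4}--\eqref{x-bb-b2-4}. This yields bounds of the form $E[(\ol Y^{\bar u^N}_t - \tilde Y^{\tilde\nu}_t)^2] + T^2\sup_{s\in[t,T]} E[(\bar u^N_s - \tilde\nu_s)^2]$ together with its analogue for the $X$-differences. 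The adjoint integral $\int_t^T \ol Z^{\bar u^N}_s\, ds$ is controlled by the aggregated analogue of Lemma \ref{lem-z-con}: the BSDE for $\ol Z^{\bar u^N}$ in \eqref{eq:mean-FBSDE} has the same $\gamma\kappa/N$ factor in front of $\bar u^N$, so the identical integration by parts and Jensen estimate as in \eqref{z-int} give $\sup_t E[(\ol Z^{\bar u^N}_t)^2] \leq C T N^{-2}$. Finally, the martingale difference $(\ol M^N_t - \ol M^N_T) - (\tilde L_t - \tilde L_T)$ is bounded directly by Lemma \ref{lemma-mar-bnd-avr}.

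The main subtlety I anticipate is the bookkeeping of the self-referential term $-\frac{\kappa\gamma}{2\lambda}\int_t^T (\bar u^N_s - \tilde\nu_s)\,ds$, which upon squaring and applying Jensen produces exactly the $T^2\sup_{s\in[t,T]} E[(\bar u^N_s - \tilde\nu_s)^2]$ contribution that appears on the right-hand side of the lemma; this contribution is not absorbed at this stage, but rather later via the smallness condition \eqref{a-const} in the Gronwall-type step of the proof of Theorem \ref{thm-strat-con}. Collecting the five groups of estimates with the constant $C_{\eqref{c-1}}$ from \eqref{c-1} and taking $\sup_{t\leq s\leq T}$ on both sides then yields the claimed inequality with prefactor $10\, C_{\eqref{c-1}}(T^2\vee 1)$ and residual $T^2 e^{2\rho T}\, O(N^{-2})$, the exponential factor $e^{2\rho T}$ entering through the factor $e^{\rho t}$ in the representation \eqref{z-int} of $\ol Z^{\bar u^N}$ and the analogous representation of the martingale $\ol N^N$ in \eqref{eq:mean-FBSDE}.
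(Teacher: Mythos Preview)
Your proposal is correct and follows essentially the same route as the paper's own proof, which is given only as a sketch mirroring Lemma~\ref{lemma-converge-ui}. The only cosmetic difference is that the paper keeps the drift term grouped as $-\frac{\kappa\gamma}{2\lambda}\big(\frac{N-1}{N}\bar u^N_s-\tilde\nu_s\big)$ and then splits it inside the estimate \eqref{blaw1}, whereas you split it upfront into $-\frac{\kappa\gamma}{2\lambda}(\bar u^N_s-\tilde\nu_s)+\frac{\kappa\gamma}{2\lambda N}\bar u^N_s$; also the paper bounds $\ol Z^{\bar u^N}$ by writing $\ol Z^{\bar u^N}=\frac{1}{N}\sum_i Z^{u^{i,N}}$ via \eqref{eq-rrr} and invoking Lemma~\ref{lem-z-con} directly rather than redoing the computation for the aggregated process.
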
 
\begin{proof} 
The proof follows the same lines as the proof of Lemma
\ref{lemma-converge-ui}, so we only give the outline.

Recall from \eqref{eq-rrr} that $\ol Y^{\hat u^N} = Y^{\hat u^N}$. From \eqref{eq:mean-FBSDE} and \eqref{eq:MeanField_infAggregated} we have 
\bd
\begin{aligned} 
\bar u^N_t- \tilde \nu_t& = -\int_t^T\left(\frac{\rho \kappa}{2\lambda} \big(  Y^{\hat u^N}_s-  Y^{\tilde \nu}_s\big) - \frac{\kappa \gamma}{2\lambda}\Big(\frac{N-1}{N} \bar u^N_s- \tilde \nu_{s}\Big) -\frac{\phi}{\lambda}\big(\ol X^{\hat u^N}_s-X_s^{\tilde\nu}\big) \right)ds \\
&\quad- \frac{\rho}{2\lambda}\int_t^T \ol Z^{\bar u^N}_sds  + \ol M^N_t -\ol M^N_T  -( \tilde L_t- \tilde  L_T)\\
&\quad +\frac{\varrho}{\lambda} \left(   \ol X^{\hat u^N}_T- X_T^{\tilde \nu} \right)  - \frac{\kappa}{2\lambda}\left( Y^{\hat u^N}_T- Y^{\tilde v}_T \right) .
\end{aligned} 
\ed
Using Jensen inequality we get, 
\be \label{dif-eq-av}
\begin{aligned} 
&E\big[(\bar u_t- \tilde \nu_t)^2\big] \\
&\leq \frac{C_{\eqref{c-1}}}{2} \bigg(T E\bigg[\int_t^T  \big( Y^{\hat u^N}_s-  Y^{\tilde \nu}_s\big)^2ds\bigg] +T E\bigg[ \int_t^T \Big(\frac{N-1}{N} \bar u^N_s- \tilde \nu_{s}\Big)^2ds \bigg] \\
&\qquad + E\bigg[ T \int_t^T\big(\ol X^{\hat u^N}_s-X_s^{\tilde \nu}\big)^2ds  \bigg] 
+ T  E\bigg[ \int_t^T(\ol  Z^{\bar u^N}_s)^2ds  \bigg] \\
 &\qquad +E\big[\big(\ol X^{\hat u^N}_T- X_T^{\tilde \nu}\big)^2\big] +E[(Y^{\hat u^N}_T- Y^{\tilde v}_T)^2] \bigg) \\
&\qquad  +5 E\big[\big( \ol M^N_t -\ol M^N_T  -( \tilde L_t- \tilde  L_T) \big)^2\big] .
\end{aligned} 
\ee
From Proposition \ref{lem-bnd-ui}(ii) it follows that there exists $C>0$ not depending on $(N,t,T)$ such that 
\be \label{blaw1}
\begin{aligned}
E\bigg[ \int_t^T \Big(\frac{N-1}{N} \bar u^N_s- \tilde \nu_{s}\Big)^2ds \bigg]  
&\leq 2 E\Big[ \int_t^T (\bar u^N_s-\tilde \nu_s)^2ds \Big] + CT\frac{1}{N^{2}}  \\
&\leq 2 T \sup_{t\leq s\leq T}E\bigg[  (\bar u^N_s-\tilde \nu_s)^2  \bigg] + CT^2\frac{1}{N^{2}}.
\end{aligned} 
\ee
Similarly to \eqref{x-bb-b-4} and \eqref{x-bb-b2-4} it follows that  
\be  \label{ }
\begin{aligned}
 E\bigg[ \int_t^T\big(\ol X_s^{\hat u^N} -X_s^{\tilde \nu} \big)^2ds  \bigg] &\leq 2\left( T E[(\ol X^{\hat u^N}_t-X_t^{ \tilde \nu} )^2] +  T^3\sup_{t\leq s\leq T} E\left[(\bar u^N_{s}-\tilde \nu _{s})^{2}\right] \right),
 \end{aligned} 
\ee
and 
\be  \label{x-id-p}
\begin{aligned}
 E\big[ (\ol X^{ }_T-X^{ \tilde \nu }_T)^2  \big] 
 &\leq 2 \left(E[(\ol X^{\hat u^N}_t-X^{\tilde \nu}_t)^2]  + T^2\sup_{t\leq s \leq T}E\left[(\bar u^N_{s}-\tilde \nu_s )^{2}\right] \right).
 \end{aligned} 
 \ee
From \eqref{eq-rrr} and Lemma \ref{lem-z-con} we get that there exists a constant $C_2>0$, not depending on $(N,t,T)$ such that 
\be \label{rrr2} 
E\bigg[ \int_t^T\big(\ol  Z^{\bar u^N}_s)^2ds  \bigg]   \leq C_2T^2\frac{1}{N^{2}}.
\ee
By applying \eqref{blaw1}--\eqref{rrr2}, \eqref{rrr1}, \eqref{y-2nd} and Lemma \ref{lemma-mar-bnd-avr} to \eqref{dif-eq-av} we get the result.   
\end{proof}

Now we are ready to prove Theorem \ref{thm-strat-con}.  
\begin{proof} [Proof of Theorem \ref{thm-strat-con}]
 Recall that $C_{\eqref{c-1}}$ was defined in \eqref{c-1}. 
From Lemmas \ref{lemma-converge-ui} and \ref{lemma-converge-avr} it follows that there exists a constant $C_1>0$ not depending on $(N,T)$ such that  
\bd  
\begin{aligned} 
&\sup_{t\leq s \leq T} E\left[(\hat u_{s}^{i,N}-\hat v_{s}^{i})^{2}\right]  + \sup_{t\leq s \leq T} E\left[(\bar u^N_{s} - \tilde \nu_s)^{2}\right] \\
&\leq  C_1 e^{2\rho T} T^2  \frac{1}{N^2}+20 C_{\eqref{c-1}} (T^2 \vee1)\Big( T^2 \sup_{t\leq s \leq T} E\left[(\hat u_{s}^{i,N}-\hat v_{s}^{i})^{2}\right] +  T^2\sup_{t\leq s \leq T} E\left[\left( \bar u^N_s -\tilde \nu_s\right)^{2}\right]   \\
&\quad +   E\big[( X^{\hat u^{i,N}}_t -  X^{\hat v^i}_t)^2] + E\big[(\ol X^{\hat u^N}_t -  X^{\tilde \nu}_t)^2] +E\big[(Y^{\hat u^N}_t -Y^{\tilde \nu}_t )^2] \big] \Big) , \quad \textrm{for all } 0\leq t \leq T.
\end{aligned} 
\ed

By \eqref{a-const} we have  $\alpha(T) := 1-20 C_{\eqref{c-1}} T^2(T^2 \vee1)>0$ by assumption, it holds for all $0\leq t \leq T$,  
\be  \label{gfd} 
\begin{aligned} 
&\alpha(T)\left(\sup_{t\leq s \leq T} E\left[(\hat u_{s}^{i,N}-\hat v_{s}^{i})^{2}\right]  + \sup_{t\leq s \leq T} E\left[(\bar u^N_{s} - \tilde \nu_s)^{2}\right] \right) \\
&\leq   C_1(T)\frac{1}{N^2} 
 + C_2(T)\left(  E\big[( X^{\hat u^{i,N}}_t -  X^{\hat v^i}_t)^2] + E\big[(\ol X^{\hat u^N}_t -  X^{\tilde \nu}_t)^2] +E\big[(Y^{\hat u^N}_t -Y^{\tilde \nu}_t )^2] \big] \right) \\
&\leq   \hat C_1(T)  \frac{1}{N^2}  +  \hat C_2(T)\left(  \int_0^t   E\big[(\bar u^N_s-\tilde \nu_s)^2\big]ds  + \int_0^t E\big[( \hat u^{i,N}_s-\hat  v^{i}_s)^2\big]ds   \right)  \\
&\leq  \hat C_1(T)  \frac{1}{N^2}  +  \hat C_2(T)\left(  \int_0^t  \sup_{s\leq r \leq T} E\big[(\bar u^N_r-\tilde \nu_r)^2\big]ds  + \int_0^t \sup_{s\leq r \leq T} E\big[(\hat u^{i,N}_r-\hat  v^{i}_r)^2\big]ds   \right), 
\end{aligned} 
\ee
where we used \eqref{x-bb-b2-4}, \eqref{y-2nd} and \eqref{x-id-p} in the second inequality.
 
From Gronwall's inequality we get that there exist constants $C_i(T)>0$, $i=3,4$ not depending on $N$ such that 
\be \label{gfd1} 
\begin{aligned} 
\sup_{0\leq s \leq T} E\left[(\hat u_{s}^{i,N}-\hat v_{s}^{i})^{2}\right]  + \sup_{0\leq s \leq T} E\left[(\bar u^N_{s} - \tilde \nu_s)^{2}\right]  \leq  &  C_3(T) \frac{1}{N^2} e^{  T  C_4(T)}, \\
&\quad \textrm{for all } i=1,\ldots,N, \, N \geq 2, 
\end{aligned} 
\ee
and we conclude the result by Remark \ref{strategy}. 
\end{proof}


\section{Proof of Proposition \ref{lem-bnd-ui}} \label{sec-pf-bnd}
We will only prove part (i) of the lemma as the proof of part (ii) follows similar lines. We first introduce the following two lemmas. Recall that $\bar u^N_t$ was defined in \eqref{avr-def}. 

Following \eqref{ass:P} we fix a constant $C_{P}(T)>0$ such that 
\be \label{c-p} 
\sup_{t\in [0,T]}E[(P_t)^2] <C_{P}(T), 
\ee
which will be used throughout this section. 

Recall that $C_{\eqref{c-1}}$ was defined in \eqref{c-1}. 
\begin{lemma} \label{lemma-con-ui} 
There exists a positive constant $c_N = O(N^{-2})$ such that for all $0\leq t \leq T$, $i=1,..,N$, 
\bd  \begin{aligned} 
&\sup_{t\leq s \leq T} E\left[(\hat u_{s}^{i,N})^{2}\right]  \\&\leq \frac{8}{\lam^2} C_P(T)+ 10 (C_{\eqref{c-1}} +c_N) (T^2\vee 1)\Big( T^2 \sup_{t\leq s \leq T} E\left[(\hat u_{s}^{i,N})^{2}\right] +T^2 \sup_{t\leq s \leq T} E\left[\left( \bar u^N_s\right)^{2}\right]  \\
& \quad+  E\big[(X^{\hat{u}^{i,N}}_t)]^2 +E\big[(Y^{\hat u^N}_t)]^2 \big] +E\Big[  \int_0^t(\hat u_s^{i,N})^2 ds \Big]  \Big). 
\end{aligned} 
\ed
\end{lemma}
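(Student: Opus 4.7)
The plan is to integrate the BSDE for $u^{i,N}$ in \eqref{eq:NASHFBSDE} backwards from $s$ to $T$ and take conditional expectations $E_s[\cdot]$ to eliminate the martingale increments of $M^{i,N}$ and of the martingale part $L$ of the decomposition $P = L + A$. Combined with the terminal condition $u_T^{i,N} = (\varrho/\lambda) X_T^{u^{i,N}} - (\kappa/(2\lambda)) Y_T^{u^N}$, this yields
\begin{equation*}
u_s^{i,N} = E_s\!\left[ \tfrac{\varrho}{\lambda} X_T^{u^{i,N}} - \tfrac{\kappa}{2\lambda} Y_T^{u^N} - \tfrac{P_T - P_s}{2\lambda} - \int_s^T\!\!\left( \tfrac{\kappa\rho Y_r^{u^N}}{2\lambda} - \tfrac{\gamma\kappa}{2\lambda N}\!\sum_{j\neq i}\! u_r^{j,N} - \tfrac{\phi X_r^{u^{i,N}}}{\lambda} + \tfrac{\rho Z_r^{u^{i,N}}}{2\lambda}\right) dr\right]\!.
\end{equation*}
Squaring this identity and applying $(\sum_{k=1}^7 a_k)^2 \leq 7 \sum_k a_k^2$, conditional Jensen's inequality, the tower property, and Cauchy--Schwarz on the time integrals reduces the task to estimating each of the seven summands separately.

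For the price piece, $E[(P_T-P_s)^2] \leq 2E[P_T^2] + 2E[P_s^2] \leq 4 C_P(T)$, producing the $(8/\lambda^2) C_P(T)$ offset. For $r \geq s \geq t$, the forward representation $X_r^{u^{i,N}} = X_t^{u^{i,N}} - \int_t^r u_w^{i,N} dw$ combined with Cauchy--Schwarz yields
\begin{equation*}
E[(X_r^{u^{i,N}})^2] \leq 2 E[(X_t^{u^{i,N}})^2] + 2T^2 \sup_{t \leq w \leq T} E[(u_w^{i,N})^2],
\end{equation*}
and analogously, starting from $Y_r^{u^N} = e^{-\rho(r-t)} Y_t^{u^N} + \gamma\!\int_t^r e^{-\rho(r-w)} \bar u_w^N dw$, one obtains
\begin{equation*}
E[(Y_r^{u^N})^2] \leq 2 E[(Y_t^{u^N})^2] + 2\gamma^2 T^2 \sup_{t \leq w \leq T} E[(\bar u_w^N)^2].
\end{equation*}
The $u^{-i}$ piece is controlled through the identity $\frac{1}{N}\sum_{j\neq i} u_r^{j,N} = \bar u_r^N - \frac{1}{N} u_r^{i,N}$, which splits it into a $\bar u^N$--supremum term and a $1/N^2$-small correction to be absorbed into $c_N$.

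The crucial step is the $Z^{u^{i,N}}$ contribution. Using the representation (established exactly as in the derivation preceding Lemma \ref{lem-z-con})
\begin{equation*}
Z_r^{u^{i,N}} = \kappa e^{\rho r}\!\left( \tfrac{\gamma}{N}\int_0^r e^{-\rho w} u_w^{i,N} dw + N_r^{i,N}\right), \qquad N_r^{i,N} = -\tfrac{\gamma}{N} E_r\!\left[\int_0^T e^{-\rho w} u_w^{i,N} dw\right],
\end{equation*}
bounding the past integral via Cauchy--Schwarz, the martingale via conditional Jensen, and then splitting $\int_0^T = \int_0^t + \int_t^T$ to separate the already-observed past from the unknown future gives
\begin{equation*}
E[(Z_r^{u^{i,N}})^2] \leq \tfrac{C\, e^{2\rho T}}{N^2}\!\left( E\!\left[\int_0^t (u_w^{i,N})^2 dw\right] + T \sup_{t \leq w \leq T} E[(u_w^{i,N})^2]\right),
\end{equation*}
with $C$ depending only on $\kappa, \gamma$. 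Integrating against $dr$ over $[s,T]$ produces precisely the $c_N = O(N^{-2})$ weighted $E[\int_0^t (u_w^{i,N})^2 dw]$ summand in the statement. Collecting all contributions, dominating each of the appearing coefficient ratios $\varrho/\lambda, \kappa/\lambda, \rho\kappa/\lambda, \kappa\gamma/\lambda, \phi/\lambda, \rho/\lambda$ by the single constant $C_{\eqref{c-1}}$ from \eqref{c-1}, and every power of $T$ by $T^2 \vee 1$, yields the advertised bound. The main obstacle is exactly the past/future bookkeeping for $N^{i,N}$: since this martingale depends on $\int_0^T u_w^{i,N} dw$, its past portion on $[0,t]$ cannot be re-absorbed into the forward supremum $\sup_{t \leq s \leq T} E[(u_s^{i,N})^2]$ and must appear explicitly on the right-hand side, which is the sole reason the bound is not purely forward in time.
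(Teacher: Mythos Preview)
Your argument is correct and follows the same overall strategy as the paper: integrate the BSDE for $u^{i,N}$ backward from the terminal condition, square, split via the elementary inequality $(\sum a_k)^2\le n\sum a_k^2$, and then control each summand using the forward representations of $X^{u^{i,N}}$ and $Y^{u^N}$ together with the explicit formula for $Z^{u^{i,N}}$. The identification of the $E[\int_0^t(\hat u_s^{i,N})^2\,ds]$ term as arising from the past portion of $N^{i,N}$ in the $Z$--representation is exactly the mechanism the paper exploits.

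The one genuine difference is your treatment of the BSDE martingale $M^{i,N}$. The paper keeps the increment $M_T^{i,N}-M_t^{i,N}$ explicitly and then invokes the representation $M^{i,N}_t=\tfrac{1}{2\lambda}\tilde M^{i,N}_t-\tfrac{\gamma\kappa}{2\lambda N}\int_0^t e^{\rho s}\,d\tilde N^{i,N}_s$ from the proof of Lemma~\ref{thm:NASHFBSDE} to bound it (this is where the paper picks up its $C_P(T)$ contribution and a second $O(N^{-2})$ past-integral term). You instead apply $E_s[\,\cdot\,]$ at the outset, which annihilates $M_T^{i,N}-M_s^{i,N}$ entirely and leaves the price increment $P_T-P_s$ as the only source of the $C_P(T)$ offset. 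Your route is slightly more economical: it avoids appealing to the explicit structure of $M^{i,N}$ and yields somewhat smaller constants (which of course still satisfy the stated inequality). The paper's route has the minor advantage of making transparent that the $P$--contribution and the $O(N^{-2})$ correction both live inside the BSDE martingale, which is conceptually useful when the analogous Lemma~\ref{lemma-mar-bnd} is proved later for the convergence argument.
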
 
\begin{proof} 
From \eqref{eq:NASHFBSDE} we have 
\be \label{u-eq}
\begin{aligned} 
\hat u_t^{i,N} & =  \frac{\varrho}{\lambda}  X^{\hat u^{i,N}}_T -
      \frac{\kappa}{2\lambda} Y_T^{\hat u^N }- \int_t^T\left(\frac{\rho \kappa}{2\lambda} Y^{\hat u^N}_s - \frac{\kappa \gamma}{2\lambda} \frac{1}{N} \sum_{i \not =j}\hat u_s^{i,N}  -\frac{\phi}{\lambda} X_s^{\hat u^{i,N}}  \right)ds \\
&\quad - \frac{\rho}{2\lambda}\int_t^TZ_s^{\hat u^{i,N}}ds  - M^{i,N}_T+M^{i,N}_t. 
\end{aligned} 
\ee
Using \eqref{2-p} we get that 
\be \label{dif-eq-b-b}
\begin{aligned} 
&E\big[(\hat u_t^{i,N})^2\big] \\
&\leq C_{\eqref{c-1}}\bigg( E\big[(X^{\hat u^{i,N}}_T)^2]+ E\big[ (Y_T^{\hat u^N })^2\big]+ TE\bigg[\int_t^T  \big(Y^{\hat u^N}_s)^2 ds\bigg] \\
&\quad +T E\bigg[ \int_t^T \Big(\frac{1}{N} \sum_{j\not =i} \hat u_s^{j,N}\Big)^2ds \bigg] + TE\bigg[ \int_t^T\big(X_s^{\hat u^{i,N}}\big)^2ds  \bigg] 
+ T E\bigg[ \int_t^T(Z_s^{\hat u^{i,N}})^2ds  \bigg]  \bigg)\\
&\quad +2 E\big[\big( M^{i,N}_t-M^{i,N}_T \big)^2\big] , 
\end{aligned} 
\ee
where we used Jensen's inequality with respect to the normalized Lebesgue measure on $[0,T]$, which added factors of $T$ above.  

From \eqref{y-br}, Jensen inequality and Fubini's theorem we have 
\be  \label{rrrr3-b}
\begin{aligned}
E\bigg[\int_t^T  \big(Y^{\hat u^N}_s \big)^2ds\bigg] &\leq 2 E\bigg[\int_t^T \Big((Y^{\hat u^N}_t)^2+T\int_t^s (\bar u^N_r)^2dr\Big) ds  \bigg] \\
&\leq 2T \left( E\big[(Y_t^{\hat u^N})^2\big ]+T^2\sup_{t\leq s \leq T} E\big[(\bar u^N_s)^2\big] \right).
\end{aligned} 
\ee
By a similar argument we have 
\be  \label{y-T-2nd}
\begin{aligned}
E\big[  \big(Y^{\hat u^N}_T \big)^2 \big]  
&\leq 2  \left( E\big[(Y_t^{\hat u^N})^2\big ]+T^2\sup_{t\leq s \leq T} E\big[(\bar u^N_s)^2\big] \right).
\end{aligned} 
\ee

Note that 
\bd
\begin{aligned}
E\bigg[   \Big(\frac{1}{N} \sum_{j \not =i}\hat u_r^{j,N}\Big)^2  \bigg] &\leq 2E\bigg[   \Big(\frac{1}{N} \sum_{j=1}^N\hat u_r^{j,N} \Big)^2  \bigg] +2 \frac{1}{N^2} E\big[    \big(  \hat u_r^{i,N} \big)^2  \big].
\end{aligned} 
\ed
So we get that
\be \label{rrr1-b}
 E\bigg[ \int_t^T  \Big(\frac{1}{N} \sum_{j \not =i}\hat u_r^{j,N}\Big)^2 dr \bigg]   
\leq 2T \left(  \sup_{t\leq s \leq T}E\big[   \big( \bar u^N_s\big)^2  \big] + \frac{1}{N^2}   \sup_{t\leq s \leq T} E\big[\big(  \hat u_s^{i,N} \big)^2  \big] ds \right) .
 \ee
From \eqref{x-br} and Jensen inequality it follows that,
\be  \label{x-bb-b}
\begin{aligned}
 E\bigg[ \int_t^T\big(X_s^{\hat u^{i,N}}\big)^2ds  \bigg] &\leq 2T \left( E[(X^{\hat u^{i,N}}_t)^{2}] +  T^2 \sup_{t\leq s\leq T} E\left[(\hat u_{s}^{i,N})^{2}\right] \right). 
 \end{aligned} 
\ee
By a similar argument we have 
\be  \label{x-bb-b2}
\begin{aligned}
 E\big[ (X^{\hat u^{i,N}}_T)^2  \big] 
 &\leq  2 \left(E[(X^{\hat u^{i,N}}_t)^2]  +T^2 \sup_{t\leq s \leq T}E\left[(\hat u_{s}^{i,N})^{2}\right] \right).
 \end{aligned} 
\ee
Define  
\be \label{c-2} 
C_{\eqref{c-2}}(T)=2\kappa^2 \gamma^4e^{2\rho T}.
\ee
From \eqref{eq:Mi-martingale}, \eqref{2-p} and Burkholder-Davis-Gundy inequality we get that 
 \be \label{erq}
\begin{aligned} 
&E\left[(M_T^{i,N}- M^{i,N}_t )^2\right] \\ 
&\leq 
\frac{1}{2\lambda^2}  E\left[\left(   2\phi \int_t^T X^{\hat u^{i,N}}_s ds + 2 \varrho X^{\hat u^{i,N}}_T - P_T  -E_t \left[ 2\phi \int_t^T X^{\hat u^{i,N}}_s ds + 2 \varrho X^{\hat u^{i,N}}_T - P_T \right] \right)^2 \right] \\
&\quad + 8\left(\frac{\kappa \gamma^2}{2\lambda N}\right)^2e^{2\rho T}E\left[ \left( \int_t^T E_s\Big[  \int_0^Te^{-\rho r} \hat u_r^{i,N} dr \Big] ds  \right)^2\right]  \\
 &\leq \frac{1}{\lam^2}E\left[\left(   2\phi \int_t^T X^{\hat u^{i,N}}_s ds + 2 \varrho \int_t^T\hat u_s^{i,N}ds  -E_t \left[ 2\phi \int_t^T X^{\hat u^{i,N}}_s ds + 2 \varrho \int_t^T\hat u_s^{i,N}ds\right] \right)^2 \right] \\
&\qquad  +  C_{\eqref{c-2}}  \frac{1}{\lam^2}\frac{1}{N^2} E\left[\left( \int_t^T E_s\Big[  \int_0^Te^{-\rho r} \hat u_r^{i,N} dr \Big] ds \right)^2 \right] +   \frac{1}{\lam^2} E\left[ (P_T- E_t[P_T])^2\right].
\end{aligned} 
\ee
Using the conditional Jensen inequality and \eqref{c-p} we get  
\be \label{bla}
\begin{aligned} 
&E\left[ (P_T- E_t[P_T])^2\right]  \leq 4 C_{P}(T). 
\end{aligned} 
\ee
Using both Jensen and conditional Jensen inequalities, the tower property and Fubini theorem give  
\be \label{bla2}
\begin{aligned} 
E\left[\left( \int_t^T E_s\Big[  \int_0^Te^{-\rho r} \hat u_r^{i,N} dr \Big] ds \right)^2 \right] &\leq T E\left[ \int_t^T \left( E_s\Big[  \int_0^Te^{-\rho r} \hat u_r^{i,N} dr  \Big] \right)^2 ds \right] \\
&\leq T^2 E\left[ \int_t^T  E_s\Big[  \int_0^Te^{-2\rho r} (\hat u_r^{i,N})^2 dr  \Big]  ds \right] \\
&\leq T^3   E\Big[  \int_0^T(\hat u_r^{i,N})^2 dr  \Big] \\
&\leq T^3\Big( T \sup_{t\leq s \leq T} E[(\hat u_s^{i,N})^2]+   E\Big[  \int_0^t(\hat u_s^{i,N})^2 ds \Big] \Big).
 \end{aligned} 
\ee
Using \eqref{x-bb-b} and the conditional Jensen inequality we get for all $0\leq t\leq T$, 
\be \label{bla2232}
\begin{aligned}
&E\left[\left(   2\phi \int_t^T X^{\hat u^{i,N}}_s ds + 2 \varrho \int_t^T\hat u_s^{i,N}ds  -E_t \left[ 2\phi \int_t^T X^{\hat u^{i,N}}_s ds + 2 \varrho \int_t^T\hat u_s^{i,N}ds\right] \right)^2 \right] \\
&\leq 16(\varrho \vee \phi)^2 E\left[\left(   \int_t^T X^{\hat u^{i,N}}_s ds +  \int_t^T\hat u_s^{i,N}ds \right)^2 \right] \\
&\leq 32(\varrho \vee \phi)^2 T \bigg( E\left[ \int_t^T (X^{\hat u^{i,N}}_s)^2 ds \right]  +  E\left[  \int_t^T(\hat u_s^{i,N})^2ds \right] \bigg) \\
&\leq 128 (\varrho \vee \phi)^2 T^2 \left( E[(X^{\hat u^{i,N}}_t)^{2}] +  T(T\vee1) \sup_{t\leq s\leq T} E\left[(\hat u_{s}^{i,N})^{2}\right] \right). 
 \end{aligned} 
\ee
By plugging in \eqref{bla}, \eqref{bla2} and \eqref{bla2232} to \eqref{erq} it follows that 
\be\label{rrrr4-d}
\begin{aligned} 
&  E\left[(M_T^{i,N}- M^{i,N}_t )^2\right] \\ 
&\leq \frac{4}{\lam^2}C_P(T) + 128 \left(\frac{\varrho \vee \phi}{\lam}\right)^2 T^2 \left( E[(X^{\hat u^{i,N}}_t)^{2}] +  T(T\vee1) \sup_{t\leq s\leq T} E\left[(\hat u_{s}^{i,N})^{2}\right] \right) \\
&\quad +C_{\eqref{c-2}}  \frac{1}{\lam^2}\frac{1}{N^2}T^3 \left(  T \sup_{t\leq s \leq T} E[(\hat u_s^{i,N})^2]+E\Big[  \int_0^t(\hat u_s^{i,N})^2 ds \Big] \right) .
\end{aligned} 
\ee

From \eqref{z-int} and \eqref{eq:Ni-martingale} and by following the same lines leading to \eqref{bla2}, we get for all $0\leq t\leq T$, 
\be \label{rrr2-b} 
E\bigg[ \int_t^T(Z_s^{\hat u^{i,N}})^2ds  \bigg]   \leq C_{\eqref{c-2}}  \frac{1}{\lam^2}\frac{1}{N^2} T^3\Big( T \sup_{t\leq s \leq T} E[(\hat u_s^{i,N})^2]+   E\Big[  \int_0^t(\hat u_s^{i,N})^2 ds \Big] \Big). 
\ee

Applying \eqref{rrrr3-b}--\eqref{x-bb-b2}, \eqref{rrrr4-d} and \eqref{rrr2-b} to \eqref{dif-eq-b-b}, we get that there exists $c_N = O(N^{-2})$ such that for all $0\leq t \leq T$,
\bd \label{d-ineq-b}
\begin{aligned} 
&\sup_{t\leq s \leq T} E\left[(\hat u_{s}^{i,N})^{2}\right]  \\&\leq \frac{8}{\lam^2} C_P(T)+ 10 (C_{\eqref{c-1}} +c_N) (T^2\vee 1) \Big(T^2 \sup_{t\leq s \leq T} E\left[(\hat u_{s}^{i,N})^{2}\right] +T^2 \sup_{t\leq s \leq T} E\left[\left( \bar u^N_s\right)^{2}\right]  \\
& \quad+  E\big[(X^{\hat{u}^{i,N}}_t)]^2 +E\big[(Y^{\hat u^N}_t)]^2 \big] +E\Big[  \int_0^t(\hat u_s^{i,N})^2 ds \Big]  \Big). 
 \end{aligned} 
\ed
\end{proof} 

Before we introduce the next lemma we recall that $\ol X^{\hat u^N}$ was defined in \eqref{x-ol}. 

\begin{lemma} \label{lemma-conv-bar} 
There exists a positive constant $c_N = O(N^{-2})$ such that 
\bd  
\begin{aligned} 
\sup_{t\leq s \leq T} E\left[(\bar u^N_{s})^{2}\right]  &\leq  \frac{8}{\lam^2} C_P(T)+ 10 (C_{\eqref{c-1}} +c_N) (T^2\vee 1) \Big(T^2\sup_{t\leq s \leq T} E\left[\left( \bar u_s^N\right)^{2}\right]   \\
&\quad +  E\big[(\ol X^{\hat{u}^N}_t)^2] +E\big[(Y^{\hat u^N}_t)^2\big]  \Big) , \quad \textrm{for all } 0\leq t \leq T.
\end{aligned} 
\ed
\end{lemma}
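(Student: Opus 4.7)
The plan is to mirror the proof of Lemma~\ref{lemma-con-ui}, but working with the aggregated FBSDE system~\eqref{eq:mean-FBSDE} in place of the individual player's system~\eqref{eq:NASHFBSDE}. First, I would rewrite the backward equation for $\bar u^N$ in~\eqref{eq:mean-FBSDE} in integral form, expressing $\bar u^N_t$ in terms of the terminal data $\frac{\varrho}{\lambda}\ol X^{\bar u^N}_T - \frac{\kappa}{2\lambda}\ol Y^{\bar u^N}_T$, the drift integrals involving $\ol Y^{\bar u^N}$, $\bar u^N$, $\ol X^{\bar u^N}$, $\ol Z^{\bar u^N}$, and the martingale increment $\ol M^N_t - \ol M^N_T$. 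Squaring, taking expectations, and applying~\eqref{2-p} together with Jensen's inequality on $[t,T]$ with normalized Lebesgue measure (which introduces factors of $T$) then yields a bound directly analogous to~\eqref{dif-eq-b-b}, in which $\frac{1}{N}\sum_{j\neq i}\hat u^{j,N}$ is replaced by $\frac{N-1}{N}\bar u^N$ and each individual quantity by its average.

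Next, each of the resulting terms is handled by the same estimates used in Lemma~\ref{lemma-con-ui}. The quantities $E\big[\int_t^T (\ol Y^{\bar u^N}_s)^2 ds\big]$ and $E\big[(\ol Y^{\bar u^N}_T)^2\big]$ are bounded via the forward equation for $\ol Y^{\bar u^N}$ in~\eqref{eq:mean-FBSDE} and Jensen's inequality, exactly as in~\eqref{rrrr3-b} and~\eqref{y-T-2nd}. The $\ol X^{\bar u^N}$ contributions are handled via $\ol X^{\bar u^N}_s = \ol X^{\bar u^N}_t - \int_t^s \bar u^N_r dr$, mirroring~\eqref{x-bb-b} and~\eqref{x-bb-b2}. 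The $\frac{N-1}{N}\bar u^N_s$ contribution is trivially dominated by $(\bar u^N_s)^2$, avoiding the $O(N^{-2})$ overhead present in~\eqref{rrr1-b}. The integral $E\big[\int_t^T (\ol Z^{\bar u^N}_s)^2 ds\big]$ is controlled through $\ol Z^{\bar u^N} = \frac{1}{N}\sum_i Z^{\hat u^{i,N}}$ from~\eqref{eq-rrr} combined with Lemma~\ref{lem-z-con}, producing an $O(N^{-2})$ contribution that feeds into $c_N$. Finally, the martingale increment is treated by writing $\ol M^N = \frac{1}{N}\sum_i M^{i,N}$ via~\eqref{eq-rrr} and invoking BDG together with conditional Jensen exactly as in~\eqref{rrrr4-d}: the $P_T$ term produces the $\frac{8}{\lambda^2}C_P(T)$ contribution, while the remaining pieces yield additional $O(N^{-2})$ corrections to be absorbed into $c_N$.

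The main obstacle, as in Lemma~\ref{lemma-con-ui}, will be the careful bookkeeping inside the bound on $E\big[(\ol M^N_T - \ol M^N_t)^2\big]$: one must combine conditional Jensen, Fubini, and BDG in a way that isolates the unavoidable $C_P(T)$ term coming from $P_T$ while preserving the $N^{-2}$ rate of the $\ol N^N$ contribution, and one must verify that all the $\ol{}$-versions of the auxiliary martingales $\tilde M^{i,N}$, $\tilde N^{i,N}$ from the decomposition of $M^{i,N}$ average in a way consistent with~\eqref{eq-rrr}. Once these estimates are assembled, collecting constants into $C_{\eqref{c-1}}$ and $c_N$ yields the claimed inequality.
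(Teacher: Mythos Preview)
Your proposal is correct and matches the paper's own approach: the paper simply states that the proof ``follows the same lines as the proof of Lemma~\ref{lemma-con-ui}, hence it is omitted,'' which is exactly the scheme you outline. One small imprecision: in your treatment of $E[(\ol M^N_T-\ol M^N_t)^2]$, the averaged $\tilde M^{i,N}$-part (involving $\ol X^{\bar u^N}$) is \emph{not} $O(N^{-2})$ but contributes the leading $E[(\ol X^N_t)^2]$ and $\sup E[(\bar u^N_s)^2]$ terms in the final bound---only the $\tilde N^{i,N}$-piece carries the extra $1/N$ and feeds into $c_N$.
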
 
The proof of Lemma \ref{lemma-conv-bar} follows the same lines as the proof of Lemma \ref{lemma-con-ui}, hence it is omitted. 

Now we are ready to prove Proposition \ref{lem-bnd-ui}. 

\begin{proof}[Proof of Proposition \ref{lem-bnd-ui}]
From Lemmas \ref{lemma-con-ui} and \ref{lemma-conv-bar} we get that there exists a constant $c_N = O(N^{-2})$ such that for all $0\leq t \leq T$ and $i=1,...,N$ we have 
\bd
\begin{aligned} 
&\sup_{t\leq s \leq T} E\left[(\hat u_{s}^{i,N})^{2}\right]  +\sup_{t\leq s \leq T} E\left[(\bar u^N_{s})^{2}\right]   \\
&\leq \frac{16}{\lam^2}C_P(T)+ 20 (C_{\eqref{c-1}} +c_N) (T^2\vee 1) \Big(T^2\sup_{t\leq s \leq T}  E\left[(\hat u_{s}^{i,N})^{2}\right] + T^2\sup_{t\leq s \leq T} E\left[\left( \bar u^N_s\right)^{2}\right]     \\
&\quad + E\big[(\ol X^{\hat{u}^N}_t)^2] +E\big[(X^{\hat u^{i,N}}_t)^2] +E\big[(Y^{\hat u^N}_t)^2] \big] +E\Big[  \int_0^t(\hat u_s^{i,N})^2 ds \Big]   \Big) . 
\end{aligned} 
\ed
By assumption \eqref{a-const} we can choose $N$ large enough so that $g(T,N):=20 (C_{\eqref{c-1}} +c_N) (T^2\vee 1)T^2<1$. Then using Jensen inequality we get for all $0\leq t \leq T$, $N\geq 1$ and $i=1,...,N$: 
\be  \label{bnd-s-t} 
\begin{aligned} 
&(1-g(T,N))\left(\sup_{t\leq s \leq T} E\left[(\hat u_{s}^{i,N})^{2}\right]  +\sup_{t\leq s \leq T} E\left[(\bar u^N_{s})^{2}\right] \right) \\
&\leq C_1(T)\left( E\big[(\ol X^N_t)^2] +E\big[(X^{\hat u^{i,N}}_t)^2] +E\big[(Y^{\hat u^N}_t)^2] +E\Big[  \int_0^t(\hat u_s^{i,N})^2 ds \Big]  \right) + C_2(T)  \\
&\leq    \wt C_1(T)\left( (\ol X^N_0)^2+ (  X^{\hat u^{i,N}}_0)^2+  \int_0^t   E\big[(\bar u^N_s)^2\big]ds  + \int_0^t E\big[( \hat u^{i,N}_s)^2\big]ds   \right)  +\wt C_2(T)\\
&\leq  \ol C_1(T)\left(  \int_0^t \sup_{s \leq r \leq T} E\big[(\bar u_r^N)^2\big]ds  + \int_0^t\sup_{s \leq r \leq T} E\big[( \hat u^{i,N}_r)^2\big]ds  \right) +  \ol C_2(T), 
\end{aligned} 
\ee
where we used \eqref{init-assum}, \eqref{x-br} and \eqref{y-br} in the second inequality and recall that $Y^{\hat u^N}_0=y$. 

Recall that $  \ol C_1(T), \ol C_2(T)$ are not depending on $(N,i,t)$, so we get from Gronwall's lemma that 
\bd  
\begin{aligned} 
&\sup_{N \geq 2} \sup_{i=1,...,N}\sup_{0\leq s \leq T} \big( E\left[(\hat{u}_{s}^{i,N})^{2}\right]  +  E\left[(\bar u^N_{s})^{2}\right]  \big) \\
&\leq   (1-g(T,N))^{-1} \ol C_2(T) e^{ (1-g(T,N))^{-1} \ol C_1(T) T}, 
\end{aligned} 
\ed
which concludes the proof.

\end{proof} 

\section{Proof of Theorem \ref{thm-con-long}} \label{sec-pf-long} 

In order to prove the desired convergence result stated in Theorem
\ref{thm-con-long}, we will derive \emph{alternative} representations
of the solutions to the FBSDE systems
in~\eqref{eq:mean-FBSDE},~\eqref{eq:MeanField_infAggregated},
\eqref{eq:NASHFBSDE*} and \eqref{eq:MeanFieldGameFBSDE_inf} presented
in Sections~\ref{sec:finitePlayerSol} and~\ref{sec:infinitePlayerSol},
which are also of independent interest; see
Propositions~\ref{prop:FBSDE1},~\ref{prop:FBSDE2},~\ref{prop:FBSDE3},~\ref{prop:FBSDE4} below.

\paragraph{Notation.} In the following, for any matrix $A
= (a_{ij})_{1 \leq i \leq n, 1 \leq j \leq m} \in \mathbb{R}^{n \times
m}$ we denote $|A| = (|a_{ij}|)_{1 \leq i \leq n, 1 \leq j \leq m}$.

\subsection{The finite player aggregated FBSDE in
  \eqref{eq:mean-FBSDE}}

We fix an integer $N \geq1$ and define the matrices
\begin{equation} \label{mat-def} 
\ol F_{11} = 
\begin{pmatrix}
0 & 0 \\
0 & -\rho
\end{pmatrix}, \;
\ol F_{12} = 
\begin{pmatrix}
-1 & 0 \\
\gamma & 0
\end{pmatrix}, \;
\ol F_{21} = 
\begin{pmatrix}
-\frac{\phi}{\lambda} & \frac{\kappa\rho}{2\lambda} \\
0 & 0
\end{pmatrix}, \;
\ol F^N_{22} = 
\begin{pmatrix}
-\frac{\gamma\kappa(N-1)}{2\lambda N} & \frac{\rho}{2\lambda} \\
\frac{\gamma\kappa}{N} & \rho
\end{pmatrix}.
\end{equation}
Recall that $\ol F_{00}$ was defined in \eqref{b-00}. 

As the first step, let $W^N \in C^1([0,T],\mathbb{R}^{2\times 2})$ be a solution of the following Riccati differential equation
\begin{equation} \label{eq:riccati1}
    \dot W^N_t - \ol F_{21} - \ol F_{22}^N W^N_t + W^N_t \ol F_{11} + W^N_t \ol F_{12} W^N_t = 0, \quad W^N_T = \ol F_{00}.
\end{equation}

\noindent In fact, the solution $W^N$ of \eqref{eq:riccati1} can be
characterized via the solution to following matrix-valued linear
system, where $P^N, Q^N \in C^1([0,T],\mathbb{R}^{2\times 2})$ are such that
\begin{equation} \label{eq:matrixODE1}
    \frac{d}{dt}\begin{pmatrix} Q^N_t \\ P^N_t \end{pmatrix} =
    - \begin{pmatrix} \ol F_{11} & \ol F_{12} \\ \ol F_{21} &
      \ol F^N_{22} \end{pmatrix} \begin{pmatrix} Q^N_t \\
      P^N_t \end{pmatrix} \qquad (0 \leq t \leq T)
\end{equation}
and
$\begin{pmatrix} Q^N_0 \\ P^N_0 \end{pmatrix}= \begin{pmatrix} I \\
  \ol F_{00} \end{pmatrix}$. Moreover, the matrix-valued linear system in
\eqref{eq:matrixODE1} can be solved by computing a matrix
exponential. We specify these results in the following two lemmas.

\begin{lemma} \label{lem:matrixODE1}
Let 
\begin{equation} \label{eq:barFN}
    \ol F^N = \begin{pmatrix} \ol F_{11} & \ol F_{12} \\ \ol F_{21} & \ol F^N_{22} \end{pmatrix} \in \mathbb{R}^{4\times 4}.
\end{equation}
Then
\begin{equation} \label{eq:solmatrixODE1}  
    \ol K^N(t) = \exp(-\ol F^N \cdot t) \cdot \begin{pmatrix} I \\ \ol F_{00} \end{pmatrix} \in \mathbb{R}^{4\times 2}
\end{equation}
solves \eqref{eq:matrixODE1}.
\end{lemma}

\begin{lemma} \label{lem:riccati1}
Let $P^N, Q^N$ satisfy \eqref{eq:matrixODE1} and assume that 
\begin{equation} \label{q-aasump1} 
\liminf_{N \geq 1}\inf_{t \in [0,T]} |\det(Q^N_t)| > 0. 
\end{equation} 
Then the following holds: 
\begin{itemize} 
\item[\textbf{(i)}] \begin{equation} \label{eq:solriccati1}
    W^N_t = P^N_{T-t}(Q^N_{T-t})^{-1}  \quad  (0\leq t \leq T)
\end{equation}
solves \eqref{eq:riccati1};  
\item[\textbf{(ii)}]\begin{equation} \label{unif-w1} 
\sup_{N \geq 1} \sup_{t \in [0,T]} |W^N_t | < \infty.
\end{equation} 
\end{itemize} 
\end{lemma}

\begin{proof} (i) For the sake of readability we suppress the
  dependence on $N$. Computing the derivative
  in~\eqref{eq:solriccati1}, we obtain   
\begin{equation*}
    \begin{aligned}
       \dot W_t = & \, \frac{d}{dt}\left(P_{T-t}Q_{T-t}^{-1}\right) = \left(\frac{d}{dt}P_{T-t}\right) Q_{T-t}^{-1} + P_{T-t}\left(\frac{d}{dt}Q_{T-t}^{-1}\right) \\
       = & \, \left(\frac{d}{dt}P_{T-t}\right) Q_{T-t}^{-1} -  P_{T-t} Q_{T-t}^{-1} \left(\frac{d}{dt}Q_{T-t}\right)Q_{T-t}^{-1} \\
       = & \, \left(\ol F_{21}Q_{T-t}+\ol F^N_{22}P_{T-t}\right) Q_{T-t}^{-1} - P_{T-t} Q_{T-t}^{-1} \left(\ol F_{11}Q_{T-t}+\ol F_{12}P_{T-t}\right)Q_{T-t}^{-1} \\ = & \, \ol F_{21} + \ol F^N_{22} W_t - W_t \ol F_{11} - W_t \ol F_{12} W_{t}
    \end{aligned}
\end{equation*}
and hence (i).

(ii) From \eqref{q-aasump1} and the explicit formula for the inverse of $2\times 2$ matrices, also known as the cofactor equation, it follows that 
$$
\sup_{N \geq 1} \sup_{t \in [0,T]} |(Q^N_t)^{-1} | < \infty. 
$$
Moreover, from Lemma \ref{lem:matrixODE1} together with \eqref{mat-def} we get
$$
\sup_{N \geq 1} \sup_{t \in [0,T]} |P^N_t | < \infty, 
$$
and thus, using with \eqref{eq:solriccati1}, we obtain (ii).  

\end{proof}

\begin{remark} 
Note that from Lemmas \ref{lem:matrixODE1} and \ref{lem:riccati1} the existence of a classical solution $(W^N_t)_{t\in [0,T]}$ to \eqref{eq:riccati1} follows under the assumption $\det(Q^N_t)\neq 0$ for all $t \in [0,T]$. 
\end{remark}

Next, 
let $(f^N_t)_{t \in [0,T]}$ be the $L^2$ solution to the two-dimensional linear BSDE 
\begin{equation} \label{eq:BSDE1}
    df^N_t = (\ol F_{22}^N - W^N_t \ol F_{12}) f^N_t dt + \begin{pmatrix}
    \frac{dP_t}{2\lambda} + dM^{1,N}_t \\
    dM^{2,N}_t
    \end{pmatrix}, \quad f^N _T = \begin{pmatrix}
    0 \\ 0
    \end{pmatrix},
\end{equation}
for some square integrable martingales $M^{1,N},M^{2,N}$. Using
$W^N$ satisfying~\eqref{eq:riccati1} and $f^N$
satisfying~\eqref{eq:BSDE1} define the process
\begin{equation} \label{def:Y1}
    \mathbb{Y}^N_t := W^N_t \mathbb{X}^N_t + f^N_t \qquad (0 \leq t
    \leq T),
\end{equation}
where $\mathbb{X}^N$ is the solution to the linear ODE with random coefficients 
\begin{equation} \label{eq:SDEX1}
    d\mathbb{X}^N_t = (\ol F_{11} \mathbb{X}^N_t + \ol F_{12} (W^N_t \mathbb{X}^N_t + f^N_t)) dt = (\ol F_{11} \mathbb{X}^N_t + \ol F_{12} \mathbb{Y}^N_t) dt, \quad \mathbb{X}^N_0 = \begin{pmatrix}
    \bar x^N \\ y
    \end{pmatrix}
\end{equation}
and initial value $\bar x^N$ as in \eqref{eq:mean-FBSDE}.

\begin{proposition} \label{prop:FBSDE1} 
The process $
\begin{pmatrix}
    \mathbb{X}^N , \mathbb{Y}^N
    \end{pmatrix}^\top
    $
     from \eqref{def:Y1}--\eqref{eq:SDEX1} satisfies the finite player aggregated FBSDE system in \eqref{eq:mean-FBSDE}, i.e., using the notation from \eqref{eq:mean-FBSDE} we have 
\begin{equation} \label{sol:FBSDE1}
\mathbb{Y}^N_t = 
\begin{pmatrix}
\bar u^N_t \\ \overline Z^{\bar u^N}_t
\end{pmatrix} \qquad \text{and} \qquad
\mathbb{X}^N_t = 
\begin{pmatrix}
\overline X^{\bar{u}^N}_t \\ \overline Y^{\bar{u}^N}_t
\end{pmatrix}.
\end{equation}
\end{proposition} 

\begin{proof} For the sake of readability we suppress the dependence on $N$ throughout the proof. 
Applying It\^o's formula in~\eqref{def:Y1}, using~\eqref{eq:SDEX1}, \eqref{eq:riccati1} and \eqref{eq:BSDE1}, we obtain for all $t \in [0,T]$, 
\begin{equation} \label{eq:SDEY}
    \begin{aligned}
d\mathbb{Y}_t = &  \, \dot W_t \mathbb{X}_t dt + W_t d\mathbb{X}_t + df_t = \, \dot W_t \mathbb{X}_t dt + W_t (\ol F_{11} \mathbb{X}_t + \ol F_{12} (W_t \mathbb{X}_t + f_t)) dt + df_t \\
= & \, (\dot W_t + W_t \ol F_{11} + W_t \ol F_{12} W_t) \mathbb{X}_t dt + W_t \ol F_{12} f_t dt + df_t \\
= & \, (\ol F_{21}+\ol F_{22}^N W_t) \mathbb{X}_t dt + W_t \ol F_{12} f_t dt + (\ol F_{22}^N - W_t \ol F_{12}) f_t dt + \begin{pmatrix}
    \frac{dP_t}{2\lambda} + dM^1_t \\
    dM^2_t
    \end{pmatrix} \\
    = & \, \ol F_{21} \mathbb{X}_t dt + \ol F_{22}^N (W_t \mathbb{X}_t + f_t) dt + \begin{pmatrix}
    \frac{dP_t}{2\lambda} + dM^1_t \\
    dM^2_t
    \end{pmatrix} \\
    = & \, \ol F_{21} \mathbb{X}_t dt 
    + \ol F_{22}^N \mathbb{Y}_t dt + \begin{pmatrix}
    \frac{dP_t}{2\lambda} + dM^1_t \\
    dM^2_t
    \end{pmatrix}.
    \end{aligned}
\end{equation}
Therefore, equations~\eqref{eq:SDEX1} and~\eqref{eq:SDEY} yield 
\begin{equation} \label{eq:SDEXY}
\begin{aligned}
    d\begin{pmatrix}
    \mathbb{X}_t \\ \mathbb{Y}_t
    \end{pmatrix} = & \, 
    \begin{pmatrix}
    \ol F_{11} &  \ol F_{12} \\
    \ol F_{21} &  \ol F_{22}^N
    \end{pmatrix} 
      \begin{pmatrix}
    \mathbb{X}_t \\ \mathbb{Y}_t
    \end{pmatrix} dt + \begin{pmatrix}
    0 \\ 0 \\ \frac{dP_t}{2\lambda} + dM^1_t \\
    dM^2_t
    \end{pmatrix} \\
    = & \, \begin{pmatrix}
    0 & 0 & -1 & 0 \\
    0 & -\rho & \gamma & 0 \\
    -\frac{\phi}{\lambda} & \frac{\kappa\rho}{2\lambda} & -\frac{\gamma\kappa(N-1)}{2\lambda N} & \frac{\rho}{2\lambda} \\
    0 & 0 & \frac{\kappa\gamma}{N} & \rho
    \end{pmatrix}
    \begin{pmatrix}
    \mathbb{X}^1_t \\
    \mathbb{X}^2_t \\
    \mathbb{Y}^1_t \\
    \mathbb{Y}^2_t
    \end{pmatrix} dt
    + \begin{pmatrix}
    0 \\ 0 \\ \frac{dP_t}{2\lambda} + dM^1_t \\
    dM^2_t
    \end{pmatrix},
    \end{aligned}
\end{equation}
where $\mathbb{Y}_T = W_T \mathbb{X}_T + f_T = \ol F_{00} \mathbb{X}_T = \begin{pmatrix}
\frac{\varrho}{\lambda} \mathbb{X}^1_T - \frac{\kappa}{2\lambda} \mathbb{X}^2_T \\ 0
\end{pmatrix}$. In other words, the process $(\mathbb{X}, \mathbb{Y})$ from~\eqref{eq:SDEX1} and~\eqref{def:Y1}, i.e.,
$
    \begin{pmatrix} \mathbb{X}_t, \mathbb{Y}_t \end{pmatrix}^\top
    =     \begin{pmatrix} \mathbb{X}^1_t , \mathbb{X}^2_t , \mathbb{Y}^1_t , \mathbb{Y}^2_t \end{pmatrix}^\top
$
satisfies the finite player aggregated FBSDE system in \eqref{eq:mean-FBSDE}.
\end{proof}



\subsection{Convergence of \eqref{eq:mean-FBSDE} to the Mean Field
  FBSDE \eqref{eq:MeanField_infAggregated}}


Now, we first repeat the analysis above to get an alternative
representation for the solution to the infinite-player mean field
FBSDE system in \eqref{eq:MeanField_infAggregated}. Then, using the
latter, we show that it is indeed the limit of the solution of
\eqref{eq:mean-FBSDE} derived in Proposition~\ref{prop:FBSDE1} above as $N$
goes to infinity.

To this end, introduce the matrix
\begin{equation*}
    \ol F^{\infty}_{22} := \lim_{N\rightarrow\infty} \ol F_{22}^N = \begin{pmatrix}
-\frac{\gamma\kappa}{2\lambda} & \frac{\rho}{2\lambda} \\
0 & \rho 
\end{pmatrix}
\end{equation*}

\noindent and, instead of~\eqref{eq:riccati1}, let $W \in C^1([0,T],\mathbb{R}^{2\times 2})$ be the solution of the Riccati differential equation
\begin{equation} \label{eq:riccati2}
    \dot{W}_t - \ol F_{21} - \ol F_{22}^{\infty} W_t + W_t \ol F_{11} + W_t \ol F_{12} W_t = 0, \quad W_T = \ol F_{00}.
\end{equation}

\noindent Similar to above, the solution $W$ of \eqref{eq:riccati2} can be characterized via the solution of a matrix-valued linear system.  

\begin{lemma} \label{lem:riccati2}
Let $ P, Q \in C^1([0,T],\mathbb{R}^{2\times 2})$ such that
\begin{equation} \label{eq:matrixODE2}
    \frac{d}{dt}\begin{pmatrix} Q_t \\ P_t \end{pmatrix} =
    - \begin{pmatrix} \ol F_{11} & \ol F_{12} \\ \ol F_{21} &
      \ol F^{\infty}_{22} \end{pmatrix} \begin{pmatrix} Q_t \\
      P_t \end{pmatrix} \qquad (0 \leq t \leq T)
\end{equation}
and $\begin{pmatrix} Q_0 \\ P_0 \end{pmatrix}= \begin{pmatrix} I \\ \ol F_{00} \end{pmatrix}$. Assume that 
\begin{equation} \label{q-aasump2} 
\inf_{t \in [0,T]} |\det(Q_t)| > 0.
\end{equation}   Then the following holds: 
\begin{itemize} 
\item[\textbf{(i)}] 
\begin{equation}
    W_t = P_{T-t} Q_{T-t}^{-1}, \quad 0\leq t \leq T, 
\end{equation}
solves \eqref{eq:riccati2};
\item[\textbf{(ii)}] 
$$
\sup_{t\in[0,T]} |W_t|<\infty. 
$$
\end{itemize} 
\end{lemma}

\begin{proof}
The proof follows the same lines as the proof of
Lemma~\ref{lem:riccati1} above.
\end{proof}

Again, as above, the matrix-valued linear system in \eqref{eq:matrixODE2} can be solved via a matrix exponential.

\begin{lemma} \label{lem:matrixODE2} 
Let 
\begin{equation} \label{eq:barF}
    \ol F = \begin{pmatrix} \ol F_{11} & \ol F_{12} \\ \ol F_{21} & \ol F^\infty_{22} \end{pmatrix} \in \mathbb{R}^{4\times 4}
\end{equation}
Then
\begin{equation} \label{eq:solmatrixODE2}
    \ol K(t) = \exp(-\ol F \cdot t) \cdot \begin{pmatrix} I \\ \ol F_{00} \end{pmatrix} \in \mathbb{R}^{4\times 2}
\end{equation}
solves \eqref{eq:matrixODE2}.
\end{lemma}

The convergence of $W^N$ solving~\eqref{eq:riccati1} to $W$
solving~\eqref{eq:riccati2} as $N$ goes to infinity is readily given by 

\begin{lemma} \label{lemma-con-w1} 
Assume \eqref{q-aasump1} and \eqref{q-aasump2}. Let $W^N$ solve~\eqref{eq:riccati1} and $W$ solve~\eqref{eq:riccati2}. Then 
\begin{equation}
\lim_{N\rightarrow \infty} \sup_{t\in [0,T]} |W^N_t -  W_t|=0.
\end{equation}
\end{lemma}

\begin{proof}
Note that for any $N\geq 1$ and $t \in [0,T]$ we have 
\begin{equation} \label{q-eq1} 
Q_t^{-1} - (Q_t^N)^{-1}  =  (Q_t^N)^{-1}\big(Q_t^N - Q_t \big) (Q_t)^{-1}.
\end{equation} 
From \eqref{q-aasump1} and \eqref{q-aasump2} it follows that 
\begin{equation} \label{q-eq2} 
\limsup_{N \geq 1}\sup_{t\in [0,T]}|(Q_t^N)^{-1}|<\infty, \quad \sup_{t\in [0,T]}|(Q_t)^{-1}|<\infty. 
\end{equation} 
Since $\ol F^N$ in~\eqref{eq:barFN} converges to $\ol F$ in~\eqref{eq:barF} as
$N \rightarrow \infty$, it follows that $\ol K^N$
in~\eqref{eq:solmatrixODE1} converges to $\ol K$ in~\eqref{eq:solmatrixODE2} as $N \rightarrow \infty$. But this implies that the solution $Q^N,P^N$ of~\eqref{eq:matrixODE1} converges to the solution $Q, P$ of~\eqref{eq:matrixODE2} uniformly on $[0,T]$. Moreover, 
we get from \eqref{q-eq1} and  \eqref{q-eq2} that 
$$
\lim_{N\rightarrow \infty} \sup_{t\in [0,T]}|Q_t^{-1} - (Q_t^N)^{-1}|=0.
$$
Together with Lemma \ref{lem:riccati1}(i)  and Lemma \ref{lem:riccati2}(i) we get the result. 
\end{proof}

Next, 
let $(f_t)_{t\in [0,T]}$ be the $L^2$ solution to the two-dimensional linear BSDE 
\begin{equation} \label{eq:BSDE2}
    df_t = (\ol F_{22}^{\infty} - W_t \ol F_{12}) f_t dt + \begin{pmatrix}
    \frac{dP_t}{2\lambda} + dM^1_t \\
    dM^2_t
    \end{pmatrix}, \quad f_T = \begin{pmatrix}
    0 \\ 0
    \end{pmatrix}
\end{equation}
for some square integrable martingales $M^1, M^2$. The convergence of $f^N$ solving \eqref{eq:BSDE1} to $f$ solving
\eqref{eq:BSDE2} as $N$ tends to infinity is established in

\begin{proposition} \label{prop-f-con1} 
Let $f^N$ be the solution to \eqref{eq:BSDE1} and $f$ be the solution to \eqref{eq:BSDE2}. Then we have 
$$
\lim_{N\rightarrow \infty} \sup_{t\in [0,T]} |f^N_t- f_t|  =0, \quad P-\textrm{a.s.} 
$$
\end{proposition}

\begin{proof} 
From \eqref{mat-def} and Lemma \ref{lemma-con-w1} it follows that 
$$
\lim_{N\rightarrow \infty} \sup_{t\in [0,T]} |\ol F_{22}^{N} - W^N_t \ol F_{12}- (\ol F_{22}^{\infty} - W_t \ol F_{12}) | =0.  
$$
As a consequence, using standard stability results for linear BDSEs (see, e.g., Theorem 2 in \cite{BAHLALI}), we get 
\begin{equation}  \label{bla} 
\lim_{N\rightarrow \infty}E\left[\sup_{t\in [0,T]} |f^N_t- f_t| \right]=0. 
\end{equation}
The result then follows by applying Fatou's lemma and using \eqref{bla}. 
 \end{proof} 
 
 Finally, using $W$ satisfying~\eqref{eq:riccati2} and $f$
 satisfying~\eqref{eq:BSDE2}, define similar to~\eqref{def:Y1} above
 the process
\begin{equation} \label{def:Y2}
    \mathbb{Y}_t := W_t \mathbb{X}_t + f_t \qquad (0 \leq t \leq T),
\end{equation}
where $\mathbb{X}$ is the solution to the linear ODE with random coefficients
\begin{equation} \label{eq:SDEX2}
    d\mathbb{X}_t = (\ol F_{11} \mathbb{X}_t + \ol F_{12} (W_t \mathbb{X}_t + f_t)) dt = (\ol F_{11} \mathbb{X}_t + \ol F_{12} \mathbb{Y}_t) dt, \quad \mathbb{X}_0 = \begin{pmatrix}
    \tilde x \\ y
    \end{pmatrix},
\end{equation}
and initial value $\tilde x$ as in \eqref{eq:MeanField_infAggregated}. 

\begin{proposition} \label{prop:FBSDE2} 
The process $
\begin{pmatrix} 
    \mathbb{X} , \mathbb{Y}
    \end{pmatrix}^\top
    $ in \eqref{def:Y2}--\eqref{eq:SDEX2}
     satisfies the infinite player mean field FBSDE system
     in~\eqref{eq:MeanField_infAggregated}, i.e., using the notation
     from \eqref{eq:MeanField_infAggregated} we have
\begin{equation} \label{sol:FBSDE2}
\mathbb{Y}_t = 
\begin{pmatrix}
\tilde \nu_t \\ 0
\end{pmatrix} \qquad \text{and} \qquad
\mathbb{X}_t = 
\begin{pmatrix}
\tilde X^{\tilde{\nu}}_t \\ \tilde Y^{\tilde{\nu}}_t. 
\end{pmatrix}.
\end{equation}
 \end{proposition} 

 The proof of Proposition \ref{prop:FBSDE2} is the
  same as the proof of Proposition \ref{prop:FBSDE1} and hence omitted.


\subsection{Proof of Theorem \ref{thm-con-long}(i)}

We introduce the following norms: For any integer $d \geq 1$ and any
vector $\mathbf{x}=(x_1,...,x_d) \in \mathbb{R}^d$ we define
$$
\|\mathbf x\|_{\ell_1} = \sum_{i=1}^d |x_i|;
$$ 
for any matrix $C=(c_{ij})_{1 \leq i,j \leq d} \in
\mathbb{R}^{d \times d}$ we define 
$$
\|C\|_{\textrm{max}} = \max_{1\leq i,j \leq d} |c_{ij}|.
$$
\begin{proof}[Proof of Theorem \ref{thm-con-long}(i)] 
Recall that from assumptions \eqref{init-assum} and
\eqref{eq:initPosLimit} we have $X^{\hat{u}^{i,N}}_0 =X^{\hat v^i}_0
= x^{i}$ for all $i=1,\ldots,N$, $N\geq 1$, and $\bar x^N \rightarrow
\tilde x$ as $N \rr \infty$. One can easily verify from \eqref{eq:matrixODE1}--\eqref{eq:solmatrixODE1}, \eqref{eq:matrixODE2}, \eqref{eq:solmatrixODE2} 
that \eqref{assump-k} implies \eqref{q-aasump1} and \eqref{q-aasump2}. We will show that 
 \be \label{l1-conv} 
\lim_{N \rightarrow \infty}  \sup_{t\in [0,T]} \big(|\mathbb{X}^N_t - \mathbb{X}_t|  + |\mathbb{Y}^N_t - \mathbb{Y}_t| \big)=0 \quad \textrm{a.s.}, 
 \ee
 which together with \eqref{sol:FBSDE1} and \eqref{sol:FBSDE2} will imply the claim in Theorem \ref{thm-con-long}(i). 
 
 From \eqref{eq:SDEX1} and \eqref{eq:SDEX2} we have 
\begin{equation} \label{gr0} 
\begin{aligned} 
 |\mathbb{X}^N_t-  \mathbb{X}_t| & \leq |\bar x^N -\tilde x| + \int_0^T |\ol F_{11}| | \mathbb{X}^N_s  - \mathbb{X}_s | ds  \\
 & \quad + \int_{0}^t |\ol F_{12}||  W^N_s \mathbb{X}^N_s -W_s \mathbb{X}_s| ds + \int_0^t|\ol F_{12}| |f_s^N- f_s | ds  \\
&\leq   |\bar x^N -\tilde x| + \int_0^T |\ol F_{11}|| \mathbb{X}^N_s  - \mathbb{X}_s | ds  \\
 & \quad + \int_{0}^t |\ol F_{12}||  W^N_s \mathbb{X}^N_s -  W^N_s \mathbb{X}_s| ds+ \int_{0}^t |\ol F_{12}| |  W^N_s \mathbb{X}_s - W_s \mathbb{X}_s| ds \\
 &\quad + \int_0^t|\ol F_{12}| |f_s^N- f_s | ds.  
 \end{aligned} 
\end{equation} 
Multiplying both sides of \eqref{gr0} by $(1,1)$ from the left, we get 
\begin{equation} \label{gr1} 
\begin{aligned} 
 \|\mathbb{X}^{N}_t -  \mathbb{X}_t\|_{\ell_1}   
&\leq  \bigg( \|\bar x^N -\tilde x\|_{\ell_1} +2 \|\ol F_{11}\|_{\textrm{max}}  \int_0^T \| \mathbb{X}^N_s  - \mathbb{X}_s \|_{\ell_1} ds  \\
 & \quad +4\|\ol F_{12}\|_{\textrm{max}} \sup_{N \geq 1}\sup_{ t\in [0,T]} \|W^N_t\|_{\textrm{max}}  \int_{0}^t \|  \mathbb{X}^N_s -  \mathbb{X}_s\|_{\ell_1} ds \\
 &\quad + 8\|\ol F_{12}\|_{\textrm{max}}  \sup_{t \in [0,T]} \| \mathbb{X}_t \|_{\ell_1}\int_{0}^t  \| W^N_s - W_s \|_{\textrm{max}} ds \\
 &\quad +2\|\ol F_{12}\|_{\textrm{max}} \int_0^t \|f_s^N- f_s \|_{\ell_1} ds.  
 \end{aligned} 
\end{equation} 

Now, let $\varepsilon>0$ be arbitrary small. From
Lemma~\ref{lemma-con-w1}, Proposition~\ref{prop-f-con1} and
\eqref{gr1} it follows that for all $N$ sufficiently large  
\begin{equation} \label{gr2} 
\begin{aligned} 
 \sup_{s \in [0,t]}  \|\mathbb{X}^{N}_s-  \mathbb{X}^{}_s\|_{\ell_1}   
&\leq  \varepsilon +C_1   \int_0^T  \sup_{r\in [0,s] }\| \mathbb{X}^N_r  - \mathbb{X}_r \|_{\ell_1} ds  \\
  &\quad +  T C_2 \varepsilon, 
 \end{aligned} 
\end{equation} 
where $C_i>0$ are constants not depending on $N$ and $t$. It then follows from Gronwall's lemma that there exists a constant $C(T)>0$ such that 
\begin{equation} \label{gr3} 
 \sup_{s \in [0,T]}  \|\mathbb{X}^{N}_s-  \mathbb{X}^{}_s\|_{\ell_1}   \leq C(T)\varepsilon. 
\end{equation} 
Since $\varepsilon>0$ is arbitrary small, the first part in~\eqref{l1-conv} follows from \eqref{gr3}. 

Finally, using Lemma~\ref{lemma-con-w1}, Proposition~\ref{prop-f-con1}, \eqref{def:Y1}, \eqref{def:Y2} and \eqref{gr3} we get  
\begin{equation} \label{gr4} 
\lim_{N \rightarrow \infty}  \sup_{s \in [0,T]}  \|\mathbb{Y}^{N}_s-  \mathbb{Y}^{}_s\|_{\ell_1}    =0, 
\end{equation} 
which together with \eqref{gr3} completes the proof.
\end{proof}



\subsection{Proof of Theorem \ref{thm-con-long}(ii)} 

The proof of Theorem \ref{thm-con-long}(ii) follows the same rationale
as the proof of part (i) presented above. Therefore, we only provide
the outline in order to avoid repetition. Specifically, the idea is
once more to derive alternative representations for the solutions to
the FBSDEs in \eqref{eq:NASHFBSDE*} and
\eqref{eq:MeanFieldGameFBSDE_inf}, and then to argue that the former
converges to the latter as $N$ goes to infinity.

We start with fixing an integer $N \geq 1$ and defining the matrices
\begin{equation*}
\hat F_{12} = 
\begin{pmatrix}
-1 & 0 \\
0 & 0
\end{pmatrix}, \;
\hat F_{21} = 
\begin{pmatrix}
-\frac{\phi}{\lambda} & 0 \\
0 & 0
\end{pmatrix}, \;
\hat F^N_{22} = 
\begin{pmatrix}
\frac{\gamma\kappa}{2\lambda N} & \frac{\rho}{2\lambda} \\
\frac{\gamma\kappa}{N} & \rho
\end{pmatrix}.
\end{equation*}
Recall that $\hat F_{00}$ was defined in \eqref{b-00}. 
Let $\wt W^N \in C([0,T],\mathbb{R}^{2\times 2})$ be a solution to the following Riccati differential equation
\begin{equation} \label{eq:riccati3}
    \dot{ \wt W}^N_t - \hat F_{21} - \hat F_{22}^N \wt W^N_t + \wt W^N_t \hat F_{12} \wt W^N_t = 0, \quad \wt W^N_T = \hat F_{00}.
\end{equation}

\noindent Similar to above, the solution $\wt W^N$ of \eqref{eq:riccati3} can be
characterized via the solution to following matrix-valued linear
system, where $\wt P^N,\wt Q^N \in C^1([0,T],\mathbb{R}^{2\times 2})$
are such that
\begin{equation} \label{eq:matrixODE3}
    \frac{d}{dt}\begin{pmatrix} \wt Q^N_t \\ \wt P^N_t \end{pmatrix} =
    - \begin{pmatrix} 0 & \hat F_{12} \\ \hat F_{21} &
      \hat F^N_{22} \end{pmatrix} \begin{pmatrix} \wt Q^N_t \\ \wt
      P^N_t \end{pmatrix} \qquad (0 \leq t \leq T)
\end{equation}
and $\begin{pmatrix} \wt Q^N_0 \\ \wt
  P^N_0 \end{pmatrix}= \begin{pmatrix} I \\
  \hat F_{00} \end{pmatrix}$. Moreover, the matrix-valued linear system in
\eqref{eq:matrixODE3} can again be
solved via a matrix exponential. We collect these results in the
following two lemmas which are the counterparts to Lemmas
\ref{lem:riccati2} and \ref{lem:matrixODE2} above.

\begin{lemma} \label{lem:matrixODE3}
Let 
\begin{equation} \label{eq:GN}
    \hat F^N = \begin{pmatrix} 0 & \hat F_{12} \\\hat F_{21} & \hat F^N_{22} \end{pmatrix} \in \mathbb{R}^{4\times 4}.
\end{equation}
Then
\begin{equation} \label{eq:solmatrixODE3}
    \hat K^N(t) = \exp(-\hat F^N \cdot t) \cdot \begin{pmatrix} I \\ \hat F_{00} \end{pmatrix} \in \mathbb{R}^{4\times 2}
\end{equation}
solves \eqref{eq:matrixODE3}.
\end{lemma}

\begin{lemma} \label{lem:riccati3}
Let $\wt P^N, \wt  Q^N$ satisfy \eqref{eq:matrixODE3} and assume that 
\begin{equation} \label{q-aasump3} 
\liminf_{N \geq 1}\inf_{t \in [0,T]} |\det(\wt Q^N_t)| > 0. 
\end{equation} 
Then the following holds: 
\begin{itemize} 
\item[\textbf{(i)}] \begin{equation} \label{eq:solriccati3}
    \wt W^N_t = \wt P^N_{T-t}(\wt Q^N_{T-t})^{-1}  \quad  (0\leq t \leq T),
\end{equation}
solves \eqref{eq:riccati3};  
\item[\textbf{(ii)}]\begin{equation*} \label{unif-w2} 
\sup_{N \geq 1} \sup_{t \in [0,T]} |\wt W^N_t | < \infty.
\end{equation*} 
\end{itemize} 
\end{lemma}

Next,  let $\wt Y^N := \mathbb{X}^{N,2}$ denote the second component
of the solution to the linear SDE in~\eqref{eq:SDEX1}, which is given
in~\eqref{sol:FBSDE1}. In addition, 
let $(g^N_t)_{t \in [0,T]}$ be the $L^2$ solution to the two-dimensional linear BSDE 
\begin{equation} \label{eq:BSDE3}
    dg^N_t = (\hat F_{22}^N - \wt W^N_t \hat F_{12}) g^N_t dt + \begin{pmatrix}
    \frac{d(P_t-\kappa \wt Y^N_t)}{2\lambda} + d \wt M^{N}_t \\
    0
    \end{pmatrix}, \quad g^N _T = \begin{pmatrix}
    -\frac{\kappa}{2\lambda} \wt Y^N_T \\ 0
    \end{pmatrix}
\end{equation}
for some square integrable martingale $\wt M$. Using $\wt W^N$ satisfying~\eqref{eq:riccati3} and $g^N$ satisfying~\eqref{eq:BSDE3} define for each $i \in \{1, \ldots, N\}$ the process
\begin{equation} \label{def:Y3}
  \wt{\mathbb{Y}}^{i,N}_t := \wt W^N_t \wt{\mathbb{X}}^{i,N}_t + g^N_t \qquad (0\leq t \leq T),
\end{equation}
where $\wt{\mathbb{X}}^{i,N}$ is the solution to the linear ODE with
random coefficients
\begin{equation} \label{eq:SDEX3}
    d\wt{\mathbb{X}}^{i,N}_t = (\hat F_{12} (\wt{W}^N_t \wt{\mathbb{X}}^{i,N}_t + g^N_t)) dt = (\hat F_{12}\wt{\mathbb{Y}}^{i,N}_t) dt, \quad \wt{\mathbb{X}}^{i,N}_0 = \begin{pmatrix}
    x^{i,N} \\ 0
    \end{pmatrix}
\end{equation}
and initial value $x^{i,N}$ as in \eqref{eq:NASHFBSDE*}. 

\begin{proposition} \label{prop:FBSDE3}
  The process
$
\begin{pmatrix} 
   \wt{\mathbb{X}}^{i,N} , \wt{\mathbb{Y}}^{i,N}
    \end{pmatrix}^\top
    $
     in~\eqref{def:Y3}--\eqref{eq:SDEX3} satisfies the $i$-th player FBSDE system in
     \eqref{eq:NASHFBSDE*}, i.e., using the notation from
     \eqref{eq:NASHFBSDE*} we have
\begin{equation*} \label{sol:FBSDE3}
\wt{\mathbb{Y}}^{i,N}_t = 
\begin{pmatrix}
u^{i,N}_t \\ Z^{u^{i,N}}_t
\end{pmatrix} \qquad \text{and} \qquad
\wt{\mathbb{X}}^{i,N}_t = 
\begin{pmatrix}
X^{u^{i,N}}_t \\ 0
\end{pmatrix}.
\end{equation*}
\end{proposition}

The proof of Proposition~\ref{prop:FBSDE3} follows the same lines as
the proof of Proposition~\ref{prop:FBSDE1} and is thus omitted.

Finally, as in the proof of Theorem \ref{thm-con-long}(i) above, we
also derive and alternative representation of the solution to
\eqref{eq:MeanFieldGameFBSDE_inf} in order to prove the desired
convergence. To achieve this, we introduce the matrix
\begin{equation}
    \hat F^{\infty}_{22} := \lim_{N\rightarrow\infty} \hat F_{22}^N = \begin{pmatrix}
0 & \frac{\rho}{2\lambda} \\
0 & \rho
\end{pmatrix}
\end{equation}
and, instead of~\eqref{eq:riccati3}, let $ \wt W \in C([0,T],\mathbb{R}^{2\times 2})$ be the solution to the following Riccati differential equation
\begin{equation} \label{eq:riccati4}
    \dot{\wt W}_t - \hat F_{21} - \hat F_{22}^\infty \wt W_t +\wt W_t \hat F_{12} \wt W_t = 0, \quad \wt W_T = \hat F_{00}.
\end{equation}
Again, $\wt W$ solving \eqref{eq:riccati4} can be characterized by
the solution of a matrix-valued linear system, which in turn can be 
solved via a matrix exponential.

\begin{lemma} \label{lem:riccati4}
Let $\wt P,\wt Q \in C^1([0,T],\mathbb{R}^{2\times 2})$ such that
\begin{equation} \label{eq:matrixODE4}
    \frac{d}{dt}\begin{pmatrix} \wt Q_t \\ \wt P_t \end{pmatrix} = - \begin{pmatrix} 0 & \hat F_{12} \\ \hat F_{21} & \hat F^{\infty}_{22} \end{pmatrix} \begin{pmatrix} \wt Q_t \\ \wt P_t \end{pmatrix} \qquad \text{for all } t \in [0,T]
\end{equation}
and $\begin{pmatrix} \wt Q_0 \\ \wt P_0 \end{pmatrix}= \begin{pmatrix} I \\ \hat F_{00} \end{pmatrix}$. 
 Assume that 
\begin{equation} \label{q-aasump4} 
\inf_{t \in [0,T]} |\det(\wt Q_t)| > 0. 
\end{equation} 
Then
\begin{equation*}
    \wt W_t =\wt P_{T-t} \wt  Q_{T-t}^{-1}
\end{equation*}
solves \eqref{eq:riccati4}.
\end{lemma}
 

\begin{lemma} \label{lem:matrixODE4}
Let 
\begin{equation} \label{eq:F}
    \hat F = \begin{pmatrix} 0 & \hat F_{12} \\ \hat F_{21} & \hat F^\infty_{22} \end{pmatrix} \in \mathbb{R}^{4\times 4}
\end{equation}
Then
\begin{equation} \label{eq:solmatrixODE4}
   \hat K(t) = \exp(-\hat F \cdot t) \cdot \begin{pmatrix} I \\ \hat F_{00} \end{pmatrix} \in \mathbb{R}^{4\times 2}
\end{equation}
solves \eqref{eq:matrixODE4}.
\end{lemma}
 The convergence of $\wt W^N$ solving~\eqref{eq:riccati3} to $\wt W$ solving~\eqref{eq:riccati4} as $N\rightarrow \infty$ is given by 
\begin{lemma} \label{lemma-con-w2} 
Assume \eqref{q-aasump3} and \eqref{q-aasump4}. Let $\wt W^N$
solve~\eqref{eq:riccati3} and $\wt W$ solve~\eqref{eq:riccati4}. Then 
\begin{equation}
\lim_{N\rightarrow \infty} \sup_{t\in [0,T]} |\wt W^N_t - \wt W_t|=0.
\end{equation}
\end{lemma}

\begin{proof}
Follows as in the proof of Lemma~\ref{lemma-con-w1}.
\end{proof}

 Let $\wt  Y := \mathbb{X}^2$ be the second component of the solution
 of the linear SDE in~\eqref{eq:SDEX2}, which is given
 in~\eqref{sol:FBSDE2}, 
and let $(g_t)_{t \in [0,T]}$ be the $L^2$ solution to the two-dimensional linear BSDE 
\begin{equation} \label{eq:BSDE4}
    dg_t = (\hat F_{22}^\infty -\wt  W_t \hat F_{12}) g_t dt + \begin{pmatrix}
    \frac{d(P_t - \kappa \wt  Y_t)}{2\lambda} + d\wt M_t \\
    0
    \end{pmatrix}, \quad g_T = \begin{pmatrix}
    -\frac{\kappa}{2\lambda} \wt Y_T \\ 0
    \end{pmatrix},
\end{equation}
for some square integrable martingale $\wt M$.  

\begin{proposition} \label{prop-f-con2} 
Let $g^N$ be the solution to \eqref{eq:BSDE3} and $g$ be the solution to \eqref{eq:BSDE4}. Then we have 
$$
\lim_{N\rightarrow \infty} \sup_{t\in [0,T]} |g^N_t- g_t|  =0, \quad P-\textrm{a.s.} 
$$
\end{proposition} 
The proof of Proposition \ref{prop-f-con2} follows the same reasoning
as the proof of Proposition~\ref{prop-f-con1}.
 
 Lastly, using $\wt W$ satisfying~\eqref{eq:riccati4} and $g$ satisfying~\eqref{eq:BSDE4}, define as above in~\eqref{def:Y3} for each $i \in \mathbb{N}$ the process
\begin{equation} \label{def:Y4}
   \wt{\mathbb{Y}}^i_t := \wt W_t \wt{\mathbb{X}}^i_t + g_t \qquad (0\leq t \leq T),
\end{equation}
where $\wt{\mathbb{X}}^i$ is the solution to the linear ODE with
random coefficients
\begin{equation} \label{eq:SDEX4}
    d\wt{\mathbb{X}}^i_t = \hat F_{12} (\wt W_t \wt{\mathbb{X}}^i_t + g_t) dt = \hat F_{12} \wt{\mathbb{Y}}^i_t dt, \quad \wt{\mathbb{X}}_0 = \begin{pmatrix}
    x^{i} \\ 0
    \end{pmatrix}
\end{equation}
and initial value $x^{i}$ as in \eqref{eq:MeanFieldGameFBSDE_inf}.


\begin{proposition} \label{prop:FBSDE4} 
The process $
\begin{pmatrix} 
    \wt{\mathbb{X}}^i , \wt{\mathbb{Y}}^i
    \end{pmatrix}^\top
    $
     in~\eqref{def:Y4}--\eqref{eq:SDEX4} satisfies the $i$-th player FBSDE system in \eqref{eq:MeanFieldGameFBSDE_inf}, i.e., using the notation from equation  \eqref{eq:MeanFieldGameFBSDE_inf},
\begin{equation} \label{sol:FBSDE4}
\wt{\mathbb{Y}}^i_t = 
\begin{pmatrix}
\hat v^i_t \\ 0
\end{pmatrix} \qquad \text{and} \qquad
\wt{\mathbb{X}}^i_t = 
\begin{pmatrix}
{X}^{\hat{v}^i}_t \\ 0
\end{pmatrix}.
\end{equation}
 \end{proposition}

We are now ready to proof Theorem \ref{thm-con-long}(ii). 
\begin{proof}[Proof of Theorem \ref{thm-con-long}(ii)] The proof of
  Theorem \ref{thm-con-long}(ii) is similar to part (i).  One can easily verify that \eqref{assump-k} implies \eqref{q-aasump3} and \eqref{q-aasump4}. Using
  Lemma \eqref{lemma-con-w2}, Proposition \ref{prop-f-con2} and
  Gronwall's Lemma we arrive at
 $$
\lim_{N \rightarrow \infty}  \sup_{t\in [0,T]} \big(|\wt{\mathbb{Y}}^{i,N}_t -\wt{ \mathbb{Y}}^i_t|  + |\wt{\mathbb{X}}^{i,N}_t - \wt{\mathbb{X}}^i_t| \big)=0 \quad \textrm{a.s.}
 $$
 and the result follows. 
 \end{proof}

\section{Proof of Theorem \ref{thm-eps-nash}}  \label{sec-eps-pf} 
We first introduce the following auxiliary lemma. 
\begin{lemma}  \label{lemma-con-mf} 
Let $\hat \nu$ and $\{\hat v^i, \, i\in \mathbb{N}\}$ be the equilibrium strategies of the mean field game as in Definition \ref{def:Nash_infMFG} and assume \eqref{init-assum} and \eqref{a-const}. 
Then we have  
$$
\sup_{t\in [0,T]} E\left[\left(\hat \nu_{t}-\frac{1}{N}\sum_{i=1}^N \hat v_t^i\right)^{2}\right]= O\left(\frac{1}{N^2}\right). 
$$
\end{lemma}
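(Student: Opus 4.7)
The plan is to obtain this estimate essentially for free from Theorem~\ref{thm-strat-con}, by inserting the finite-player Nash strategies $\hat u^{i,N}$ as an intermediate object and invoking the two bounds already established there. Recall that the finite-player game uses initial positions $x^{i,N}=x^i$ by the convention fixed just above \eqref{init-assum}, so both $\hat u^{i,N}$ and $\hat v^i$ are associated with the same starting inventory $x^i$, making the triangle-inequality comparison meaningful.

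Concretely, I would write
\begin{equation*}
\hat\nu_t - \frac{1}{N}\sum_{i=1}^N \hat v_t^i
= \left(\hat\nu_t - \frac{1}{N}\sum_{i=1}^N \hat u_t^{i,N}\right)
+ \frac{1}{N}\sum_{i=1}^N \bigl(\hat u_t^{i,N} - \hat v_t^i\bigr),
\end{equation*}
and then apply $(a+b)^2\le 2a^2 + 2b^2$ together with the discrete Cauchy--Schwarz inequality $\bigl(\tfrac{1}{N}\sum_{i=1}^N a_i\bigr)^2 \le \tfrac{1}{N}\sum_{i=1}^N a_i^2$, yielding
\begin{equation*}
E\!\left[\left(\hat\nu_t - \tfrac{1}{N}\sum_{i=1}^N \hat v_t^i\right)^{2}\right]
\le 2\, E\!\left[\left(\hat\nu_t - \tfrac{1}{N}\sum_{i=1}^N \hat u_t^{i,N}\right)^{2}\right]
+ 2\, \sup_{1\le i\le N} E\!\left[\bigl(\hat u_t^{i,N} - \hat v_t^i\bigr)^{2}\right].
\end{equation*}

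Both terms on the right are $O(N^{-2})$ uniformly in $t\in[0,T]$ by the two estimates in Theorem~\ref{thm-strat-con}, which is applicable because we are assuming exactly the same hypotheses \eqref{init-assum} and \eqref{a-const}. Taking the supremum over $t$ then delivers the claim. There is no real obstacle here: Theorem~\ref{thm-strat-con} has already done all the hard analytic work (the delicate FBSDE comparison via Gronwall) and this lemma is merely its corollary, swapping the finite-player Nash average for the mean-field average at negligible cost of a constant factor of $2$.
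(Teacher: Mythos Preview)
Your proof is correct and follows essentially the same route as the paper: both insert the finite-player Nash strategies $\hat u^{i,N}$ as an intermediary, apply $(a+b)^2\le 2a^2+2b^2$, and then invoke the two $O(N^{-2})$ bounds of Theorem~\ref{thm-strat-con}. The only cosmetic difference is that the paper bounds the averaged difference via \eqref{2-p} and sums over $i$, whereas you pass directly to the supremum over $i$; both lead to the same conclusion.
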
 
\begin{proof}
 For such $\tilde \nu$, let $\{\hat v^i, i \in \mathbb{N}\}$ be the solution of \eqref{eq:MeanFieldGameFBSDE_inf} and let $\{\hat u^{i,N}, 1 \leq i \leq N\}$ be the solutions of the $N$-player system \eqref{eq:NASHFBSDE}, with similar initial conditions as the mean field game. From the proof of Theorem \ref{thm-strat-con} it follows that there exists $\hat C>0$ not depending on $N$ and $s\in [0,T]$ such that  
\bd  
\begin{aligned}
&E\left[\left(\tilde \nu_{s}-\frac{1}{N}\sum_{i=1}^N \hat v_s^i\right)^{2}\right] \\
&\leq 2E\left[\left(\tilde \nu_{s}-\frac{1}{N}\sum_{i=1}^N \hat u_s^{i,N}\right)^{2} \right] +2E\left[\left(\frac{1}{N}\sum_{i=1}^N \hat u_s^{i,N}-\frac{1}{N}\sum_{i=1}^N \hat v_s^i\right)^2\right] \\
&\leq C\frac{1}{N^2}+ C\frac{1}{N^2} N \sum_{i=1}^N  E\left[\left( \hat u_s^{i,N}- \hat v_s^i\right)^2\right] \\
&\leq  \hat C\frac{1}{N^2}, 
\end{aligned}
\ed
where we used \eqref{2-p} in the second inequality. 
 \end{proof}

In order to prove Theorem \ref{thm-eps-nash} we will need the
following lemma that bounds the difference between the performance
functional of the mean field game, $J^{i,\infty}$ in
\eqref{def:objective_infMFG} and the $N$-player game's performance functional, $J^{i,N}$ in \eqref{def:FPGobjective}. 

\begin{lemma}\label{lem-j-dif} Let $\hat \nu$ and $\{\hat v^i, \, i\in \mathbb{N}\}$ be the equilibrium strategies of the mean field game as in Definition \ref{def:Nash_infMFG}. For any $N \in \mathbb{N}$ and $1\leq i \leq N$, define  $\hat{v}^{-i}=(\hat{v}^{1},...\hat{v}^{i-1},\hat{v}^{i+1},...,\hat{v}^{N})$. Then under assumptions \eqref{init-assum} and \eqref{a-const} there exists $C>0$ independent from $N$ such that for all $u \in \mathcal A$ and $i \in \mathbb{N}$ we have   
$$
\left| J^{i,N}(u,\hat{v }^{-i}) -  {J}^{i,\infty}(u,\hat \nu)\right| \leq C \|u\|_{2,T}(1+\|u\|_{2,T}) \left(\frac{1}{N} \right). 
$$
\end{lemma}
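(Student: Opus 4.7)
The approach is to observe that most terms in $J^{i,N}(u,\hat{v}^{-i})$ and $J^{i,\infty}(u,\hat{\nu})$ are identical, since the running inventory penalty $\phi\int_0^T (X_t^u)^2 dt$, the slippage $\lambda\int_0^T u_t^2 dt$ and the terminal book value $X_T^u(P_T - \varrho X_T^u)$ depend only on agent $i$'s own control $u$ and not on what the other players (or the mean field) are doing. The only source of discrepancy is the revenue term, which is evaluated against $S^{u^N}_t = P_t - \kappa Y^{(u,\hat{v}^{-i})^N}_t$ in the finite-player case and against $S^{\hat{\nu}}_t = P_t - \kappa Y^{\hat{\nu}}_t$ in the mean field case. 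Cancelling the common $P_t$-contribution therefore gives
\begin{equation*}
J^{i,N}(u,\hat{v}^{-i}) - J^{i,\infty}(u,\hat{\nu}) \;=\; -\kappa\, E\!\left[\int_0^T \bigl(Y^{(u,\hat{v}^{-i})^N}_t - Y^{\hat{\nu}}_t\bigr)\, u_t\, dt\right].
\end{equation*}

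Next I would decompose the integrand of $Y^{(u,\hat{v}^{-i})^N}_t - Y^{\hat{\nu}}_t$ using the explicit formula in \eqref{def:Y} and \eqref{def:Y_infMFG}: since the two initial conditions $e^{-\rho t}y$ cancel,
\begin{equation*}
Y^{(u,\hat{v}^{-i})^N}_t - Y^{\hat{\nu}}_t = \gamma \int_0^t e^{-\rho(t-s)} \left[\frac{u_s - \hat{v}^i_s}{N} + \Bigl(\frac{1}{N}\sum_{j=1}^N \hat{v}^j_s - \hat{\nu}_s\Bigr)\right] ds.
\end{equation*}
Applying Cauchy--Schwarz in the $ds$-integral and then in the product with $u_t$ bounds the RHS by a constant times $\|u\|_{2,T}$ times $\bigl(E\int_0^T (Y^{(u,\hat{v}^{-i})^N}_t - Y^{\hat{\nu}}_t)^2 dt\bigr)^{1/2}$. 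Squaring the integrand and using $(a+b)^2 \leq 2a^2 + 2b^2$, the first piece contributes at most $C T^2 N^{-2}\bigl(\|u\|_{2,T}^2 + \|\hat{v}^i\|_{2,T}^2\bigr)$, while the second piece contributes at most $C T^3 \sup_{s\in[0,T]} E\bigl[\bigl(\tfrac{1}{N}\sum_j \hat{v}^j_s - \hat{\nu}_s\bigr)^2\bigr]$, which is $O(N^{-2})$ by Lemma~\ref{lemma-con-mf}. The uniform $L^2$-bound on $\hat{v}^i$ from Proposition~\ref{lem-bnd-ui}(ii) allows the term $\|\hat{v}^i\|_{2,T}^2$ to be absorbed into a constant independent of $i$ and $N$.

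Combining these two estimates gives
\begin{equation*}
E\!\int_0^T \bigl(Y^{(u,\hat{v}^{-i})^N}_t - Y^{\hat{\nu}}_t\bigr)^2 dt \;\leq\; \frac{C}{N^2}\bigl(1 + \|u\|_{2,T}^2\bigr),
\end{equation*}
so the final Cauchy--Schwarz bound on the revenue-term difference is $\kappa \|u\|_{2,T}\cdot C N^{-1}(1+\|u\|_{2,T})$, which is exactly the stated estimate (after adjusting $C$). The main technical input is the $1/N^2$ consistency estimate of Lemma~\ref{lemma-con-mf}; the rest is just bookkeeping via Cauchy--Schwarz and the uniform boundedness of the mean field strategies from Proposition~\ref{lem-bnd-ui}(ii). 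The only mildly delicate point is making sure the $\|u\|_{2,T}$ factor does not acquire a self-referential power of itself, which is handled by keeping the decomposition linear in $(u - \hat{v}^i)$ rather than bounding $u$ crudely.
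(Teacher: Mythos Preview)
Your argument is correct and follows essentially the same approach as the paper: both proofs observe that only the revenue term differs, decompose $Y^{(u,\hat v^{-i})^N}_t - Y^{\hat\nu}_t$ into the piece $\tfrac{1}{N}(u_s-\hat v^i_s)$ and the piece $\tfrac{1}{N}\sum_j \hat v^j_s - \hat\nu_s$, and then bound each using Cauchy--Schwarz/H\"older together with Lemma~\ref{lemma-con-mf} and Proposition~\ref{lem-bnd-ui}(ii). The only cosmetic difference is that the paper splits into two integrals $I_1,I_2$ first and applies H\"older to each, whereas you apply Cauchy--Schwarz globally to factor out $\|u\|_{2,T}$ and then bound the $L^2$-norm of the $Y$-difference; the ingredients and the resulting estimate are identical.
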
 
\begin{proof} 
Let $u\in \mathcal A$. For $(u,\hat v^{-i})$ let $S^{(u,\hat v^{-i})}$ be as in \eqref{def:S} and let $S^{\hat \nu}$ be as in \eqref{def:S_infMFG}. Using \eqref{def:FPGobjective} and \eqref{def:objective_infMFG} and then \eqref{def:Y} and  \eqref{def:Y_infMFG} we get 
\be \label{j-sub}
\begin{aligned} 
\left| J^{i,N}(u,\hat{v}^{-i}) -  J^{i,\infty}(u,\hat \nu)\right|&=
\left|E \Bigg[\int_0^T S^{(u,\hat{v}^{-i})}_t u_t dt - \int_0^T S^{\hat \nu}_t u_t dt    \Bigg] \right|  \\
&=\Bigg|E \Bigg[ \int_0^T \kappa\left(\int_0^t e^{-\rho (t-s)}\Big(\frac{1}{N}\big(\sum_{j\not = i} \hat{v}_s^{j}+u_s\big)- \hat{\nu}_s \Big) ds \right) u_t dt   \Bigg] \Bigg| \\ 
&\leq  \kappa \Bigg|E \Bigg[ \int_0^T \left(\int_0^t e^{-\rho (t-s)}\Big(\frac{1}{N}\sum_{j =1}^N  \hat{v}_s^{j}- \hat \nu_s \Big) ds \right) u_t dt \Bigg] \Bigg| \\
&\quad+ \kappa \Bigg|E \Bigg[ \int_0^T \left(\int_0^t e^{-\rho (t-s)}\frac{1}{N}\big( \hat{v}^i_s -u_s\big) ds \right) u_t dt \Bigg] \Bigg|\\
&=: \kappa \sum_{k=1}^2 I_k.
\end{aligned} 
\ee
Using Fubini Theorem, Jensen and H\"older inequalities we get for $I_1$:
\be\label{i-1}
\begin{aligned} I_1 &\leq    \int_0^T  \int_0^t e^{-\rho (t-s)}\Big|E\Big[\big(\frac{1}{N}\sum_{j =1}^N  \hat{v}_s^{j}- \hat\nu_s \big) u_t \Big]  \Big| ds  dt     \\
&\leq \int_0^T  \int_0^T E\Big[\Big| \big(\frac{1}{N}\sum_{j =1}^N  \hat{v}_s^{j}- \hat \nu_s \big) u_t\Big| \Big]   ds  dt     \\
&\leq C(T) \bigg(\int_0^T  E\Big[\big(\frac{1}{N}\sum_{j =1}^N  \hat{v}_s^{j}- \hat \nu_s \big)^2 \Big] ds\bigg)^{1/2} \Big(\int_0^TE[u_t^2]dt  \Big)^{1/2}        \\
&\leq C(T)\|u\|_{2,T} N^{-1},  
\end{aligned} 
\ee
where we used Lemma \ref{lemma-con-mf} and the fact that $u\in \mathcal A$ (see \eqref{def:admissset}) in the last inequality. 
 
Using Fubini's Theorem and H\"older inequality we get for $I_2$:
\be \label{i-2}
\begin{aligned} 
I_2 & \leq  \int_0^T\int_0^T \frac{1}{N}  E[|(\hat{v}^i_s -u_s)u_t|] ds  dt  \\
&\leq C(T) \frac{1}{N} \Big(\int_0^T E[(\hat{v}^i_s -u_s)^2] ds\Big)^{1/2}   \Big(\int_0^T E[u_t^2] dt\Big)^{1/2}   \\
&\leq C(T) \frac{1}{N} \left(  \|\hat{v}^i\|_{2,T}+ \|u\|_{2,T}   \right) \|u\|_{2,T}  \\
&\leq C(T)\frac{1}{N}(1+\|u\|_{2,T}) \|u\|_{2,T}, 
\end{aligned} 
\ee
where we used Proposition \ref{lem-bnd-ui}(ii) in the last inequality. 

By plugging in \eqref{i-1} and \eqref{i-2} to \eqref{j-sub} we get the result. 
\end{proof} 

\begin{proof} [Proof of Theorem \ref{thm-eps-nash}]
First note that the inequality 
$$
J^{i,N}(\hat{v}^{i}; \hat{v}^{-i}) \leq \sup_{u \in \mathcal A} J^{i,N}(u; \hat{v}^{-i}), 
$$
holds trivially by the definition of the supremum. 

Using Lemma \ref{lem-j-dif} we get for any $u\in \mathcal A$,
\be \label{rt1}
\begin{aligned}
J^{i,N}(u; \hat v^{-i})& \leq  J^{i,\infty}(u,\hat \nu) +C\|u\|_{2,T}(1+\|u\|_{2,T})\frac{1}{N} \\ 
&\leq   J^{i,\infty}(\hat v^i,\hat\nu) +C\|u\|_{2,T}(1+\|u\|_{2,T})\frac{1}{N}, 
\end{aligned}
\ee
where $C>0$ is a constant not depending on $u$ or $N$. We used $ J^{i,\infty}(\hat u^i,\hat \nu)= \sup_{u\in \mathcal A} J^{i,\infty}(u,\hat \nu)$ in the second inequality.

Using Lemma \ref{lem-j-dif} again we get for some constant $\tilde C>0$ that 
\be \label{rt2} 
\begin{aligned}
  J^{i,\infty}(\hat v^i,\hat \nu )  &\leq   J^{i,N}(\hat v^i,\hat v^{-i}) + C\|\hat v_i\|_{2,T}(1+\|\hat v_i\|_{2,T})\frac{1}{N} \\
  &\leq J^{i,N}(\hat v^i,\hat v^{-i}) +\tilde C\frac{1}{N},
\end{aligned}
\ee
where we used Proposition \ref{lem-bnd-ui}(ii) in the second inequality. 

Using \eqref{rt2} to bound the right hand side of \eqref{rt1}, we get 
$$
 J^{i,N}(u; \hat v^{-i})  \leq  J^i(\hat v^i,\hat v^{-i}) +C \|u\|^2_{2,T} (1\vee  \|u\|^2_{2,T})\frac{1}{N}, \quad \textrm{for all }  u \in \mathcal A,
$$
which completes the proof. 
\end{proof} 


\section{Proofs for the finite player
  game} \label{subsec:proofs-finite}

We start with establishing a strict concavity property of each player's objective  functional in~\eqref{def:FPGobjective} in the following
\begin{lemma} \label{lem:concave_finite}
Let $i \in \{1,\ldots, N\}$. The functional $u^{i,N} \mapsto J^{i,N}(u^{i,N};u^{-i,N})$ in~\eqref{def:FPGobjective} is strictly concave in $u^{i,N} \in \mathcal{A}$.
\end{lemma}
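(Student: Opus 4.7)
The plan is to prove strict concavity via direct computation of the second variation. Fix $u^{-i,N} \in \mathcal{A}^{N-1}$. Since $v \mapsto J^{i,N}(v;u^{-i,N})$ is a quadratic functional of $v \in \mathcal{A}$ with finite values, its second-order Taylor expansion around any base point is exact, and its Hessian quadratic form $\mathcal{Q}$ does not depend on the base point. Hence strict concavity is equivalent to showing $\mathcal{Q}(h) < 0$ for every $h \in \mathcal{A} \setminus \{0\}$, and I only need to compute this Hessian once.

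First I would expand $J^{i,N}(v+\epsilon h; u^{-i,N})$ in powers of $\epsilon$, tracking how the perturbation $\epsilon h$ propagates through $X^v$, $Y^{u^N}$, and $S^{u^N}$. Writing $H_t := \int_0^t h_s\, ds$ and $\Xi_t := (\gamma/N)\int_0^t e^{-\rho(t-s)} h_s\, ds$, the perturbations are $X^v \mapsto X^v - \epsilon H$, $Y^{u^N} \mapsto Y^{u^N} + \epsilon \Xi$, and $S^{u^N} \mapsto S^{u^N} - \epsilon \kappa \Xi$. Collecting the $\epsilon^2$ coefficients from each of the four summands in~\eqref{def:FPGobjective} gives
\begin{equation*}
\mathcal{Q}(h) = - E\left[\kappa \int_0^T \Xi_t h_t\, dt + \lambda \int_0^T h_t^2\, dt + \phi \int_0^T H_t^2\, dt + \varrho H_T^2\right].
\end{equation*}
Note that all terms involving $u^{-i,N}$ enter only through the linear part and drop out of $\mathcal{Q}$, so the conclusion will be uniform in the other players' strategies.

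The central step is to show that each of the four terms inside the expectation is $\omega$-wise non-negative. The last three are trivial. For the transient price impact cross term, which has a priori indefinite sign, the key observation is that $\Xi$ solves the pathwise ODE $\Xi'_t + \rho \Xi_t = (\gamma/N) h_t$ with $\Xi_0 = 0$. Multiplying by $\Xi_t$ and integrating on $[0,T]$ yields the identity
\begin{equation*}
(\gamma/N) \int_0^T \Xi_t h_t\, dt = \tfrac{1}{2}\Xi_T^2 + \rho \int_0^T \Xi_t^2\, dt \ge 0,
\end{equation*}
which is precisely the pathwise statement that the exponential convolution kernel $e^{-\rho(t-s)} \mathbf{1}_{\{s\le t\}}$ is positive semi-definite.

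Combining the four non-negative contributions and using that $\lambda > 0$, the term $\lambda \int_0^T h_t^2\, dt$ already forces $\mathcal{Q}(h) < 0$ as soon as $h \not\equiv 0$ in $\mathcal{A}$, which yields strict concavity of $J^{i,N}(\cdot; u^{-i,N})$. The only conceptual obstacle, as foreshadowed by Remark~\ref{rem:terminalPosition}, is precisely the sign of the self-interaction cross term coming from agent $i$'s impact on $Y^{u^N}$; once this is controlled by the ODE identity above, the result is immediate. It is also this cross term that would threaten concavity if $P_T$ in the terminal payoff were replaced by $S^{u^N}_T$, since the latter substitution would introduce an additional indefinite mixed term $-\kappa X_T^{u^{i,N}} Y_T^{u^N}$ that is not controlled by the same argument.
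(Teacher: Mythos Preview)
Your argument is correct and complete. Both you and the paper reduce the question to showing that the self-interaction term coming from agent $i$'s own contribution to $Y^{u^N}$ has the right sign, but you arrive there by different algebraic routes. The paper first applies the product rule to the terminal payoff $X_T^{u^{i,N}}(P_T-\varrho X_T^{u^{i,N}})$, thereby converting it into running integrals, and then splits the functional as $\kappa J_1^{i,N}+J_2^{i,N}$; strict concavity of $J_1^{i,N}(u)=E[\int_0^T Y^{u^{i,N}}_t\,dX^{u^{i,N}}_t]$ is outsourced to~\cite{Lehalle-Neum18}, while $J_2^{i,N}$ is handled by a direct convex-combination computation yielding the term $\varrho(X_T^{u^{i,N}}-X_T^{w^{i,N}})^2$ via integration by parts. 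You instead compute the Hessian quadratic form in one pass and control the transient-impact cross term through the ODE identity $(\gamma/N)\int_0^T \Xi_t h_t\,dt=\tfrac12\Xi_T^2+\rho\int_0^T\Xi_t^2\,dt$, which is exactly the pathwise positive-definiteness of the exponential kernel that underlies the cited result. Your route is thus more self-contained---it proves inline what the paper imports---and slightly more transparent about where strictness comes from (solely the $\lambda$-term), at the modest cost of not isolating the transient-impact piece as a reusable lemma.
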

\begin{proof}
First, observe that we can decompose the aggregated transient price impact $Y^{u^N}$ in~\eqref{def:Y} into $Y^{u^N} = Y^{u^{i,N}} + Y^{u^{-i,N}}$ where
\begin{equation} \label{def:Yui}
    Y_t^{u^{i,N}} \triangleq e^{-\rho t} y + \frac{\gamma}{N} \int_0^t e^{-\rho (t-s)} u^{i,N}_s ds, \quad
    Y_t^{u^{-i,N}} \triangleq \frac{\gamma}{N} \int_0^t e^{-\rho (t-s)} \left( \sum_{j\neq i} u^{j}_s \right) ds
\end{equation}
for all $t \in [0,T]$. As a consequence, we can write
\begin{equation} \label{rep:S}
S_t^{u^N} = P_t - \kappa Y_t^{u^{-i,N}} - \kappa Y_t^{u^{i,N}}
\end{equation}
in agent $i$'s performance functional $u^{i,N} \mapsto J^{i,N}(u^{i,N};u^{-i,N})$ in~\eqref{def:FPGobjective}. Next, using the product rule in the expression of the terminal
liquidation value
$X_T^{u^{i,N}} (P_T - \varrho X^{u^{i,N}}_T)$ we can rewrite the
functional as
\begin{equation*}
  J^{i,N}(u^{i,N};u^{-i,N}) = X_0^{u^{i,N}} (P_0 - \varrho X_0^{u^{i,N}}) + \kappa J_1^{i,N}(u^{i,N}) + J_2^{i,N}(u^{i,N};u^{-i,N}),
\end{equation*}
where
\begin{align*}
J_1^{i,N}(u^{i,N}) \triangleq & \; E \left[ \int_0^T Y^{u^{i,N}}_t dX^{u^{i,N}}_t \right], \\
  J_2^{i,N}(u^{i,N};u^{-i,N}) \triangleq & \; E \left[ \int_0^T X^{u^{i,N}}_t (2\varrho u^{i,N}_t - \phi X_t^{u^{i,N}} ) dt - \lambda \int_0^T
    \left( u^{i,N}_t \right)^2 dt \right. \\
    & \left. \hspace{15pt} + \int_0^T X^{u^{i,N}}_t dP_t - \kappa \int_0^T Y^{u^{-i,N}}_t u^{i,N}_t dt \right].
\end{align*}
Regarding the mapping $u^{i,N} \mapsto J_1^{i,N}(u^{i,N})$ we can deduce from the arguments in~\citet[Proof of Theorem 2.3]{Lehalle-Neum18} that this functional is strictly concave in $u^{i,N}$. Hence, it is left to show that 
\begin{equation} \label{proof:concave:eq1}
  J_2^{i,N}(\varepsilon u^{i,N} + (1-\varepsilon) w^{i,N};u^{-i,N}) - \varepsilon J_2^{i,N}( u^{i,N} ;u^{-i,N}) - (1-\varepsilon) J_2^{i,N}(w^{i,N};u^{-i,N}) > 0
\end{equation}
for all $\varepsilon \in (0,1)$ and $u^{i,N}, w^{i,N} \in \mathcal{A}$ such that $u^{i,N} \neq w^{i,N}$  $d\P \otimes ds\textrm{-a.e. on } \Omega \times
[0,T]$. Using that $X_t^{\varepsilon u^{i,N} + (1-\varepsilon) w^{i,N}} =
\varepsilon X^{u^{i,N}}_t + (1-\varepsilon) X^{w^{i,N}}_t$  
with $X^{u^{i,N}}_0 = X^{w^{i,N}}_0$, a
straightforward computation reveals that
\begin{equation} \label{proof:concave:eq2}
  \begin{aligned}
    & J_2^{i,N}(\varepsilon u^{i,N} + (1-\varepsilon) w^{i,N};u^{-i,N}) - \varepsilon J_2^{i,N}( u^{i,N} ;u^{-i,N}) - (1-\varepsilon) J_2^{i,N}(w^{i,N};u^{-i,N}) \\
    & = \varepsilon (1-\varepsilon) E \bigg[ \int_0^T \Big(
    2\varrho (X^{u^{i,N}}_t- X^{w^{i,N}}_t) (w^{i,N}_t - u^{i,N}_t)
    \Big. \bigg. \\
    & \bigg. \Big. \hspace{100pt} + \phi (X_t^{u^{i,N}}-X_t^{w^{i,N}})^2 + \lambda
    (u^{i,N}_t - w^{i,N}_t)^2 \Big) dt \bigg].
  \end{aligned}
\end{equation}
Obviously, the last two terms in~\eqref{proof:concave:eq2} are always
strictly positive. Moreover, 
regarding the first term in~\eqref{proof:concave:eq2} integration by
parts yields
\begin{equation*}
2\varrho \int_0^T  (X^{u^{i,N}}_t- X^{w^{i,N}}_t) (w^{i,N}_t - u^{i,N}_t) dt =
\varrho (X_T^{u^{i,N}}-X_T^{w^{i,N}})^2 > 0.
\end{equation*}
Putting all together, we obtain~\eqref{proof:concave:eq1} as desired and the claim.
\end{proof}

From Lemma \ref{lem:concave_finite} we obtain the following important consequence.
\begin{lemma} \label{lem:uniqueNE_finite}
There exists at most one Nash equilibrium in the sense of Definition~\ref{def:Nash}.
\end{lemma}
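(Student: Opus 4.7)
The plan is to argue by contradiction. Suppose two distinct Nash equilibria $\hat u^N, \tilde u^N \in \mathcal A^N$ exist; set $\delta^i := \hat u^{i,N} - \tilde u^{i,N}$ and consider, for each $i$, the four-point quantity
\begin{equation*}
\Delta^i := J^{i,N}(\hat u^{i,N}; \hat u^{-i,N}) - J^{i,N}(\tilde u^{i,N}; \hat u^{-i,N}) + J^{i,N}(\tilde u^{i,N}; \tilde u^{-i,N}) - J^{i,N}(\hat u^{i,N}; \tilde u^{-i,N}).
\end{equation*}
I would compute $\sum_{i=1}^N \Delta^i$ in two different ways, equate the results, and recognise the identity as a sum of manifestly non-negative terms equal to zero, which forces $\delta^i\equiv 0$ and yields the contradiction.

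First evaluation. Since $u \mapsto J^{i,N}(u; u^{-i,N})$ is a quadratic functional whose Hessian is independent of $u^{-i,N}$ (the transient-impact coupling between $u^{i,N}$ and $u^{-i,N}$ is bilinear, so it contributes nothing to $D^2_i J^{i,N}$) and strictly concave by Lemma~\ref{lem:concave_finite}, the first-order condition $\nabla_i J^{i,N} = 0$ holds in each player's own coordinate at both equilibria. The exact (quadratic) second-order Taylor expansion of $J^{i,N}(\tilde u^{i,N}; \hat u^{-i,N})$ about $\hat u^{i,N}$, and of $J^{i,N}(\hat u^{i,N}; \tilde u^{-i,N})$ about $\tilde u^{i,N}$, then collapses to $\Delta^i = 2 Q(\delta^i)$, where
\begin{equation*}
Q(v) := \lambda E\!\left[\int_0^T v_t^2\, dt\right] + \phi E\!\left[\int_0^T (X^v_t)^2\, dt\right] + \varrho E\!\left[(X^v_T)^2\right] + \frac{\kappa \gamma}{N}\, \Psi[v]
\end{equation*}
is the positive-definite quadratic form obtained from $-\tfrac12 D^2_i J^{i,N}$, with $X^v_t := -\int_0^t v_s\, ds$ and $\Psi[v] := E\!\left[\int_0^T\!\int_0^t e^{-\rho(t-s)} v_s v_t\, ds\, dt\right]$.

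Second evaluation. Decompose $J^{i,N} = J^{i,N}_{\mathrm{own}}(u^{i,N}) + J^{i,N}_{\mathrm{cross}}(u^{i,N}; u^{-i,N})$, where the cross contribution coming from the shared transient impact $Y^{u^N}$ in~\eqref{def:Y} is the bilinear form
\begin{equation*}
J^{i,N}_{\mathrm{cross}}(u^{i,N}; u^{-i,N}) = -\frac{\kappa \gamma}{N}\, E\!\left[\int_0^T\!\int_0^t e^{-\rho(t-s)} \sum_{j\neq i} u^{j,N}_s u^{i,N}_t\, ds\, dt\right].
\end{equation*}
The $J_{\mathrm{own}}$ contributions to $\Delta^i$ telescope to zero, and bilinearity in each argument reduces the cross contributions to $\Delta^i = J^{i,N}_{\mathrm{cross}}(\delta^i; \delta^{-i})$. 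Summing over $i$, writing $S:=\sum_i \delta^i$ and using $\sum_{i\neq j} \delta^j_s \delta^i_t = S_s S_t - \sum_i \delta^i_s \delta^i_t$, gives $\sum_{i=1}^N \Delta^i = -\frac{\kappa \gamma}{N}\Psi[S] + \frac{\kappa \gamma}{N}\sum_{i=1}^N \Psi[\delta^i]$.

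Equating $2\sum_i Q(\delta^i)$ with this second expression and expanding $Q$, the $\frac{\kappa\gamma}{N}\Psi[\delta^i]$ term already built into $Q$ partly cancels the analogous term coming from the cross contribution, and rearrangement leaves the identity
\begin{equation*}
2\lambda \sum_i E\!\left[\int_0^T (\delta^i_t)^2 dt\right] + 2\phi \sum_i E\!\left[\int_0^T (X^{\delta^i}_t)^2 dt\right] + 2\varrho \sum_i E[(X^{\delta^i}_T)^2] + \frac{\kappa\gamma}{N}\sum_i \Psi[\delta^i] + \frac{\kappa\gamma}{N}\Psi[S] = 0.
\end{equation*}
A short ODE computation (setting $F(t) := \int_0^t e^{-\rho(t-s)} f_s\, ds$ and using $F' + \rho F = f$ together with $F(0)=0$ yields $\Psi[f] = E[\tfrac12 F(T)^2 + \rho \int_0^T F(t)^2 dt]$) shows $\Psi \geq 0$, so every summand on the left is non-negative and must therefore vanish; the first then forces $\delta^i \equiv 0$ in $\mathcal A$ for each $i$, contradicting $\hat u^N \neq \tilde u^N$. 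The main subtlety is engineering this final cancellation: the self-interaction $\frac{\kappa\gamma}{N}\Psi[\delta^i]$ appears both inside $Q$ (as part of the own-variable Hessian) and in the summed cross contribution, and only the specific residual $\frac{\kappa\gamma}{N}\Psi[S]$ that survives recombination makes the identity manifestly non-negative.
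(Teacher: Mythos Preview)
Your proof is correct and is precisely the argument the paper has in mind: the paper's own proof is a single sentence invoking the contradiction argument from \citet[Proposition~4.8]{SchiedStrehelZhang:17}, and what you have written is exactly that Lasry--Lions--type monotonicity computation carried out in detail for the present model. Your identification of $Q$, the bilinear cross term, and the nonnegativity of $\Psi$ via the ODE trick $F'+\rho F=f$ are all accurate, so the final identity indeed forces $\delta^i\equiv 0$.
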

\begin{proof}
 This follows from Lemma~\ref{lem:concave_finite} by adopting the same argumentation via contradiction from \citet[Proposition 4.8]{SchiedStrehelZhang:17} to our finite player game.
\end{proof}

\begin{proof}[Proof of Lemma~\ref{thm:NASHFBSDE}] Let us first consider a single fixed agent $i \in \{1,\ldots,N\}$ and
  characterize her best response to the other agents' given fixed
  strategies $u^{-i,N} \in \mathcal{A}^{N-1}$ by maximizing her
  performance functional $u^{i,N} \mapsto J^{i,N}(u^{i,N};u^{-i,N})$ 
  in~\eqref{def:FPGobjective}. To this end, note that we can decompose the jointly aggregated transient price impact $Y^{u^N}$ in~\eqref{def:Y} into $Y^{u^N} = Y^{u^{i,N}} + Y^{u^{-i,N}}$ with $Y^{u^{i,N}}, Y^{u^{-i,N}}$ as in~\eqref{def:Yui} above, and we can write $S^{u^N} = P - \kappa Y^{u^{-i,N}} - \kappa Y^{u^{i,N}}$
in agent $i$'s performance functional
in~\eqref{def:FPGobjective}. Next, a computation very similar to the
proof of Lemma~5.2 in~\cite{N-V19} (with parameter $\gamma/N$ instead
of $\gamma$) shows that strategy $u^{i,N}$ determines agent $i$'s unique
best response in~\eqref{def:FPGoptimization} to a given set of
competitor strategies $u^{-i,N}$ if and only if
$(X^{u^{i,N}}, Y^{u^{i,N}},u^{i,N},Z^{u^{i,N}})$ satisfy the following coupled linear
FBSDE system
\begin{equation} \label{eq:iFBSDE}
\left\{
\begin{aligned}
    dX^{u^{i,N}}_t = & \, - u^{i,N}_t dt, \quad X^{u^{i,N}}_0 = x^{i,N},\\
    dY^{u^{i,N}}_t = & \, -\rho Y_t^{u^{i,N}} dt + \frac{\gamma}{N} u^{i,N}_t dt, \quad Y^{u^{i,N}}_0 = y, \\
    du^{i,N}_t = & \, \frac{1}{2\lambda} \left( dP_t - \kappa dY^{u^{-i,N}}_t \right) + \frac{\kappa\rho}{2\lambda} Y^{u^{i,N}}_t dt - \frac{\phi}{\lambda} X^{u^{i,N}}_t dt + \frac{\rho}{2\lambda} Z^{u^{i,N}}_t dt + dM^{i,N}_t,  \\
    & \hspace{160pt} u^{i,N}_T =\frac{\varrho}{\lambda} X^{u^{i,N}}_T - \frac{\kappa}{2\lambda} \left(Y^{u^{-i,N}}_T + Y^{u^{i,N}}_T \right), \\
    dZ^{u^{i,N}}_t = & \,\rho Z^{u^{i,N}}_t dt + \frac{\gamma\kappa}{N}u^{i,N}_t dt + dN^{i,N}_t, 
    \quad  Z^{u^{i,N}}_T = 0,
\end{aligned}
\right.
\end{equation}
for two suitable square integrable martingales $M^{i,N}=(M^{i,N}_t)_{0 \leq t
  \leq T}$ and $N^{i,N}=(M^{i,N}_t)_{0 \leq t \leq T}$. Moreover, from the
proof of Lemma 5.2 in~\cite[equation (5.7)]{N-V19}  we also know that
these martingales are given by
\begin{align}
    M^{i,N}_t = & \; \frac{1}{2\lambda} \tilde{M}^{i,N}_t - \frac{\gamma\kappa}{2\lambda N} \int_0^t e^{\rho s} d\tilde{N}^{i,N}_s, \label{eq:fpmart1} \\
    N^{i,N}_t = & \; - \frac{\gamma\kappa}{N} \int_0^t e^{\rho s} d\tilde{N}^{i,N}_s, \label{eq:fpmart2}
\end{align}
where
\begin{align}
  \tilde{N}^{i,N}_t \triangleq & \; E_t \left[ \int_0^T e^{-\rho s} u^{i,N}_s \,
                             ds
                             \right], \label{eq:fpmart3} \\
  \tilde{M}^{i,N}_t \triangleq & \; E_t\left[2\phi \int_0^T X^{u^{i,N}}_s ds + 2 \varrho X^{u^{i,N}}_T - P_T \right] ,\label{eq:fpmart4}
\end{align}
 for all $t \in [0,T]$.
Finally, in order for a set of controls $(u^{i,N})_{i \in \{1,\ldots, N\}} \subset \mathcal A^{N}$ to yield the unique Nash equilibrium in the sense of Definition~\ref{def:Nash}, the above FBSDE system must be satisfied simultaneously for all agents $i = 1,\ldots,N$. Now, rewriting for each $i \in \{1,\ldots,N\}$ the following expression in the BSDE for $u^{i,N}$ in~\eqref{eq:iFBSDE} as
\begin{equation*}
\begin{aligned}
    -\frac{\kappa}{2\lambda} dY^{u^{-i,N}}_t + \frac{\kappa\rho}{2\lambda} Y^{u^{i,N}}_t dt = & \, \frac{\kappa\rho}{2\lambda} \left( Y_t^{u^{i,N}} + Y_t^{u^{-i,N}} \right) dt -\frac{\gamma\kappa}{2\lambda N} \left( \sum_{j \neq i} u^{j,N}_t \right) dt \\
    = & \, \frac{\kappa\rho}{2\lambda} Y_t^{u^N} dt -\frac{\gamma\kappa}{2\lambda N} \left( \sum_{j \neq i} u^{j,N}_t \right) dt,
\end{aligned}
\end{equation*}
we note that the coupling of all systems in~\eqref{eq:iFBSDE} for all $i \in \{1,\ldots,N\}$ only depends on $Y^{u^N}$ but not on the processes $Y^{u^{i,N}}$ and $Y^{u^{-i,N}}$ separately. Therefore together with Lemma \ref{lem:uniqueNE_finite} we obtain the characterization of the unique Nash equilibrium as claimed in~\eqref{eq:NASHFBSDE}.
\end{proof} 

\begin{proof}[Proof of Proposition~\ref{prop:sol-mean-FBSDE}] The proof is similar to the proof of~\cite[Theorem 3.2]{N-V19}. Therefore we only sketch the main steps. To solve the linear FBSDE system in~\eqref{eq:mean-FBSDE} we conveniently rewrite it as   
\begin{equation} \label{eq:linODE}
    d\boldsymbol{\ol X}^N_t = \ol F^N \, \boldsymbol{\ol X}^N_t dt + d\boldsymbol{\ol M}^N_t \quad (0 \leq t \leq T),
\end{equation}
with $\ol F^N \in \mathbb{R}^{4 \times 4}$ introduced in~\eqref{def:Fbar} and
\begin{equation*}
    \boldsymbol{\ol X}_t^N \triangleq \begin{pmatrix}
   \ol X_t^{\bar u^N} \\ \ol Y_t^{\bar u^N} \\ \bar u_t^N \\ \ol Z_t^{\bar u^N}
    \end{pmatrix},   
    \, \quad  \boldsymbol{\ol M}_t^N \triangleq \begin{pmatrix}
    0 \\ 0 \\ \frac{1}{2\lambda} P_t + \ol M_t^N \\ \ol N_t^N
    \end{pmatrix} \quad (0 \leq t \leq T).
\end{equation*}
The corresponding initial and terminal conditions are given by $\boldsymbol{\ol X}^{N, 1}_0 =\bar x^N$, $\boldsymbol{\ol X}^{N,2}_0 = y$ and
\begin{equation} \label{def:termCond}
\left( \frac{\varrho}{\lambda},  - \frac{\kappa}{2\lambda}, -1,0
   \right) \boldsymbol{\ol X}_T^N = 0 \quad \text{and} \quad 
   \left(0,0,0,1 \right) \boldsymbol{\ol X}^N_T = 0.
\end{equation}
Note that the unique solution of the linear system in~\eqref{eq:linODE} can be expressed in terms of the matrix exponential defined in~\eqref{def:matrixExpEbar} via  
\begin{equation} \label{eq:X-T}
    \boldsymbol{\ol X}_T^N = \ol Q(T-t) \boldsymbol{\ol X}_t^N + \int_t^T \ol Q(T-s) d\boldsymbol{\ol M}_s^N \quad (0 \leq t \leq T).
\end{equation}
Multiplying~\eqref{eq:X-T} from the left with the row vector $\left(\frac{\varrho}{\lambda}, - \frac{\kappa}{2\lambda}, -1,0 \right)$, using the first terminal condition in~\eqref{def:termCond}, taking conditional expectations and solving for $\ol u^N_t$ gives us 
\begin{equation} \label{eq:u_Rep}
\begin{aligned} 
\ol u_{t}^N = & -\frac{\ol G_1(T-t)}{\ol G_3(T-t)}\ol X_{t}^{\bar u^N} -\frac{\ol G_2(T-t)}{\ol G_{3}(T-t)}\ol Y_{t}^{\bar u^N}  - \frac{\ol G_4(T-t)}{\ol G_3(T-t)} \ol Z_t^{\bar u^N} \\
 &  - \frac{1}{2\lambda}E_{t} \left[ \int_{t}^T\frac{\ol G_3(T-s)}{\ol G_{3}(T-t)}dA_{s} \right] \qquad (0 \leq t \leq T),
\end{aligned}
\end{equation}
with $\ol G$ as defined in~\eqref{def:Gbar}. Also note that $\ol G_{3}(T-t) \neq 0$ for all $t \in [0,T]$ by assumption. Moreover, repeating the same steps by using the second terminal condition in~\eqref{def:termCond} and solving the obtained identity for $\ol Z^{\bar u^N}$ yields
\begin{equation} \label{eq:Z_Rep}
\begin{aligned} 
\ol Z_t^{\bar u^N} = & - \frac{\ol H_1(T-t)}{\ol H_4(T-t)} \ol X_t^{\bar u^N}  - \frac{\ol H_2 (T-t)}{\ol H_4(T-t)} \ol Y_t^{\bar u^N} - \frac{\ol H_3(T-t)}{\ol H_4(T-t)} \bar u_t^N  \\
& -  \frac{1}{2\lambda} E_t \left[ \int_t^T \frac{\ol H_{3}(T-s)}{\ol H_{4}(T-t)} dA_s \right] \qquad (0 \leq t \leq T),
\end{aligned}
\end{equation}
where $\ol H$ is defined in~\eqref{def:Hbar} and $\ol H_{4}(T-t) \neq 0$ for all $t \in [0,T]$ by Assumption~\ref{assump:1}. Plugging~\eqref{eq:Z_Rep} into~\eqref{eq:u_Rep} and solving for $\ol u$ yields the claim in~\eqref{eq:opt_ubar}, where $\bar v_0$ is well-defined by Assumption~\ref{assump:1}. 


Finally, the claim that $\ol u^N$ in~\eqref{eq:opt_ubar} belongs to $\mathcal A$ follows from Assumption~\ref{assump:1}, which guarantees the boundedness from above of the functions $\ol G_i(t), \ol H_i(t), i \in \{1,\ldots,4\}$, in~\eqref{def:Gbar1}--\eqref{def:Hbar4}, and by employing a similar Gronwall-type argument as in the proof of~\cite[Theorem 3.2]{N-V19}, step 2.
\end{proof} 

\begin{proof}[Proof of Theorem~\ref{thm:main-finite}] In view of Corollary~\ref{cor:NASHFBSDE} we have to solve for each $i \in \{1,\ldots,N\}$ the linear FBSDE in~\eqref{eq:NASHFBSDE*}. Therefore, the proof of Theorem~\ref{thm:main-finite} follows the same reasoning as the proof of Proposition~\ref{prop:sol-mean-FBSDE} above. Indeed, observe that for each $i \in \{1,\ldots,N\}$ the system in~\eqref{eq:NASHFBSDE*} can be written as 
\begin{equation} \label{eq:linODENash}
    d\boldsymbol{X}^{i,N}_t = F^N \boldsymbol{X}^{i,N}_t dt + d\boldsymbol{M}^{i,N}_t \quad (0 \leq t \leq T)
\end{equation}
where 
\begin{equation*}
    \boldsymbol{X}^{i,N}_t \triangleq \begin{pmatrix}
   X^{u^{i,N}}_t \\ u^{i,N}_t \\ Z^{u^{i,N}}_t
    \end{pmatrix},   
    \, \quad  \boldsymbol{M}^{i,N}_t \triangleq \begin{pmatrix}
    0 \\ \frac{1}{2\lambda} (P_t - \kappa \ol Y^{\bar u^N}_t) +  M^{i,N}_t  \\ N^{i,N}_t
    \end{pmatrix} \qquad (0 \leq t \leq T)
\end{equation*}
and $F^N \in \mathbb{R}^{3 \times 3}$ defined in~\eqref{def:F}; initial and terminal condition are given by $\boldsymbol{X}^{i,N,1}_0 = x^{i,N}$ and
\begin{equation} \label{def:termCondNashFBSDE}
\left( \frac{\varrho}{\lambda}, -1,0 \right) 
\boldsymbol{X}^{i,N}_T = \frac{\kappa}{2\lambda} \ol Y^{\bar u^N}_T, \quad \left(0,0,1 \right) \boldsymbol{X}^{i,N}_T = 0.
\end{equation}
That is, as in the proof of Proposition~\ref{prop:sol-mean-FBSDE} we can write the unique solution of the linear system in~\eqref{eq:linODENash} as
\begin{equation*}
    \boldsymbol{X}^{i,N}_T = Q(T-t) \boldsymbol{X}^{i,N}_t + \int_t^T Q(T-s) d\boldsymbol{M}^{i,N}_s \quad (0 \leq t \leq T),
\end{equation*}
with the matrix exponential introduced in~\eqref{def:matrixExpE} and follow the same steps. That is, via the terminal conditions in~\eqref{def:termCondNashFBSDE} we eventually get the identities
\begin{equation} \label{eq:u_RepNash}
\begin{aligned} 
u^{i,N}_{t} = & -\frac{G_1(T-t)}{G_2(T-t)} X^{u^{i,N}}_{t} - \frac{G_3(T-t)}{G_2(T-t)} Z^{u^{i,N}}_t \\
 &  - \frac{1}{2\lambda}E_{t} \left[
   \int_{t}^T\frac{G_2(T-s)}{G_{2}(T-t)} \left(dA_{s} - \kappa d\ol
     Y^{\bar u^N}_s\right) -\frac{\kappa}{G_2(T-t)} \ol Y^{\bar u^N}_T  \right] \quad (0 \leq t \leq T),
\end{aligned}
\end{equation}
as well as
\begin{equation} \label{eq:Z_RepNash}
\begin{aligned} 
Z^{u^{i,N}}_t = & - \frac{H_1(T-t)}{H_3(T-t)} X^{u^{i,N}}_t - \frac{H_2 (T-t)}{H_3(T-t)} u^{i,N}_t\\
& - \frac{1}{2\lambda} E_t \left[ \int_t^T \frac{H_{2}(T-s)}{H_{3}(T-t)} \left(dA_{s} - \kappa d\ol Y^{\bar u^N}_s\right) \right] \qquad (0 \leq t \leq T),
\end{aligned}
\end{equation}
with $G$ and $H$ defined in~\eqref{def:G} and~\eqref{def:H}. Note that we have $G_2(T-t) \neq 0$ and $H_3(T-t) \neq 0$ for all $t \in [0,T]$ by Assumption \ref{assump:2}. Plugging~\eqref{eq:Z_RepNash} into~\eqref{eq:u_RepNash} and solving for $u^{i,N}$ yields the claim in~\eqref{eq:opt_ui}, where $v_0$ is well-defined by Assumption~\ref{assump:2}.

Finally, the claim that $u^{i,N}$ in~\eqref{eq:opt_ui} belongs to $\mathcal A$ follows from Assumption \ref{assump:2} and the fact that the functions $G_i(t), H_i(t), i \in\{1,\ldots,3\}$ in~\eqref{def:G1}--\eqref{def:H3} are bounded from above, using again a similar Gronwall-type argument as in the proof of~\cite[Theorem 3.2]{N-V19}, step 2.

\end{proof} 



\section{Proofs for the infinite player game} \label{subsec:proofs-infinite}
As in the finite player game, we first provide that each player's objective functional in~\eqref{def:objective_infMFG} is strictly concave.
\begin{lemma} \label{lem:concave_infinite}
Let $i \in \mathbb{N}$. The functional $v^i \mapsto J^{i,\infty}(v^i;\nu)$ in~\eqref{def:objective_infMFG} is strictly concave in $v^i \in \mathcal{A}$.
\end{lemma}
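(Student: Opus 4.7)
The plan is to exploit the fact that, in contrast to the finite player setting in Lemma~\ref{lem:concave_finite}, the execution price $S^{\nu}$ in~\eqref{def:S_infMFG} does \emph{not} depend on agent $i$'s own control $v^i$, since the net trading flow~$\nu$ is exogenous from agent $i$'s perspective. This turns the term $\int_0^T S^{\nu}_t v^i_t dt$ in~\eqref{def:objective_infMFG} into a purely linear functional of $v^i$, which makes the concavity argument considerably simpler than in the finite player case.

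More precisely, I would fix $v^i, w^i \in \mathcal{A}$ with $v^i \neq w^i$ $d\P \otimes dt$-a.e. on $\Omega \times [0,T]$ and $\varepsilon \in (0,1)$, and compute directly
\begin{equation*}
\Delta \triangleq J^{i,\infty}(\varepsilon v^i + (1-\varepsilon) w^i;\nu) - \varepsilon J^{i,\infty}(v^i;\nu) - (1-\varepsilon) J^{i,\infty}(w^i;\nu).
\end{equation*}
Using that $X^{\varepsilon v^i + (1-\varepsilon)w^i} = \varepsilon X^{v^i} + (1-\varepsilon) X^{w^i}$ (as the initial positions coincide, both equal to $x^i$), the contributions from $\int_0^T S^\nu_t v^i_t dt$ and from $X_T^{v^i} P_T$ cancel because they are affine in $v^i$. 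For the remaining three quadratic terms, the elementary identity
\begin{equation*}
\varepsilon a^2 + (1-\varepsilon) b^2 - (\varepsilon a + (1-\varepsilon) b)^2 = \varepsilon(1-\varepsilon)(a-b)^2
\end{equation*}
yields
\begin{equation*}
\Delta = \varepsilon(1-\varepsilon)\, E\!\left[\lambda \int_0^T (v^i_t - w^i_t)^2 dt + \phi \int_0^T (X^{v^i}_t - X^{w^i}_t)^2 dt + \varrho (X^{v^i}_T - X^{w^i}_T)^2\right].
\end{equation*}

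The main (and only) obstacle is verifying strict positivity of $\Delta$, but since $\lambda > 0$ and $v^i \neq w^i$ $d\P \otimes dt$-a.e., the first term inside the expectation is already strictly positive on a set of positive $d\P \otimes dt$-measure, while the other two summands are nonnegative. This gives $\Delta > 0$ and hence the claimed strict concavity. Unlike Lemma~\ref{lem:concave_finite}, no integration by parts is required to handle a mixed product $X_T^{v^i} Y^\nu_T$, since the terminal valuation in~\eqref{def:objective_infMFG} uses $P_T$ rather than $S^\nu_T$ (cf.\ Remark~\ref{rem:terminalPosition}).
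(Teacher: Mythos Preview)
Your proof is correct and slightly more direct than the paper's. The paper first applies the product rule to the terminal liquidation value $X_T^{v^i}(P_T-\varrho X_T^{v^i})$, which converts $-\varrho (X_T^{v^i})^2$ into the running cross term $\int_0^T 2\varrho X_t^{v^i} v^i_t\,dt$; when computing the concavity difference this yields the mixed expression $2\varrho\int_0^T (X^{v^i}_t-X^{w^i}_t)(w^i_t-v^i_t)\,dt$, which is then turned back into $\varrho(X^{v^i}_T-X^{w^i}_T)^2$ via integration by parts, exactly as in Lemma~\ref{lem:concave_finite}. You bypass this round trip by leaving the terminal penalty $-\varrho(X_T^{v^i})^2$ as a quadratic term in $X_T^{v^i}$ and applying the elementary identity $\varepsilon a^2+(1-\varepsilon)b^2-(\varepsilon a+(1-\varepsilon)b)^2=\varepsilon(1-\varepsilon)(a-b)^2$ directly. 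Both arrive at the same expression for $\Delta$; your route just avoids an unnecessary detour that the paper inherits from the finite-player proof, where the product rule was needed for a different reason (to isolate the $Y^{u^{i,N}}$-dependent piece).
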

\begin{proof}
The computations are similar to the proof of Lemma~\ref{lem:concave_finite} above. Indeed, using again the product rule in the expression of the terminal
liquidation value $X_T^{v^i} (P_T - \varrho X^{v^i}_T)$
in agent $i$'s performance functional $v^i \mapsto J^{i,\infty}(v^i;\nu)$ in~\eqref{def:objective_infMFG}, we obtain
\begin{equation*}
  J^{i,\infty}(v^i;\nu) = X_0^{v^i} (P_0 - \varrho X_0^{v^i}) + J_1^{i,\infty}(v^i;\nu),
\end{equation*}
where
\begin{align*}
  J_1^{i,\infty}(v^i;\nu) \triangleq & \; E \left[ \int_0^T X^{v^i}_t (2\varrho v^i_t - \phi X_t^{v^i} ) dt - \lambda \int_0^T
    \left( v^i_t \right)^2 dt \right. \\
    & \left. \hspace{15pt} + \int_0^T X^{v^i}_t dP_t - \kappa \int_0^T Y^{\nu}_t v^i_t dt \right].
\end{align*}
Strict concavity of the mapping $v^i \mapsto J_1^{i,\infty}(v^i;\nu)$ then follows as in the proof of Lemma~\ref{lem:concave_finite}.
\end{proof}
Consequently, again similar to the finite player game, we can establish
\begin{lemma} \label{lem:uniqueNE_infinite}
There exists at most one Nash equilibrium in the sense of Definition~\ref{def:Nash_infMFG}.
\end{lemma}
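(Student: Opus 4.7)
The plan is to adapt the contradiction argument behind Lemma~\ref{lem:uniqueNE_finite} to the mean field setting, combining the strong concavity implicit in Lemma~\ref{lem:concave_infinite} with the consistency condition~\eqref{eq:Nash_infMFG}. First I would suppose that $(\hat v^i)_{i\in\mathbb{N}}$ and $(\tilde v^i)_{i\in\mathbb{N}}$ are two Nash equilibria with associated net trading flows $\hat\nu,\tilde\nu\in\mathcal A$, and exploit that $J^{i,\infty}(v^i;\nu)$ depends on $\nu$ only through the linear term $-\kappa\,E\big[\int_0^T Y^\nu_t v^i_t\,dt\big]$. A direct computation then yields
\begin{equation*}
J^{i,\infty}(v^i;\hat\nu) - J^{i,\infty}(v^i;\tilde\nu) = -\kappa\gamma\, E\!\left[\int_0^T v^i_t\int_0^t e^{-\rho(t-s)}(\hat\nu_s-\tilde\nu_s)\,ds\,dt\right],
\end{equation*}
so adding the two Nash inequalities $J^{i,\infty}(\hat v^i;\hat\nu)\geq J^{i,\infty}(\tilde v^i;\hat\nu)$ and $J^{i,\infty}(\tilde v^i;\tilde\nu)\geq J^{i,\infty}(\hat v^i;\tilde\nu)$ and re-expressing them via this identity produces, for every $i$,
\begin{equation*}
-\kappa\gamma\, E\!\left[\int_0^T(\hat v^i_t-\tilde v^i_t)\int_0^t e^{-\rho(t-s)}(\hat\nu_s-\tilde\nu_s)\,ds\,dt\right]\geq 0.
\end{equation*}

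I would then sharpen this by invoking the strong concavity extracted from the proof of Lemma~\ref{lem:concave_infinite}: arguing as in~\eqref{proof:concave:eq2}, the $\lambda(u^i_t - w^i_t)^2$ integrand provides a quadratic lower bound of $2\lambda\,\|\hat v^i-\tilde v^i\|_{L^2(\mathbb{P}\otimes dt)}^2$ on the left-hand side above. Averaging over $i=1,\dots,N$, setting $w^N_t\triangleq\tfrac{1}{N}\sum_{i=1}^N(\hat v^i_t-\tilde v^i_t)$ and $g_t\triangleq\int_0^t e^{-\rho(t-s)}(\hat\nu_s-\tilde\nu_s)\,ds$, and applying the power-mean inequality $\tfrac{1}{N}\sum_i(\hat v^i_t-\tilde v^i_t)^2\geq(w^N_t)^2$, one obtains
\begin{equation*}
-\kappa\gamma\,E\!\left[\int_0^T w^N_t g_t\,dt\right]\geq 2\lambda\,\|w^N\|_{L^2(\mathbb{P}\otimes dt)}^2.
\end{equation*}
Cauchy--Schwarz on the left-hand side produces the self-improving a priori bound $\|w^N\|_{L^2}\leq(\kappa\gamma/2\lambda)\,\|g\|_{L^2}$ uniformly in $N$, with $g$ independent of $N$.

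The last step is to pass to $N\to\infty$: the consistency condition~\eqref{eq:Nash_infMFG} gives $w^N\to\Delta\triangleq\hat\nu-\tilde\nu$ $d\mathbb{P}\otimes dt$-a.e., so the uniform $L^2$-bound above forces weak $L^2$-convergence $w^N\rightharpoonup\Delta$; pairing with the $L^2$-element $g$ in the averaged inequality therefore yields
\begin{equation*}
E\!\left[\int_0^T\!\int_0^t e^{-\rho(t-s)}\Delta_s\Delta_t\,ds\,dt\right]\leq 0.
\end{equation*}
By Fubini and symmetrization the left-hand side equals $\tfrac{1}{2}\,E\!\left[\int_0^T\!\int_0^T e^{-\rho|t-s|}\Delta_s\Delta_t\,ds\,dt\right]$, which is pathwise nonnegative since the resilience kernel $(s,t)\mapsto e^{-\rho|t-s|}$ is strictly positive definite on $L^2([0,T])$, being the covariance of a stationary Ornstein--Uhlenbeck process. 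Hence $\Delta\equiv 0$, i.e.\ $\hat\nu=\tilde\nu$, whereupon the strict concavity of $v^i\mapsto J^{i,\infty}(v^i;\hat\nu)$ immediately forces $\hat v^i=\tilde v^i$ for every $i\in\mathbb{N}$. The main obstacle is the passage to the weak $L^2$-limit, which is resolved precisely by the self-improving bound on $w^N$, so no external a priori estimate on the equilibrium strategies is needed.
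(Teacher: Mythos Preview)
Your argument is correct and is essentially the contradiction approach the paper invokes by pointing to Proposition~4.8 of \citet{SchiedStrehelZhang:17}; the paper's own proof is a one-line reference to that external result. You have spelled out the details the paper leaves implicit, in particular the self-improving $L^2$ bound on $w^N$ and the weak-convergence passage to the limit needed to handle the consistency condition~\eqref{eq:Nash_infMFG}, after which the strict positive definiteness of the kernel $e^{-\rho|t-s|}$ and the strict concavity from Lemma~\ref{lem:concave_infinite} close the argument exactly as intended.
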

\begin{proof}
 This follows from Lemma~\ref{lem:concave_infinite} by adopting  once more the same argumentation via contradiction from \citet[Proposition 4.8]{SchiedStrehelZhang:17} to our infinite player game.
\end{proof}

\begin{proof}[Proof of Lemma~\ref{thm:MeanFieldFBSDE_inf}]
  First, observe that for a given and fixed net trading flow
  $\nu \in \mathcal{A}$ the optimisation problem of an individual
  agent $i \in \mathbb{N}$ in~\eqref{def:optimization_infMFG} is very
  similar to the single-agent optimization problem studied
  in~\citet{BMO:19}, where the agent is only facing temporary price
  impact and the unaffected price process is given by $P-\kappa Y^\nu$. The only difference is that the value of the terminal inventory $X^{v^i}_T$ in~\eqref{def:objective_infMFG} is expressed in terms of $P_T$ and not $S^{\nu}_T$. Therefore, it follows along
  the lines of the proof of~\cite[Theorem 3.1]{BMO:19} that the unique
  solution $\hat v^i \in \mathcal{A}$
  of~\eqref{def:optimization_infMFG} satisfies the linear FBSDE system
\begin{equation} \label{eq:BMOFBSDE}
\left\{
\begin{aligned}
    dX^{\hat v^i}_t = & \, - \hat v^i_t dt, \quad X^{\hat v^i}_0 = x^i,\\
    dY^{\nu}_t  = & \, -\rho Y^{\nu}_t  dt + \gamma  \nu_{t} dt, \quad Y^{\nu}_0 = y, \\
    d\hat v^i_t = & \, \frac{1}{2\lambda} \left( dP_t - \kappa dY^{
        \nu}_t \right) - \frac{\phi X^{\hat v^{i}}_t}{\lambda} dt +
    dL^i_t,  \qquad \hat v^i_T = \frac{\varrho}{\lambda} X^{\hat
      v^i}_T - \frac{\kappa}{2\lambda} Y^{\nu}_T,
\end{aligned}
\right.
\end{equation}
with a square integrable martingale $L^i=(L^i_t)_{0 \leq t \leq T}$ given by
\begin{equation}
    L^i_t \triangleq \frac{1}{2\lambda} E_t \left[2\phi \int_0^T X^{\hat v^i}_s ds + 2 \varrho X^{\hat v^i}_T - P_T \right] \qquad (0 \leq t \leq T).  \label{eq:ipmart}
\end{equation}
Consequently, in order for a collection of controls $(\hat v ^{i})_{i \in \mathbb{N}} \subset \mathcal A$ to yield the unique Nash equilibrium in the sense of Definition~\ref{def:Nash_infMFG}, the above FBSDE system must be satisfied simultaneously for all agents $i \in \mathbb{N}$. The uniqueness of the solution to \eqref{eq:BMOFBSDE} then follows from Lemma \ref{lem:uniqueNE_infinite}.

In order to complete the proof, we need to show that we can find an admissible $\hat \nu$ such that the consistency condition \eqref{MeanFieldFBSDE_inf_consist} is satisfied. We will show that $\tilde \nu$ that solves \eqref{eq:MeanField_infAggregated} is the right candidate for that. 

For such $\tilde \nu$, let $\{\hat v^i, i \in \mathbb{N}\}$ be the solution of \eqref{eq:MeanFieldGameFBSDE_inf}. Then from \eqref{eq:MeanField_infAggregated}, \eqref{eq:MeanFieldGameFBSDE_inf} and \eqref{eq:opt_ustar_alt}  we have,  
\be \label{new1} 
\begin{aligned} 
\Big|\frac{1}{N}\sum_{i=1}^N\hat{v}^i_{t} -\tilde{\nu}_t \Big|&= \left | \frac{R'(T-t)}{R(T-t)} \left( \tilde{X}_t^{\tilde \nu} - \frac{1}{N}\sum_{i=1}^N X^{\hat{v}^{i}}_t \right) \right| \\ 
&\leq  \sup_{s\in [0,T]}\left| \frac{R'(T-s)}{R(T-s)} \right| \left( \Big|\frac{1}{N}\sum_{i=1}^{N}x_i - \tilde x \Big| +  \Big|\int_0^t\Big(\frac{1}{N}\sum_{i=1}^N\hat v^i_s - \tilde \nu_s\Big)  ds \Big|  \right) \\ 
\end{aligned}
\ee
From \eqref{def:R} it follows that $\sup_{s\in [0,T]}\left| \frac{R'(T-s)}{R(T-s)} \right|<\infty$. Together with \eqref{eq:initPosLimit} Jensen inequality and Fubini theorem we get for any $0\leq t\leq T$ that there exist constants $C_1(N), C_2>0$ such that  
\be \label{new1} 
\begin{aligned} 
E\left[\sup_{s\in [0,t]}\Big(\frac{1}{N}\sum_{i=1}^N\hat{v}^i_{s} -\tilde{\nu}_s \Big)^2 \right] \leq C_1(N) + C_2\int_0^t E\Big[ \sup_{r\in[0,s]} \Big(\frac{1}{N}\sum_{i=1}^N\hat v^i_r - \tilde \nu_r\Big)^2 \Big]  ds,
 \end{aligned}
\ee
where $C_1(N) \rr 0$ as $N\rr \infty$. Then from Gronwall's lemma we get 
\be \label{consis-l2}
\begin{aligned}
\lim_{N\rr \infty}E\left[\sup_{s\in [0,T]}\left(\tilde \nu_{s}-\frac{1}{N}\sum_{i=1}^N \hat v_s^i\right)^{2}\right]  =0.
\end{aligned}
\ee
Note that the convergence rate in \eqref{consis-l2} is determined by \eqref{eq:initPosLimit}. 

By Fatou's lemma and \eqref{consis-l2} we have
\be \label{fat-1}
E\left[\left(\tilde \nu_{s}-\lim_{N\rr \infty}\frac{1}{N}\sum_{i=1}^N \hat v_s^i\right)^{2}\right] \leq \liminf_N E\left[\left(\tilde \nu_{s}-\frac{1}{N}\sum_{i=1}^N \hat v_s^i\right)^{2}\right] =0.
\ee
Hence it holds that 
$$
\tilde \nu_{s}=\lim_{N\rr \infty}\frac{1}{N}\sum_{i=1}^N \hat v_s^i, \quad \textrm{for all } 0\leq s \leq T, \ P-\textrm{a.s.}, 
$$
and this completes the proof.
\end{proof} 

\begin{proof}[Proof of Corollary \ref{cor:MeanFieldFBSDE_inf}] 
The proof of Corollary \ref{cor:MeanFieldFBSDE_inf} follows immediately from \eqref{fat-1}. 
\end{proof} 

\begin{proof}[Proof of Proposition~\ref{prop:sol-MeanField_infAggregated}] Solving the linear  system in~\eqref{eq:MeanField_infAggregated} follows along the same lines as solving the linear FBSDE systems in~\eqref{eq:mean-FBSDE} and~\eqref{eq:NASHFBSDE*} in Proposition~\ref{prop:sol-mean-FBSDE} and Theorem~\ref{thm:main-finite}, respectively. That is, the system in~\eqref{eq:MeanField_infAggregated} can be written as
\begin{equation} \label{eq:linODENashNew}
    d\boldsymbol{\tilde X}_t = \tilde B \boldsymbol{\tilde{X}}_t dt + d\boldsymbol{\tilde{M}}_t \quad (0 \leq t \leq T),
\end{equation}
where 
\begin{equation*}
    \boldsymbol{\tilde X}_t \triangleq \begin{pmatrix}
  \tilde{X}_t^{\tilde \nu} \\ \tilde{Y}_t^{\tilde \nu} \\ \tilde{\nu}_t
    \end{pmatrix},   
    \, \quad  \boldsymbol{\tilde M}_t \triangleq \begin{pmatrix}
    0 \\ 0 \\ \frac{1}{2\lambda} P_t + \tilde{L}_t
    \end{pmatrix} \qquad (0 \leq t \leq T),
\end{equation*}
and $\tilde B \in \mathbb{R}^{3 \times 3}$ defined in~\eqref{def:Btilde}; initial and terminal conditions are given by $\boldsymbol{\tilde X}^{1}_0 = \tilde x$, $\boldsymbol{\tilde X}^{2}_0 = y$ and
\begin{equation} \label{def:termCondMFGode}
 \left( \frac{\varrho}{\lambda}, -\frac{\kappa}{2\lambda}, -1 \right) \boldsymbol{\tilde X}_T = 0.
\end{equation}
As a consequence, together with the matrix exponential $\tilde R$ defined in~\eqref{def:matrixExpBtilde} we can write the unique solution to the linear system in~\eqref{eq:linODENashNew} as
\begin{equation*}
    \boldsymbol{\tilde X}_T = \tilde R(T-t) \boldsymbol{\tilde{X}}_t + \int_t^T \tilde R(T-s) d\boldsymbol{\tilde{M}}_s \quad (0 \leq t \leq T).
\end{equation*}
Lastly, using the terminal condition in~\eqref{def:termCondMFGode} together with $\tilde K$ defined in~\eqref{def:Ktilde} yields the identity
\begin{equation*} 
 \tilde{\nu}_{t} = -\frac{\tilde K_1(T-t)}{\tilde K_3(T-t)} \tilde{X}_{t}^{\tilde \nu} - \frac{\tilde K_2(T-t)}{\tilde K_3(T-t)} \tilde{Y}_t^{\tilde \nu}
 - \frac{1}{2\lambda} E_t\left[\int_{t}^T\frac{\tilde K_3(T-s)}{\tilde K_{3}(T-t)} dA_{s} \right] \quad (0 \leq t \leq T),
\end{equation*}
where we recall that $\tilde K_3(T-t) \neq 0$ for all $t \in [0,T]$ by assumption.

Finally, the claim that $\tilde\nu$ belongs to $\mathcal A$ can be deduced from our assumption $\inf_{t \in [0,T]}|\tilde K_3(t)| > 0$ and the fact that $\tilde{K}_1,\tilde{K}_2,\tilde{K}_3$ in~\eqref{def:Ktilde1}--\eqref{def:Ktilde3} are bounded from above, following once more a similar Gronwall-type argument as in the proof of~\cite[Theorem 3.2]{N-V19}, step 2.

\end{proof}

\begin{proof} [Proof of Theorem \ref{thm:main-infMFG}]  
 The linear FBSDE system in~\eqref{eq:MeanFieldGameFBSDE_inf} is 
almost the same as the one derived in~\citet{BMO:19} with
 signal process $A-\kappa \tilde Y^{\tilde \nu}$. The only difference is the
   terminal condition for $\hat{v}^i_T$. However, the computations
   in~\cite[Theorem 3.1]{BMO:19} can be easily adapted to yield our
   claim in~\eqref{eq:opt_uiInf}.
\end{proof}




\section{Computing the matrix exponentials} \label{subsec:proofs-matrixexponentials} 

\subsection{Finite player game} \label{sec-mat-fin} 

We start with computing the matrix exponential $\ol Q (t) = \exp(\ol F^N \cdot t) \in \mathbb{R}^{4 \times 4}$ for all $t \in [0, \infty)$ in \eqref{def:matrixExpEbar} by decomposing the matrix $\ol F^N = \ol U \, \ol D \, \ol U^{-1}$ from~\eqref{def:Fbar} into a diagonal matrix $\ol D\in\mathbb{R}^{4\times 4}$ and an invertible matrix $\ol U \in\mathbb{R}^{4\times 4}$. The eigenvalues $\bar\nu_1, \bar\nu_2, \bar\nu_3, \bar\nu_4$ of $\ol F^N$ are the \ roots of the equation
\begin{equation*}
    x^4 + \frac{(N-1)\kappa\gamma}{2N\lambda} x^3 - \left( \frac{\kappa\gamma\rho (N+1)}{2N\lambda} +\rho^2 +\frac{\phi}{\lambda} \right) x^2 + \frac{\phi}{\lambda} \rho^2 = 0.
\end{equation*}
Recall that by Assumption~\ref{assump:1} we assume that these eigenvalues are real-valued and distinct. Since $\det(\ol F^N) = \rho^2 \phi/\lambda > 0$ we can deduce that they are different from zero.

\noindent The corresponding eigenvectors are given by 
\begin{equation*} \label{def:eigenvectorF}
  \bar v_i \triangleq
  \begin{pmatrix}
    -\frac{N(\bar\nu_i-\rho)}{\kappa\gamma\bar\nu_i} \\
     \frac{N(\bar\nu_i - \rho)}{\kappa(\bar\nu_i+\rho)} \\
     \frac{N(\bar\nu_i - \rho)}{\kappa\gamma} \\
     1
    \end{pmatrix} \qquad (i=1,2,3,4).
\end{equation*}
Hence, we have
\begin{equation*}
    \ol D = \begin{pmatrix} 
    \bar\nu_1 & 0 & 0 & 0 \\
    0 & \bar\nu_2 & 0 & 0 \\
    0 & 0 & \bar\nu_3 & 0 \\
    0 & 0 & 0 & \bar\nu_4 
    \end{pmatrix}, \quad 
    \ol U =
    \begin{pmatrix}
    -\frac{N(\bar\nu_1-\rho)}{\kappa\gamma\bar\nu_1} &  -\frac{N(\bar\nu_2-\rho)}{\kappa\gamma\bar\nu_2} &  -\frac{N(\bar\nu_3-\rho)}{\kappa\gamma\bar\nu_3} &  -\frac{N(\bar\nu_4-\rho)}{\kappa\gamma\bar\nu_4} \\
     \frac{N(\bar\nu_1 - \rho)}{\kappa(\bar\nu_1+\rho)} & \frac{N(\bar\nu_2 - \rho)}{\kappa(\bar\nu_2+\rho)} & \frac{N(\bar\nu_3 - \rho)}{\kappa(\bar\nu_3+\rho)} & \frac{N(\bar\nu_4 - \rho)}{\kappa(\bar\nu_4+\rho)} \\
     \frac{N(\bar\nu_1 - \rho)}{\kappa\gamma} & \frac{N(\bar\nu_2 - \rho)}{\kappa\gamma} & \frac{N(\bar\nu_3 - \rho)}{\kappa\gamma} & \frac{N(\bar\nu_4 - \rho)}{\kappa\gamma}  \\
     1 & 1 & 1 & 1
    \end{pmatrix},
\end{equation*}  
and, introducing the differences
\begin{equation*}
    \bar\nu_{i,j} = \bar\nu_{i} - \bar\nu_{j} \qquad (i,j \in \{1,2,3,4\}),
\end{equation*}
we obtain
\begin{align*} 
  & \ol U^{-1} = \frac{1}{2 N \rho^2} \\
  &
  {\footnotesize
  \begin{pmatrix}
   -\frac{2\rho^2 \phi \gamma\kappa (\bar\nu_1+\rho) }{\lambda \bar\nu_{1,2}\bar\nu_{1,3}\bar\nu_{1,4}} & -\frac{\kappa \bar\nu_1 (\bar\nu_1+\rho)(\bar\nu_2+\rho)(\bar\nu_3+\rho) (\bar\nu_4+\rho)}{\bar\nu_{1,2}\bar\nu_{1,3}\bar\nu_{1,4}} & \frac{2\rho^2\gamma\kappa\bar\nu_1(\bar\nu_1+\rho)}{\bar\nu_{1,2}\bar\nu_{1,3}\bar\nu_{1,4}} & -\frac{N \bar\nu_1 (\bar\nu_1+\rho) (\bar\nu_2-\rho)(\bar\nu_3-\rho)(\bar\nu_4-\rho)}{\bar\nu_{1,2}\bar\nu_{1,3}\bar\nu_{1,4}} \\
   \frac{2\rho^2 \phi \gamma\kappa (\bar\nu_2+\rho) }{\lambda \bar\nu_{1,2}\bar\nu_{2,3}\bar\nu_{2,4}} & \frac{\kappa \bar\nu_2 (\bar\nu_1+\rho)(\bar\nu_2+\rho)(\bar\nu_3+\rho) (\bar\nu_4+\rho)}{\bar\nu_{1,2}\bar\nu_{2,3}\bar\nu_{2,4}} & -\frac{2\rho^2\gamma\kappa\bar\nu_2(\bar\nu_2+\rho)}{\bar\nu_{1,2}\bar\nu_{2,3}\bar\nu_{2,4}} & \frac{N \bar\nu_2 (\bar\nu_1-\rho) (\bar\nu_2+\rho)(\bar\nu_3-\rho)(\bar\nu_4-\rho)}{\bar\nu_{1,2}\bar\nu_{2,3}\bar\nu_{2,4}}\\
   -\frac{2\rho^2 \phi \gamma\kappa (\bar\nu_3+\rho) }{\lambda \bar\nu_{1,3}\bar\nu_{2,3}\bar\nu_{3,4}} & -\frac{\kappa \bar\nu_3 (\bar\nu_1+\rho)(\bar\nu_2+\rho)(\bar\nu_3+\rho) (\bar\nu_4+\rho)}{\bar\nu_{1,3}\bar\nu_{2,3}\bar\nu_{3,4}} & \frac{2\rho^2\gamma\kappa\bar\nu_3(\bar\nu_3+\rho)}{\bar\nu_{1,3}\bar\nu_{2,3}\bar\nu_{3,4}} & -\frac{N \bar\nu_3 (\bar\nu_1-\rho) (\bar\nu_2-\rho)(\bar\nu_3+\rho)(\bar\nu_4-\rho)}{\bar\nu_{1,3}\bar\nu_{2,3}\bar\nu_{3,4}}\\
   \frac{2\rho^2 \phi \gamma \kappa (\bar\nu_4 + \rho) }{\lambda \bar\nu_{1,4}\bar\nu_{2,4}\bar\nu_{3,4}} & \frac{\kappa \bar\nu_4 (\bar\nu_1+\rho)(\bar\nu_2+\rho)(\bar\nu_3+\rho) (\bar\nu_4+\rho)}{\bar\nu_{1,4}\bar\nu_{2,4}\bar\nu_{3,4}} & -\frac{2\rho^2\gamma\kappa\bar\nu_4(\bar\nu_4+\rho)}{\bar\nu_{1,4}\bar\nu_{2,4}\bar\nu_{3,4}} & \frac{N \bar\nu_4 (\bar\nu_1-\rho) (\bar\nu_2-\rho)(\bar\nu_3-\rho)(\bar\nu_4+\rho)}{\bar\nu_{1,4}\bar\nu_{2,4}\bar\nu_{3,4}}
  \end{pmatrix}.
  }
\end{align*}
Consequently, the matrix exponential $\ol Q(t) = (\ol Q_{ij}(t))_{1 \leq i,j \leq 4}$ is given by
\begin{equation}
  \ol Q(t) = \ol U \begin{pmatrix}
    e^{\bar\nu_1 t} & 0 & 0 & 0 \\
    0 & e^{\bar\nu_2 t} & 0 & 0 \\
    0 & 0 & e^{\bar\nu_3 t} & 0 \\
    0 & 0 & 0 & e^{\bar\nu_4 t}\\
  \end{pmatrix} \ol U^{-1} \qquad (t \geq 0)
\end{equation}
and it follows that $\ol G$ defined in~\eqref{def:Gbar} can be computed as
\begin{align} 
    \ol G_{1}(T-t) = & \, \frac{\phi}{2 \lambda^2} \, \left( \frac{(\bar\nu_1-\rho)(2\varrho(\bar\nu_1+\rho)+\gamma\kappa \bar\nu_1 + 2\lambda\bar\nu_1 (\bar\nu_1+\rho))}{\bar\nu_1 \bar\nu_{1,2} \bar\nu_{1,3} \bar\nu_{1,4}} e^{\bar\nu_1(T-t)} \right. \nonumber\\ 
    & \hspace{35pt} - \frac{(\bar\nu_2-\rho)(2\varrho(\bar\nu_2+\rho)+\gamma\kappa \bar\nu_2+2\lambda \bar\nu_2 (\bar\nu_2+\rho))}{\bar\nu_2 \bar\nu_{1,2} \bar\nu_{2,3} \bar\nu_{2,4}} e^{\bar\nu_2(T-t)} \nonumber \\
    & \hspace{35pt} + \frac{(\bar\nu_3-\rho)(2\varrho (\bar\nu_3+\rho)+\gamma\kappa \bar\nu_3 + 2\lambda \bar\nu_3(\bar\nu_3+\rho))}{\bar\nu_3 \bar\nu_{1,3} \bar\nu_{2,3} \bar\nu_{3,4}} e^{\bar\nu_3(T-t)} \nonumber \\
    & \hspace{35pt} \left. - \frac{(\bar\nu_4-\rho)(2\varrho(\bar\nu_4+\rho)+\gamma\kappa \bar\nu_4 + 2\lambda \bar\nu_4 (\bar\nu_4+\rho))}{\bar\nu_4 \bar\nu_{1,4} \bar\nu_{2,4} \bar\nu_{3,4}} e^{\bar\nu_4(T-t)} \right), \label{def:Gbar1}
\end{align}
\begin{align} 
    \ol G_{2}(T-t) = & \, \frac{1}{4 \gamma \lambda \rho^2}  \label{def:Gbar2} \\
    & \left( \frac{(2\varrho(\bar\nu_1+\rho)+\gamma\kappa \bar\nu_1 + 2 \lambda\bar\nu_1(\bar\nu_1+\rho))(\bar\nu_1-\rho)(\bar\nu_2+\rho)(\bar\nu_3+\rho)(\bar\nu_4+\rho)}{\bar\nu_{1,2} \bar\nu_{1,3} \bar\nu_{1,4}} e^{\bar\nu_1(T-t)} \right. \nonumber \\ 
    & \hspace{12pt} - \frac{(2\varrho(\bar\nu_2+\rho)+\gamma\kappa \bar\nu_2 + 2 \lambda\bar\nu_2(\bar\nu_2+\rho))(\bar\nu_1+\rho)(\bar\nu_2-\rho)(\bar\nu_3+\rho)(\bar\nu_4+\rho)}{\bar\nu_{1,2} \bar\nu_{2,3} \bar\nu_{2,4}} e^{\bar\nu_2(T-t)} \nonumber \\
    & \hspace{12pt} + \frac{(2\varrho(\bar\nu_3+\rho)+\gamma\kappa \bar\nu_3 + 2 \lambda\bar\nu_3(\bar\nu_3+\rho))(\bar\nu_1+\rho)(\bar\nu_2+\rho)(\bar\nu_3-\rho)(\bar\nu_4+\rho)}{\bar\nu_{1,3} \bar\nu_{2,3} \bar\nu_{3,4}} e^{\bar\nu_3(T-t)} \nonumber \\
    & \hspace{12pt} \left. - \frac{(2\varrho(\bar\nu_4+\rho)+\gamma\kappa \bar\nu_4 + 2 \lambda\bar\nu_4(\bar\nu_4+\rho))(\bar\nu_1+\rho)(\bar\nu_2+\rho)(\bar\nu_3+\rho)(\bar\nu_4-\rho)}{\bar\nu_{1,4} \bar\nu_{2,4} \bar\nu_{3,4}} e^{\bar\nu_4(T-t)} \right), \nonumber
\end{align}
\begin{align} 
    \ol G_{3}(T-t) = & \, \frac{1}{2\lambda} \left( -\frac{(\bar\nu_1-\rho)(2\varrho(\bar\nu_1+\rho)+\gamma\kappa\bar\nu_1+2\lambda\bar\nu_1(\bar\nu_1+\rho))}{\bar\nu_{1,2} \bar\nu_{1,3} \bar\nu_{1,4}} e^{\bar\nu_1(T-t)} \right. \nonumber \\ 
    & \hspace{26pt} + \frac{(\bar\nu_2-\rho)(2\varrho(\bar\nu_2+\rho)+\gamma\kappa\bar\nu_2+2\lambda \bar\nu_2(\bar\nu_2+\rho))}{\bar\nu_{1,2} \bar\nu_{2,3} \bar\nu_{2,4}} e^{\bar\nu_2(T-t)} \nonumber \\
    & \hspace{26pt} - \frac{(\bar\nu_3-\rho)(2\varrho(\bar\nu_3+\rho)-\gamma\kappa\bar\nu_3+2\lambda \bar\nu_3(\bar\nu_3+\rho))}{\bar\nu_{1,3} \bar\nu_{2,3} \bar\nu_{3,4}} e^{\bar\nu_3(T-t)} \nonumber \\
    & \hspace{26pt} \left. + \frac{(\bar\nu_4-\rho)(2\varrho(\bar\nu_4+\rho)+\gamma\kappa\bar\nu_4+2\lambda \bar\nu_4(\bar\nu_4+\rho))}{\bar\nu_{1,4} \bar\nu_{2,4} \bar\nu_{3,4}} e^{\bar\nu_4(T-t)} \right), \label{def:Gbar3}
\end{align}
\begin{align} 
    \ol G_{4}(T-t) = & \, \frac{N (\bar\nu_1-\rho)(\bar\nu_2-\rho)(\bar\nu_3-\rho)(\bar\nu_4
    -\rho)}{4 \gamma\kappa\lambda\rho^2 } \nonumber \\
    & \left( \frac{(2\varrho(\bar\nu_1+\rho)+\gamma\kappa\bar\nu_1+2\lambda\bar\nu_1(\bar\nu_1+\rho))}{\bar\nu_{1,2} \bar\nu_{1,3} \bar\nu_{1,4}} e^{\bar\nu_1(T-t)} \right. \nonumber \\ 
    & \hspace{12pt} - \frac{(2\varrho(\bar\nu_2+\rho)+\gamma\kappa\bar\nu_2+2\lambda\bar\nu_2(\bar\nu_2+\rho))}{\bar\nu_{1,2} \bar\nu_{2,3} \bar\nu_{2,4}} e^{\bar\nu_2(T-t)} \nonumber\\
    & \hspace{12pt} + \frac{(2\varrho(\bar\nu_3+\rho)+\gamma\kappa\bar\nu_3+2\lambda\bar\nu_3(\bar\nu_3+\rho))}{\bar\nu_{1,3} \bar\nu_{2,3} \bar\nu_{3,4}} e^{\bar\nu_3(T-t)} \nonumber \\
    & \hspace{12pt} \left. - \frac{(2\varrho(\bar\nu_4+\rho)+\gamma\kappa\bar\nu_4+2\lambda\bar\nu_4(\bar\nu_4+\rho))}{\bar\nu_{1,4} \bar\nu_{2,4} \bar\nu_{3,4}} e^{\bar\nu_4(T-t)} \right) \label{def:Gbar4}
\end{align}
for all $t \in [0,T]$. Moreover, $\ol H$ introduced in~\eqref{def:Hbar} is given by
\begin{align}
    \ol H_1(T-t) = & \, \frac{\gamma\kappa\phi}{N \lambda} \left( -\frac{(\bar\nu_1+\rho)e^{\bar\nu_1(T-t)}}{\bar\nu_{1,2} \bar\nu_{1,3} \bar\nu_{1,4}}  +\frac{(\bar\nu_2+\rho)e^{\bar\nu_2(T-t)}}{\bar\nu_{1,2} \bar\nu_{2,3} \bar\nu_{2,4}} \right. \nonumber \\
    & \hspace{38pt} \left. -\frac{(\bar\nu_3+\rho)e^{\bar\nu_3(T-t)}}{\bar\nu_{1,3} \bar\nu_{2,3} \bar\nu_{3,4}} +\frac{(\bar\nu_4+\rho) e^{\bar\nu_4(T-t)} }{\bar\nu_{1,4} \bar\nu_{2,4} \bar\nu_{3,4}} \right), \label{def:Hbar1}
\end{align}
\begin{align}
    \ol H_2(T-t) = & \, \frac{\kappa (\bar\nu_1+\rho)(\bar\nu_2+\rho)(\bar\nu_3+\rho)(\bar\nu_4+\rho)}{2 N \rho^2} \nonumber \\
    & \left( -\frac{\bar\nu_1 e^{\bar\nu_1(T-t)}}{\bar\nu_{1,2} \bar\nu_{1,3} \bar\nu_{1,4}}  +\frac{\bar\nu_2 e^{\bar\nu_2(T-t)}}{\bar\nu_{1,2} \bar\nu_{2,3} \bar\nu_{2,4}}  -\frac{\bar\nu_3 e^{\bar\nu_3(T-t)}}{\bar\nu_{1,3} \bar\nu_{2,3} \bar\nu_{3,4}}  +\frac{\bar\nu_4 e^{\bar\nu_4(T-t)}}{\bar\nu_{1,4} \bar\nu_{2,4} \bar\nu_{3,4}}  \right), \label{def:Hbar2}
\end{align}
\begin{align}
    & \ol H_3(T-t) = \frac{\gamma\kappa}{N}  \label{def:Hbar3} \\
    & \left( \frac{\bar\nu_1(\bar\nu_1+\rho)e^{\bar\nu_1(T-t)}}{\bar\nu_{1,2} \bar\nu_{1,3} \bar\nu_{1,4}} - \frac{\bar\nu_2(\bar\nu_2+\rho)e^{\bar\nu_2(T-t)}}{\bar\nu_{1,2} \bar\nu_{2,3} \bar\nu_{2,4}} + \frac{\bar\nu_3(\bar\nu_3+\rho)e^{\bar\nu_3(T-t)}}{\bar\nu_{1,3} \bar\nu_{2,3} \bar\nu_{3,4}} - \frac{\bar\nu_4(\bar\nu_4+\rho)e^{\bar\nu_4(T-t)}}{\bar\nu_{1,4} \bar\nu_{2,4} \bar\nu_{3,4}}  \right), \nonumber
\end{align}
\begin{align}
    \ol H_4(T-t) = & \, \frac{1}{2\rho^2} \left( -\frac{\bar\nu_1(\bar\nu_1+\rho)(\bar\nu_2-\rho)(\bar\nu_3-\rho)(\bar\nu_4-\rho)e^{\bar\nu_1(T-t)}}{\bar\nu_{1,2} \bar\nu_{1,3} \bar\nu_{1,4}} \right. \nonumber \\
    & \hspace{26pt} + \frac{\bar\nu_2(\bar\nu_1-\rho)(\bar\nu_2+\rho)(\bar\nu_3-\rho)(\bar\nu_4-\rho)e^{\bar\nu_2(T-t)}}{\bar\nu_{1,2} \bar\nu_{2,3} \bar\nu_{2,4}} \nonumber \\
    & \hspace{26pt}- \frac{\bar\nu_3(\bar\nu_1-\rho)(\bar\nu_2-\rho)(\bar\nu_3+\rho)(\bar\nu_4-\rho)e^{\bar\nu_3(T-t)}}{\bar\nu_{1,3} \bar\nu_{2,3} \bar\nu_{3,4}} \nonumber \\
    & \hspace{26pt} \left. + \frac{\bar\nu_4(\bar\nu_1-\rho)(\bar\nu_2-\rho)(\bar\nu_3-\rho)(\bar\nu_4+\rho)e^{\bar\nu_4(T-t)}}{\bar\nu_{1,4} \bar\nu_{2,4} \bar\nu_{3,4}}  \right), \label{def:Hbar4}
\end{align}
for all $t\in[0,T]$.

Next, we compute the matrix exponential $Q (t) = \exp(F^N \cdot t) \in \mathbb{R}^{3 \times 3}$, introduced in \eqref{def:matrixExpE}, for all $t \in [0, \infty)$,  by diagonalizing the matrix $F = UDU^{-1}$ in~\eqref{def:F} with diagonal matrix $D \in \mathbb{R}^{3 \times 3}$ and invertible matrix $U\in \mathbb{R}^{3 \times 3}$. The eigenvalues $\nu_1,\nu_2,\nu_3$ are the roots of the equation
\begin{equation} \label{eq:cubicroot1}
    x^3 - \frac{2N\lambda\rho+\gamma\kappa}{2N\lambda} x^2 - \frac{\phi}{\lambda} x + \frac{\phi\rho}{\lambda} = 0.
\end{equation}
Introducing the constants
\begin{equation*}
a = - \frac{2N\lambda\rho+\gamma\kappa}{2\lambda N}, \quad b = - \frac{\phi}{\lambda}, \quad c = \frac{\phi\rho}{\lambda}, \quad p = b -\frac{a^2}{3}, \quad q = \frac{2a^3}{27}-\frac{ab}{3}+c,
\end{equation*}
and substituting $x = z-a/3$ in~\eqref{eq:cubicroot1} yields the equivalent equation $z^3+p z + q =0$ with discriminant $(q/2)^2 + (p/3)^3 < 0$. This implies that the latter has three real-valued distinct roots allowing for the analytical representations    
\begin{equation}
    \begin{aligned}
    \nu_1 = & \, - \sqrt{-\frac{4}{3} p} \cdot \cos\left(\frac{1}{3} \arccos \left(-\frac{q}{2} \cdot \sqrt{-\frac{27}{p^3}}\right) +\frac{\pi}{3}\right) - \frac{a}{3}, \\
    \nu_2 = & \, \sqrt{-\frac{4}{3} p} \cdot \cos\left(\frac{1}{3} \arccos \left(-\frac{q}{2} \cdot \sqrt{-\frac{27}{p^3}}\right) \right) - \frac{a}{3}, \\
    \nu_3 = & \, - \sqrt{-\frac{4}{3} p} \cdot \cos\left(\frac{1}{3} \arccos \left(-\frac{q}{2} \cdot \sqrt{-\frac{27}{p^3}}\right) -\frac{\pi}{3}\right) - \frac{a}{3}.
    \end{aligned}
\end{equation}
The corresponding eigenvectors are given by 
\begin{equation*} \label{def:eigenvectorbarF}
  v_i \triangleq
  \begin{pmatrix}
    -\frac{N(\nu_i-\rho)}{\kappa\gamma\nu_i} \\
     \frac{N(\nu_i - \rho)}{\kappa\gamma} \\
     1
    \end{pmatrix} \qquad (i=1,2,3).
\end{equation*}
Consequently, we have
\begin{equation*}
    D = \begin{pmatrix} 
    \nu_1 & 0 & 0 \\
    0 & \nu_2 & 0 \\
    0 & 0 & \nu_3
    \end{pmatrix}, \quad 
    U =
    \begin{pmatrix}
    -\frac{N(\nu_1-\rho)}{\kappa\gamma\nu_1} &  -\frac{N(\nu_2-\rho)}{\kappa\gamma\nu_2} &  -\frac{N(\nu_3-\rho)}{\kappa\gamma\nu_3} \\
     \frac{N(\nu_1 - \rho)}{\kappa\gamma} & \frac{N(\nu_2 - \rho)}{\kappa\gamma} & \frac{N(\nu_3 - \rho)}{\kappa\gamma} \\
     1 & 1 & 1
    \end{pmatrix},
\end{equation*}  
and, introducing the differences
\begin{equation*}
    \nu_{i,j} = \nu_{i} - \nu_{j} \qquad (i,j \in \{1,2,3\}),
\end{equation*}
we obtain
\begin{equation*}
    U^{-1} = \frac{1}{N \rho} \begin{pmatrix}
   -\frac{\gamma\kappa\phi\rho}{\lambda\nu_{1,2}\nu_{1,3}} & \frac{\gamma\kappa\rho\nu_1}{\nu_{1,2}\nu_{1,3}} & \frac{N\nu_1(\nu_2-\rho)(\nu_3-\rho)}{\nu_{1,2}\nu_{1,3}} \\
   \frac{\gamma\kappa\phi\rho}{\lambda\nu_{1,2}\nu_{2,3}} & -\frac{\gamma\kappa\rho\nu_2}{\nu_{1,2}\nu_{2,3}} & -\frac{N\nu_2(\nu_1-\rho)(\nu_3-\rho)}{\nu_{1,2}\nu_{2,3}}\\
   -\frac{\gamma\kappa\phi\rho}{\lambda\nu_{1,3}\nu_{2,3}} & \frac{\gamma\kappa\rho\nu_3}{\nu_{1,3}\nu_{2,3}} & \frac{N\nu_3(\nu_1-\rho)(\nu_2-\rho)}{\nu_{1,3}\nu_{2,3}}
  \end{pmatrix}.
\end{equation*} 
The matrix exponential $Q(t) = (Q_{ij}(t))_{1 \leq i,j \leq 3}$ is thus given by
\begin{equation}
  Q(t) = U \begin{pmatrix}
    e^{\nu_1 t} & 0 & 0 \\
    0 & e^{\nu_2 t} & 0 \\
    0 & 0 & e^{\nu_3 t} \\
  \end{pmatrix} U^{-1} \qquad (t \geq 0).
\end{equation}
Moreover, it follows that $G$ defined in~\eqref{def:G} can be computed as
\begin{align}
    G_1(T-t) = & \, \frac{\phi}{\lambda^2} \left( \frac{(\nu_1-\rho)(\varrho+\lambda\nu_1)}{\nu_1\nu_{1,2}\nu_{1,3}} e^{\nu_1(T-t)} \right. \nonumber \\
    & \hspace{28pt} - \frac{(\nu_2-\rho)(\varrho+\lambda\nu_2)}{\nu_2\nu_{1,2}\nu_{2,3}} e^{\nu_2(T-t)} \nonumber \\
    & \hspace{28pt} + \left. \frac{(\nu_3-\rho)(\varrho+\lambda\nu_3)}{\nu_3\nu_{1,3}\nu_{2,3}} e^{\nu_3(T-t)} \right), \label{def:G1}
\end{align}
\begin{align}
    G_2(T-t) = & \, \frac{1}{\lambda} \left( -\frac{(\nu_1-\rho)(\varrho+\lambda\nu_1)}{\nu_{1,2}\nu_{1,3}} e^{\nu_1(T-t)} \right. \nonumber \\
    & \hspace{22pt} + \frac{(\nu_2-\rho)(\varrho+\lambda\nu_2)}{\nu_{1,2}\nu_{2,3}} e^{\nu_2(T-t)} \nonumber \\
    & \hspace{22pt} \left. - \frac{(\nu_3-\rho)(\varrho+\lambda\nu_3)}{\nu_{1,3}\nu_{2,3}} e^{\nu_3(T-t)} \right), \label{def:G2}
\end{align}
\begin{align}
    G_3(T-t) = & \, \frac{(\nu_1-\rho)(\nu_2-\rho)(\nu_3-\rho)}{\gamma\kappa\lambda\rho} \nonumber \\
    & \left( -\frac{\varrho+\lambda\nu_1}{\nu_{1,2}\nu_{1,3}} e^{\nu_1(T-t)} + \frac{\varrho+\lambda\nu_2}{\nu_{1,2}\nu_{2,3}} e^{\nu_2(T-t)} -\frac{\varrho+\lambda\nu_3}{\nu_{1,3}\nu_{2,3}} e^{\nu_3(T-t)} \right), \label{def:G3}
\end{align}
for all $t \in [0,T]$. Similarly, we obtain for $H$ defined in~\eqref{def:H}
\begin{align}
    H_1(T-t) = \frac{\gamma\kappa\phi}{N\lambda} \left( -\frac{e^{\nu_1(T-t)}}{\nu_{1,2}\nu_{1,3}} + \frac{e^{\nu_2(T-t)}}{\nu_{1,2}\nu_{2,3}} -\frac{e^{\nu_3(T-t)}}{\nu_{1,3}\nu_{2,3}}\right), \label{def:H1}
\end{align}
\begin{align}
    H_2(T-t) = \frac{\gamma\kappa}{N} \left( \frac{\nu_1 e^{\nu_1(T-t)}}{\nu_{1,2}\nu_{1,3}} - \frac{\nu_2 e^{\nu_2(T-t)}}{\nu_{1,2}\nu_{2,3}} +\frac{\nu_3 e^{\nu_3(T-t)}}{\nu_{1,3}\nu_{2,3}}\right), \label{def:H2}
\end{align}
\begin{align}
    H_3(T-t) = & \, \frac{\nu_1 (\nu_2-\rho)(\nu_3-\rho)e^{\nu_1(T-t)}}{\nu_{1,2}\nu_{1,3}} - \frac{\nu_2 (\nu_1-\rho)(\nu_3-\rho)e^{\nu_2(T-t)}}{\nu_{1,2}\nu_{2,3}} \nonumber \\ 
    & +\frac{\nu_3 (\nu_1-\rho)(\nu_2-\rho)e^{\nu_3(T-t)}}{\nu_{1,3}\nu_{2,3}}, \label{def:H3}
\end{align}
for all $t \in [0,T]$.


\subsection{Infinite player game}
\label{sec-mat-infin} 

Computing the matrix exponential $\tilde R (t) = \exp(\tilde B \cdot t) \in \mathbb{R}^{3 \times 3}$, introduced in \eqref{def:matrixExpBtilde}, for all $t \in [0, \infty)$,  is very similar to the computations in Section \ref{sec-mat-fin}. Again, we decompose the matrix $\tilde B = \tilde{U}\tilde{D}\tilde{U}^{-1}$ in~\eqref{def:Btilde} into a diagonal matrix $\tilde{D} \in \mathbb{R}^{3 \times 3}$ and an invertible matrix $\tilde{U}\in \mathbb{R}^{3 \times 3}$. The eigenvalues $\tilde{\nu}_1,\tilde{\nu}_2,\tilde{\nu}_3$ are the roots of the equation
\begin{equation} \label{eq:cubicroot2}
    x^3 + \frac{2\lambda\rho+\gamma\kappa}{2\lambda} x^2 - \frac{\phi}{\lambda} x - \frac{\rho\phi}{\lambda} = 0.
\end{equation}
Introducing the constants
\begin{equation*}
\tilde a = \frac{2\lambda\rho+\gamma\kappa}{2\lambda}, \quad \tilde b = - \frac{\phi}{\lambda}, \quad \tilde c = -\frac{\phi\rho}{\lambda}, \quad \tilde p = \tilde b -\frac{\tilde{a}^2}{3}, \quad \tilde q = \frac{2\tilde{a}^3}{27}-\frac{\tilde{a}\tilde{b}}{3}+\tilde{c},
\end{equation*}
and substituting $x = z-\tilde{a}/3$ in~\eqref{eq:cubicroot2} yields the equivalent equation $z^3+\tilde p z + \tilde q =0$ with discriminant $(\tilde{q}/2)^2 + (\tilde{p}/3)^3 < 0$. As above this implies that the latter has three real-valued distinct roots allowing for the analytical representations    
\begin{equation}
    \begin{aligned}
    \tilde\nu_1 = & \, - \sqrt{-\frac{4}{3} \tilde p} \cdot \cos\left(\frac{1}{3} \arccos \left(-\frac{\tilde q}{2} \cdot \sqrt{-\frac{27}{\tilde{p}^3}}\right) +\frac{\pi}{3}\right) - \frac{\tilde a}{3}, \\
    \tilde\nu_2 = & \, \sqrt{-\frac{4}{3} \tilde p} \cdot \cos\left(\frac{1}{3} \arccos \left(-\frac{\tilde q}{2} \cdot \sqrt{-\frac{27}{\tilde{p}^3}}\right) \right) - \frac{\tilde a}{3}, \\
    \tilde\nu_3 = & \, - \sqrt{-\frac{4}{3} \tilde p} \cdot \cos\left(\frac{1}{3} \arccos \left(-\frac{\tilde q}{2} \cdot \sqrt{-\frac{27}{\tilde{p}^3}}\right) -\frac{\pi}{3}\right) - \frac{\tilde a}{3}.
    \end{aligned}
\end{equation}
The corresponding eigenvectors are given by 
\begin{equation*} \label{def:eigenvectortildeB}
  \tilde{v}_i \triangleq
  \begin{pmatrix}
    -\frac{1}{\tilde{\nu}_i} \\
     \frac{\gamma}{\tilde{\nu}_i + \rho} \\
     1
    \end{pmatrix} \qquad (i=1,2,3).
\end{equation*}
That is, we obtain
\begin{equation*}
    \tilde D = \begin{pmatrix} 
    \tilde{\nu}_1 & 0 & 0 \\
    0 & \tilde{\nu}_2 & 0 \\
    0 & 0 & \tilde{\nu}_3
    \end{pmatrix}, \quad 
    \tilde U =
    \begin{pmatrix}
    -\frac{1}{\tilde{\nu}_1} & -\frac{1}{\tilde{\nu}_2} & -\frac{1}{\tilde{\nu}_3}\\
     \frac{\gamma}{\tilde{\nu}_1 + \rho} & \frac{\gamma}{\tilde{\nu}_2 + \rho} & \frac{\gamma}{\tilde{\nu}_3 + \rho} \\
     1 & 1 & 1
    \end{pmatrix},
\end{equation*}  
and, introducing the differences
\begin{equation*}
    \tilde\nu_{i,j} = \tilde\nu_{i} - \tilde\nu_{j} \qquad (i,j \in \{1,2,3\}),
\end{equation*}
we have that
\begin{equation*}
    \tilde U^{-1} = \frac{1}{\gamma\rho} \begin{pmatrix}
   -\frac{\gamma\rho\phi(\tilde\nu_1+\rho)}{\lambda\tilde\nu_{1,2}\tilde\nu_{1,3}} & -\frac{\tilde\nu_1(\tilde\nu_1+\rho)(\tilde\nu_2+\rho)(\tilde\nu_3+\rho)}{\tilde\nu_{1,2}\tilde\nu_{1,3}} & \frac{\gamma\rho\tilde\nu_1(\tilde\nu_1+\rho)}{\tilde\nu_{1,2}\tilde\nu_{1,3}} \\
   \frac{\gamma\rho\phi(\tilde\nu_2+\rho)}{\lambda\tilde\nu_{1,2}\tilde\nu_{2,3}} & \frac{\tilde\nu_2(\tilde\nu_1+\rho)(\tilde\nu_2+\rho)(\tilde\nu_3+\rho)}{\tilde\nu_{1,2}\tilde\nu_{2,3}} & -\frac{\gamma\rho\tilde\nu_2(\tilde\nu_2+\rho)}{\tilde\nu_{1,2}\tilde\nu_{2,3}}\\
   -\frac{\gamma\rho\phi(\tilde\nu_3+\rho)}{\lambda\tilde\nu_{1,3}\tilde\nu_{2,3}} & -\frac{\tilde\nu_3(\tilde\nu_1+\rho)(\tilde\nu_2+\rho)(\tilde\nu_3+\rho)}{\tilde\nu_{1,3}\tilde\nu_{2,3}} & \frac{\gamma\rho\tilde\nu_3(\tilde\nu_3+\rho)}{\tilde\nu_{1,3}\tilde\nu_{2,3}}
  \end{pmatrix}.
\end{equation*} 
Therefore, the matrix exponential $\tilde R(t) = (\tilde R_{ij}(t))_{1 \leq i,j \leq 3}$ can be computed as
\begin{equation}
  \tilde R(t) = \tilde U \begin{pmatrix}
    e^{\tilde\nu_1 t} & 0 & 0 \\
    0 & e^{\tilde\nu_2 t} & 0 \\
    0 & 0 & e^{\tilde\nu_3 t} \\
  \end{pmatrix} \tilde U^{-1} \qquad (t \geq 0),
\end{equation}
and we obtain that $\tilde K$ defined in~\eqref{def:Ktilde} is given by
\begin{align}
    \tilde K_1(T-t) = & \, \frac{\phi}{2\lambda^2} \left( \frac{2\varrho(\tilde{\nu}_1 + \rho)+\kappa\gamma\tilde{\nu}_1+2\lambda\tilde{\nu}_1(\tilde{\nu}_1+\rho)}{\tilde\nu_1\tilde{\nu}_{1,2}\tilde{\nu}_{1,3}} e^{\tilde\nu_1(T-t)} \right. \nonumber \\ 
    & \, \hspace{30pt} \left. - \frac{2\varrho(\tilde{\nu}_2+\rho)+\kappa\gamma\tilde{\nu}_2 + 2\lambda\tilde{\nu}_2 (\tilde{\nu}_2+\rho) }{\tilde\nu_2\tilde{\nu}_{1,2}\tilde{\nu}_{2,3}} e^{\tilde\nu_2(T-t)} \right. \nonumber \\
    & \, \left. \hspace{30pt} + \frac{2\varrho(\tilde{\nu}_3+\rho)+\kappa\gamma\tilde{\nu}_3+2\lambda\tilde{\nu}_3(\tilde{\nu}_3+\rho) }{\tilde\nu_3\tilde{\nu}_{1,3}\tilde{\nu}_{2,3}} e^{\tilde\nu_3(T-t)} \right), \label{def:Ktilde1}
\end{align}
\begin{align}
    \tilde K_2(T-t) = & \, \frac{1}{2\gamma\lambda\rho} \label{def:Ktilde2} \\
    & \left(\frac{(2\varrho(\tilde{\nu}_1 + \rho)+\kappa\gamma\tilde\nu_1+2\lambda\tilde{\nu}_1(\tilde{\nu}_1 + \rho))(\tilde{\nu}_2 + \rho)(\tilde{\nu}_3 + \rho)}{\tilde{\nu}_{1,2}\tilde{\nu}_{1,3}} e^{\tilde\nu_1(T-t)} \right. \nonumber \\
    & \hspace{10pt} - \frac{(2\varrho(\tilde{\nu}_2 + \rho)+\kappa\gamma\tilde\nu_2+2\lambda\tilde{\nu}_2(\tilde{\nu}_2 + \rho))(\tilde{\nu}_1 + \rho)(\tilde{\nu}_3 + \rho)}{\tilde{\nu}_{1,2}\tilde{\nu}_{2,3}} e^{\tilde\nu_2(T-t)} \nonumber \\
    & \left. \hspace{10pt} + \frac{(2\varrho(\tilde{\nu}_3 + \rho)+\kappa\gamma\tilde\nu_3+2\lambda\tilde{\nu}_3(\tilde{\nu}_3 + \rho))(\tilde{\nu}_1 + \rho)(\tilde{\nu}_2 + \rho)}{\tilde{\nu}_{1,3}\tilde{\nu}_{2,3}} e^{\tilde\nu_3(T-t)} \right), \nonumber
\end{align}
\begin{align}
    \tilde K_3(T-t) = & \, \frac{1}{2\lambda} \left( -\frac{2\varrho(\tilde{\nu}_1 + \rho)+\kappa\gamma\tilde\nu_1+2\lambda\tilde{\nu}_1(\tilde{\nu}_1 + \rho)}{\tilde{\nu}_{1,2}\tilde{\nu}_{1,3}} e^{\tilde\nu_1(T-t)} \right. \nonumber \\
    & \, \hspace{24pt} + \frac{2\varrho(\tilde{\nu}_2 + \rho)+\kappa\gamma\tilde\nu_2 + 2\lambda\tilde{\nu}_2 (\tilde{\nu}_2 + \rho)}{\tilde{\nu}_{1,2}\tilde{\nu}_{2,3}} e^{\tilde\nu_2(T-t)} \nonumber \\
    & \, \left. \hspace{24pt} - \frac{2\varrho(\tilde{\nu}_3 + \rho)+\kappa\gamma\tilde\nu_3 + 2\lambda\tilde{\nu}_3 (\tilde{\nu}_3 + \rho)}{\tilde{\nu}_{1,3}\tilde{\nu}_{2,3}} e^{\tilde\nu_3(T-t)} \right), \label{def:Ktilde3}
\end{align}
for all $t \in [0,T]$.

\section*{Acknowledgments}
We are very grateful to the Associate Editor and to the anonymous referees for careful reading of the manuscript and for a number of useful comments and suggestions that significantly improved this paper.

\section*{Data availability statement}
No data was used in this article.


\end{document}